\documentclass[a4paper,10pt]{amsart}

\usepackage[utf8]{inputenc}
\usepackage{amsthm,amssymb,dsfont,enumitem,mathtools}
\usepackage[hide links]{hyperref}
\usepackage[scr=rsfso]{mathalpha}
\usepackage[dvipsnames,svgnames]{xcolor}
\usepackage[margin=3.65cm]{geometry}

\usepackage{xpatch}
\xpatchcmd{\proof}{\itshape}{\bfseries}{}{}

\hypersetup{
    hidelinks,
    pdftitle={Renormalization flows for 1D mixed states and a quantum Goursat lemma},
    pdfauthor={L\'eo Le-Nestour, David P\'erez-Garc\'ia, Alberto Ruiz-de-Alarc\'on},
    pdfkeywords={mathematical physics, matrix product density operators, tensor networks, renormalization flows, quantum channels, Hopf algebras, quantum hypergroups}
}

\allowdisplaybreaks

\newcommand{\Fine}{\mathcal{F}}
\newcommand{\Coarse}{\mathcal{C}}
\newcommand{\Noisy}{\mathcal{N}}

\newcommand{\eval}[2]{\langle #1 , #2\rangle}
\newcommand{\innerprod}[2]{\langle #1 \,|\, #2\rangle_{\smash{\hat{h}}}}
\newcommand{\basisE}{\xi}
\newcommand{\placeholder}{\,\cdot\,}

\newcommand{\newterm}[1]{\emph{#1}}

\usepackage[noabbrev, capitalize, nameinlink]{cleveref}
\creflabelformat{equation}{#2\textup{#1}#3}
\newtheorem{theorem}{Theorem}[section]
\newtheorem{lemma}[theorem]{Lemma}
\newtheorem{corollary}[theorem]{Corollary}
\newtheorem{goursat-lemma-num}[theorem]{Goursat's Lemma}
\newtheorem{proposition}[theorem]{Proposition}
\newtheorem*{goursat-lemma}{Goursat's Lemma}
\newtheorem*{first-main-result}{First main result}
\newtheorem*{second-main-result}{Second main result}
\theoremstyle{definition}
\newtheorem{definition}[theorem]{Definition}
\newtheorem{definitions}[theorem]{Definitions}
\newtheorem{step-goursat}{Step}
\newtheorem{step-hypergroup}{Step}
\newtheorem{example}[theorem]{Example}
\newtheorem{examples}[theorem]{Examples}
\newtheorem{remark}[theorem]{Remark}
\newtheorem*{notation}{Notation}

\newtheorem*{theorem*}{Theorem}

\newtheorem*{acknowledgements}{Acknowledgements}
\newtheorem*{conflictofinterest}{Conflict of interest statement}
\newtheorem*{dataavailability}{Data availability statement}

\usepackage{tikz,tikz-cd}
\usetikzlibrary{decorations.pathmorphing, decorations.markings, arrows, arrows.meta, shapes,  patterns.meta, shapes}
\tikzset{baseline={([yshift=-.5ex]current bounding box.center)}}
\colorlet{Virtual}{RedOrange}
\tikzstyle{bevel} = [ preaction = { draw, white, line width=4pt,  line cap = round } ]
\tikzstyle{symb} = [ draw=black, fill=black, line width=0.4pt, inner sep=1.5pt ]
\tikzstyle{symb fdisk} = [ circle ]
\tikzstyle{symb disk} = [ circle ]
\tikzstyle{-mid} = [ decoration={ markings, mark = at position 0.50*\pgfdecoratedpathlength+0.6*3pt with \arrow{>[width=2pt]} }, postaction={decorate} ]
\tikzstyle{-third} = [ decoration={ markings, mark = at position 0.75*\pgfdecoratedpathlength+0.6*3pt with \arrow{>[width=2pt]} }, postaction={decorate} ]
\tikzstyle{mid-} = [ decoration={ markings, mark = at position 0.50*\pgfdecoratedpathlength+0.6*3pt with \arrow{<[width=2pt]} }, postaction={decorate} ]
\tikzstyle{third-} = [ decoration={ markings, mark = at position 0.75*\pgfdecoratedpathlength+0.6*3pt with \arrow{<[width=2pt]} }, postaction={decorate} ]

\tikzset{every path/.style={line cap=round}}
\tikzset{
   baseline = {([yshift=-.5ex]current bounding box.center)},
   every path/.style = {line cap=round},
   Vt/.style = {inner sep=0.5pt},
   Ti/.style = {inner sep=0.8pt},
   Sm/.style = {inner sep=1pt},
   No/.style = {inner sep=1.5pt},
   La/.style = {inner sep=1.75pt},
   Hu/.style = {inner sep=2pt},
   disk/.style   = {circle, draw=black, fill=black},
   fdisk/.style  = {circle, draw=black},
   star8/.style  = {star, star points=8, draw=black, fill=black, star point height=1.5pt, scale=0.65},
   fstar8/.style = {star, star points=8, draw=black, star point height=1.5pt, thin, scale=0.65},
   bevel/.style  = {preaction={draw, white, line width=#1, line cap=round}},
   bevel/.default = 3pt,
   thick bevel/.style = {bevel=5pt},
   thin bevel/.style = {bevel=2pt},
   -mid/.style = {decoration={markings, mark=at position 0.5*\pgfdecoratedpathlength+1.7pt with \arrow{>[width=3pt]}}, postaction={decorate}},
   mid-/.style = {decoration={markings, mark=at position 0.5*\pgfdecoratedpathlength+1.1pt with \arrow{<[width=3pt]}}, postaction={decorate}}
}
\tikzdeclarepattern{
  name = rising,
  bounding box = {(0,-0.4pt) and (1pt,0.4pt)},
  tile size = {(1pt,1pt)},
  tile transformation = {rotate=45},
  code = {
    \draw[line width=0.5pt] (0,0) -- (1pt,0);
  },
}
\tikzdeclarepattern{name = rising2,
   bounding box = {(0,-0.75pt) and (2pt,0.75pt)},
   tile size = {(2pt, 2pt)},
   tile transformation = {rotate=45},
   code = {\draw[line width=1.5pt] (0,0) -- (2pt,0);}
}
\tikzdeclarepattern{name = falling,
   bounding box = {(0,-0.75pt) and (2pt,0.75pt)},
   tile size = {(2pt, 2pt)},
   tile transformation = {rotate=-45},
   code = {\draw[line width=1.5pt] (0,0) -- (2pt,0);}
}
\usepackage[outline]{contour}

\usepackage[activate={true,nocompatibility},final,tracking=true,kerning=true,spacing=true,factor=1100,stretch=10,shrink=10]{microtype}
\SetTracking{encoding={*}, shape=sc}{15}
\microtypecontext{spacing=nonfrench}

\title[Renormalization flows for 1D mixed states and a quantum Goursat lemma]{Renormalization flows for 1D mixed states\\and a quantum Goursat lemma}
\author[L.~Le-Nestour]{L\'eo Le-Nestour$^1$}
\author[D.~P\'erez-Garc\'ia]{David P\'erez-Garc\'ia$^{2,3}$}
\author[A.~Ruiz-de-Alarc\'on]{Alberto Ruiz-de-Alarc\'on$^{4}$}
\thanks{\\[-1em]
$^1$~Fourier Institute, Université Grenoble Alpes, 38400 Saint-Martin-d'Hères, France.\\
$^2$~Faculty of Mathematics, Universidad Complutense de Madrid, 28040 Madrid, Spain.\\
$^3$~Instituto de Ciencias Matemáticas, CSIC-UAM-UC3M-UCM, 28049 Madrid, Spain.\\
$^4$~Department of Mathematics, CUNEF Universidad, 28040 Madrid, Spain.}

\subjclass[2020]{Primary 81R50; Secondary 81P45, 82B28, 16T05}

\begin{document}

\begin{abstract}
Renormalization provides a framework for relating microscopic models of physical systems to effective descriptions at larger length scales. This procedure is studied for the boundary states of non-chiral two-dimensional topologically ordered models. The initial data consist of renormalization fixed points built from representations of finite-dimensional $C^*$-Hopf algebras, which are then perturbed by uniform on-site noise quantum channels and repeatedly coarse-grained. The resulting flows admit an intrinsic algebraic description in terms of completely positive maps on the $C^*$-Hopf algebra or, equivalently, positive linear functionals on its enveloping $C^*$-Hopf algebra. Their iteration is governed by convolution powers, and convergent trajectories yield new matrix product density operator fixed points, described by finite $*$-quantum hypergroups. This provides a concrete physical interpretation of such structures. For finite group algebras and their duals, we provide explicit classifications via Goursat's lemma for groups. Finally, we formulate and prove a quantum generalization of Goursat's lemma for finite-dimensional $C^*$-Hopf algebras, a result of independent interest, which gives an explicit structural description of all convergent renormalization trajectories.
\end{abstract}

\maketitle

\section{Introduction}
\label{sec:intro}

The principles of renormalization have proven to be among the most fruitful in modern theoretical physics.
The fundamental underlying idea consists in replacing a microscopic description by effective descriptions at larger scales, keeping track of the degrees of freedom and structures that remain relevant at long distances.
In its modern form, this viewpoint goes back to the work of Gell-Mann and Low on scale-dependent couplings \cite{gell-mann_quantum_1954}, Kadanoff's real-space block-spin picture \cite{kadanoff_scaling_1966}, and Wilson's formulation of renormalization as a dynamical flow in a space of effective theories \cite{wilson_renormalization_1971,wilson_renormalization_1971-1}.
Wilson's numerical renormalization group gave a particularly successful implementation for impurity problems such as the Kondo model \cite{wilson_renormalization_1975}.
For strongly correlated quantum systems, real-space renormalization methods became systematically effective with White's density matrix renormalization group (DMRG), which identified the reduced density matrix as the right object for retaining the degrees of freedom most entangled with the environment \cite{white_density_1992}.

Tensor networks provide a natural and versatile framework for such settings, encoding the patterns of correlations of a state using contractions of tensors arranged according to the underlying geometry.
In particular, matrix product states, originally introduced in 1992 as finitely correlated states \cite{fannes_finitely_1992}, provide a variational manifold for one-dimensional quantum systems and give a conceptual explanation of the success of DMRG \cite{ostlund_thermodynamic_1995,schollwock_density-matrix_2005,schollwock_density-matrix_2011}.
Since then, tensor network renormalization methods have developed in several directions.
The multiscale entanglement renormalization ansatz (MERA), for instance, coarse-grains a state across scales using disentanglers and isometries \cite{vidal_entanglement_2007,vidal_class_2008}, while the tensor renormalization group (TRG) and tensor network renormalization (TNR) methods provide coarse-graining schemes for tensor networks associated with two-dimensional classical partition functions and related quantum systems \cite{levin_tensor_2007,evenbly_tensor_2015,evenbly_tensor_2015-1}.
More recently, rigorous renormalization schemes closely related to tensor network coarse-graining have been developed for classical lattice models, including convergence and stability results \cite{kennedy_tensor_2022,kennedy_tensor_2024,ebel_tensor_2025,ebel_rotations_2025}. These are remarkable achievements since obtaining mathematically rigorous results on renormalization processes is a notoriously difficult task. 

Furthermore, tensor networks have also become powerful analytical tools, providing a language for the structure and classification of quantum many-body states \cite{cirac_matrix_2021}. The underlying idea is that all global properties of a tensor network should be characterizable in terms of its defining local tensors. Since tensor networks provide accurate approximations to the ground and thermal states of locally interacting systems, local tensors emerge as central objects in the study and classification of quantum phases of matter. 
This idea was first exploited in the case of symmetry-protected topological phases in one dimension, where global symmetries in matrix product states were encoded in the local tensors as virtual projective representations, giving rise to the cohomological classification of symmetry-protected phases \cite{chen_classification_2011,schuch_classifying_2011,pollmann_symmetry_2012}. In two dimensions, intrinsic topological order in projected entangled pair states (PEPS) was shown to emerge from purely virtual symmetries in the local tensors \cite{schuch_peps_2010,bultinck_anyons_2017,sahinoglu_characterizing_2021,cirac_matrix_2021,molnar_matrix_2022, ogata2025haag}. To capture all known topological phases in 2D, those virtual symmetries must in general be associated to representations of algebraic structures beyond groups; in particular $C^*$-weak Hopf algebras \cite{molnar_matrix_2022, ogata2025haag}, whose representations correspond to unitary (multi)fusion categories and are naturally expressed as matrix product operators.

The fact that global properties of interest turn out to be associated with symmetries occurring at the virtual level is just one side of a more general holographic principle, which crystallizes in the notion of boundary state: a mixed state living at the virtual degrees of freedom of the tensor network which encodes all its relevant properties, including the topological content, but also the correlations or the spectral gap \cite{cirac_matrix_2017, cirac2011entanglement, kastoryano2019locality, ruiz_matrix_2024}. For the case of 2D topologically ordered tensor networks which are renormalization fixed points, such as Kitaev or Levin-Wen models \cite{kitaev_fault-tolerant_2003, levin_string-net_2005}, those boundary states are themselves renormalization fixed point matrix product density operators, encoding in their defining tensors all topological features of the bulk and, remarkably, giving rise to all known gapped mixed-state phases of matter in 1D \cite{sun_anomalous_2025}.

It is then natural to ask how these ideal one-dimensional fixed points sit within the larger space of local tensors and how that space is organized under coarse-graining.
Renormalization flows of matrix product density operators and the resulting classification of one-dimensional mixed-state phases have only recently begun to be explored \cite{cirac_matrix_2017,ruiz_matrix_2024,coser_classification_2019,sun_anomalous_2025,liu_trading_2025,liu_establishing_2026}. 
Here, a uniform on-site quantum channel acts as a local perturbation of the underlying tensor, and iterating perturbation and coarse-graining traces a trajectory through this space.
Two interrelated questions guide our analysis.
First, we ask whether these fixed points are stable under reasonable perturbations, since exact algebraic fixed-point identities are not expected to survive generic noise, and generic noise need not preserve the special correlations of a topological boundary fixed point.
Second, we ask which mathematical structures arise from the renormalization flows and corresponding fixed points generated by such perturbations.
A closely related renormalization flow for matrix product density operators was recently studied by Kato, who shows that convergence of such flows imposes a pre-bialgebra structure on the tensors \cite{kato_exact_2024}.
Our approach here is complementary: starting from boundary-state fixed points and perturbing them, we identify the resulting renormalization dynamics with convolution powers and classify all possible convergent limits.

In the present paper we focus on finite-dimensional $C^*$-Hopf algebras.
This class extends the group-algebra and dual-group-algebra cases, captures the basic mechanisms behind the renormalization flow and allows for a complete algebraic formulation. 
Our first main observation is that the physical noise admits an intrinsic algebraic description: an on-site quantum channel induces a completely positive map on the $C^*$-Hopf algebra, and the composition of independently perturbed blocks is governed by a convolution product on these maps.
Equivalently, the same data can be encoded by positive linear functionals on the enveloping $C^*$-Hopf algebra $A\otimes A^{\mathrm{op}}$.
In this dual formulation, the renormalization trajectory is simply described by powers of the corresponding functional.

Moreover, if a trajectory converges under repeated coarse-graining, its limit is described by an idempotent state on $A\otimes A^{\mathrm{op}}$, and every idempotent state satisfying the appropriate marginal condition defines a matrix product density operator renormalization fixed point of the kind considered here.
We then reframe the known correspondence between idempotent states, group-like projections and coideal $*$-subalgebras \cite{Franz_2009,franz_new_2009,salmi_idempotent_2012}.
In this way, the limiting fixed points acquire a natural algebraic description in terms of coideal $*$-subalgebras of the enveloping $C^*$-Hopf algebra or, equivalently, in terms of finite $*$-quantum hypergroups.
\begin{first-main-result}[\cref{th:NewFixedPoints}]
Renormalization fixed points arising from boundary theories of two-dimensional non-chiral topologically ordered quantum spin systems, based on $C^*$-Hopf algebras, after arbitrary on-site noise and coarse-graining, are effectively described by finite $*$-quantum hypergroups.
\end{first-main-result}
More precisely, the group-like projections associated with the limiting idempotent states determine such structures in the sense of Refs.~\cite{vandaele_fourier_2007,landstad_finite_2025}.
Thus, finite $*$-quantum hypergroups appear here not as an additional assumption, but as the algebraic structures naturally selected by the renormalization flows.
This provides a concrete physical interpretation of these objects within the tensor network description of mixed states and boundary theories.

In the finite-group cases, the above abstract description becomes very explicit. For group algebras and their duals, idempotent states correspond to subgroups \cite{kawada_probability_1940}. Since the relevant enveloping algebra involves a product of two copies of the group, the classification of attainable fixed points reduces to the classification of subgroups of a direct product, and Goursat's Lemma 
gives precisely such a classification \cite{goursat_sur_1889}:
\begin{goursat-lemma}
The subgroups of a direct product of two groups are uniquely described by the choice, in each factor, of a subgroup and a normal subgroup contained in it, together with an isomorphism between the two resulting quotient groups.
\end{goursat-lemma}
\noindent%
We use it to describe all limiting fixed points for finite group algebras and their duals.

In the general case, it is natural to look for an algebraic counterpart of Goursat's Lemma that classifies the finite $*$-quantum hypergroups contained in a tensor product of two $C^*$-Hopf algebras directly in terms of data attached to each factor. This is the content of our second main result, which parallels the classical statement closely.
\begin{second-main-result}[\cref{thm:QuantumGoursat}]
The finite $*$-quantum hypergroups contained in the tensor product of two $C^*$-Hopf algebras are uniquely described by, in each factor, a finite $*$-quantum hypergroup and an appropriate group-like projection contained in it, together with an anti-isomorphism between the corner $C^*$-Hopf algebras arising from such projections.
\end{second-main-result}
\noindent%
In particular, this result provides a structural description of the renormalization fixed points, extending the group-theoretic classification. To the best of our knowledge, this is the first extension of Goursat's Lemma to the context of $C^*$-Hopf algebras and, as such, we expect it to find further applications in the future.

The paper is organized as follows. In \cref{sec:preliminaries} we review quantum spin chains, matrix product operators and the theory of finite-dimensional $C^*$-Hopf algebras used throughout the paper. We also recall the construction of the matrix product density operators that serve as the starting fixed points, together with their fine-graining and coarse-graining channels. In \cref{sec:flows} we introduce uniform on-site perturbations by quantum channels and translate the resulting renormalization trajectories into maps on the $C^*$-Hopf algebra and positive functionals on the enveloping $C^*$-Hopf algebra. In \cref{sec:fixedpoints} we characterize the convergent limits of these trajectories through idempotent states, group-like projections, coideal $*$-subalgebras and finite $*$-quantum hypergroups. In \cref{sec:groups} we specialize the construction to finite group algebras and their duals, obtaining an explicit classification using Goursat's Lemma. In \cref{sec:quantumgoursat} we prove a quantum generalization of Goursat's Lemma and apply it to describe the renormalization fixed points in the general case.

\section{Preliminaries}
\label{sec:preliminaries}

\subsection{Quantum spin systems and tensor networks}
\label{subsec:quantumsys}

A finite quantum spin chain is a one-dimensional lattice with a finite-dimensional quantum system at each site. First, let us fix the corresponding notation.
\begin{notation}
For a chain of length $n$, we write $\{1,\ldots,n\}$ for the set of sites and use the same local Hilbert space $\mathscr{H}\cong\mathbb{C}^d$ throughout.
The Hilbert space associated with a finite region $\Lambda$ is therefore the tensor product Hilbert space $\mathscr{H}^{\otimes|\Lambda|}$.
We fix an orthonormal basis $\basisE_1,\ldots,\basisE_d\in\mathscr{H}$ and denote by $\basisE^1,\ldots,\basisE^d\in\mathscr{H}^*$ the corresponding dual basis, with respect to the canonical pairing $\eval{\placeholder}{\placeholder}:\mathscr{H}^*\times\mathscr{H}\to\mathbb{C}$.
\end{notation}

One of the fundamental difficulties in quantum many-body theory is that the dimension of this Hilbert space grows exponentially with the system size.
Tensor network states address this obstruction by replacing the coefficients of a single many-body tensor with the contraction of simpler local tensors.
In one dimension, this principle leads to matrix product states.
Their success is tied not only to this economical parametrization, but also to the entanglement structure of low-energy one-dimensional systems, since the ground states of gapped local Hamiltonians satisfy an area law and matrix product states are designed to represent states with such controlled bipartite entanglement \cite{hastings_area_2007,verstraete_matrix_2008}.

In order to define them, let us first introduce an auxiliary finite-dimensional Hilbert space $\mathscr{K}\cong \mathbb{C}^{\chi}$, called the \newterm{virtual space}.
Its dimension, $\chi$, is known as the \newterm{bond dimension} and it controls the size of the internal indices along which the network is contracted, hence the amount of correlation that can be transmitted from one site to the next.
We also consider a given tensor $A \in \operatorname{End}\mathscr{K} \otimes \mathscr{H}$, which may be equivalently read as a collection of matrices, one for each physical basis vector, by setting $ A_i \coloneqq (\mathrm{Id}_{\mathscr{K}}\otimes \basisE^i)(A) \in\operatorname{End}\mathscr{K}$, for all $i=1,\ldots,d$.
Then, for any given $B\in\operatorname{End}\mathscr{K}$, the vector on $n$ sites defined by
\[
    \psi_n(A,B)
    \coloneqq
    \sum_{i_1,\ldots,i_n=1}^{d}
        \operatorname{Tr}(B A_{i_1} \cdots A_{i_n})
        \basisE_{i_1}\otimes \cdots\otimes \basisE_{i_n}
    \in \mathscr{H}^{\otimes n}
\]
is called the \newterm{matrix product vector} generated by the tensor $A$ with \newterm{boundary condition} $B$.

The same philosophy applies to operators, with one physical index for the input and one for the output at each site. 
Matrix product operators provide local descriptions of Hamiltonians, observables, mixed states, transfer matrices and discrete time-evolution steps preserving the underlying tensor network structure \cite{verstraete_matrix_2008,pirvu_matrix_2010,cirac_unitaries_2017,cirac_matrix_2021}.
They are par\-ti\-cu\-lar\-ly relevant here because one-dimensional mixed states, including boundary theories of two-dimensional tensor network states, can often be described in this form \cite{zwolak_mixed-state_2004,cirac_matrix_2017,ruiz_matrix_2024}.
More concretely, the tensor $A$ is replaced by a given tensor $M \in \operatorname{End}\mathscr{K} \otimes \operatorname{End}\mathscr{H}$, which, as above, can be equivalently read as the collection of matrices $M_{ij} \coloneqq (\mathrm{Id}_{\mathscr{K}}\otimes \mathrm{Tr}\circ \basisE_i \basisE^j)(M) \in \operatorname{End}\mathscr{K}$.
Then, for any given $B\in \operatorname{End}\mathscr{K}$, the ope\-ra\-tor on $n$ sites defined by
\[
    \rho_{n}(M,B)
    \coloneqq
    \sum_{\substack{i_1,\ldots,i_n=1\\j_1,\ldots,j_n=1}}^{d}
        \operatorname{Tr}(B M_{i_1 j_1} \cdots M_{i_n j_n})
        \basisE_{i_1}\basisE^{j_1}\otimes \cdots\otimes \basisE_{i_n}\basisE^{j_n}
    \in \operatorname{End}\mathscr{H}^{\otimes n}
\]
is called the \newterm{matrix product operator} generated by $M$ with boundary condition $B$.
Furthermore, if $\rho_n(M,B)\geq 0$ and $\operatorname{Tr}\rho_n(M,B)=1$ for all system sizes $n\in\mathbb{N}$, these operators are called \newterm{matrix product density operators}.
The matrix product density operators studied in this work will be constructed from finite-dimensional $C^*$-Hopf algebras, which we introduce next.

\subsection{Algebraic framework}
\label{subsec:algebras}

This subsection provides a brief review of the basic notions of the theory of coalgebras and $C^*$-Hopf algebras that will be used throughout the paper.
For more background, we refer the reader to
Refs.~\cite{montgomery_representation_2001, montgomery_hopf_1993, bohm_weak_1999, etingof_fusion_2005, etingof_tensor_2015}.

\begin{definition}
\label{def:coalgebra}
A \newterm{coalgebra} is a vector space $A$ over the complex numbers equipped with a linear map $\Delta:A\to A\otimes A$ called the \newterm{comultiplication}, which is \newterm{coassociative}, i.e.,
\begin{equation}
    (\Delta\otimes\mathrm{Id})\circ\Delta =
    (\mathrm{Id}\otimes\Delta)\circ\Delta,
\end{equation}
and a linear map $\varepsilon:A\to \mathbb{C}$, called the \newterm{counit}, satisfying the compatibility identity
\begin{equation}\label{eq:axiom_eps_id}
    (\varepsilon\otimes\mathrm{Id})\circ\Delta
    =
    \mathrm{Id}
    =
    (\mathrm{Id}\otimes\varepsilon)\circ\Delta.
\end{equation}
\end{definition}

\begin{notation}
We define recursively the iterated comultiplications as follows:
\[
    \Delta^{(0)} \coloneqq
        \mathrm{Id}
    , \quad
    \Delta^{(1)} \coloneqq
        \Delta
    ,\quad\text{and}\quad
    \Delta^{(n+1)} \coloneqq
        (\Delta\otimes \mathrm{Id}^{\otimes n})\circ \Delta^{(n)}
        \text{ for }n\geq 1.
\]
We also use \newterm{Sweedler's sumless notation}: for all $x\in A$, we write
\[
    \Delta(x) \eqqcolon x_{(1)}\otimes x_{(2)},
\]
with the summation over tensor components suppressed, i.e., if $\Delta(x)=\sum_i x_{(1),i}\otimes x_{(2),i}$ explicitly, then in Sweedler's notation the summation symbol and the index $i$ are omitted.
For example, for any $x\in A$ and $\varphi\in A^*$, the expression $(\varphi\otimes\mathrm{Id})(\Delta(x))$ is written as $\eval{ \varphi}{x_{(1)}} x_{(2)}$.
More generally, for iterated coproducts we write
\[
    \Delta^{(n)}(x)
    \eqqcolon
    x_{(1)} \otimes x_{(2)} \otimes \cdots \otimes x_{(n+1)}.
\]
When two independent coproducts of the same element appear, we distinguish them with primed indices. We denote the \newterm{opposite coproduct} as
\[
    \Delta^{(n)}_{\mathrm{op}}(x)
    \coloneqq
    x_{(n+1)} \otimes x_{(n)} \otimes \cdots \otimes x_{(1)}.
\]
\end{notation}

\begin{definition}
    \label{def:HA}
    A \newterm{$C^*$-Hopf algebra} is a finite-dimensional $C^*$-algebra $A$ over the complex numbers equipped with the structure of a coalgebra, for which the comultiplication is a $*$-algebra homomorphism, i.e., for all $x,y\in A$,
	\begin{gather}
        \label{eq:DeltaHomo}
        (xy)_{(1)} \otimes (xy)_{(2)}
        =
        x_{(1)} y_{(1)} \otimes x_{(2)} y_{(2)},
        \quad 
        1_{(1)}\otimes 1_{(2)}
        =
        1\otimes 1,
        \\
        \text{and}
        \quad
        (x^*)_{(1)}\otimes (x^*)_{(2)}
        =
        (x_{(1)})^* \otimes (x_{(2)})^*,
    \end{gather}
	the counit, $\varepsilon\in A^*$, is a $*$-algebra homomorphism, i.e., for all $x,y\in A$,
	\begin{equation}
        \label{eq:axiom_eps}
        \eval{\varepsilon}{xy} = \eval{\varepsilon}{x}\eval{\varepsilon}{y},
        \quad
        \eval{\varepsilon}{1}= 1,
        \quad 
        \text{and}
        \quad
        \eval{\varepsilon}{x^*}
        =
        \overline{\eval{\varepsilon}{x}},
    \end{equation}
	and there exists a linear map $S:A\to A$, called the \newterm{antipode}, such that, for all $x\in A$,
	\begin{equation}
		\label{eq:axiomSeps}
        S(x_{(1)}) x_{(2)}
        =
        \eval{\varepsilon}{x} 1
        =
        x_{(1)} S(x_{(2)}).
	\end{equation}
\end{definition}

\begin{definition}\label{def:Aop}
Let $A$ be a $C^*$-Hopf algebra.
\begin{enumerate}
\item[(i)]
The \newterm{opposite $C^*$-Hopf algebra} of $A$ is the $C^*$-Hopf algebra with \newterm{opposite multiplication}, $x\cdot_{\mathrm{op}} y \coloneqq y x$, for all $x,y\in A$, and it is denoted $A^{\mathrm{op}}$.
\item[(ii)]
The \newterm{enveloping $C^*$-Hopf algebra} of $A$ is the tensor product $A\otimes A^{\mathrm{op}}$.
\end{enumerate}
\end{definition}

\begin{definition}\label{def:rep}
    A \newterm{$*$-representation} of a $C^*$-Hopf algebra $A$ is a pair consisting of a fi\-ni\-te-di\-men\-sio\-nal Hilbert space $\mathscr{H}$ over $\mathbb{C}$ and a $*$-algebra homomorphism $\Phi:A\to\operatorname{End}\mathscr{H}$, i.e., a linear map satisfying, for all $x,y\in A$,
    \begin{equation}
        \Phi(x y) = \Phi(x) \Phi(y),
        \quad
        \Phi(1) = \mathrm{Id},
        \quad
        \text{and}
        \quad
        \Phi(x^*) = \Phi(x)^\dagger.
    \end{equation}
    A $*$-representation is said to be \newterm{faithful} if the map $\Phi$ is injective.
\end{definition}

Let us now briefly recall several well-known facts.
\begin{remark}
    $S$ is an involutive $*$-algebra antihomomorphism, i.e., for all $x,y\in A$,
    \begin{equation}
    \label{eq:PropsAntipode}
    	S^2(x) = x,
        \quad
        S(xy) = S(y) S(x),
        \quad
        S(x^*) = S(x)^*,
        \quad
        \text{and}
        \quad
        S(1) = 1,
    \end{equation}
    and it is \newterm{anticomultiplicative}, i.e., for all $x\in A$,
    \begin{equation}
    \label{eq:Santicomult}
    	S(x)_{(1)} \otimes S(x)_{(2)}
        =
        S(x_{(2)}) \otimes S(x_{(1)}).
    \end{equation}
\end{remark}

\begin{remark}\label{rem:dualA}
The dual linear space $A^* = \operatorname{Hom}(A,\mathbb{C})$ is canonically endowed with the structure of a $C^*$-Hopf algebra defined for all $\varphi,\psi\in A^*$ by the maps
\begin{equation*}
	\varphi \psi \coloneqq (\varphi \otimes \psi) \circ \Delta,
    \quad
    \varphi^* \coloneqq \overline{\eval{\varphi}{S(\placeholder)^*}},
    \quad
    \hat\Delta(\varphi)(x\otimes y) \coloneqq  \varphi(xy),
    \quad
    \hat{S}(\varphi)  \coloneqq  \varphi \circ S,
\end{equation*}
the unit of $A^*$ is $\varepsilon\in A^*$, and the counit $\hat\varepsilon\in A^{**}$ is given by $\hat\varepsilon(\varphi) \coloneqq \eval{\varphi}{1}$ for all $\varphi\in A^*$.
\end{remark}

For all $x\in A$, we define the left- and right-multiplication maps $L_x,R_x:A\to A$ by
\begin{equation}
    \label{eq:DefLRelement}
    L_x(a) \coloneqq xa,
    \quad
    R_x(a) \coloneqq ax.
\end{equation}
Similarly, for all $\varphi\in A^*$, we define the left- and right-slice maps $L_\varphi,R_\varphi:A\to A$ by
\begin{equation}
    \label{eq:DefLRfunctional}
    L_\varphi(a) \coloneqq \eval{\varphi}{a_{(1)}} a_{(2)},
    \quad
    R_\varphi(a) \coloneqq \eval{\varphi}{a_{(2)}}a_{(1)}.
\end{equation}
Note that $L_x\circ R_x = R_x \circ L_x$ and $L_\varphi\circ R_\varphi = R_\varphi \circ L_\varphi$.
We also define $c_x,c_\varphi:A\to A$ by
\begin{align}\label{eq:DefConjugations}
    c_x(a)
        & \coloneqq
            (R_x \circ L_x)(a) = xax,
    \\
    c_\varphi(a)
        & \coloneqq
            (R_\varphi \circ L_\varphi)(a)
        =
            \eval{\varphi}{a_{(1)}} a_{(2)}
                \eval{\varphi}{a_{(3)}}.
\end{align}

\begin{proposition}\label{prop:existHaar}
There exists a unique non-degenerate element $h\in A$, which is self-adjoint, idempotent, invariant under the antipode, and cocentral, i.e.,
\begin{equation}
\label{eq:Haar}
	h^2 = h^* = h = S(h),
    \quad
        \text{and}
        \quad
    h_{(1)}\otimes h_{(2)} = h_{(2)}\otimes h_{(1)},
\end{equation}
and is a two-sided integral, i.e., it satisfies the following identity for all $x\in A$:
\begin{equation}
\label{eq:Haar2}
    xh = hx = \eval{\varepsilon}{x} h.
\end{equation}
See e.g.~Refs.~\cite{montgomery_representation_2001,bohm_weak_1999}. The element $h$ is known as the \newterm{Haar integral} of $A$.
\end{proposition}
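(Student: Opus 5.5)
The plan is to prove existence first, then read off the properties, then show uniqueness. Since $A$ is finite-dimensional, a nonzero left integral exists: the space of left integrals $\{\Lambda : x\Lambda=\varepsilon(x)\Lambda \text{ for all } x\}$ is one-dimensional by the Larson--Sweedler theorem. Because $\varepsilon$ is a $*$-homomorphism on a $C^*$-algebra, $\varepsilon$ is a one-dimensional $*$-representation. Hence $A \cong \mathbb{C}\oplus A'$ as $C^*$-algebras, and the minimal central projection $p$ corresponding to the summand where $\varepsilon$ acts is a candidate. Concretely, $p$ satisfies $xp=px=\varepsilon(x)p$ for all $x$. This holds because $\varepsilon(p)=1$, and on the block $\mathbb{C}p$ every $x$ acts by the scalar $\varepsilon(x)$.

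Setting $h:=p$ immediately gives $h^2=h=h^*$, the two-sided integral property \eqref{eq:Haar2}, and $\eval{\varepsilon}{h}=1$, which is the non-degeneracy. For uniqueness, suppose $h'$ is another idempotent two-sided integral with $\varepsilon(h')=1$. Then $h'=\varepsilon(h)h' = hh' = \varepsilon(h')h = h$. One can also use one-dimensionality of the integral space together with the normalization fixed by idempotency, which forces $\varepsilon=1$.

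For $S(h)=h$, the plan is to apply $S$ to $xh=\varepsilon(x)h$. Since $S$ is an antihomomorphism by \eqref{eq:PropsAntipode}, this gives $S(h)S(x)=\varepsilon(x)S(h)$. Because $S$ is bijective ($S^2=\mathrm{Id}$) and $\varepsilon\circ S=\varepsilon$, it follows that $S(h)$ is again a two-sided integral. It is idempotent, and $\varepsilon(S(h))=1$, so uniqueness yields $S(h)=h$.

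The main obstacle is cocentrality, $h_{(1)}\otimes h_{(2)}=h_{(2)}\otimes h_{(1)}$. I would argue through the dual. By \cref{rem:dualA}, $A^*$ is a $C^*$-Hopf algebra, and cocentrality of $h$ is equivalent to $\varphi\psi(h)=\psi\varphi(h)$ for all $\varphi,\psi\in A^*$. In other words, evaluation at $h$, as a functional on $A^*$, is a trace. Evaluation at $h$ is the Haar state of $A^*$: it is invariant because $h$ is an integral, and it is positive. For finite-dimensional (unimodular, $S^2=\mathrm{Id}$) Hopf algebras, the Haar functional is a trace. The standard argument uses the Nakayama automorphism of the integral, $\lambda(xy)=\lambda(y\,S^2(x))$ up to the modular element, which is trivial here because $S^2=\mathrm{Id}$ and $A$ is semisimple, so it is unimodular. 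Alternatively, I would cite Refs.~\cite{montgomery_representation_2001,bohm_weak_1999} for this step, as the statement itself does.
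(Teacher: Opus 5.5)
The paper gives no proof of this proposition; it only cites Montgomery and B\"ohm--Nill--Szlach\'anyi. Your existence, uniqueness and $S(h)=h$ arguments are correct and more self-contained than that. The minimal central projection supporting the one-dimensional $*$-representation $\varepsilon$ is a self-adjoint idempotent two-sided integral with $\varepsilon(h)=1$, and Larson--Sweedler is not needed for this. Uniqueness among nonzero idempotent integrals follows from $h'=hh'=h$, and $S(h)$ is again such an integral.

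\textbf{The gap is in the cocentrality step.} Your reduction is right: $\Delta(h)=\Delta^{\mathrm{op}}(h)$ if and only if $\varphi\mapsto\varphi(h)$ is a trace on $A^*$, and this functional is an integral on the Hopf algebra $A^*$. But in Radford's formula for the Nakayama automorphism of an integral functional on a Hopf algebra $H$, the correction term is the modular function of $H$ itself, i.e.\ the character $\alpha$ with $\Lambda x=\alpha(x)\Lambda$ for a left integral $\Lambda\in H$.
\begin{itemize}
\item With $H=A^*$, what you need is unimodularity of $A^*$: the Haar functional $\hat h$ on $A$ must be both left- and right-invariant.
\item Semisimplicity of $A$ only gives unimodularity of $A$. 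You already used that to get a two-sided $h$, and it does not suffice on its own. In positive characteristic, the dual of a non-unimodular restricted enveloping algebra is commutative (hence unimodular) with $S^2=\mathrm{Id}$, yet its integral is not cocentral.
\item The repair is to note that $A^*$ is itself a finite-dimensional $C^*$-algebra (\cref{rem:dualA}), or to invoke Larson--Radford. Either way $A^*$ is semisimple, hence unimodular.
\item The positivity of $\varphi\mapsto\varphi(h)$ that you assert is unproved, but it plays no role.
\end{itemize}

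\textbf{A smaller point concerns non-degeneracy.} In the cited sense it is not the normalization $\varepsilon(h)=1$. It is the Frobenius property that $\varphi\mapsto\varphi(h_{(1)})h_{(2)}$ is a bijection $A^*\to A$. For Hopf algebras every nonzero integral has this property by Larson--Sweedler, so cite that part of the theorem rather than one-dimensionality.

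Both loose ends can also be closed by one elementary computation using only what you already have.
\begin{itemize}
\item By the pulling-through identities for the two-sided integral $h$ and $S^2=\mathrm{Id}$, both $e\coloneqq h_{(1)}\otimes S(h_{(2)})$ and $e'\coloneqq h_{(2)}\otimes S(h_{(1)})$ satisfy $(a\otimes 1)P=P(1\otimes a)$ and $P(a\otimes 1)=(1\otimes a)P$ for all $a\in A$.
\item Write $A\cong\bigoplus_i M_{n_i}(\mathbb{C})$. The solutions of these two relations are exactly the span of the block flips $F_i=\sum_{j,k}E^{(i)}_{jk}\otimes E^{(i)}_{kj}$.
\item Multiplication is injective on this span, since it sends $F_i$ to $n_i$ times the unit of the $i$-th block.
\item We have $h_{(1)}S(h_{(2)})=\varepsilon(h)1=1$ and $h_{(2)}S(h_{(1)})=S\bigl(h_{(1)}S(h_{(2)})\bigr)=1$. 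Hence $e=e'=\sum_i n_i^{-1}F_i$.
\item Applying $\mathrm{Id}\otimes S$ to $e=e'$ gives $\Delta(h)=\Delta^{\mathrm{op}}(h)$.
\item The explicit form of $e$ also shows that $\varphi\mapsto\varphi(h_{(1)})S(h_{(2)})$ is injective, hence bijective. That is the non-degeneracy.
\end{itemize}
This route needs no Nakayama automorphisms and no appeal to the dual.
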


\begin{remark}
\cref{eq:Haar2} is equivalent to the so-called pulling-through identities
\begin{equation}\label{eq:PT}
	S(x) h_{(1)} \otimes h_{(2)} = h_{(1)} \otimes x h_{(2)},
	\quad
	h_{(1)} S(y) \otimes h_{(2)}  = h_{(1)} \otimes h_{(2)} y,
\end{equation}
for all $x,y\in A$. It is easy to prove using \cref{eq:PropsAntipode} that these are equivalent to
\begin{equation}
\label{eq:PTmodified}
	S(h_{(1)}) x\otimes h_{(2)} = S(h_{(1)})\otimes xh_{(2)},
	\quad
	y S(h_{(1)})\otimes h_{(2)}  = S(h_{(1)})\otimes h_{(2)}y,
\end{equation}
for all $x,y\in A$. See e.g.~Ref.~\cite[Lemma~3.2]{bohm_weak_1999}.
\end{remark}

\begin{remark}
The Haar integral of $A^*$, $\hat{h}\in A^*$, satisfies, for all $x,y\in A$,
\begin{equation}
\label{eq:DualHaar}
	\hat{h} \circ S
    =
    \hat{h},
    \quad
    \eval{\hat{h}}{xy}
    =
    \eval{\hat{h}}{yx},
    \quad
        \text{and}
        \quad
    \eval{\hat{h}}{x_{(1)}} \eval{\hat{h}}{x_{(2)}}
    =
    \eval{\hat{h}}{x},
\end{equation}
by \cref{rem:dualA,prop:existHaar}. Moreover, \cref{eq:Haar2} becomes, for all $x\in A$,
\begin{equation}
\label{eq:DualHaar2}
    \eval{ \hat{h} }{ x_{(1)} } x_{(2)} 
    =
    \eval{ \hat{h} }{ x } 1
    = 
    \eval{ \hat{h} }{ x_{(2)} } x_{(1)}.
\end{equation}
In particular, $\hat{h}$ is trace-like and $\eval{\hat{h}}{1} = 1$. Furthermore, for all $v,w\in A$,
\begin{equation}\label{eq:propHaarExchXY}
    S(v_{(1)}) \eval{\hat{h}}{v_{(2)}w} = w_{(1)} \eval{\hat{h}}{v w_{(2)}},
\end{equation}
see e.g.~Ref.~\cite[Proposition~1.1]{vandaele_algebraic_2025} for a proof.
We also call $\hat{h}$ the \newterm{Haar measure} of $A$.
\end{remark}

The Haar measure endows $A$ with the structure of a Hilbert space, with inner product
\begin{equation}\label{eq:hhatinnerprod}
    \innerprod{x}{y} \coloneqq \eval{\hat{h}}{x^* y}.
\end{equation}

\begin{remark}\label{rem:DaggerLR}
    With respect to the inner product in \cref{eq:hhatinnerprod},
    \[
        (L_x)^\dagger = L_{x^*},\quad (R_x)^\dagger = R_{x^*},\quad (L_\varphi)^\dagger = L_{\varphi^*},\quad\text{and}\quad (R_\varphi)^\dagger = R_{\varphi^*}.
    \]
    See \cref{appendix:proofs2} for a proof.
\end{remark}

\begin{remark}
The Haar measure can be written as
\begin{equation*}
    \hat{h}
    =
    \frac{1}{D} (d_1 \chi_1 + \cdots + d_r \chi_r),
\end{equation*}
where $\chi_1,\ldots,\chi_r\in A^*$ stand for the characters of the irreducible $*$-re\-pre\-sen\-ta\-tions of $A$, $d_1,\ldots,d_r>0$ are their corresponding dimensions, and we let $D  \coloneqq  \dim A$ \cite{montgomery_representation_2001, bohm_weak_2000,etingof_tensor_2015,molnar_matrix_2022}.
\end{remark}

\begin{remark}
Both integrals $h\in A$ and $\hat{h}\in A^*$ satisfy
\begin{equation}\label{eq:DualInt}
	\eval{\hat{h}}{h_{(1)}}h_{(2)} = \frac{1}{D} 1
    ,\quad\text{and}\quad
    \eval{\hat{h}}{(\placeholder)h} = \frac{1}{D} \varepsilon.
\end{equation}
\end{remark}

\begin{examples}
\label{examples:CG}
	(i)
	For any finite group $G$, the algebra $\mathbb{C}G$, together with
	\begin{equation*}
        \Delta(g)  \coloneqq  g\otimes g,
        \quad
        \varepsilon(g)  \coloneqq 
        1,
        \quad
        \text{and}
        \quad
        S(g)  \coloneqq  g^*  \coloneqq  g^{-1},
    \end{equation*}
	for all $g\in G$
	(or, more concretely, their linear extensions), becomes a $C^*$-Hopf algebra.
    The Haar integral of $\mathbb{C}G$ is given by the element
	\begin{equation*}
        h  \coloneqq \frac{1}{|G|}\sum_{g\in G} g.
    \end{equation*}
	(ii)
	The dual space $(\mathbb{C}G)^* \cong \mathbb{C}^G$ becomes a $C^*$-algebra with
	\begin{equation*}
        \delta_g \delta_k = \delta_{g,k} \delta_g,
        \quad
        (\delta_g)^* = \delta_g,
        \quad
        \text{and}
        \quad
        1_{\mathbb{C}^G} = \varepsilon,
    \end{equation*}
    for all $g,k\in G$, where $\delta_g\in (\mathbb{C}G)^*$ stands for the corresponding linear functional of the dual basis and $\delta_{g,k}\in\{0,1\}$ is the Kronecker delta. In addition, it is endowed with
    \begin{equation*}
    \hat\Delta(\delta_g)
         \coloneqq  \sum_{g_1 g_2 = g} \delta_{g_1} \otimes \delta_{g_2},
    \quad
    \hat\varepsilon(\delta_g)
         \coloneqq  \eval{\delta_g}{e},
    \quad
        \text{and}
        \quad
    \hat{S}(\delta_g)
         \coloneqq  \delta_{\smash{g^{-1}}},
    \end{equation*}
    for all $g\in G$, where $e$ stands for the unit of $G$, becoming a $C^*$-Hopf algebra.
    The dual Haar integral is given by
	\begin{equation*}
        \hat{h} = \delta_e.
    \end{equation*}
    Note that the property in \cref{eq:DualInt} becomes 
	$\eval{\hat{h}}{h_{(1)}} h_{(2)} = |G|^{-1} 1$ trivially.
\end{examples}

The following example, known as the \newterm{Kac–Paljutkin $C^*$-Hopf algebra}, is the smallest that is neither a group algebra nor the dual of a group algebra \cite{kac_finite_1966}.

\begin{example}\label{example:H8}
Let $H_8$ be the $*$-algebra generated by $x$, $y$, and $z$, subject to
\begin{equation}
    x^2 = y^2 = 1,
    \quad
    xy = yx,
    \quad
    z^2 = \frac{1}{2}(1 + x + y - xy),
    \quad
    zx = yz,
    \quad\text{and}\quad
    zy = xz,
\end{equation}
where $x^* = x$, $y^* = y$, and $z^*=z$.
We endow it with the comultiplication defined by
\begin{equation}
    \Delta(x) = x \otimes x,
    \quad 
    \Delta(y) = y \otimes y,
    \quad\text{and}\quad
    \Delta(z) = \frac{1}{2} ( z \otimes (1+x) + yz \otimes (1 - x) ),
\end{equation}
the counit defined by $\varepsilon(x) = \varepsilon(y) = \varepsilon(z) = 1$, and the antipode defined by $S(x) = x$, $S(y) = y$, and $S(z) = z$.
The formula for $\Delta(z)$ makes it clear that $\Delta$ is not cocommutative. The representation theory of $H_8$ shows that there are four one-dimensional irreducible $*$-representations and a two-dimensional irreducible $*$-representation, hence
\[
    H_8
    \cong
    \mathbb{C} \oplus \mathbb{C} \oplus \mathbb{C} \oplus \mathbb{C} \oplus \mathrm{Mat}_2(\mathbb{C})
    \cong \mathbb{C}^4 \oplus \mathrm{Mat}_2(\mathbb{C}),
\]
as $*$-algebras; in this picture, it is usually convenient to use the basis elements 
\[
    e_k \coloneqq
    \delta_{1k} \oplus \delta_{2k} \oplus \delta_{3k} \oplus \delta_{4k}
    \oplus
        \big(\begin{smallmatrix}
            \delta_{5k} & \delta_{8k}\\
            \delta_{7k} & \delta_{6k}
        \end{smallmatrix}\big),
\]
for $k=1,\ldots,8$; we refer the reader to e.g.~Ref.~\cite{pal_counterexample_1996} for the explicit definitions of the comultiplication, the counit, and the antipode for $e_1,\ldots,e_8$.
\end{example}

The existence of a Haar integral yields a canonical pairing between the algebra and its dual. This pairing allows one to associate to each element a linear functional in a natural way, giving rise to what is commonly called the \newterm{Fourier transform} \cite{vandaele_fourier_2007}.

\begin{lemma}
\label{lemma:fourier}
    Let $\jmath: A\to A^*$ be the linear map given, for all $x\in A$, by
	\begin{equation}
		\jmath(x)
		 \coloneqq 
		\eval{\hat{h}}{(\placeholder) x}.
	\end{equation}
	Then, $\jmath$ is a one-to-one correspondence and its inverse is given, for all $\varphi\in A^*$, by
	\begin{equation}
		\jmath^{-1}(\varphi)
		  \coloneqq 
		D \eval{\varphi}{h_{(1)}} S(h_{(2)}).
	\end{equation}
\end{lemma}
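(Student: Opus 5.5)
Since $A$ and $A^*$ have the same finite dimension $D$, it suffices to show that the proposed map $\jmath^{-1}$ is a left inverse of $\jmath$, i.e.\ $\jmath^{-1}(\jmath(x)) = x$ for all $x\in A$. Injectivity of $\jmath$ then follows, hence bijectivity, and the one-sided inverse is automatically the two-sided inverse. As a cross-check one may also verify $\jmath(\jmath^{-1}(\varphi))=\varphi$ directly, but it is not needed.

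The computation I would carry out is
\[
    \jmath^{-1}(\jmath(x))
    = D\,\eval{\hat h}{h_{(1)}x}\,S(h_{(2)}).
\]
The key step is to move $x$ off the argument of $\hat h$ using \cref{eq:propHaarExchXY}. Write $\eval{\hat h}{h_{(1)}x}S(h_{(2)})$ and apply $S$ to both sides of \cref{eq:propHaarExchXY}, using $S^2=\mathrm{Id}$ from \cref{eq:PropsAntipode}. An alternative is to apply the pulling-through identities \cref{eq:PT} in the form $h_{(1)}S(y)\otimes h_{(2)} = h_{(1)}\otimes h_{(2)}y$, combined with cocentrality of $h$ from \cref{eq:Haar}.

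Concretely, set $x=S(y)$. Then $h_{(1)}x\otimes h_{(2)} = h_{(1)}\otimes h_{(2)}S(x)$, so
\[
    \eval{\hat h}{h_{(1)}x}\,S(h_{(2)})
    = \eval{\hat h}{h_{(1)}}\,S(h_{(2)}S(x))
    = \eval{\hat h}{h_{(1)}}\,x\,S(h_{(2)}),
\]
where the last equality uses that $S$ is an antihomomorphism with $S^2=\mathrm{Id}$. Now apply $S$ to the first identity of \cref{eq:DualInt}, namely $\eval{\hat h}{h_{(1)}}h_{(2)} = D^{-1}1$, using $S(1)=1$. This gives $\eval{\hat h}{h_{(1)}}S(h_{(2)}) = D^{-1}1$, and therefore $\jmath^{-1}(\jmath(x)) = D\cdot x\cdot D^{-1} = x$.

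The main obstacle is bookkeeping the legs correctly in the pulling-through identity: which tensor factor $x$ multiplies, and on which side. If the version of \cref{eq:PT} I need does not literally match, I would first use cocentrality $h_{(1)}\otimes h_{(2)}=h_{(2)}\otimes h_{(1)}$ to swap the legs. Failing that, I would use \cref{eq:PTmodified} together with $S(h)=h$ and the anticomultiplicativity \cref{eq:Santicomult} to obtain the needed form.
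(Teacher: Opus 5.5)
Your argument is correct and is essentially the paper's: you move $x$ out of $\hat{h}$ via the pulling-through identity, then contract using \cref{eq:DualInt}, where applying $S$ to $\eval{\hat{h}}{h_{(1)}}h_{(2)} = D^{-1}1$ is a clean way to get $\eval{\hat{h}}{h_{(1)}}S(h_{(2)}) = D^{-1}1$. The only difference is that the paper also verifies $\jmath\circ\jmath^{-1}=\mathrm{Id}$ directly, whereas you obtain it from the left inverse and the dimension count $\dim A=\dim A^*$, which is perfectly valid.
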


\noindent%
See \cref{appendix:proofs2} for a proof.

\begin{lemma}
\label{lemma:sigma}
    Let $\sigma: A\otimes A \to A$ be the linear map defined, for all $x,y\in A$, by
	\begin{equation}\label{eq:def_sigma}
		\sigma(x\otimes y)
        \coloneqq 
        \jmath^{-1} (\jmath(x) \jmath(y))
        =
        D \eval{\hat{h}}{h_{(1)}x} \eval{\hat{h}}{h_{(2)}y} S(h_{(3)}).
	\end{equation}
    Then, $\sigma$ is completely positive and it satisfies,
	for all $x,y,z\in A$,
	\begin{equation}
    \label{eq:prophsigma}
		\eval{\hat{h}}{\sigma(x\otimes y)z}
		=
		\eval{\hat{h}}{x z_{(1)}} \eval{\hat{h}}{y z_{(2)}};
	\end{equation}
    moreover, it is associative, i.e., for all $x,y,z\in A$,
    \[
        \sigma(\sigma(x\otimes y)\otimes z)
        =
        \sigma(x \otimes \sigma(y\otimes z)).
    \]
\end{lemma}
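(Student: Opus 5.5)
The plan is to verify the three claims about $\sigma$ in turn: the explicit formula in \cref{eq:def_sigma}, the pairing identity \cref{eq:prophsigma}, and then associativity and complete positivity. Throughout I use the Fourier transform of \cref{lemma:fourier} and the dual-algebra structure of \cref{rem:dualA}.

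\textbf{Explicit formula.} The product in $A^*$ is convolution, so
\[
\jmath(x)\jmath(y) = \eval{\hat{h}}{(\placeholder)_{(1)}x}\,\eval{\hat{h}}{(\placeholder)_{(2)}y}.
\]
Applying $\jmath^{-1}(\varphi)=D\eval{\varphi}{h_{(1)}}S(h_{(2)})$ then gives
\[
\sigma(x\otimes y) = D\eval{\hat{h}}{h_{(1)}x}\eval{\hat{h}}{h_{(2)}y}S(h_{(3)}),
\]
which is the stated expression.

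\textbf{Pairing identity.} Since $\jmath\circ\sigma$ is the product $\jmath(x)\jmath(y)$, we have $\eval{\hat{h}}{z\,\sigma(x\otimes y)} = \eval{\hat{h}}{z_{(1)}x}\eval{\hat{h}}{z_{(2)}y}$. The trace property in \cref{eq:DualHaar} turns this into \cref{eq:prophsigma} directly.

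\textbf{Associativity.} This is immediate. Because $\jmath$ is a bijection intertwining $\sigma$ with the associative product of $A^*$, $\sigma$ is the transported multiplication $\jmath^{-1}(\jmath(\cdot)\jmath(\cdot))$ and is therefore associative.

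\textbf{Complete positivity.} This is the main step, and I would reduce it to a Kraus-type representation.

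1. Rewrite $\sigma$ as a composition of manifestly completely positive maps. Using the pulling-through identities \cref{eq:PT}/\cref{eq:PTmodified} and \cref{eq:propHaarExchXY}, move the $h$-legs into conjugation form, aiming for
\[
\sigma(x\otimes y) = D\,\eval{\hat{h}\otimes\hat{h}}{c(x\otimes y)}\,\cdot(\ldots),
\]
or, more concretely, for
\[
\sigma(x\otimes y) = \sum_i V_i^*\,(x\otimes y)\,V_i \quad\text{with } V_i\in A\otimes A \text{ composed with a } *\text{-homomorphism or conditional expectation.}
\]

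2. A cleaner route is duality. Under $\jmath$, positive elements $x\geq0$ correspond to positive functionals, since $\eval{\hat{h}}{a^*a\,x}=\eval{\hat{h}}{a\,x\,a^*}\ge0$ by faithfulness and traciality of $\hat{h}$. The convolution of positive functionals $\varphi\psi=(\varphi\otimes\psi)\circ\Delta$ is positive because $\Delta$ is a $*$-homomorphism. This already shows that $\sigma$ maps $A_+\otimes A_+$ into $A_+$.

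3. For complete positivity I would instead write
\[
\sigma = (\jmath^{-1}\otimes\ldots)\;\text{as}\;\; \sigma(X) = D\,(\hat{h}\otimes\hat{h}\otimes\mathrm{Id})\bigl((\Delta^{(2)}(h))\cdot(X\otimes1)\bigr)
\]
after applying $S$. Then I would show that the tensor $(\mathrm{Id}\otimes\mathrm{Id}\otimes S)\Delta^{(2)}(h)$ is positive in the appropriate Choi sense. For this I would use that $h$ is a projection, so $\Delta^{(2)}(h)$ is a projection, together with the trace property of $\hat{h}$ to symmetrize it as $P X P$. Slicing with the positive functional $\hat{h}\otimes\hat{h}$ preserves complete positivity.

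The main obstacle is handling the antipode on the third leg. I expect to absorb it via the identity \cref{eq:propHaarExchXY}, converting $S(h_{(3)})$ into a right-multiplication form so that the expression becomes a compression $\bigl(\hat{h}\otimes\hat{h}\otimes\mathrm{Id}\bigr)\bigl(Q^*(X\otimes 1)Q\bigr)$ for a suitable $Q$.
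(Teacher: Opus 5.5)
Your derivations of the explicit formula and of associativity match the paper. Your proof of \cref{eq:prophsigma} is valid and shorter than the paper's pulling-through computation: you evaluate $\jmath(\sigma(x\otimes y))=\jmath(x)\jmath(y)$ at $z$ and then use traciality.

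The gap is complete positivity, which is the substantive claim, and you do not establish it. Step~1 is only a template with no candidate for the $V_i$. Step~2, as you note yourself, only gives $\sigma(A_+\otimes A_+)\subseteq A_+$. Step~3 leaves ``positive in the appropriate Choi sense'' unspecified, and read literally it is false. The element $T\coloneqq(\mathrm{Id}\otimes\mathrm{Id}\otimes S)\Delta^{(2)}(h)$ need not be positive in $A^{\otimes 3}$. Take $A=\mathbb{C}G$ and a $d$-dimensional irreducible $*$-representation $\pi$ with $d\geq 2$. Then the $*$-representation $\varepsilon\otimes\pi\otimes\pi$ sends $T=|G|^{-1}\sum_{g}g\otimes g\otimes g^{-1}$ to $d^{-1}$ times the flip operator, which has eigenvalue $-d^{-1}$ on antisymmetric vectors. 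Moreover, the tool you plan for the third leg, \cref{eq:propHaarExchXY}, is not what is needed there.

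The missing step is short, and your remark that $h$ is a projection is the right ingredient. From $h=hh^*$ you get $\Delta^{(2)}(h)=\Delta^{(2)}(h)\Delta^{(2)}(h)^*$. Applying $S$ on the third leg reverses the order of the two factors there, by $S(xy)=S(y)S(x)$ and $S(x^*)=S(x)^*$. Traciality of $\hat{h}$ lets you reverse them on the first two legs as well. This gives
\[
\sigma(X)=D\,(\hat{h}\otimes\hat{h}\otimes\mathrm{Id})\bigl(w^*(X\otimes 1)w\bigr),
\qquad
w\coloneqq h_{(1)}\otimes h_{(2)}\otimes S(h_{(3)}),
\]
which is manifestly completely positive; this is the paper's argument.

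Alternatively, your Step~2 already yields a complete proof if you stop restricting to elementary tensors and amplify. By linearity, \cref{eq:prophsigma} says that $\sigma$ is the trace-dual of the $*$-homomorphism $\Delta$. Hence for $X\in A\otimes A\otimes\mathrm{Mat}_n(\mathbb{C})$ and $Z\in A\otimes\mathrm{Mat}_n(\mathbb{C})$,
\[
(\hat{h}\otimes\mathrm{Tr})\bigl((\sigma\otimes\mathrm{Id})(X)\,Z\bigr)
=
(\hat{h}\otimes\hat{h}\otimes\mathrm{Tr})\bigl(X\,(\Delta\otimes\mathrm{Id})(Z)\bigr).
\]
If $X\geq 0$, the right-hand side is a positive functional of $Z$. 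Since $\hat{h}\otimes\mathrm{Tr}$ is a faithful trace, this forces $(\sigma\otimes\mathrm{Id})(X)\geq 0$. The paper's route gives an explicit Stinespring-type form; this one is coordinate-free and reuses the identity you already proved.
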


\noindent%
See \cref{appendix:proofs2} for a proof. We remark that, equivalently, $\sigma$ is the adjoint of $\Delta$ relative to the inner product induced by $\hat{h}$ on $A$. Let us also note that the expression for $\sigma(x\otimes y)$ coincides with the convolution product $(x,y)\mapsto \eval{\hat{h}}{S(y_{(1)}) x} y_{(2)}$ defined in Ref.~\cite{vandaele_fourier_2007}.

\subsection{Construction of renormalization fixed points}
\label{subsec:rfps}
In this subsection, we recall the construction of matrix product operators and renormalization fixed points.
Let $A$ be a $C^*$-Hopf algebra, let $\Phi:A\to\operatorname{End}\mathscr{H}$ be a faithful $*$-representation of $A$, and choose any faithful $*$-representation $\Psi:A^*\to\operatorname{End}\mathscr{K}$ of $A^*$. Let $e_1,\ldots,e_{\dim A}$ be a basis of $A$ and let $e^{1},\ldots, e^{\dim A}\in A^*$ stand for its dual basis.
The tensor generating the renormalization fixed points is given by
\begin{equation}
\label{eq:DefTensorRFP}
    M \coloneqq  \sum_{i=1}^{\dim A}  \Psi(e^i) \otimes \hat{b}(\hat{h}) \Phi(e_i),
\end{equation}
where $b:A \to \operatorname{Im}\Psi \subset \operatorname{End}\mathscr{K}$ is the unique linear map satisfying, for all $x\in A$ and $\varphi\in A^*$,
\begin{equation}\label{eq:DefBdry}
	\operatorname{Tr}(b(x) \Psi(\varphi)) = \eval{\varphi}{x};
\end{equation}
here, $b(x)$ is called the boundary condition associated with $x$.
By duality, for each $\varphi \in A^*$ there exists a unique $\hat{b}(\varphi)\in \operatorname{Im}\Phi \subset \operatorname{End}\mathscr{H}$ such that, for all $x\in A$,
\begin{equation}\label{eq:DefBdry-Dual}
	\operatorname{Tr}(\hat{b}(\varphi) \Phi(x)) = \eval{\varphi}{x}.
\end{equation}
For simplicity, we define the weighted representation of $x\in A$ as
\begin{equation}
	\Phi_b(x) \coloneqq \hat{b}(\hat{h}) \Phi(x).
\end{equation}
In particular, this allows us to write $\operatorname{Tr}\Phi_b(x)=\eval{\hat{h}}{x}$ for all $x\in A$ and, furthermore, with the notation of \cref{subsec:quantumsys},
the matrix product operators take the form
\begin{equation}
    \rho_{n}(M,b(x)) = \Phi_b(x_{(1)}) \otimes \Phi_b(x_{(2)}) \otimes\cdots\otimes \Phi_b(x_{(n)})
\end{equation}
for all $x\in A$ and $n\in\mathbb{N}$; we refer the reader to Ref.~\cite{molnar_matrix_2022} for a more detailed explanation. Let us also remark that $\hat{b}(\hat{h})\in\operatorname{Center}(\operatorname{Im}\Phi)$ since $\hat{h}$ is cocentral, and that the operators $\rho_n(M,b(x))$ are mixed states for all $n\in\mathbb{N}$ if $x\geq 0$ and $\eval{\hat{h}}{x} = 1$.

\begin{example}
\label{example:RFPTensorCG}
	Let $G$ be a finite group and consider the $*$-representations defined by
    \begin{equation*}
    \begin{array}{ll}
        \Phi \coloneqq \mathcal{L}:\mathbb{C}G \to \operatorname{End} \mathbb{C}G, 
        &
        \mathcal{L}(g)(h) \coloneqq gh,
        \\[5pt]
        \Psi:(\mathbb{C}G)^*\cong \mathbb{C}^G\to\operatorname{End}\mathbb{C}G,
        &
        \Psi(\delta_g)(h)  \coloneqq \eval{\delta_g}{h} g,
    \end{array}
    \end{equation*}
    for all $g,h\in G$.
    Then, the associated matrix product ope\-ra\-tor tensor is given by
	\[
		M
        =
        \frac{1}{|G|}\sum_{g\in G}
        \eval{\delta_g}{\placeholder} g \otimes \mathcal{L}(g),
	\]
    with the notation of \cref{eq:DefTensorRFP}, and the associated matrix product density operators are thus given, for all $x\in\mathbb{C}G$ and all $n\in\mathbb{N}$, by the expression
    \[
        \rho_{n}(M,b(x))
        =
        \frac{1}{|G|^n}
        \sum_{g\in G}
        \eval{\delta_g}{x} \mathcal{L}(g)\otimes\cdots \otimes\mathcal{L}(g).
    \]
\end{example}
In general, with the previous notation, there exists a fine-graining quantum channel $\Fine:\operatorname{End}\mathscr{H}\to\operatorname{End}\mathscr{H}^{\otimes 2}$, given for all $X\in\operatorname{End}\mathscr{H}$ by the expression
\begin{equation}
    \Fine(X) \coloneqq  D \operatorname{Tr}((\Phi\circ S)(h_{(3)})X) \Phi_b(h_{(1)}) \otimes  \Phi_b(h_{(2)}),
\end{equation}
and satisfies, for all $B\in \operatorname{End}\mathscr{K}$, $n\in\mathbb{N}$, and $k\in\{0,\ldots,n-1\}$, the property
\begin{equation}\label{eq:PropertyFine}
    (\mathrm{Id}^{\otimes k} \otimes \Fine \otimes \mathrm{Id}^{\otimes (n-k-1)})(\rho_n(M,B)) = \rho_{n+1}(M,B),
\end{equation}
i.e., it increases the system size of the matrix product density operators by acting at any site. In addition, there exists a coarse-graining quantum channel $\Coarse:\operatorname{End}\mathscr{H}^{\otimes 2}\to\operatorname{End}\mathscr{H}$, defined for all $X,Y\in\operatorname{End}\mathscr{H}$ by the expression
\begin{equation}
    \Coarse(X\otimes Y) \coloneqq D
        \operatorname{Tr}((\Phi\circ S)(h_{(3)}) X)
        \operatorname{Tr}((\Phi\circ S)(h_{(2)}) Y)
        \Phi_b(h_{(1)})
\end{equation}
and satisfies, for all $B\in \operatorname{End}\mathscr{K}$, $n\in\mathbb{N}$, and $k \in \{0,\ldots,n-1\}$, the property
\begin{equation}\label{eq:PropertyCoarse}
    (\mathrm{Id}^{\otimes k} \otimes \Coarse \otimes \mathrm{Id}^{\otimes (n-k-1)})(\rho_{n+1}(M,B)) = \rho_{n}(M,B).
\end{equation}
We refer to Ref.~\cite[Theorem~3.2]{ruiz_matrix_2024} for a proof. See \cref{fig:FC} for a diagrammatic representation.
\begin{figure}
    \centering
\begin{tikzpicture}[scale=0.8]
    \draw[-mid] (-6.209, 0) -- (-6.209, 0.423);
    \draw[-mid] (-5.927, 0.423) -- (-5.927, 0);
    \draw[RedOrange, -mid] (-5.362, -0.282) -- (-5.786, -0.282);
    \draw[RedOrange, -mid] (-6.35, -0.282) -- (-6.773, -0.282);
    \draw[-mid] (-5.715, 0.988) -- (-5.715, 1.411);
    \draw[-mid] (-5.433, 1.411) -- (-5.433, 0.988);
    \draw[-mid] (-6.703, 0.988) -- (-6.703, 1.411);
    \draw[-mid] (-6.421, 1.411) -- (-6.421, 0.988);
    \filldraw[fill=white] (-6.844, 0.988) rectangle (-5.292, 0.423);
    \filldraw[fill=white] (-6.35, 0) rectangle (-5.786, -0.564);
    \node[anchor=center, font=\small] at (-6.068, -0.318) {$M^{}_{}$};
    \node[anchor=center, font=\small] at (-6.068, 0.706) {$\Fine$};
    \draw[-mid] (-2.963, 0) -- (-2.963, 0.423);
    \draw[-mid] (-2.681, 0.423) -- (-2.681, 0);
    \draw[-mid] (-3.951, 0) -- (-3.951, 0.423);
    \draw[-mid] (-3.669, 0.423) -- (-3.669, 0);
    \draw[RedOrange, -mid] (-3.104, -0.282) -- (-3.528, -0.282);
    \draw[RedOrange, -mid] (-4.092, -0.282) -- (-4.516, -0.282);
    \filldraw[fill=white] (-4.092, 0) rectangle (-3.528, -0.564);
    \node[anchor=center, font=\small] at (-3.81, -0.318) {$M^{}_{}$};
    \draw[RedOrange, -mid] (-2.117, -0.282) -- (-2.54, -0.282);
    \filldraw[fill=white] (-3.104, 0) rectangle (-2.54, -0.564);
    \node[anchor=center, font=\small] at (-2.822, -0.318) {$M^{}_{}$};
    \node[anchor=center] at (-4.939, -0.318) {$=$};
\end{tikzpicture}
\qquad
\begin{tikzpicture}[scale=0.8]
    \draw[RedOrange, -mid] (3.316, -0.282) -- (2.893, -0.282);
    \draw[RedOrange, -mid] (2.328, -0.282) -- (1.905, -0.282);
    \draw[RedOrange, -mid] (4.304, -0.282) -- (3.881, -0.282);
    \node[anchor=center] at (4.727, -0.318) {$=$};
    \draw[-mid] (5.715, 0) -- (5.715, 0.423);
    \draw[-mid] (5.997, 0.423) -- (5.997, 0);
    \draw[RedOrange, -mid] (6.562, -0.282) -- (6.138, -0.282);
    \draw[RedOrange, -mid] (5.574, -0.282) -- (5.151, -0.282);
    \filldraw[fill=white] (5.574, 0) rectangle (6.138, -0.564);
    \node[anchor=center, font=\small] at (5.856, -0.318) {$M^{}_{}$};
    \draw[-mid] (3.457, 0) -- (3.457, 0.423);
    \draw[-mid] (3.739, 0.423) -- (3.739, 0);
    \draw[-mid] (2.469, 0) -- (2.469, 0.423);
    \draw[-mid] (2.752, 0.423) -- (2.752, 0);
    \draw[-mid] (2.963, 0.988) -- (2.963, 1.411);
    \draw[-mid] (3.246, 1.411) -- (3.246, 0.988);
    \filldraw[fill=white] (2.328, 0) rectangle (2.893, -0.564);
    \node[anchor=center, font=\small] at (2.611, -0.318) {$M^{}_{}$};
    \filldraw[fill=white] (3.316, 0) rectangle (3.881, -0.564);
    \node[anchor=center, font=\small] at (3.598, -0.318) {$M^{}_{}$};
    \filldraw[fill=white] (2.328, 0.988) rectangle (3.881, 0.423);
    \node[anchor=center, font=\small] at (3.104, 0.706) {$\Coarse$};
\end{tikzpicture}
    \caption{
    Diagrammatic representation of the actions characterizing the fine- and coarse-graining quantum channels $\Fine$ and $\Coarse$ described in \cref{eq:PropertyFine,eq:PropertyCoarse}.
    Here, tensors are represented as boxes and their indices as legs. The central object is the tensor $M\in\operatorname{End}\mathscr{K}\otimes\operatorname{End}\mathscr{H}$, a four-index tensor drawn as a box with two horizontal legs, representing the first tensor factor, $\operatorname{End}\mathscr{K}\cong\mathscr{K}^*\otimes\mathscr{K}$, and two vertical legs, representing the second tensor factor, $\operatorname{End}\mathscr{H}\cong\mathscr{H}^*\otimes\mathscr{H}$, which carries the physical indices.
    Quantum channels (or, equivalently, their Choi representations) are depicted as maps acting on the corresponding physical legs.
    With these conventions, the figure shows the defining actions of $\Fine$ and $\Coarse$.}
    \label{fig:FC}
\end{figure}
The following result describes the action of both the fine- and coarse-graining quantum channels at the level of the algebra.

\begin{proposition}\label{prop:sigma_is_S}
	The comultiplication $\Delta :A\to A\otimes A$ implements the fine-graining quantum channel, i.e.,
	\begin{equation}
		\label{eq:PropertyT}
        \Fine \circ \Phi_b = (\Phi_b\otimes \Phi_b) \circ \Delta,
	\end{equation}
    and the linear map $\sigma:A\otimes A\to A$ implements the coarse-graining quantum channel, i.e.,
	\begin{equation}
		\label{eq:PropertyS}
        \Coarse \circ (\Phi_b\otimes \Phi_b) = \Phi_b \circ \sigma.
	\end{equation}
\end{proposition}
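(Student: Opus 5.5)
Both identities are checked on elementary tensors and reduce to one trace computation: for $x\in A$,
\[
D\operatorname{Tr}\big((\Phi\circ S)(y)\,\Phi_b(x)\big)=D\operatorname{Tr}\big(\hat b(\hat h)\Phi(S(y)x)\big)=D\eval{\hat h}{S(y)x}.
\]
This uses that $\hat b(\hat h)$ is central in $\operatorname{Im}\Phi$, that $\Phi$ is multiplicative, and the defining property $\operatorname{Tr}\Phi_b(z)=\eval{\hat h}{z}$.

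\textbf{Fine-graining identity.} For $x\in A$ the formula above gives
\[
\Fine(\Phi_b(x))=D\eval{\hat h}{S(h_{(3)})x}\,\Phi_b(h_{(1)})\otimes\Phi_b(h_{(2)}).
\]
Since $\Phi_b\otimes\Phi_b$ is linear, it suffices to prove the algebraic identity
\[
D\eval{\hat h}{S(h_{(3)})x}\,h_{(1)}\otimes h_{(2)}=x_{(1)}\otimes x_{(2)}.
\]
I will move $x$ through the Haar integral. The second pulling-through identity in \cref{eq:PT}, applied to $\Delta(h)$ and then coproducted on the first leg (using that $\Delta$ is a homomorphism), gives
\[
h_{(1)}\otimes h_{(2)}\otimes h_{(3)}y=h_{(1)}y_{(2)}\otimes h_{(2)}y_{(1)}\otimes h_{(3)}\quad\text{up to antipode placements.}
\]
Concretely, I will write $x=\eval{\varepsilon}{x_{(1)}}x_{(2)}$ and use the antipode axiom to insert $S(x_{(i)})x_{(j)}$. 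This turns $S(h_{(3)})x$ into $S(h_{(3)}x_{(\cdot)})$-type expressions, and \cref{eq:PT} lets us absorb $x_{(\cdot)}$ into $h$. One is then left with $D\eval{\hat h}{S(h_{(3)})}\,h_{(1)}x_{(1)}\otimes h_{(2)}x_{(2)}$. By $\hat h\circ S=\hat h$ and \cref{eq:DualInt} (cocentrality of $h$ lets us move the evaluated leg), $D\eval{\hat h}{h_{(3)}}h_{(1)}\otimes h_{(2)}=\Delta(1)=1\otimes1$, which yields $\Delta(x)$.

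An alternative route that avoids the index bookkeeping is duality. Pair both sides with $\jmath(u)\otimes\jmath(v)$, i.e. apply $\eval{\hat h}{u\,\cdot}\otimes\eval{\hat h}{v\,\cdot}$. Then \cref{eq:prophsigma} and \cref{lemma:fourier} reduce the claim to $\sigma$ being the $\hat h$-adjoint of $\Delta$, as remarked after \cref{lemma:sigma}. I would use this route if the direct manipulation becomes messy.

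\textbf{Coarse-graining identity.} In the same way,
\[
\Coarse(\Phi_b(x)\otimes\Phi_b(y))=D^2\eval{\hat h}{S(h_{(3)})x}\eval{\hat h}{S(h_{(2)})y}\,\Phi_b(h_{(1)}).
\]
So it suffices to show
\[
D^2\eval{\hat h}{S(h_{(3)})x}\eval{\hat h}{S(h_{(2)})y}\,h_{(1)}=D\eval{\hat h}{h_{(1)}x}\eval{\hat h}{h_{(2)}y}S(h_{(3)}).
\]
Two tools relate the two sides. The first is the Fourier inversion of \cref{lemma:fourier}, which says $z=D\eval{\jmath(z)}{h_{(1)}}S(h_{(2)})$. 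The second is \cref{eq:Haar}, which gives $S(h)=h$ and cocentrality, and together with anticomultiplicativity \cref{eq:Santicomult} yields $S(h_{(3)})\otimes S(h_{(2)})\otimes S(h_{(1)})=h_{(1)}\otimes h_{(2)}\otimes h_{(3)}$. Substituting this into the left side gives
\[
D^2\eval{\hat h}{h_{(1)}x}\eval{\hat h}{h_{(2)}y}\,S(h_{(3)}),
\]
which agrees with the right side up to a factor of $D$. So the normalisation constants need careful checking against the definition of $\Coarse$ in Ref.~\cite{ruiz_matrix_2024}; some $D$ is presumably absorbed by the normalisation of $\hat b(\hat h)$ in the trace. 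To settle it I will test the constant on $x=y=h$ or in the group case, \cref{example:RFPTensorCG}.

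\textbf{Main obstacle.} I expect the hardest step to be the fine-graining Sweedler manipulation, together with tracking the $D$ normalisations. My fallback is the adjointness argument, checked by pairing against $\jmath$-images, since $\jmath$ is bijective.
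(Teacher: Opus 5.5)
Your coarse-graining computation is the paper's, but the factor-$D$ mismatch you end on is your own miscount, not a normalisation issue. The definition of $\Coarse$ carries a single prefactor $D$, not one per trace. Your trace formula is fine, and it says $\operatorname{Tr}((\Phi\circ S)(y)\Phi_b(x))=\eval{\hat h}{S(y)x}$ with no $D$. The correct expansion is therefore
\[
\Coarse(\Phi_b(x)\otimes\Phi_b(y))=D\eval{\hat h}{S(h_{(3)})x}\eval{\hat h}{S(h_{(2)})y}\,\Phi_b(h_{(1)})=D\eval{\hat h}{h_{(1)}x}\eval{\hat h}{h_{(2)}y}\,\Phi_b(S(h_{(3)}))=\Phi_b(\sigma(x\otimes y)),
\]
which uses exactly your substitution $S(h_{(3)})\otimes S(h_{(2)})\otimes S(h_{(1)})=h_{(1)}\otimes h_{(2)}\otimes h_{(3)}$. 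Your explanation of a $D$ hidden in $\hat b(\hat h)$ is wrong, since $\operatorname{Tr}\Phi_b(z)=\eval{\hat h}{z}$ holds exactly by \cref{eq:DefBdry-Dual}. As written, the proposal leaves the second identity unproved and points to a discrepancy that does not exist; with the prefactor read correctly it closes immediately.

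For the fine-graining identity the paper does no computation and cites \cite[Theorem~3.2]{ruiz_matrix_2024}, so your self-contained proof is a genuinely different route. Your fallback is the cleanest version of it:
\begin{itemize}
\item Pairing $D\eval{\hat h}{S(h_{(3)})x}\,h_{(1)}\otimes h_{(2)}$ with $\jmath(u)\otimes\jmath(v)$ gives literally $\eval{\hat h}{\sigma(u\otimes v)x}$, by the formula for $\sigma$ in \cref{lemma:sigma}.
\item Pairing $\Delta(x)$ gives $\eval{\hat h}{x_{(1)}u}\eval{\hat h}{x_{(2)}v}$.
\item These agree by \cref{eq:prophsigma} and traciality of $\hat h$.
\item Since $\jmath$ is bijective, such functionals span $(A\otimes A)^*$.
\end{itemize}
There is no circularity, because \cref{lemma:sigma} is proved independently of this proposition.

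Your direct Sweedler sketch is too loose as stated. The correct form of your display is $h_{(1)}\otimes h_{(2)}\otimes h_{(3)}y=h_{(1)}S(y_{(2)})\otimes h_{(2)}S(y_{(1)})\otimes h_{(3)}$, and inserting $S(x_{(i)})x_{(j)}$ is unnecessary. The precise argument is one line:
\begin{itemize}
\item Cocentrality and the first identity in \cref{eq:PTmodified} give $h_{(1)}\otimes S(h_{(2)})x=xh_{(1)}\otimes S(h_{(2)})$.
\item Apply $\Delta\otimes\mathrm{Id}$, then $D\hat h$ on the last leg.
\item Using $\hat h\circ S=\hat h$ and $D\eval{\hat h}{h_{(2)}}h_{(1)}=1$, this yields $\Delta(x)\Delta(1)=\Delta(x)$.
\end{itemize}
This route removes the dependence on the external reference. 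It also makes transparent that \cref{eq:PropertyT} is just the statement that $\sigma$ is the $\hat h$-adjoint of $\Delta$.
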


See \cref{appendix:proofs2} for a proof.

\section{Renormalization flows and their algebraic formulation}
\label{sec:flows}

In this section we address how renormalization flows can be defined close to a boundary state renormalization fixed point and how they can be described at an intrinsic algebraic level.
As introduced before, we focus on perturbations generated by uniform local noise, modeled by completely positive and trace-preserving quantum channels acting identically on each physical site and combined with the coarse-graining inherent to the fixed point.

\subsection{Renormalization trajectories} \label{sec:trajectories}

Having defined the properties characterizing renormalization fixed points and the corresponding coarse-graining and fine-graining quantum channels, we now consider the resulting renormalization trajectories in the vicinity of these fixed points.
To this end, we consider an arbitrary on-site perturbation obtained by applying the same completely positive trace-preserving linear map to every site of the mixed state.
\smallbreak\noindent%
Let us denote the on-site quantum channel by $\Noisy:\operatorname{End}\mathscr{H}\to\operatorname{End}\mathscr{H}$ and recall that 
the tensor generating the matrix product density operators is given by
\[
    M
    \coloneqq
    \sum_{i=1}^{\dim A} \Psi(e^{i}) \otimes \Phi_b(e_i).
\]
By applying the noise quantum channel $\Noisy$, we obtain matrix product density operators which are generated by the tensor
\begin{equation}
\label{eq:defM1}
    M_1
    \coloneqq
    \sum_{i=1}^{\dim A} \Psi(e^{i}) \otimes \Noisy(\Phi_b(e_i)).
\end{equation}
If one now coarse-grains the matrix product density operators obtained above, the generating tensor becomes
\begin{equation}
    \label{eq:defM2}
    M_{2}
    \coloneqq
    \sum_{i,j=1}^{\dim A}
        \Psi(e^{i}e^{j}) \otimes \Coarse(\Noisy(\Phi_b(e_i))\otimes \Noisy(\Phi_b(e_j))),
\end{equation}
or, more interestingly, if we employ the fine-graining quantum channel, one can rewrite
\begin{equation}
    \label{eq:charM2}
    M_{2}
    =
    \sum_{i=1}^{\dim A}
        \Psi(e^i) \otimes (\Coarse\circ \Noisy^{\otimes 2}\circ \Fine)(\Phi_b(e_i)).
\end{equation}
In general, for $n\geq 2$ we define
\begin{align}
    \label{eq:FineNsteps}
    \Fine^{\smash{(n)}} &\coloneqq (\Fine\otimes \mathrm{Id}^{\otimes (n-2)})\circ \cdots\circ  (\Fine\otimes \mathrm{Id}) \circ \Fine:\operatorname{End}\mathscr{H}\to \operatorname{End}\mathscr{H}^{\otimes n},
\\
    \label{eq:CoarseNsteps}
    \Coarse^{\smash{(n)}} &\coloneqq \Coarse \circ (\Coarse\otimes\mathrm{Id}) \circ \cdots \circ (\Coarse\otimes \mathrm{Id}^{\otimes (n-2)}):\operatorname{End}\mathscr{H}^{\otimes n}\to \operatorname{End}\mathscr{H}.
\end{align}
Note that by coassociativity of the comultiplication and associativity of $\sigma$, the order of the tensor factors in \cref{eq:FineNsteps,eq:CoarseNsteps} is not relevant.
Then, the tensor after $n$ steps is given by the expression
\begin{equation}
\label{eq:defMn}
    M_{n} \coloneqq
    \sum_{i=1}^{\dim A}
        \Psi(e^i) \otimes (\Coarse^{(n)}\circ \Noisy^{\otimes n}\circ \Fine^{(n)})(\Phi_b(e_i)).
\end{equation}
Equivalently, $M_n$ is characterized by the property
\begin{equation}
\label{eq:charMn}
    (\Coarse^{(n)}\circ \Noisy^{\otimes n} \circ \Fine^{(n)})(\rho_1(M,\placeholder))
    =
    \rho_{1}(M_{n},\placeholder).
\end{equation}
We will focus on trajectories where the limit
\begin{equation}
\label{eq:defMinfty}
M_{\smash{\infty}}
    \coloneqq
        \lim_{n\to \infty} M_{\smash{n}}
\end{equation}
exists in $\operatorname{End}\mathscr{K}\otimes\operatorname{End}\mathscr{H}$.
In this case, $M_{\smash{\infty}}$ is again a renormalization fixed point in the sense of \cref{subsec:rfps}.
See \cref{fig:tensornetworks} for diagrammatic representations of the previous constructions, and \cref{fig:WilsonMn} for a sketch of the renormalization flow.
Rather than focusing directly on the tensors $M_n$, we will often consider the induced flow at the level of quantum channels:
\begin{equation}
    \Noisy_n \coloneqq \Coarse^{(n)} \circ \Noisy^{\otimes n} \circ \Fine^{(n)} :
    \operatorname{End}\mathscr{H}\to\operatorname{End}\mathscr{H}.
\end{equation}

\begin{figure}[!ht]
    \centering
\begin{tikzpicture}[scale=0.85]
    \draw[RedOrange, -mid] (-6.209, 0.071) -- (-6.632, 0.071);
    \draw[-mid] (-6.068, 0.353) -- (-6.068, 0.776);
    \draw[-mid] (-5.786, 0.776) -- (-5.786, 0.353);
    \draw[RedOrange, -mid] (-5.221, 0.071) -- (-5.644, 0.071);
    \filldraw[fill=white] (-6.209, 0.353) rectangle (-5.644, -0.212);
    \node[anchor=center, font=\footnotesize, scale=0.9] at (-5.927, 0.071) {$M$};
    \draw[-mid] (-6.068, 1.341) -- (-6.068, 1.764);
    \draw[-mid] (-5.786, 1.764) -- (-5.786, 1.341);
    \filldraw[fill=white] (-6.209, 1.341) rectangle (-5.644, 0.776);
    \node[anchor=center, font=\footnotesize, scale=0.9] at (-5.927, 1.058) {$\Noisy$};
    \draw[RedOrange, -mid] (-3.951, 0.071) -- (-4.374, 0.071);
    \draw[-mid] (-3.81, 0.353) -- (-3.81, 0.776);
    \draw[-mid] (-3.528, 0.776) -- (-3.528, 0.353);
    \draw[RedOrange, -mid] (-2.963, 0.071) -- (-3.387, 0.071);
    \filldraw[fill=white] (-3.951, 0.353) rectangle (-3.387, -0.212);
    \node[anchor=center, font=\footnotesize, scale=0.9] at (-3.669, 0.035) {$M_1$};
    \draw[-mid] (-2.822, 0.353) -- (-2.822, 0.776);
    \draw[-mid] (-2.54, 0.776) -- (-2.54, 0.353);
    \draw[RedOrange, -mid] (-1.976, 0.071) -- (-2.399, 0.071);
    \filldraw[fill=white] (-2.963, 0.353) rectangle (-2.399, -0.212);
    \node[anchor=center, font=\footnotesize, scale=0.9] at (-2.681, 0.035) {$M_1$};
    \filldraw[fill=white] (-3.951, 1.341) rectangle (-2.399, 0.776);
    \draw[-mid] (-3.316, 1.341) -- (-3.316, 1.764);
    \draw[-mid] (-3.034, 1.764) -- (-3.034, 1.341);
    \node[anchor=center, font=\footnotesize, scale=0.9] at (-3.175, 1.058) {$\Coarse$};
    \draw[-mid] (4.727, -0.212) -- (4.727, -0.635);
    \draw[-mid] (5.433, -0.635) -- (5.433, -0.212);
    \draw[-mid] (5.715, -0.212) -- (5.715, -0.635);
    \filldraw[fill=white] (4.304, 0.353) rectangle (5.856, -0.212);
    \draw[-mid] (4.939, 0.353) -- (4.939, 0.776);
    \draw[-mid] (5.221, 0.776) -- (5.221, 0.353);
    \node[anchor=center, font=\footnotesize, scale=0.9] at (5.08, 0.071) {$\Coarse$};
    \filldraw[fill=white] (4.798, 1.341) rectangle (6.491, 0.776);
    \draw[-mid] (5.503, 1.341) -- (5.503, 1.764);
    \draw[-mid] (5.786, 1.764) -- (5.786, 1.341);
    \node[anchor=center, font=\footnotesize, scale=0.9] at (5.644, 1.058) {$\Coarse$};
    \draw[-mid] (6.068, 0.353) -- (6.068, 0.776);
    \draw[-mid] (6.35, 0.776) -- (6.35, 0.353);
    \node[anchor=center, font=\footnotesize, scale=0.9] at (6.209, 0.071) {$\ddots$};
    \node[anchor=center, font=\footnotesize, scale=0.9] at (6.526, 0.071) {$\ddots$};
    \filldraw[fill=white] (-1.129, 0.353) rectangle (0.423, -0.212);
    \node[anchor=center, font=\footnotesize, scale=0.9] at (-0.353, 0.071) {$\Fine$};
    \draw[-mid] (-0.494, -0.635) -- (-0.494, -0.212);
    \draw[-mid] (-0.212, -0.212) -- (-0.212, -0.635);
    \filldraw[fill=white] (-1.129, 2.328) rectangle (0.423, 1.764);
    \draw[-mid] (-0.494, 2.328) -- (-0.494, 2.752);
    \draw[-mid] (-0.212, 2.752) -- (-0.212, 2.328);
    \node[anchor=center, font=\footnotesize, scale=0.9] at (-0.353, 2.046) {$\Coarse$};
    \filldraw[fill=white] (-1.129, 1.341) rectangle (-0.564, 0.776);
    \node[anchor=center, font=\footnotesize, scale=0.9] at (-0.847, 1.058) {$\Noisy$};
    \filldraw[fill=white] (-0.141, 1.341) rectangle (0.423, 0.776);
    \node[anchor=center, font=\footnotesize, scale=0.9] at (0.141, 1.058) {$\Noisy$};
    \draw[-mid] (-0.988, 1.341) -- (-0.988, 1.764);
    \draw[-mid] (-0.706, 1.764) -- (-0.706, 1.341);
    \draw[-mid] (0, 1.341) -- (0, 1.764);
    \draw[-mid] (0.282, 1.764) -- (0.282, 1.341);
    \draw[-mid] (-0.988, 0.353) -- (-0.988, 0.776);
    \draw[-mid] (-0.706, 0.776) -- (-0.706, 0.353);
    \draw[-mid] (0, 0.353) -- (0, 0.776);
    \draw[-mid] (0.282, 0.776) -- (0.282, 0.353);
    \draw[shift={(1.411, 1.764)}, yscale=-1, mid-] (0, 0) -- (0, 0.423);
    \draw[shift={(1.693, 1.341)}, yscale=-1, mid-] (0, 0) -- (0, -0.423);
    \draw[shift={(2.399, 1.764)}, yscale=-1, mid-] (0, 0) -- (0, 0.423);
    \draw[shift={(2.681, 1.341)}, yscale=-1, mid-] (0, 0) -- (0, -0.423);
    \filldraw[shift={(1.27, 0.776)}, yscale=-1, fill=white] (0, 0) rectangle (1.552, -0.564);
    \draw[shift={(1.905, 0.776)}, yscale=-1, mid-] (0, 0) -- (0, 0.423);
    \draw[shift={(2.187, 0.353)}, yscale=-1, mid-] (0, 0) -- (0, -0.423);
    \node[anchor=center, font=\footnotesize, scale=0.9] at (2.046, 1.058) {$\Fine$};
    \filldraw[shift={(1.764, -0.212)}, yscale=-1, fill=white] (0, 0) rectangle (1.693, -0.564);
    \draw[shift={(2.469, -0.212)}, yscale=-1, mid-] (0, 0) -- (0, 0.423);
    \draw[shift={(2.752, -0.635)}, yscale=-1, mid-] (0, 0) -- (0, -0.423);
    \node[anchor=center, font=\footnotesize, scale=0.9] at (2.611, 0.071) {$\Fine$};
    \draw[shift={(3.034, 0.776)}, yscale=-1, mid-] (0, 0) -- (0, 0.423);
    \draw[shift={(3.316, 0.353)}, yscale=-1, mid-] (0, 0) -- (0, -0.423);
    \node[yscale=-1, anchor=center, font=\footnotesize, scale=0.9] at (3.175, 1.058) {$\ddots$};
    \node[yscale=-1, anchor=center, font=\footnotesize, scale=0.9] at (3.493, 1.058) {$\ddots$};
    \draw[-mid] (4.516, -0.635) -- (4.516, -0.212);
    \node[anchor=center] at (-5.927, -1.129) {(a)\vphantom{b}};
    \node[anchor=center] at (-3.175, -1.129) {(b)};
    \node[anchor=center] at (-0.353, -1.129) {(c)\vphantom{b}};
    \node[anchor=center] at (2.364, -1.129) {(d)\vphantom{b}};
    \node[anchor=center] at (5.398, -1.129) {(e)\vphantom{b}};
\end{tikzpicture}
\caption{
Diagrammatic representation, in the tensor network picture, of \cref{eq:defM1,eq:defM2,eq:charM2,eq:FineNsteps,eq:CoarseNsteps}, following the conventions of \cref{fig:FC}. The quantum channel $\Noisy$ (or, equivalently, its Choi representation) is depicted as an operation with two inputs and two outputs. In (a) we depict $M_1$, obtained by applying $\Noisy$ to the physical indices of $M$. In (b) we depict the tensor $M_2$, obtained by coarse-graining two tensors $M_1$. In (c) we represent the action on the physical indices from which one obtains $M_2$ from $M$. In (d) and (e) we depict the composition of two or more fine-graining and coarse-graining quantum channels, respectively, corresponding to \cref{eq:FineNsteps,eq:CoarseNsteps}.
}
    \label{fig:tensornetworks}
\end{figure}

\begin{figure}[!ht]
    \centering
\begin{tikzpicture}[scale=1]
    \node[disk, No] at (-4.764, 1.232) {};
    \node[disk, No] at (-2.844, 0.138) {};
    \node[disk, No] at (-0.634, 0.152) {};
    \draw[densely dashed, ->] (-4.581, 1.286) .. controls (-3.157, 1.521) and (-1.886, 1.188) .. (-0.77, 0.287);
    \node[disk, No] at (1.273, 1.268) {};
    \draw[->] (-4.579, 1.183) .. controls (-3.52, 1.146) and (-4.208, -0.083) .. (-3.032, 0.1);
    \draw[->] (-2.665, 0.206) .. controls (-1.724, 0.686) and (-1.705, -0.722) .. (-0.771, 0.018);
    \node[anchor=center] at (-5.147, 1.241) {$M$};
    \node[anchor=center] at (1.801, 1.254) {$M_\infty$};
    \node[anchor=center] at (-0.331, -0.164) {$M_2$};
    \node[anchor=center] at (-2.875, -0.243) {$M_1$};
    \node[rotate=27.847, anchor=center] at (0.315, 0.871) {${\cdots}$};
    \draw (-0.509, 0.296) arc[start angle=-132.637, end angle=-122.661, x radius=4.077, y radius=-4.077];
    \draw[->] (0.574, 1.012) arc[start angle=-114.318, end angle=-106.536, x radius=4.077, y radius=-4.077];
    \node[anchor=center] at (-2.304, 1.674) {$\Coarse \circ \Noisy^{\otimes 2}\circ \Fine$};
    \node[anchor=center] at (-4.153, 0.269) {$\Noisy$};
    \node[anchor=center] at (-1.734, -0.44) {$\Coarse$};
\end{tikzpicture}
    \caption{Diagrammatic representation of the renormalization flow generated by the tensors in \cref{eq:defM1,eq:defM2,eq:defMinfty}.}
    \label{fig:WilsonMn}
\end{figure}

\begin{remark}
One can assume without loss of generality that $\Noisy:\operatorname{Im}\Phi \to \operatorname{Im}\Phi$. This is due to the fact that (i) the noise quantum channel acts on an already prepared global renormalization fixed-point matrix product density operator, and (ii) the action of the noise quantum channel is followed by the coarse-graining procedure, which destroys any non-diagonal block in the $C^*$-Hopf algebra case. Remarkably, (ii) may no longer be true in more general scenarios, such as $C^*$-weak Hopf algebras, where the coarse-graining channel exhibits more exotic behavior.
\end{remark}

\subsection{From noise quantum channels to algebra maps}

We will now show that the trajectories of quantum channels $\Noisy_n$ introduced in the previous subsection can be understood as trajectories of linear maps $F_{n}$ on the $C^*$-Hopf algebra $A$ satisfying the appropriate compatibility conditions, or equivalently as sequences of powers of certain positive linear functionals $\phi^{n}$ on the enveloping $C^*$-Hopf algebra $A \otimes A^{\mathrm{op}}$. 
In particular, these sequences can be characterized algebraically in terms of convolutions and, in subsequent sections, the fixed points of such trajectories will be more elegantly described using idempotency conditions.

The first step is to establish that one can, without loss of generality, consider the noise quantum channels equivalently as linear maps on the underlying symmetry algebra.

\begin{proposition}
\label{prop:NF}
	Let $A$ be a $C^*$-Hopf algebra, let $\Phi : A \to \operatorname{End}\mathscr{H}$ be a faithful $*$-re\-pre\-sen\-ta\-tion of $A$ and consider any completely positive linear map $F:A\to A$ such that $\hat{h}\circ F = \hat{h}$.
    Then, the linear map $\Noisy_F:\operatorname{End}\mathscr{H}\to\operatorname{End}\mathscr{H}$ defined by
	\begin{equation}\label{eq:DefinitionNF}
		\Noisy_F(X)
            \coloneqq
            D \operatorname{Tr}((\Phi\circ S)(h_{(1)})X) (\Phi_b\circ F)(h_{(2)})
	\end{equation}
    for all $X\in\operatorname{End}\mathscr{H}$
	is a completely positive and trace-preserving map satisfying
	\begin{equation}\label{eq:PropertyNF}
		\Noisy_F\circ \Phi_b
        =
        \Phi_b\circ F.
	\end{equation}
	Conversely, for any completely positive and trace-preserving linear map of the form $\Noisy:\operatorname{Im}\Phi\to\operatorname{Im}\Phi$, the linear map $F_\Noisy:A\to A$ defined by
	\begin{equation}\label{eq:NtoF}
		F_\Noisy(x) \coloneqq D \operatorname{Tr}((\Phi\circ S)(h_{(1)}) \Noisy(\Phi_b(x)))  h_{(2)}
	\end{equation}
	for all $x\in A$ is completely positive and satisfies $\hat{h}\circ F_{\Noisy} = \hat{h}$. Moreover, the maps $\Noisy\mapsto F_{\Noisy}$ and $F\mapsto \Noisy_F$ establish a one-to-one correspondence.
\end{proposition}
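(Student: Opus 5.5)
The key observation I would start from is that the map $X\mapsto D\operatorname{Tr}((\Phi\circ S)(h_{(1)})X)\,h_{(2)}$ inverts $\Phi_b$ on $\operatorname{Im}\Phi$. Write $\pi:\operatorname{End}\mathscr{H}\to A$ for this map. I would first show that
\[
\pi(\Phi_b(x)) = D\operatorname{Tr}(\Phi_b(x)\,\Phi(S(h_{(1)})))\,h_{(2)} = D\eval{\hat h}{x S(h_{(1)})}h_{(2)} = x .
\]
The middle equality uses $\operatorname{Tr}\Phi_b(y)=\eval{\hat h}{y}$ and the centrality of $\hat b(\hat h)$ in $\operatorname{Im}\Phi$. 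The last equality is \cref{lemma:fourier} applied to $\jmath(x)\circ S$, which converts the coproduct factors via $S(h)=h$ and cocentrality of $h$. Hence $\pi\circ\Phi_b=\mathrm{Id}_A$. Since $\Phi$ is faithful, $\Phi_b$ is a linear bijection $A\to\operatorname{Im}\Phi$, because $\hat b(\hat h)$ is an invertible central element: it is positive with full support, as $\hat h$ is faithful. Then $\pi$ restricted to $\operatorname{Im}\Phi$ is $\Phi_b^{-1}$. With this, $\Noisy_F=\Phi_b\circ F\circ\pi$ and $F_\Noisy=\pi\circ\Noisy\circ\Phi_b$. The intertwining property \cref{eq:PropertyNF} and mutual inverseness (for $\Noisy$ defined on $\operatorname{Im}\Phi$) follow immediately.

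Trace preservation reduces to the condition on $\hat h$. Because $\operatorname{Tr}\circ\Phi_b=\hat h$, we get $\operatorname{Tr}\Noisy_F(\Phi_b(x))=\hat h(F(x))=\hat h(x)=\operatorname{Tr}\Phi_b(x)$, so $\Noisy_F$ is trace preserving on $\operatorname{Im}\Phi$. For a general $X$, $\pi(X)$ depends only on the conditional expectation $E$ of $X$ onto $\operatorname{Im}\Phi$ that is trace-compatible, i.e. the pinching onto the block structure. So $\operatorname{Tr}\Noisy_F(X)=\operatorname{Tr}(E(X))=\operatorname{Tr}X$. Conversely, $\hat h\circ F_\Noisy=\operatorname{Tr}\circ\Noisy\circ\Phi_b=\operatorname{Tr}\circ\Phi_b=\hat h$.

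Complete positivity is where I expect the main work. My plan is to show that $\pi=\Phi_b^{-1}\circ E$, and that $\Phi_b^{-1}$ and $\Phi_b$ are completely positive with respect to the $C^*$-structures. Writing $\operatorname{Im}\Phi\cong\bigoplus_\alpha \mathrm{Mat}_{d_\alpha}\otimes \mathrm{Id}_{m_\alpha}$, the element $\hat b(\hat h)$ is a positive central element, a scalar $c_\alpha>0$ on each block. Then $\Phi_b=\Phi\cdot\hat b(\hat h)$ acts blockwise as $x_\alpha\mapsto c_\alpha\, x_\alpha\otimes\mathrm{Id}_{m_\alpha}$, and is therefore a $*$-homomorphism composed with a positive central rescaling, hence completely positive. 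Its inverse on $\operatorname{Im}\Phi$, the map $c_\alpha x_\alpha\otimes \mathrm{Id}\mapsto x_\alpha$, is also completely positive. The map $\pi$ is this inverse composed with the trace-preserving conditional expectation onto $\operatorname{Im}\Phi$. That conditional expectation is the partial trace over the multiplicity spaces divided by $m_\alpha$, composed with block pinching, so it is completely positive.

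The main obstacle is verifying that the explicit formula for $\pi$ coincides with $\Phi_b^{-1}\circ E$ also off $\operatorname{Im}\Phi$. I would handle this by noting that $X\mapsto \operatorname{Tr}(\Phi(y)X)$ only sees $E(X)$, since $E$ is the trace-adjoint of the inclusion. This yields complete positivity of $\Noisy_F=\Phi_b\circ F\circ\pi$ and of $F_\Noisy=\pi\circ\Noisy\circ\Phi_b$ as compositions of completely positive maps.
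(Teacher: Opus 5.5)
Your argument is correct, but it is organized differently from the paper's.

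\textbf{The paper's route.} The paper proves each claim by a separate Hopf-algebraic computation:
\begin{itemize}
\item complete positivity of $\Noisy_F$ via an explicit Stinespring operator $V=\sqrt{D}\sum_i(\Phi\circ S)(q_{i,L})\otimes\hat b(\hat h)^{1/2}\Phi(q_{i,R})$, built from a factorization $h_{(1)}\otimes F(h_{(2)})=qq^*$;
\item complete positivity of $F_\Noisy$ via $W=\sqrt{D}\,h_{(1)}\otimes(\Phi\circ S)(h_{(2)})$, using $h=hh^*$;
\item trace preservation, the intertwining relation and the two inverse relations, each by pulling-through and the integral identities.
\end{itemize}

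\textbf{Your route.} You reduce everything to the single identity $\pi\circ\Phi_b=\mathrm{Id}_A$. You combine it with the factorization $\pi=\Phi_b^{-1}\circ E$ through the trace-preserving conditional expectation onto $\operatorname{Im}\Phi$. After that, $\Noisy_F=\Phi_b\circ F\circ\pi$ and $F_\Noisy=\pi\circ\Noisy\circ\Phi_b$ turn every statement into a composition argument.

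\textbf{What each buys.} Your route needs the multiplicity decomposition of $\Phi$ (or Tomiyama's theorem). It also needs an explicit check that $\hat b(\hat h)$ is invertible in $\operatorname{Im}\Phi$, which you derive correctly from faithfulness of $\hat h$. The paper needs this fact too, but uses it only implicitly when it writes $\Noisy(\Phi_b(x))=\Phi_b(\eta_x)$. In exchange, your route shows $\Noisy_F=\Noisy_F\circ E$, i.e.\ $\Noisy_F$ depends only on the $\operatorname{Im}\Phi$-component of its input. The paper's route avoids any Wedderburn bookkeeping and produces concrete dilation operators.

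\textbf{One wording fix.} In the trace-preservation paragraph you call $E$ ``the pinching onto the block structure''. That is wrong as soon as some multiplicity $m_\alpha>1$, because block pinching alone does not land in $\operatorname{Im}\Phi$. The correct $E$ is the one you give later: pinching followed by the normalized partial trace over the multiplicity spaces.
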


\begin{proof}
	Let $F:A\to A$ be as above. We first prove that $\Noisy_F$ is completely positive by building a Stinespring representation. Let
	\begin{equation}
		Q
        \coloneqq
        (\mathrm{Id}\otimes F)\circ \Delta(h)
        =
        h_{(1)}\otimes F(h_{(2)})
        \in A\otimes A,
	\end{equation}
	which is positive since $\Delta$ is a $*$-algebra homomorphism and $F$ is assumed to be completely positive. Hence, $Q = q q^*$ for some $q \coloneqq \sum_{i=1}^r q_{i,L}\otimes q_{i,R}\in A \otimes A$, and for all $X\in\operatorname{End}\mathscr{H}$,
	\begin{equation}
		\Noisy_F(X)
		=
        \sum_{i,j=1}^r D
		\operatorname{Tr}((\Phi\circ S)(q_{\smash{i,L}}^{} q_{\smash{j,L}}^*)X)
            \hat{b}(\hat{h}) 
        \Phi(q_{\smash{i,R}}^{} q_{\smash{j,R}}^*).
	\end{equation}
	Using that $\hat{b}(\hat{h})$ is central and positive, $\Phi$ is a $*$-al\-ge\-bra ho\-mo\-mor\-phism, $S$ is a $*$-algebra an\-ti\-ho\-mo\-morphism, and the trace is cyclic, one can rewrite the previous expression as
	\begin{equation}
        \Noisy_F(X)
        =
        \sum_{i,j=1}^r
            D
        \operatorname{Tr}((\Phi\circ S)(q_{i,L})X(\Phi\circ S)(q_{j,L})^\dagger )
            \hat{b}(\hat{h})^{1/2}
            \Phi(q_{i,R})
            \Phi(q_{j,R})^\dagger
            \hat{b}(\hat{h})^{1/2}.
	\end{equation}
	By defining $V \coloneqq \sqrt{D} \sum_{i=1}^r   (\Phi\circ S)(q_{i,L}) \otimes\hat{b}(\hat{h})^{1/2} \Phi(q_{i,R}) \in\operatorname{End}\mathscr{H}^{\otimes 2}$,
	\begin{equation}
    \Noisy_F(X)
    =
    (\mathrm{Tr} \otimes \mathrm{Id})(V (X\otimes \mathrm{Id})V^\dagger).
    \end{equation}
	Moreover, $\Noisy_F$ is trace-preserving since, for all $X\in\operatorname{End}\mathscr{H}$,
	\begin{align}
		\operatorname{Tr}\Noisy_F(X)
        & =
        D \operatorname{Tr}((\Phi\circ S)(h_{(1)})X)
        \operatorname{Tr}(\hat{b}(\hat{h})(\Phi\circ F)(h_{(2)}))
		\\ &
        =
        D \operatorname{Tr}((\Phi\circ S)(h_{(1)})X)
        \eval{\hat{h}}{F(h_{(2)})}
		\\ &
        =
        D
        \operatorname{Tr}((\Phi\circ S)(h_{(1)})X)
        \eval{\hat{h}}{h_{(2)}}
		\\ &
        =
        \operatorname{Tr}((\Phi\circ S)(1)X)
        =
        \operatorname{Tr}X,
        \vphantom{\hat{h}} 
	\end{align}
	where we employed \cref{eq:DefBdry-Dual} in the second equality, we used the hypothesis $\hat{h}\circ F = \hat{h}$ in the third, and we recalled \cref{eq:DualInt} in the fourth.
    Finally, it implements the action of $F$ since, for all $x\in A$,
	\begin{align}
		\Noisy_F(\hat{b}(\hat{h})\Phi(x))
		  & =
        D \hat{b}(\hat{h})
        \operatorname{Tr}((\Phi\circ S)(h_{(1)})
        \hat{b}(\hat{h})\Phi(x))
        (\Phi\circ F)(h_{(2)})
		\\ &
        =
        D \hat{b}(\hat{h})
        \operatorname{Tr}(\hat{b}(\hat{h})\Phi(S(h_{(1)})x))
        (\Phi\circ F)(h_{(2)})
		\\ &
        =
        D \hat{b}(\hat{h}) 
        \eval{\hat{h}}{S(h_{(1)}) x} 
        (\Phi\circ F)(h_{(2)})
		\\ &
        =
        D \hat{b}(\hat{h})
        \eval{\hat{h}}{S(h_{(1)})}
        (\Phi\circ F)(x h_{(2)})
		\\ &
        =
        \hat{b}(\hat{h}) (\Phi\circ F)(x)
        =
        \Phi_b(F(x)),
	\end{align}
	where we have used that $\hat{b}(\hat{h})$ is central and $\Phi$ is a $*$-re\-pre\-sen\-ta\-tion in the second equality, \cref{eq:DefBdry-Dual} in the third, \cref{eq:PTmodified} in the fourth, and \cref{eq:DualInt} in the last.
	\smallbreak\noindent%
	Conversely, let $\Noisy:\operatorname{Im}\Phi\to\operatorname{Im}\Phi$ be a quantum channel and define $F_{\smash{\Noisy}}:A\to A$ as
	\begin{equation}\label{eq:defFfromN}
		F_{\Noisy}
        \coloneqq
        D h_{(1)} \operatorname{Tr}(
            (\Phi\circ S)(h_{(2)})
            \Noisy(\hat{b}(\hat{h})\Phi(\placeholder))
        ).
	\end{equation}
	Let us first check that $F_{\Noisy}$ is completely positive and satisfies $\hat{h}\circ F_{\Noisy} = \hat{h}$.
	On the one hand, $h$ is a positive idempotent because $h = hh^*$, and hence
	\begin{align}
		F_{\Noisy}
        & =
        D h_{(1)} h_{(1')}^*
        \operatorname{Tr}((\Phi\circ S)(h_{(2)}h_{(2')}^*) (\Noisy\circ \Phi_b)(\placeholder))
		\\ &
        =
        D h_{(1)} h_{(1')}^*
        \operatorname{Tr}((\Phi\circ S)(h_{(2')})^\dagger(\Phi\circ S)(h_{(2)}) (\Noisy\circ \Phi_b)(\placeholder))
		\\ &
        =
        D h_{(1)} h_{(1')}^*
        \operatorname{Tr}((\Phi\circ S)(h_{(2)}) (\Noisy\circ \Phi_b)(\placeholder)(\Phi\circ S)(h_{(2')})^\dagger)
		\\ &
        =
        (\mathrm{Id}_A\otimes\mathrm{Tr})(W  (1_A \otimes (\Noisy\circ \Phi_b)(\placeholder))  W^\dagger),
	\end{align}
	where $W \coloneqq \sqrt{D} h_{(1)}\otimes (\Phi\circ S)(h_{(2)}) \in A \otimes \operatorname{End}\mathscr{H}$; note that the map $F_{\Noisy}$ is comple\-te\-ly po\-si\-tive since $\Noisy\circ \Phi_b$ is completely positive. On the other hand, for all $x\in A$, it holds that
	\begin{align}
		\eval{\hat{h}\circ F_{\Noisy}}{x}
        & =
        D
        \operatorname{Tr}((\Phi\circ S)(h_{(1)}) \Noisy(\hat{b}(\hat{h}) \Phi(x)))
        \eval{\hat{h}}{h_{(2)}}
		\\ &
        =
        D
        \operatorname{Tr}((\Phi\circ S)(\eval{\hat{h}}{h_{(2)}} h_{(1)})
        \Noisy(\hat{b}(\hat{h}) \Phi(x)))
		\\ &
        =
        \operatorname{Tr}((\Phi\circ S)(1) \Noisy(\hat{b}(\hat{h}) \Phi(x))) 
        \\ &
        =
        \operatorname{Tr}(\Noisy(\hat{b}(\hat{h}) \Phi(x)))
        =
        \operatorname{Tr}(\hat{b}(\hat{h}) \Phi(x))
        =
        \eval{\hat{h}}{x}.
	\end{align}
	Finally, we check that $F_{\Noisy}$ implements the action of $\Noisy$ on the algebra $A$ and, as an immediate consequence, $\Noisy_{F_{\Noisy}} = \Noisy$. For that purpose, note that as a consequence of the Artin--Wedderburn theorem, one can rewrite
	\begin{equation}
		\hat{b}(\hat{h})
        =
        \Phi(c_{\smash{\hat{h}}}),
        \quad\text{ for }\quad
        c_{\smash{\hat{h}}}
        \coloneqq
        \frac{1}{D} \sum_{a=1}^r E_a
        \in \operatorname{Center}A,
	\end{equation}
	where $E_1,\ldots,E_r\in A$ are the minimal central idempotents of $A$.
    Moreover, for every $x\in A$ there exists $\eta_x\in A$ such that
	\begin{equation}
		\Noisy(\Phi_b(x))
        =
        \Noisy(\hat{b}(\hat{h})\Phi(x))
        =
        \Noisy(\Phi(c_{\smash{\hat{h}}}x))
        =
        \Phi(c_{\smash{\hat{h}}} \eta_x)
        =
        \Phi_b(\eta_x),
	\end{equation}
   from the facts that $\Noisy(\operatorname{Im}\Phi)\subseteq \operatorname{Im}\Phi$ and that $\Phi$ is faithful.
    Then, for all $x\in A$,
	\begin{align}
		\hat{b}(\hat{h}) (\Phi\circ F_{\Noisy})(x)
        & =
        D
        \operatorname{Tr}(  (\Phi\circ S)(h_{(1)})
                            \Noisy(\hat{b}(\hat{h})\Phi(x))  )
            \hat{b}(\hat{h}) \Phi(h_{(2)})
		\\ &
        =
        D
        \operatorname{Tr}(  (\Phi\circ S)(h_{(1)})
                            \hat{b}(\hat{h}) \Phi(\eta_x)  )
            \Phi_b(h_{(2)})
        \\ &
        =
        D \eval{\hat{h}}{\eta_x S(h_{(1)})}  \Phi_b(h_{(2)})
        =
        \hat{b}(\hat{h})\Phi(\eta_x)
        =
        (\Noisy \circ \Phi_b)(x),
	\end{align}
	as we wanted to prove. Conversely, for all $x\in A$,
    \begin{align}
        F_{\Noisy_F}(x)
        &=
        D \operatorname{Tr}( (\Phi\circ S)(h_{(1)}) \Noisy_F(\Phi_b(x)) ) h_{(2)}
        \\ &=
        D \operatorname{Tr}( (\Phi\circ S)(h_{(1)}) \Phi_b(F(x)) ) h_{(2)}
        \\ &=
        D \eval{\hat{h}}{S(h_{(1)}) F(x)} h_{(2)}
        =
        D \eval{\hat{h}}{S(h_{(1)})} F(x) h_{(2)}
        =
        F(x),
    \end{align} 
    where the fourth equality holds by \cref{eq:PTmodified}, and the fifth by \cref{eq:DualHaar,eq:DualInt}.
\end{proof}

For all maps $F:A\to A$ under the assumptions above, let
\begin{equation}
M_F
	\coloneqq
    \sum_{i=1}^{\dim A}
        \Psi(e^i)
        \otimes
        \hat{b}(\hat{h})(\Phi\circ F)(e_i).
\end{equation}
The following result establishes the relation between the action of two on-site noise quantum channels after renormalization at the level of the algebra.

\begin{proposition}\label{prop:FGconv}
	Let $F,G:A\to A$ be two linear maps satisfying the conditions of \cref{prop:NF}. Then, the linear map $F\star G :A\to A$, defined by the expression
	\begin{equation} \label{eq:convMaps}
		F\star G \coloneqq \sigma \circ (F\otimes G) \circ \Delta,
	\end{equation}
    satisfies the aforementioned conditions, and it holds that
	\begin{equation}\label{eq:RenNoise}
		\Coarse \circ ({\Noisy}_{F}\otimes \Noisy_G) \circ \Fine = \Noisy_{F\star G}.
	\end{equation}
\end{proposition}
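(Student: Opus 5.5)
We need to show two things: $F\star G$ again satisfies the hypotheses of \cref{prop:NF} (complete positivity and $\hat h\circ(F\star G)=\hat h$), and the channel identity \cref{eq:RenNoise}.

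\textbf{Complete positivity.} The map $F\star G=\sigma\circ(F\otimes G)\circ\Delta$ is a composition of three maps. The comultiplication $\Delta$ is a $*$-homomorphism, hence completely positive. The map $F\otimes G$ is completely positive as a tensor product of completely positive maps between finite-dimensional $C^*$-algebras. The map $\sigma$ is completely positive by \cref{lemma:sigma}. So $F\star G$ is completely positive.

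\textbf{Haar invariance.} Take $z=1$ in \cref{eq:prophsigma} to get $\hat h\circ\sigma=\hat h\otimes\hat h$. Then
\[
\hat h\circ(F\star G)(x)=\eval{\hat h}{F(x_{(1)})}\eval{\hat h}{G(x_{(2)})}=\eval{\hat h}{x_{(1)}}\eval{\hat h}{x_{(2)}}=\eval{\hat h}{x},
\]
using the hypotheses on $F$ and $G$ and the last identity of \cref{eq:DualHaar}.

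\textbf{The channel identity.} We avoid expanding formulas and use the uniqueness part of \cref{prop:NF}. First compute the left-hand side on the range of $\Phi_b$:
\[
\Coarse\circ(\Noisy_F\otimes\Noisy_G)\circ\Fine\circ\Phi_b
=\Coarse\circ(\Noisy_F\otimes\Noisy_G)\circ(\Phi_b\otimes\Phi_b)\circ\Delta
=\Coarse\circ(\Phi_b\otimes\Phi_b)\circ(F\otimes G)\circ\Delta
=\Phi_b\circ\sigma\circ(F\otimes G)\circ\Delta .
\]
The three equalities use, in order, \cref{eq:PropertyT}, \cref{eq:PropertyNF} applied to each factor, and \cref{eq:PropertyS}. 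The result is $\Phi_b\circ(F\star G)$, which equals $\Noisy_{F\star G}\circ\Phi_b$ by \cref{eq:PropertyNF}.

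So both sides agree on $\operatorname{Im}\Phi_b=\hat b(\hat h)\operatorname{Im}\Phi$. Since $\hat b(\hat h)$ is central and invertible in $\operatorname{Im}\Phi$ (it equals $\Phi(c_{\hat h})$ with $c_{\hat h}$ invertible), this set is all of $\operatorname{Im}\Phi$.

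\textbf{Main obstacle: extending beyond $\operatorname{Im}\Phi$.} The two maps still have to agree on all of $\operatorname{End}\mathscr H$, not just on $\operatorname{Im}\Phi$. My plan is to show that both sides factor through the conditional expectation $E$ onto $\operatorname{Im}\Phi$ that is trace-compatible, i.e.\ $\operatorname{Tr}(\Phi(a)X)=\operatorname{Tr}(\Phi(a)E(X))$ for all $a\in A$.

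For the right-hand side this is immediate. By \cref{eq:DefinitionNF}, $\Noisy_{F\star G}(X)$ depends on $X$ only through the numbers $\operatorname{Tr}((\Phi\circ S)(h_{(1)})X)$, so it factors through $E$.

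The left-hand side is the harder case:

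\begin{itemize}
\item \textbf{$\Fine$ is harmless.} $\Fine$ likewise depends on $X$ only through $\operatorname{Tr}((\Phi\circ S)(h_{(3)})X)$, so it factors through $E$. Its output lies in $\operatorname{Im}\Phi_b\otimes\operatorname{Im}\Phi_b$.
\item \textbf{$\Noisy_F\otimes\Noisy_G$ preserves this.} Both maps send $\operatorname{End}\mathscr H$ into $\operatorname{Im}\Phi$, so their tensor product maps $\operatorname{Im}\Phi_b\otimes\operatorname{Im}\Phi_b$ into $\operatorname{Im}\Phi\otimes\operatorname{Im}\Phi$.
\end{itemize}

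Hence the whole composite factors through $E$ and is determined by its values on $\operatorname{Im}\Phi$, where the two sides were already shown to agree. This proves \cref{eq:RenNoise}.
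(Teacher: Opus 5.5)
Your proof is correct and follows the paper's argument: complete positivity comes from $\Delta$ and \cref{lemma:sigma}, Haar invariance from \cref{eq:prophsigma} with $z=1$, and the channel identity from chaining \cref{eq:PropertyT,eq:PropertyNF,eq:PropertyS} on $\operatorname{Im}\Phi_b=\operatorname{Im}\Phi$. The paper handles the extension to all of $\operatorname{End}\mathscr{H}$ only by regarding both sides as endomorphisms of $\operatorname{Im}\Phi$; your argument via the trace-preserving conditional expectation fills in that point properly. Your second bullet is not needed there: $\Fine$ already depends on $X$ only through $\operatorname{Tr}(\Phi(a)X)$, so the whole left-hand side factors through the conditional expectation at that first step.
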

\noindent%
Let $H \coloneqq F\star G$ for simplicity. 
In the tensor network picture, this corresponds to the following sequence of mappings
\begin{equation}
\begin{tikzpicture}[scale=1]
    \draw[-mid] (-3.951, 1.129) -- (-3.951, 1.552);
    \draw[RedOrange, -mid] (-3.246, 0.847) -- (-3.669, 0.847);
    \draw[-mid] (-3.951, 0.141) -- (-3.951, 0.564);
    \draw[RedOrange, -mid] (-4.233, 0.847) -- (-4.657, 0.847);
    \filldraw[fill=white] (-4.233, 1.129) rectangle (-3.669, 0.564);
    \node[anchor=center, font=\footnotesize] at (-3.951, 0.811) {$M^{}_{}$};
    \draw[-mid] (-1.976, 1.129) -- (-1.976, 1.552);
    \draw[RedOrange, -mid] (-1.27, 0.847) -- (-1.693, 0.847);
    \draw[-mid] (-1.976, 0.141) -- (-1.976, 0.564);
    \draw[RedOrange, -mid] (-2.258, 0.847) -- (-2.681, 0.847);
    \filldraw[fill=white] (-2.258, 1.129) rectangle (-1.693, 0.564);
    \node[anchor=center, font=\footnotesize] at (-1.976, 0.811) {$M^{}_{}$};
    \draw[-mid] (-0.988, 1.129) -- (-0.988, 1.552);
    \draw[RedOrange, -mid] (-0.282, 0.847) -- (-0.706, 0.847);
    \draw[-mid] (-0.988, 0.141) -- (-0.988, 0.564);
    \filldraw[fill=white] (-1.27, 1.129) rectangle (-0.706, 0.564);
    \node[anchor=center, font=\footnotesize] at (-0.988, 0.811) {$M^{}_{}$};
    \draw[|->] (-3.246, 1.129) arc[start angle=-113.962, end angle=-66.038, x radius=0.695, y radius=-0.695] node[midway, above] {$\Fine$};
    \draw[|->] (-0.282, 1.129) arc[start angle=-113.962, end angle=-66.038, x radius=0.695, y radius=-0.695]  node[midway, above] {$\Noisy_F\otimes\Noisy_G$};
    \draw[-mid] (0.988, 1.129) -- (0.988, 1.552);
    \draw[RedOrange, -mid] (1.693, 0.847) -- (1.27, 0.847);
    \draw[-mid] (0.988, 0.141) -- (0.988, 0.564);
    \draw[RedOrange, -mid] (0.706, 0.847) -- (0.282, 0.847);
    \filldraw[fill=white] (0.706, 1.129) rectangle (1.27, 0.564);
    \node[anchor=center, font=\footnotesize] at (0.988, 0.811) {$M^{}_{F}$};
    \draw[-mid] (1.976, 1.129) -- (1.976, 1.552);
    \draw[RedOrange, -mid] (2.681, 0.847) -- (2.258, 0.847);
    \draw[-mid] (1.976, 0.141) -- (1.976, 0.564);
    \filldraw[fill=white] (1.693, 1.129) rectangle (2.258, 0.564);
    \node[anchor=center, font=\footnotesize] at (1.976, 0.811) {$M^{}_{G}$};
    \draw[|->] (2.681, 1.129) arc[start angle=-113.962, end angle=-66.038, x radius=0.695, y radius=-0.695] node[midway, above] {$\Coarse$};
    \draw[-mid] (3.951, 1.129) -- (3.951, 1.552);
    \draw[RedOrange, -mid] (4.657, 0.847) -- (4.233, 0.847);
    \draw[-mid] (3.951, 0.141) -- (3.951, 0.564);
    \draw[RedOrange, -mid] (3.669, 0.847) -- (3.246, 0.847);
    \filldraw[fill=white] (3.669, 1.129) rectangle (4.233, 0.564);
    \node[anchor=center, font=\footnotesize] at (3.951, 0.811) {$M^{}_{H}$};
\end{tikzpicture}
\end{equation}
where the quantum channels act at the physical level.

\begin{proof}
	As both terms in \cref{eq:RenNoise} are endomorphisms of $\operatorname{Im}\Phi$, it suffices to prove that the identity
	\[
		(\Coarse
            \circ ({\Noisy}_{F}\otimes \Noisy_G)
            \circ \Fine)(\Phi_b(x))
		=
        \Noisy_H(\Phi_b(x))
	\]
	holds for all $x\in A$.
	Indeed, the left-hand side takes the form
	\begin{align*}
		(\Coarse
            \circ ({\Noisy}_{F}\otimes \Noisy_G)
            \circ \Fine)
                (\hat{b}(\hat{h})\Phi(x))
		  & =
		(\Coarse \circ ({\Noisy}_{F}\otimes \Noisy_G))
            (\hat{b}(\hat{h})^{\otimes 2}(\Phi\otimes \Phi)(\Delta(x)))
		\\ &
        =
		\Coarse(\hat{b}(\hat{h})^{\otimes 2}((\Phi\circ F)\otimes (\Phi\circ G))(\Delta(x)))
		\\ &
        =
		\hat{b}(\hat{h}) \Phi( \sigma \circ (F \otimes G)\circ \Delta(x) ),
	\end{align*}
	for all $x\in A$. This follows directly from \cref{eq:PropertyT,eq:PropertyNF,eq:PropertyS}, and coincides with the expression on the right-hand side, as we wanted to prove. Observe that $F\star G$ is completely positive because $\Delta$ is a $*$-algebra homomorphism (and therefore completely positive) and $\sigma$ was shown to be completely positive in \cref{lemma:sigma}.
	Finally, $\hat{h}\circ (F\star G) = \hat{h}$ as a consequence of the following calculation, valid for all $x\in A$:
	\begin{align*}
		\eval{\hat{h}}{H(x)}
		  & =
		\eval{\hat{h}}{\sigma \circ (F\otimes G)\circ \Delta(x)}
		=
		\eval{\hat{h}}{\sigma(F(x_{(1)})\otimes G(x_{(2)}))}
		\\ &
        =
		\eval{\hat{h}}{F(x_{(1)})}
            \eval{\hat{h}}{G(x_{(2)})}
		=
		\eval{\hat{h}}{x_{(1)}}
            \eval{\hat{h}}{x_{(2)}} 
        =
        \eval{\hat{h}}{x},
	\end{align*}
	where we have recalled the definition of $H$ in the first equality, we have rewritten the expression in Sweedler's notation in the second, we have employed the identity in \cref{eq:prophsigma} with $z = 1$ in the third, that $\hat{h}\circ F = \hat{h}\circ G = \hat{h}$ by hypothesis in the fourth, and finally we have recalled that $\hat{h}$ is an idempotent in $A^*$, see~\cref{eq:DualHaar}, in the last.
\end{proof}

At the level of maps on the algebra, the trajectory is given, for all $n\geq 2$, by
\begin{equation}
    F_{1} \coloneqq F
    , \quad \text{and} \quad
    F_{n} \coloneqq F_{n-1}\star F
         = \sigma \circ ( F_{n-1} \otimes F)\circ \Delta.
\end{equation}
The following is an illustrative example, corresponding to the boundary state of the two-dimensional toric code model.

\begin{example}
    Let us consider the group $C^*$-algebra $A=\mathbb{C}\mathbb{Z}_2 \equiv \mathbb{C}\{e,g\}$.
    According to \cref{prop:NF}, we look at completely positive linear maps
    $F : \mathbb{C}\mathbb{Z}_2 \to \mathbb{C}\mathbb{Z}_2$
    compatible with the trace-preserving condition.
    Let us write the linear map as
    \[
        F(e) = a e + b g
        ,\quad\text{and}\quad
        F(g) = c e + d g,
    \]
    for some $a,b,c,d\in \mathbb{C}$.
    Since $\hat{h} = \delta_e$, the condition $\hat{h}\circ F = \hat{h}$ enforces $a = 1$ and $c = 0$, yielding
    a two-parameter family of maps
    \[
        F(e) = e + \alpha g
        , \quad \text{and} \quad
        F(g) = \beta g.
    \]
    Moreover, $\alpha,\beta\in\mathbb{R}$ necessarily since $F$ is, in particular, self-adjoint-preserving, and $e = e^*$, $g=g^*$.
    Let $p_{\pm} \coloneqq (e\pm g)/2$ be the two minimal central idempotents. Then, since $F$ is completely positive, 
    \[
        F(p_\pm) = \frac{1}{2} (e+(\alpha\pm\beta)g) \geq 0
    \]
    which holds if and only if $|\alpha\pm \beta|\leq 1$;
    equivalently, $|\alpha|+|\beta| \leq 1$.
    \smallbreak\noindent%
    Let us consider the $*$-representation defined by the Pauli-Z matrix $\Phi(g) \coloneqq Z = \operatorname{diag}(1,-1)$ and let us analyze the action of such noise on the renormalization fixed-point matrix product density operator 
    \[
        \rho_n \coloneqq \rho_n(M,b(h))
        = 
        \frac{1}{2^{n}} (\mathrm{Id}^{\otimes n} + Z^{\otimes n}),
    \]
    associated with the Haar integral $h = (e+g)/2$.
    \smallbreak\noindent%
    If $\alpha=0$, then $F$ acts diagonally in the group basis, and the state is
    transformed to
    \[
        \Noisy_F^{\otimes n}(\rho_n)
        =
        \frac{1}{2^{n}}(\mathrm{Id}^{\otimes n} + \beta^n Z^{\otimes n}).
    \]
    Since $\Delta(g)=g\otimes g$, the convolution in \cref{eq:convMaps} reduces to
    \[
        (F \star \overset{n}\cdots \star F)(e) = e,
        \quad
        \text{and}
        \quad
        (F \star \overset{n}\cdots \star F)(g) = \beta^n g.
    \]
    Thus, if in addition $\beta\in\{0,1\}$, the coarse-graining leaves the state unchanged; if $|\beta|<1$,
    the non-trivial block decays and the limit coincides with the case $\beta=0$; and if
    $\beta=-1$, the sign alternates with the system size and no limit exists.
	\smallbreak\noindent%
	On the other hand, if $\alpha\neq 0$,
    \[
        \Noisy_F^{\otimes n}(\rho_n)
        =
        \frac{1}{2^{n}} ( (\mathrm{Id}+\alpha Z)^{\otimes n} + \beta^n Z^{\otimes n} ),
    \]
    and the convolution satisfies
    \[
        (F \star \overset{n}\cdots \star F)(e) = e + \alpha^n g,
        \quad
    \text{and}
    \quad
        (F \star \overset{n}\cdots \star F)(g) = \beta^n g.
    \]
    If $|\alpha|<1$, the additional term $\alpha^n g$ disappears under coarse-graining and the behavior reduces to that of the case $\alpha=0$.
    If $\alpha=1$, the positivity constraint forces $\beta=0$,
    yielding a tensor-product fixed point.
    Finally, $\alpha=-1$ again leads to an alternating sign and therefore to a non-convergent flow.
	\smallbreak\noindent%
    Collecting the convergent cases, we find exactly three types of fixed-point states:
\begin{enumerate}[leftmargin=60pt,itemsep=2pt]
        \item[Case 1:] The unperturbed state
            $\tilde\rho_n = 2^{-n}(\mathrm{Id}^{\otimes n}+Z^{\otimes n})$,
            if $\alpha=0$ and $\beta=1$.
        \item[Case 2:] The maximally mixed state
            $\tilde\rho_n = 2^{-n}\mathrm{Id}^{\otimes n}$,
            if $|\alpha|,|\beta|<1$.
        \item[Case 3:] The tensor-product state
            $\tilde\rho_n = 2^{-n}(\mathrm{Id}+Z)^{\otimes n}$,
            if $\alpha=1$ and $\beta=0$.
    \end{enumerate}
Remarkably, each case corresponds to a subgroup of $\mathbb{Z}_2\times \mathbb{Z}_2$. The first case to the diagonal subgroup $\{(e,e),(g,g)\}$, the second case to the trivial subgroup $\{(e,e)\}$, and the third case to the subgroup $\{e\}\times\mathbb{Z}_2$. This correspondence will become transparent when we describe the general structure of the group case and its dual in \cref{sec:groups}.
\end{example}

\subsection{From algebra maps to linear functionals}

This subsection reframes the analysis of the noise quantum channel in a dual language.
Rather than working directly with the induced linear maps on the algebra, we pass to a formulation in terms of specific linear functionals on the tensor product $A\otimes A^{\mathrm{op}}$.
This perspective interfaces more directly with standard tools in the literature.

\begin{proposition}\label{prop:Fphi}
    Let $A$ be a $C^*$-Hopf algebra and let $F : A \to A$ be a completely positive linear map satisfying $\hat{h} \circ F = \hat{h}$. Then, the linear functional $\phi_F \in (A \otimes A^{\mathrm{op}})^*$ defined, for all $x, y \in A$, by the expression
	\begin{equation}\label{eq:FtophiF}
		\eval{\phi_F}{x\otimes y} \coloneqq \eval{\hat{h}}{F(x) y},
	\end{equation}
	is a unital positive linear functional on $A\otimes A^{\mathrm{op}}$ satisfying
	\begin{equation}\label{eq:TPphiF}
		\eval{\phi_F}{(\placeholder)\otimes 1} = \hat{h}.
	\end{equation}
	Conversely, for every positive linear functional $\phi\in (A\otimes A^{\mathrm{op}})^*$ satisfying
$\eval{\phi}{(\placeholder)\otimes 1}=\hat{h}$, the linear map $F_\phi:A\to A$ defined, for all $x\in A$, by the expression
	\begin{equation}\label{eq:phitoFphi}
		F_\phi(x) \coloneqq D \eval{\phi}{x\otimes S(h_{(1)})} h_{(2)},
	\end{equation}
	is a completely positive linear map satisfying $\hat{h}\circ F_\phi = \hat{h}$. Moreover, the maps $F\mapsto \phi_F$ and $\phi\mapsto F_\phi$ establish a one-to-one correspondence.
\end{proposition}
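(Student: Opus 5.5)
The plan is to verify the two constructions separately, check positivity and the marginal conditions directly, and then show the two maps are mutually inverse using the Fourier transform of \cref{lemma:fourier}.

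\emph{From $F$ to $\phi_F$.} I will test positivity on elements $z^*z$ with $z=\sum_i x_i\otimes y_i \in A\otimes A^{\mathrm{op}}$. Because the second factor carries the opposite product, $z^*z=\sum_{i,j} x_i^*x_j\otimes y_jy_i^*$. Hence
\[
\eval{\phi_F}{z^*z}=\sum_{i,j}\eval{\hat h}{F(x_i^*x_j)\,y_jy_i^*}=\eval{\hat h}{\textstyle\sum_{i,j} y_i^*F(x_i^*x_j)y_j},
\]
where the second equality uses the trace property of $\hat h$ from \cref{eq:DualHaar}. Complete positivity of $F$ makes the matrix $[F(x_i^*x_j)]_{ij}$ positive in $M_n(A)$. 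It follows that $\sum_{i,j} y_i^*F(x_i^*x_j)y_j\ge 0$, and since $\hat h$ is a positive functional, $\eval{\phi_F}{z^*z}\ge0$. Unitality and \cref{eq:TPphiF} follow at once from $\eval{\phi_F}{x\otimes 1}=\eval{\hat h}{F(x)}=\eval{\hat h}{x}$ together with $\eval{\hat h}{1}=1$.

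\emph{From $\phi$ to $F_\phi$.} This is the step I expect to be the main obstacle, namely complete positivity. I will mimic the Stinespring argument of \cref{prop:NF}. Write $h=hh^*$ and expand $\Delta(h)=h_{(1)}h_{(1')}^*\otimes h_{(2)}h_{(2')}^*$. Since $S$ is a $*$-antihomomorphism, $S(h_{(1)}h_{(1')}^*)=S(h_{(1')})^*S(h_{(1)})$. In $A^{\mathrm{op}}$ this element equals $S(h_{(1)})\cdot_{\mathrm{op}}S(h_{(1')})^*$, which is of the form $u\,u'^*$ there. Next take a GNS triple $(\pi,\mathscr{G},\xi)$ for the positive functional $\phi$ on the $C^*$-algebra $A\otimes A^{\mathrm{op}}$. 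This gives
\[
F_\phi(x)=D\,\big\langle \pi(1\otimes S(h_{(1')})^{*})^{*}\xi ,\ \pi(x\otimes 1)\,\pi(1\otimes S(h_{(1)})^{*})^{*}\xi\big\rangle\, h_{(2)}h_{(2')}^*
\]
up to the exact ordering, which I will fix carefully; the point is that the $x$-dependence enters only through $\pi(x\otimes1)$. In other words, $F_\phi(x)=V^*(\pi(x\otimes 1)\otimes 1_A)V$ for a suitable operator $V$ built from $\xi$ and $h_{(2)}$. Since $x\mapsto\pi(x\otimes1)$ is a $*$-homomorphism, this exhibits $F_\phi$ as completely positive. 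The marginal condition reduces, via \cref{eq:TPphiF}, to
\[
\eval{\hat h}{F_\phi(x)}=D\,\eval{\phi}{x\otimes S(h_{(1)})}\eval{\hat h}{h_{(2)}}=\eval{\phi}{x\otimes 1}=\eval{\hat h}{x},
\]
using \cref{eq:DualInt} and $S(1)=1$.

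\emph{Bijectivity.} For $F_{\phi_F}$ I will compute $F_{\phi_F}(x)=D\eval{\hat h}{F(x)S(h_{(1)})}h_{(2)}$. Applying trace-likeness of $\hat h$ and then \cref{eq:PTmodified} and \cref{eq:DualInt}, exactly as at the end of the proof of \cref{prop:NF}, gives $F(x)$. For $\phi_{F_\phi}$ I will compute $\eval{\phi_{F_\phi}}{x\otimes y}=\eval{\phi}{x\otimes D\eval{\hat h}{h_{(2)}y}S(h_{(1)})}$. By cocentrality of $h$ and the trace property, the element $D\eval{\hat h}{h_{(2)}y}S(h_{(1)})$ equals $\jmath^{-1}(\jmath(y))=y$, using the inverse formula of \cref{lemma:fourier}. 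This gives $\phi_{F_\phi}=\phi$, so the two maps are mutually inverse.
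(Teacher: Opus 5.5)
Your proof is correct. The positivity of $\phi_F$ and both marginal computations match the paper's proof. You diverge from the paper in two places: the complete positivity of $F_\phi$, and the bijectivity check.

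\emph{Complete positivity of $F_\phi$.} The paper gets this in one line from the factorization $F_\phi = D(\phi\otimes\mathrm{Id})\circ(\mathrm{Id}\otimes S\otimes\mathrm{Id})\circ(\mathrm{Id}\otimes\Delta)\circ((\placeholder)\otimes h)$. It uses that $S\colon A\to A^{\mathrm{op}}$ is a $*$-homomorphism and that slicing with a positive functional is completely positive. Your GNS/Stinespring construction is an explicit unpacking of that same factorization. The trick $h=hh^*$ is exactly what writes $x\mapsto x\otimes S(h_{(1)})\otimes h_{(2)}$ as a compression $W(x\otimes 1\otimes 1)W^*$. You gain an explicit dilation, but pay in index bookkeeping.

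That bookkeeping is where your displayed formula slips. As written it evaluates to $D\,\eval{\phi}{x\otimes S(h_{(1')}^{*}h_{(1)})}\,h_{(2)}h_{(2')}^{*}$, which is not $F_\phi(x)$ in general; in a non-abelian group algebra it conjugation-averages. The fix is as follows:
\begin{itemize}
\item Write $\Delta(h)=\sum_i a_i\otimes b_i$ and $\eta_i\coloneqq\pi(1\otimes S(a_i))^{*}\xi$, with the inner product antilinear in the first slot.
\item The correct identity is $F_\phi(x)=D\sum_{i,j}\langle\eta_i,\pi(x\otimes 1)\eta_j\rangle\, b_ib_j^{*}$.
\item Represent $A$ faithfully on some Hilbert space $\mathscr{L}$ and let $\mathscr{G}$ be the GNS space. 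Then $F_\phi(x)=V^{*}(\pi(x\otimes 1)\otimes 1)V$ with $V\colon\mathscr{L}\to\mathscr{G}\otimes\mathscr{L}$ given by $V\ell\coloneqq\sqrt{D}\sum_i\eta_i\otimes b_i^{*}\ell$.
\end{itemize}
What matters for complete positivity, as opposed to mere positivity, is that $\eta_i$ in the antilinear slot pairs with the left factor $b_i$ of $b_ib_j^{*}$.

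\emph{Bijectivity.} The paper proves the single identity $\eval{\hat{h}}{F_\phi(x)y}=\eval{\phi}{x\otimes y}$ by pulling-through. It then deduces $F_{\phi_G}=G$ from non-degeneracy of $\hat{h}$. You instead compute both composites directly. Your observation that $D\,\eval{\hat{h}}{h_{(2)}y}\,S(h_{(1)})=\jmath^{-1}(\jmath(y))=y$, by cocentrality and \cref{lemma:fourier}, is a transparent way to see that same identity.
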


\begin{proof}
Consider a completely positive linear map $F : A \to A$ and let us prove that $\phi_F$ is a positive linear functional.
For that purpose, let $z \coloneqq \sum_{i=1}^s x_i\otimes y_i \in A\otimes A^{\mathrm{op}}$ be arbitrary. Then any positive element in $A\otimes A^{\mathrm{op}}$ is of the form
\begin{equation}\label{eq:positiveInAAop}
z^*z = \sum_{i,j=1}^s x_i^* x_j \otimes  y_jy_i^*.
\end{equation}
Note the order of the product $y_jy_i^*$ in $A$, since $y_i^*,y_j\in A^{\mathrm{op}}$.
Since $F$ is completely positive by assumption, for all $s\in\mathbb{N}$ and $x_1,\ldots,x_s\in A$, the matrix
    \begin{equation}\label{eq:CPmatrix}
    	M \coloneqq [F(x_i^* x_j)]_{i,j=1}^{s} \in \mathrm{Mat}_s(A)
    \end{equation}
    is positive semidefinite, and hence $M= N^\dagger N$ for some $N\in \mathrm{Mat}_s(A)$. Then,
    \begin{equation}\label{eq:yMy_pos}
    	{y}^\dagger M {y}
    	= \sum_{i,j=1}^s y_i^* F(x_i^* x_j) y_j
    	= (N{y})^\dagger(N {y})
    	\geq 0 \quad\text{in } A,
    \end{equation}
    for all $y \coloneqq (y_1,\ldots,y_s)^{\mathrm{t}}\in A^s$. Consequently,
    \begin{align*}
    	\eval{\phi_F}{z^*z}
    	&=
    	\sum_{i,j=1}^s\eval{\hat{h}}{F(x_i^* x_j) y_j y_i^*}
    	=
    	\sum_{i,j=1}^s \eval{\hat{h}}{y_i^* F(x_i^* x_j) y_j}
    	\geq 0,
    \end{align*}
    where we used the definition of $\phi_F$ together with \cref{eq:positiveInAAop} in the first equality, the trace-like property of $\hat{h}$ from \cref{eq:DualHaar} in the second, and the positivity of $\hat{h}$ in the last.
    \smallbreak\noindent%
    Moreover, for all $x\in A$,
	\begin{equation}
    \eval{\phi_F}{x \otimes 1}
    =
    \eval{\hat{h}}{F(x)1}
    =
    \eval{\hat{h}}{x},
    \end{equation}
    since $\hat{h} \circ F = \hat{h}$ by assumption. In particular, $\phi_F$ is unital.
    \smallbreak\noindent%
	Conversely, given $\phi \in (A\otimes A^{\mathrm{op}})^*$ as above, the complete positivity of $F_\phi$ follows since
    \[ F_\phi = D (\phi\otimes \mathrm{Id}) \circ (\mathrm{Id}\otimes S\otimes \mathrm{Id}) \circ (\mathrm{Id}\otimes \Delta) \circ ((\placeholder)\otimes h),
    \]
    which is a composition of completely positive maps. Indeed, $x\mapsto x\otimes h$ is completely positive since $h$ is positive, $\Delta$ is a $*$-algebra homomorphism from $A$ to $A\otimes A$, the antipode is a $*$-algebra homomorphism $S:A\to A^{\mathrm{op}}$, and finally, $\phi\otimes \mathrm{Id}$ is completely positive because $\phi$ is a positive linear functional.
    \smallbreak\noindent%
	Moreover, for all $x\in A$,
	\begin{equation}
		\eval{ \hat{h} }{ F_\phi(x) }
		=
		D
        \eval{ \phi }{ x\otimes S(h_{(1)}) }
        \eval{ \hat{h} }{ h_{(2)} }
		=
		\eval{ \phi }{ x\otimes S(1) }
		=
		\eval{ \hat{h} }{ x },
	\end{equation}
    where the first equality uses the definition of $F_\phi$, the second equality uses the first property in \cref{eq:DualInt} to contract $D \eval{\hat{h}}{h_{(2)}}h_{(1)}$, and the final equality uses $S(1)=1$ and the assumed property of $\phi$ with respect to the Haar functional $\hat{h}$.
    \smallbreak\noindent%
	To conclude the proof, we check that the two correspondences are indeed inverse to one another. The key point is that for any state $\phi\in (A\otimes A^{\mathrm{op}})^*$ and all $x,y\in A$,
    \begin{align*}
        \eval{ \hat{h} }{ F_\phi(x) y }
        &= \eval{ \hat{h} }{ D \eval{ \phi }{ x  \otimes S(h_{(1)}) } h_{(2)} y }\\
        &= \eval{ \hat{h} }{ D \eval{ \phi }{ x  \otimes S(h_{(1)})\cdot_{\mathrm{op}} y } h_{(2)}}\\
        &= \eval{ \phi }{ x  \otimes  S(1) \cdot_{\mathrm{op}}y }\\
        &= \eval{ \phi }{ x \otimes y },
    \end{align*}
    where the second equality uses the pulling-through property of \cref{eq:PTmodified}, while the third equality pulls the $\eval{\hat{h}}{h_{(2)}}$ inside and uses \cref{eq:DualInt}.
    As such, for all $x,y\in A$ and for every state $\psi\in (A \otimes A^{\mathrm{op}})^*$, it holds that
	\[
		\eval{ \phi_{F_\psi} }{ x \otimes y } =
		\eval{ \hat{h} }{ F_\psi(x) y } =
		\eval{ \psi }{ x \otimes y },
	\]
    and therefore, in particular,
	\[
		\eval{ \hat{h} }{ F_{\phi_G}(x) y }
        =
		\eval{ \phi_G }{ x \otimes  y }
		=
		\eval{ \hat{h} }{ G(x) y },
	\]
    which, because $\hat{h}$ is non-degenerate, proves that $F_{\phi_G} = G$.
\end{proof}

Having constructed the state $\phi$, one can show that the action of the coarse-graining quantum channel $\Coarse$ is precisely given by a convolution of states, by combining \cref{prop:sigma_is_S} with the following proposition.

\begin{proposition}\label{prop:sigma_is_convolution}
	Let $F,G,H:A\to A$ be three linear maps satisfying the assumptions of \cref{prop:NF} and let $\phi_F,\phi_G,\phi_H\in (A\otimes A^{\mathrm{op}})^*$ be their corresponding states as in \cref{prop:Fphi}.
	Then, $H = F \star G$ if and only if $ \phi_H = \phi_F \phi_G$.
\end{proposition}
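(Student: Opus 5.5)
The plan is to compute $\phi_{F\star G}$ directly and compare it with the product $\phi_F\phi_G$ in the dual $C^*$-Hopf algebra $(A\otimes A^{\mathrm{op}})^*$. Then we use the bijectivity in \cref{prop:Fphi}. Since $F\mapsto\phi_F$ is one-to-one on the class of maps satisfying the assumptions of \cref{prop:NF}, and $F\star G$ lies in this class by \cref{prop:FGconv}, it suffices to show the single identity $\phi_{F\star G}=\phi_F\phi_G$. Indeed, $H=F\star G$ then gives $\phi_H=\phi_F\phi_G$. Conversely, $\phi_H=\phi_F\phi_G=\phi_{F\star G}$ forces $H=F\star G$ by injectivity.

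First I would make the product on $(A\otimes A^{\mathrm{op}})^*$ explicit. By \cref{rem:dualA}, it is dual to the coproduct of $A\otimes A^{\mathrm{op}}$. The coproduct of $A^{\mathrm{op}}$ is the same map $\Delta$, since only the multiplication is reversed. Hence, for all $x,y\in A$,
\[
\eval{\phi_F\phi_G}{x\otimes y}=\eval{\phi_F}{x_{(1)}\otimes y_{(1)}}\eval{\phi_G}{x_{(2)}\otimes y_{(2)}}
=\eval{\hat{h}}{F(x_{(1)})y_{(1)}}\eval{\hat{h}}{G(x_{(2)})y_{(2)}}.
\]
I would double-check this convention: the comultiplication of the opposite $C^*$-Hopf algebra in \cref{def:Aop} is not twisted, and the antipode becomes $S^{-1}=S$, which is harmless here.

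Next, I would compute the left-hand side:
\[
\eval{\phi_{F\star G}}{x\otimes y}=\eval{\hat{h}}{\sigma(F(x_{(1)})\otimes G(x_{(2)}))\,y}.
\]
Applying \cref{eq:prophsigma} of \cref{lemma:sigma} with $z=y$ gives
\[
\eval{\hat{h}}{F(x_{(1)})y_{(1)}}\eval{\hat{h}}{G(x_{(2)})y_{(2)}},
\]
which is exactly the expression above. By linearity in elementary tensors, this proves $\phi_{F\star G}=\phi_F\phi_G$.

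The only real obstacle is bookkeeping of conventions. One point is the ordering of the product in $A^{\mathrm{op}}$. The product $y_{(1)}$ appears on the right of $F(x_{(1)})$ inside $\hat{h}$, matching $\eval{\phi_F}{x\otimes y}=\eval{\hat{h}}{F(x)y}$, and \cref{eq:prophsigma} puts $z$ on the right as well. The other point is confirming that the dual-algebra product is $(\phi\otimes\psi)\circ\Delta_{A\otimes A^{\mathrm{op}}}$ with the componentwise coproduct. If a flip arose there, the trace-like property of $\hat{h}$ from \cref{eq:DualHaar} would be used to reorder factors inside each $\hat{h}$.
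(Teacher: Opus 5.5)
Your proof is correct and follows the paper's argument: the key identity $\phi_{F\star G}=\phi_F\phi_G$ comes, exactly as in the paper, from applying \cref{eq:prophsigma} with $z=y$, using the componentwise (untwisted) coproduct of $A\otimes A^{\mathrm{op}}$. The only difference is in the converse. The paper recomputes $H=F_{\phi_H}$ explicitly via \cref{eq:phitoFphi} and the pulling-through identity. You instead invoke injectivity of $F\mapsto\phi_F$, which in fact holds for every linear map by non-degeneracy of $(a,b)\mapsto\eval{\hat{h}}{ab}$; this is an equivalent and slightly more economical route.
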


\begin{proof}
	Let us first assume that $H = F\star G$. Then, for all $x,y\in A$,
	\begin{align}
		\eval{\phi_H}{x \otimes y}
		  & =
		\eval{ \hat{h} }{ H(x) y}
		\\
		  & =
		\eval{ \hat{h} }{\sigma(F(x_{(1)}) \otimes G(x_{(2)})) y }
		\\
		  & =
		\eval{ \hat{h} }{ F(x_{(1)}) y_{(1)} }
		\eval{ \hat{h} }{ G(x_{(2)}) y_{(2)} }
		\\
		  & =
		\eval{ \phi_F}{ x_{(1)} \otimes y_{(1)}}
            \eval{\phi_G }{x_{(2)} \otimes y_{(2)} }
                \vphantom{\hat{h}} 
		\\
		  & =
        \eval{ (\phi_F \otimes \phi_G) \circ \Delta_{A\otimes A^{\mathrm{op}}} }{ x \otimes y },
            \vphantom{\hat{h}} 
	\end{align}
    where the first equality uses the correspondence between $H$ and $\phi_H$ established in \cref{prop:Fphi}, the third equality follows from \cref{lemma:sigma}, and the last line rewrites the resulting expression in terms of $\phi_F$ and $\phi_G$ using \cref{prop:Fphi}.
    \smallbreak\noindent%
    Conversely, assume now that $\phi_H = \phi_F \phi_G$. Then, for all $x\in A$,
	\begin{align}
		H(x)
		  & =
    		D
            \eval{\phi_H}{x\otimes S(h_{(1)})}
            h_{(2)}
            \vphantom{\hat{h}}
    		\\
        & =
    		D
            \eval{\phi_F}{x_{(1)}\otimes S(h_{(1)})_{(1)}}
            \eval{\phi_G}{x_{(2)}\otimes S(h_{(1)})_{(2)}} 
            h_{(2)}
            \vphantom{\hat{h}}
    		\\
		  & =
    		D
            \eval{\hat{h}}{ F(x_{(1)})S(h_{(1)})_{(1)} }
            \eval{\hat{h}}{G(x_{(2)})S(h_{(1)})_{(2)}}
            h_{(2)}
    		\\
		  & =
    		D
            \eval{\hat{h}}{\sigma(F(x_{(1)}) \otimes G(x_{(2)})) S(h_{(1)})}
            h_{(2)}
    		\\
        & =
    		D
            \eval{\hat{h}}{(F\star G)(x) S(h_{(1)})}
            h_{(2)}
    		\\
        & =
    		D
            \eval{\hat{h}}{ S(h_{(1)})}
            h_{(2)} (F\star G)(x)
    		\\
		  & =
		      (F\star G)(x) \vphantom{\hat{h}},
	\end{align}
	where we used \cref{eq:phitoFphi} in the first equality,
    the hypothesis about $\phi_H$ in the second, the definition of $\phi_F$ and $\phi_G$ in the third, the property of \cref{eq:prophsigma} in \cref{lemma:sigma} in the fourth,
    the definition of $F\star G$ in the fifth, and the second pulling-through identity in \cref{eq:PTmodified} in the sixth. The conclusion follows from the first property of \cref{eq:DualInt} together with the first property of \cref{eq:DualHaar}.
\end{proof}

Therefore, the renormalization procedure induces an equivalent trajectory,
\begin{equation}
\phi_{n} \coloneqq \phi_{\smash{F_{n}}} = \phi^n \quad\text{ for }n\geq 1,
\end{equation}
at the level of linear functionals on $A\otimes A^{\mathrm{op}}$, where $\phi \coloneqq \phi_{F}$.

\section{Fixed points and their algebraic structure}
\label{sec:fixedpoints}

This section carries out the characterization of the new renormalization fixed points reachable by the trajectories of the previous section. Now equipped with a good understanding of those renormalization flows at the algebraic level, we can describe, in the case of convergence, the structure of those fixed points using the concept of idempotent states and other equivalent algebraic objects. See \cref{fig:diagramAll} for a diagram summarizing the results.

\begin{remark}\label{rem:fixedNFphi}
Let $A$ be a $C^*$-Hopf algebra. Consider a noise quantum channel $\Noisy$ with corresponding linear map $F:A\to A$ and linear functional $\phi\in (A\otimes A^{\mathrm{op}})^*$, as described in \cref{prop:NF,prop:Fphi}.
Then, the following are equivalent:
    \begin{enumerate}
        \item[(i)] $\Coarse \circ (\Noisy\otimes \Noisy) \circ \Fine = \Noisy$;
        \item[(ii)] $\sigma \circ (F\otimes F) \circ \Delta = F$;
        \item[(iii)] $(\phi\otimes \phi)\circ \Delta_{A\otimes A^{\mathrm{op}}} = \phi$.
    \end{enumerate}
This is an immediate consequence of \cref{prop:FGconv,prop:sigma_is_convolution}.
\end{remark}

\begin{figure}[h!]
    \centering
\begin{tikzpicture}[scale=1]
    \node[anchor=center] (nodeF) at (3.5, 4) {
        \begin{minipage}{3.951cm}\centering
                $F:A \to A$
                \\
                \footnotesize{}completely positive linear map satisfying
                $\hat{h}\circ F =\hat{h}$ such that $\sigma\circ (F\otimes F)\circ \Delta = F$
        \end{minipage}
         };
    \node[anchor=center] (nodeN) at (-3.5, 4) {
         \begin{minipage}{4.092cm}\centering
                $\Noisy:\operatorname{End}\mathscr{H}\to\operatorname{End}\mathscr{H}$
                \\
                \footnotesize{} completely positive and trace-preserving linear map
                such that $\Coarse\circ (\Noisy\otimes\Noisy)\circ \Fine=\Noisy$
         \end{minipage}
         };
    \node[anchor=center] (nodephi) at (3.5, 0.5) {
         \begin{minipage}{4.516cm}\centering
                 $\phi : A\otimes A^{\mathrm{op}}\to\mathbb{C}$
                 \\
                 \footnotesize{}unital positive linear functional satisfying $\langle \phi,(\placeholder)\otimes 1\rangle = \hat{h}$  such that $\phi^2 = (\phi\otimes\phi)\circ \Delta = \phi$
         \end{minipage}
         };
    \node[anchor=center] (nodeQ) at (-3.5, 0.5) {
         \begin{minipage}{5.2cm}\centering
                  $(Q,m,1_Q,*,\Delta_Q,\varepsilon_Q,\varphi) \subseteq A\otimes \smash{A^{\mathrm{op}}}$
                  \\
                  \footnotesize{}finite $*$-quantum hypergroup
         \end{minipage}
         };
    \node[anchor=center] (nodep) at (-3.5, -2.7) {
         \begin{minipage}{4.516cm}\centering
                  $p \in A\otimes A^{\mathrm{op}}$
                  \\
                  \footnotesize{}group-like projection such that $\eval{ \hat{h}\otimes \hat{h} }{ p((\placeholder)\otimes 1)} = \eval{ \hat{h}\otimes\hat{h} }{ p }\hat{h}$
         \end{minipage}
         };
    \node[anchor=center] (nodeC) at (3.5, -2.7) {
         \begin{minipage}{4cm}\centering
              $C \subseteq A\otimes A^{\mathrm{op}}$
              \\
              \footnotesize{} coideal $*$-subalgebra such that $(\mathrm{Id}\otimes\hat{h})(C) = \mathbb{C}1$
         \end{minipage}
         };
    \draw[draw=none,shorten <=-2pt, shorten >=2pt]   (nodeN)   -- node[sloped, below, font=\footnotesize] {Rem.~\ref{rem:fixedNFphi}} (nodeF);
    \draw[<->,black,shorten <=-2pt, shorten >=2pt]   (nodeN)   -- node[sloped, above, font=\footnotesize] {Prop.~\ref{prop:NF}} (nodeF);
    \draw[draw=none] (nodephi)   -- node[sloped, below, font=\footnotesize] {Prop.~\ref{prop:qHG_3}} (nodeQ);
    \draw[->,black] (nodephi)   -- node[sloped, above, font=\footnotesize] {Prop.~\ref{prop:qHG_1}} (nodeQ);
    \draw[draw=none] (nodeF)   -- node[sloped, below, font=\footnotesize] {Rem.~\ref{rem:fixedNFphi}} (nodephi);
    \draw[<->,black] (nodeF)   -- node[sloped, above, font=\footnotesize] {Prop.~\ref{prop:Fphi}} (nodephi);
    \draw[<->,black]  (nodephi) -- node[sloped, above, font=\footnotesize] {Prop.~\ref{prop:Cphi}} (nodeC);
    \draw[<->,black,shorten <=-1pt, shorten >=-10pt]    (nodephi) -- node[sloped, above, font=\footnotesize] {Prop.~\ref{prop:phip}}   (nodep);
    \draw[draw=none]   (nodep)   -- node[sloped, above, font=\footnotesize] {Prop.~\ref{prop:qHG_2}} (nodeQ);
    \draw[->,black]   (nodep)   -- node[sloped, below, font=\footnotesize] {Prop.~\ref{prop:qHG_3}} (nodeQ);
    \draw[densely dashed,<-] (nodeQ) -- (nodeN);
\end{tikzpicture}
    \caption{Summary of the relations between the different structures.}
    \label{fig:diagramAll}
\end{figure}
Let us now state a first version of our first main result.
\begin{proposition}
\label{th:NewFixedPoints}
	Let $A$ be a $C^*$-Hopf algebra and let $\Phi$ and $\Psi$ be faithful $*$-representations of $A$ and $A^*$, respectively. Recall that the associated renormalization fixed point matrix product density operators are generated by the tensor
	\begin{equation*}
		M = \sum_{i=1}^{\dim A} \Psi(e^i) \otimes \hat{b}(\hat{h}) \Phi(e_i).
	\end{equation*}
	After perturbation by a quantum channel $\Noisy$, we consider the coarse-graining trajectory $M_{n}$ and, assuming convergence, the new renormalization fixed point $M_{\infty}$ defined in \cref{sec:trajectories}. This new fixed point corresponds to an idempotent state $\phi$ on $A\otimes A^{\mathrm{op}}$ such that $\eval{\phi}{(\placeholder)\otimes 1} = \smash{\hat{h}}$ and we have
	\begin{align*}
		M_{\smash{\infty}}
		         & = \sum_{i=1}^{\dim A}  D  \eval{\phi}{e_i\otimes S(h_{(1)})}  \Psi(e^i) \otimes \hat{b}(\hat{h}) \Phi(h_{(2)}).
	\end{align*}
    Conversely, any idempotent state $\phi\in (A\otimes A^{\mathrm{op}})^*$ satisfying $\eval{\phi}{(\placeholder)\otimes 1} = \smash{\hat{h}}$ defines a valid renormalization fixed point $M_{\infty}$.
\end{proposition}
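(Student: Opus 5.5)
The plan is to transport the limit $M_n\to M_\infty$ to the level of linear functionals, where the trajectory is simply $\phi^n$, and then read off the formula for $M_\infty$ through the inverse correspondence of \cref{prop:Fphi}.

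First, by \cref{prop:NF} the channel $\Noisy$ (restricted to $\operatorname{Im}\Phi$) corresponds to a completely positive $F$ with $\hat h\circ F=\hat h$. By \cref{prop:FGconv} and induction, $\Noisy_n=\Noisy_{F_n}$ with $F_n=F\star\cdots\star F$. Hence
\[
M_n=\sum_i\Psi(e^i)\otimes\Phi_b(F_n(e_i)).
\]
By \cref{prop:sigma_is_convolution}, $\phi_{F_n}=\phi^n$ with $\phi=\phi_F$.

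Next I show that convergence of $M_n$ is equivalent to convergence of $\phi^n$. The map $F\mapsto M_F$ is linear and injective, since $\Psi$ and $\Phi$ are faithful and $\hat b(\hat h)=\Phi(c_{\hat h})$ with $c_{\hat h}$ invertible and central. The map $F\mapsto\phi_F$ is a linear bijection by \cref{prop:Fphi}. In finite dimensions both maps are therefore homeomorphisms onto their images, and these images are closed subspaces. So $M_n\to M_\infty$ forces $F_n\to F_\infty$ and $\phi^n\to\psi\coloneqq\phi_{F_\infty}$.

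The limit $\psi$ is a state with $\langle\psi,(\cdot)\otimes1\rangle=\hat h$, because positivity and these linear constraints are closed conditions. It is idempotent since convolution is continuous: $\psi\psi=\lim\phi^{2n}=\psi$, as $\phi^{2n}$ is a subsequence of a convergent sequence. The explicit formula then follows by substituting $F_\infty=F_\psi$ from \cref{eq:phitoFphi} into $M_{F_\infty}$, which yields exactly $D\langle\psi,e_i\otimes S(h_{(1)})\rangle\,\Psi(e^i)\otimes\hat b(\hat h)\Phi(h_{(2)})$. The statement's $\phi$ should be read as this limit $\psi$.

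For the converse, take an idempotent state $\phi$ with the marginal condition. By \cref{prop:Fphi}, $F_\phi$ is completely positive and preserves $\hat h$, and by \cref{prop:NF}, $\Noisy_{F_\phi}$ is a quantum channel. \Cref{rem:fixedNFphi} then gives $\Coarse\circ\Noisy^{\otimes2}\circ\Fine=\Noisy$, so the trajectory is constant and converges to $M_{F_\phi}$. It remains to check that $M_{F_\phi}$ is a fixed point, meaning it generates density operators that are invariant under $\Coarse$ and $\Fine$ in the sense of \cref{subsec:rfps}. Here I would verify $\rho_{n}(M_{F_\phi},B)=\Noisy^{\otimes n}(\rho_n(M,B))$, which is positive. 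Coarse-graining invariance then follows from $\Coarse\circ\Noisy^{\otimes 2}\circ\Fine=\Noisy$ combined with \cref{eq:PropertyFine}.

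I expect this last verification to be the main obstacle. Relating the coarse-graining of the noisy MPDO to the single-site identity requires inserting $\Fine$ through \cref{eq:PropertyFine} at the right place. A fine-graining property for the new tensor is not obviously available, so I would settle for invariance under the coarse-graining channel.
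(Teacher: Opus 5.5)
Your forward direction is correct. It follows the paper's route: the trajectory is $\phi^n$, the limit is an idempotent state, and the formula comes from $F_\psi$. You also spell out the transfer of convergence from $M_n$ to $\phi^n$ more carefully than the paper does. In the converse, the constant-trajectory observation via \cref{rem:fixedNFphi} is fine. The coarse-graining half is not an obstacle at all. By multilinearity, $\rho_n(M_F,B)=\Noisy^{\otimes n}(\rho_n(M,B))$. Writing $\rho_{n+1}(M,B)$ via \cref{eq:PropertyFine} at site $k$ and using $\Coarse\circ\Noisy^{\otimes 2}\circ\Fine=\Noisy$ gives $(\mathrm{Id}^{\otimes k}\otimes\Coarse\otimes\mathrm{Id}^{\otimes(n-k-1)})(\rho_{n+1}(M_F,B))=\rho_n(M_F,B)$ in one line.

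The genuine gap is the fine-graining channel, which you give up on. The paper counts it as part of a valid renormalization fixed point: right after the statement it requires $\Coarse$ together with a \emph{new} fine-graining channel $\tilde\Fine$. The old $\Fine$ cannot serve. In the tensor-product case of the $\mathbb{Z}_2$ example ($F(e)=e+g$, $F(g)=0$), $\Fine(\Phi_b(e+g))=\tfrac14(\mathrm{Id}\otimes\mathrm{Id}+Z\otimes Z)$, whereas the new tensor requires $\tfrac14(\mathrm{Id}+Z)^{\otimes 2}$.

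The missing idea is \cref{prop:newFine}. For $F=F_\phi$, set $\Delta_F(x)\coloneqq(1\otimes F(1))\Delta(x)$ and prove the intertwining $\Delta_F\circ F=(F\otimes F)\circ\Delta$. The proof runs as follows:
\begin{enumerate}
\item[(i)] expand $(F\otimes F)\Delta(x)$ with two copies of $\Delta(h)$;
\item[(ii)] use $\phi\circ S=\phi$ and $S(h_{(1)})\otimes h_{(2)}=h_{(1)}\otimes S(h_{(2)})$;
\item[(iii)] apply \cref{prop:IdempProp} with $v\equiv x_{(1)}\otimes h_{(1)}$ and $w\equiv S(x_{(2)})\otimes h_{(1')}$;
\item[(iv)] use pulling-through to collapse the result to $F(x)_{(1)}\otimes F(1)F(x)_{(2)}$.
\end{enumerate}
Since $\rho_n(M_F,b(x))=\Phi_b(F(x_{(1)}))\otimes\cdots\otimes\Phi_b(F(x_{(n)}))$, the map $\tilde\Fine$ with $\tilde\Fine\circ\Phi_b=(\Phi_b\otimes\Phi_b)\circ\Delta_F$ then turns $\rho_n(M_F,b(x))$ into $\rho_{n+1}(M_F,b(x))$ at any site. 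Without this ingredient, your converse only shows that $M_{F_\phi}$ is reachable and $\Coarse$-stable. It does not show that $M_{F_\phi}$ is a fixed point in the sense of \cref{subsec:rfps}.
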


The limit tensors above satisfy the renormalization fixed point condition. Indeed, there exists a coarse-graining quantum channel, identical to $\Coarse$ by construction, together with a new fine-graining quantum channel $\tilde{\Fine}$; i.e.,
\begin{equation}
\begin{tikzpicture}[scale=0.7]
    \draw[RedOrange, -mid] (-3.246, 0.564) -- (-3.81, 0.564);
    \draw[RedOrange, -mid] (-1.834, 0.564) -- (-2.399, 0.564);
    \draw[-mid] (-2.822, 0.988) -- (-2.822, 1.552);
    \draw[-mid] (-2.822, -0.423) -- (-2.822, 0.141);
    \filldraw[thin, fill=white] (-2.399, 0.141) rectangle (-3.246, 0.988);
    \node[anchor=center,scale=0.8] at (-2.822, 0.564) {$M_\infty$};
    \draw[RedOrange, -mid] (-0.423, 0.564) -- (-0.988, 0.564);
    \draw[RedOrange, -mid] (0.988, 0.564) -- (0.423, 0.564);
    \draw[-mid] (0, 0.988) -- (0, 1.552);
    \draw[-mid] (0, -0.423) -- (0, 0.141);
    \draw[RedOrange, -mid] (2.399, 0.564) -- (1.834, 0.564);
    \draw[-mid] (1.411, 0.988) -- (1.411, 1.552);
    \draw[-mid] (1.411, -0.423) -- (1.411, 0.141);
    \filldraw[thin, fill=white] (0.423, 0.141) rectangle (-0.423, 0.988);
    \node[anchor=center,scale=0.8] at (0, 0.564) {$M_\infty$};
    \filldraw[thin, fill=white] (1.834, 0.141) rectangle (0.988, 0.988);
    \node[anchor=center,scale=0.8] at (1.411, 0.564) {$M_\infty$};
    \draw[|->] (-1.976, 1.129) arc[start angle=-135, end angle=-45, x radius=0.798, y radius=-0.798];
    \draw[|->] (-0.847, 0) arc[start angle=-135, end angle=-45, x radius=-0.798, y radius=0.798];
    \node[anchor=center] at (-1.411, 1.693) {$\tilde{\Fine}$};
    \node[anchor=center] at (-1.411, -0.564) {$\Coarse$};
\end{tikzpicture}\ .
\end{equation}
Before proving that this quantum channel is indeed the fine-graining quantum channel in \cref{prop:newFine}, we establish several key properties the proof relies upon in \cref{sec:idempotents}.
The reason this new fine-graining quantum channel works can be understood thanks to a deeper algebraic decomposition of the idempotent states and substructures of the enveloping $C^*$-Hopf algebra $A\otimes A^{\mathrm{op}}$. We develop this algebraic framework in \cref{sec:quantumgoursat}, culminating in more explicit description of the new renormalization fixed points in \cref{th:newGoursatFP}.

\subsection{Idempotent linear functionals}\label{sec:idempotents}

From now on, and for the sake of clarity, we consider an arbitrary $C^*$-Hopf algebra $B$, which will stand for $B \coloneqq  A\otimes A^{\mathrm{op}}$ as above.

\begin{definition}
	Let $B$ be a $C^*$-Hopf algebra. A linear functional $\phi\in B^*$ is \newterm{idempotent} if
	\[
        (\phi \otimes \phi) \circ \Delta_B = \phi.
	\]
    An \newterm{idempotent state} is a unital idempotent linear functional.
\end{definition}

Let us briefly state important properties that will be useful for this characterization.

\begin{proposition}\label{prop:IdempProp}
Let $\phi\in B^*$ be an idempotent state. Then, for all $v,w\in B$,
\begin{equation*}
    \eval{\phi}{v}
          \eval{\phi}{w}
    =
    \eval{\phi}{v_{(1)}}
          \eval{\phi}{v_{(2)}w}.
\end{equation*}
Moreover, $\phi$ is invariant under the antipode and a positive element, i.e., $\phi\circ S = \phi = \phi^*$.
\end{proposition}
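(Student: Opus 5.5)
The plan is to pass to the Fourier picture on $B$, where idempotency of $\phi$ becomes an algebraic identity that can be manipulated with the Haar integrals. By \cref{lemma:fourier} (applied to $B$), write $\phi=\jmath(p)=\eval{\hat h_B}{(\placeholder)\,p}$ with $p\coloneqq\jmath^{-1}(\phi)\in B$. By the definition of $\sigma$ in \cref{lemma:sigma}, $\phi^2=\phi$ is equivalent to $\sigma(p\otimes p)=p$. Via \cref{eq:prophsigma} this reads, for every $z\in B$,
\[
\eval{\hat h}{p z_{(1)}}\eval{\hat h}{p z_{(2)}}=\eval{\hat h}{pz},
\]
which is just idempotency again, so this reformulation alone is not enough.

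For the displayed identity, I would instead work with the right slice map $R_\phi$ from \cref{eq:DefLRfunctional}. Idempotency says $\phi\circ R_\phi=\phi$, and also $R_\phi\circ R_\phi=R_{\phi^2}=R_\phi$, so $R_\phi$ is an idempotent map. Since $\phi$ is a state, $R_\phi$ is unital and completely positive. A unital completely positive idempotent on a finite-dimensional $C^*$-algebra that also preserves a faithful invariant functional is a conditional expectation onto its image $C_\phi=\{x:R_\phi(x)=x\}$. For this I will use that $\hat h\circ R_\phi=\hat h$, which holds because $\hat h$ is a two-sided integral by \cref{eq:DualHaar2}. Hence $R_\phi$ is $C_\phi$-bimodular, i.e.\ $R_\phi(u\,R_\phi(w))=R_\phi(u)R_\phi(w)$. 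I would rather use the left version: $L_\phi$ is a conditional expectation onto $\{x:L_\phi x=x\}$. Writing the target identity as $\phi(v)\phi(w)=\phi(L'(v)w)$ with $L'(v)\coloneqq\eval{\phi}{v_{(1)}}v_{(2)}=L_\phi(v)$, it amounts to
\[
\phi(L_\phi(v)\,w)=\phi(v)\phi(w).
\]
Now $\phi=\varepsilon\circ L_\phi$, and $L_\phi(v)$ is $L_\phi$-fixed, so bimodularity gives
\[
\phi(L_\phi(v)w)=\varepsilon\bigl(L_\phi(L_\phi(v)w)\bigr)=\varepsilon\bigl(L_\phi(v)L_\phi(w)\bigr)=\varepsilon(L_\phi v)\,\varepsilon(L_\phi w)=\phi(v)\phi(w),
\]
using that $\varepsilon$ is multiplicative. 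The main obstacle is justifying bimodularity. Tomiyama's theorem would require $\|L_\phi\|=1$ and a projection onto a $C^*$-subalgebra. The cleaner route is the Choi--Effros/Kadison--Schwarz argument: unital CP gives $L_\phi(x^*x)\ge L_\phi(x)^*L_\phi(x)$. Applying $\hat h$ and using $\hat h\circ L_\phi=\hat h$ and $L_\phi^2=L_\phi$, one shows that the fixed space is a $*$-subalgebra (the multiplicative domain) and that $L_\phi$ is the $\hat h$-orthogonal projection onto it. Self-adjointness with respect to $\innerprod{\cdot}{\cdot}$ follows from \cref{rem:DaggerLR} once $\phi^*=\phi$, so I may need to prove $\phi^*=\phi$ first. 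Orthogonal projection plus the fixed space being a subalgebra then yields bimodularity by the standard argument.

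For $\phi^*=\phi$: since $\phi$ is positive, $\eval{\phi}{x^*}=\overline{\eval{\phi}{x}}$, so $\phi^*=\overline{\phi(S(\cdot)^*)}=\phi\circ S$. Thus it suffices to prove $\phi\circ S=\phi$. For this, apply the displayed identity, or its mirror version obtained from $R_\phi$, with $w=S(v_{(2)})$-type substitutions: summing $\eval{\phi}{v_{(1)}}\eval{\phi}{v_{(2)}S(v_{(3)})}=\phi(v)$ and using the antipode axiom \cref{eq:axiomSeps} gives relations between $\phi$ and $\phi\circ S$. To avoid circularity, my first attempt would be the uniqueness argument. $L_\phi$ is an idempotent contraction in the Hilbert space $(B,\innerprod{\cdot}{\cdot})$, since it is unital CP and $\hat h$-preserving, so Kadison--Schwarz gives $\|L_\phi x\|_{\hat h}\le\|x\|_{\hat h}$. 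An idempotent contraction on a Hilbert space is an orthogonal projection, so $L_\phi=L_\phi^\dagger=L_{\phi^*}$ by \cref{rem:DaggerLR}. Then $\phi^*=\varepsilon\circ L_{\phi^*}=\varepsilon\circ L_\phi=\phi$, and hence $\phi\circ S=\phi$. This ordering, namely contraction, then orthogonal projection, then $\phi=\phi^*=\phi\circ S$, then bimodularity, then the identity, is the one I would write up.
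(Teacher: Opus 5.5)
Your argument is correct, but it takes a genuinely different route from the paper.

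\textbf{The paper's route.} It adapts Franz--Skalski. With $z=\eval{\phi}{v_{(1)}}v_{(2)}$ and $Q=\Delta(z)-1\otimes z$, it computes $\eval{\phi\otimes\phi}{Q^*Q}=0$ directly from idempotency and hermiticity of $\phi$. It then applies the Cauchy--Schwarz inequality for $\phi\otimes\phi$ against $w\otimes h$, and divides by $\eval{\phi}{h}$. This requires a separate argument, via the range projection of $\phi$ in $B^*$, that $\eval{\phi}{h}>0$. Antipode invariance comes afterwards, by specializing the identity to $v=h$ and using pulling-through and cocentrality of $h$.

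\textbf{Your route.} You reverse the logical order that the paper later uses in \cref{prop:Cphi}. You first show that $L_\phi=E_\phi$ is a conditional expectation. It is unital completely positive, idempotent and $\hat h$-preserving, so Kadison--Schwarz together with faithfulness of $\hat h$ puts every fixed point in the multiplicative domain. You then read off the identity from $\phi=\varepsilon\circ L_\phi$ and multiplicativity of $\varepsilon$. For $\phi=\phi^*$, you use three facts:
\begin{enumerate}
\item an idempotent contraction on $(B,\innerprod{\cdot}{\cdot})$ is self-adjoint;
\item $(L_\phi)^\dagger=L_{\phi^*}$ by \cref{rem:DaggerLR};
\item $\phi^*=\phi\circ S$ because $\phi$ is hermitian.
\end{enumerate}

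\textbf{Comparison.} Your approach avoids the $\eval{\phi}{h}\neq 0$ detour and makes the conditional-expectation structure the primary object. The price is reliance on faithfulness of $\hat h$, Choi's multiplicative-domain theorem, and the adjoint formula of \cref{rem:DaggerLR}, hence on \cref{eq:propHaarExchXY}. The paper's computation stays at the level of the state $\phi\otimes\phi$ and needs none of this $L^2(\hat h)$ machinery.

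\textbf{Two small points.}
\begin{enumerate}
\item The multiplicative-domain step by itself already gives the displayed identity. Beyond Kadison--Schwarz and faithfulness of $\hat h$, it only needs $L_\phi(v^*)=L_\phi(v)^*$, which holds for any positive $\phi$. So proving $\phi=\phi^*$ first, via the orthogonal-projection picture, is needed only for the antipode invariance, not for the identity.
\item The ``positive element'' clause follows immediately: $\phi=\phi^*=\phi^2$ makes $\phi$ a projection in the $C^*$-algebra $B^*$.
\end{enumerate}
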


\begin{proof}
    The proof adapts Lemma~3.1 of Ref.~\cite{franz_new_2009}.
	Let
    \[
        z
        \coloneqq 
        \eval{\phi}{v_{(1)}} v_{(2)}\in B
        ,\quad 
        Q
        \coloneqq 
        \Delta(z) - 1 \otimes z \in B \otimes B.
    \]
	First, by expanding the previous expressions and using that $\phi$ is idempotent,
	\begin{align*}
		\eval{\phi \otimes \phi}{Q^* Q}
		  & =
		2 \eval{\phi}{z^* z}
		      - \eval{\phi}{z^*_{(1)}}
		\eval{\phi}{z^*_{(2)}z}
		      - \eval{\phi}{z_{(1')}}
		          \eval{\phi}{z^* z_{(2')}}.
	\end{align*}
	In the case of the first term, let us note that
	\begin{align*}
		\eval{\phi}{z^*z}
		  & =
		\overline{\eval{\phi}{v_{(1)}}}
		      \eval{\phi}{ v_{(1')} }
		      \eval{\phi}{ v_{(2)}^* v_{(2')}^{} },
	\intertext{while in the case of the second term, this becomes}
		\eval{\phi}{z_{(1)}^*}
		      \eval{\phi}{z_{(2)}^* z}
		  & =
		\overline{\eval{\phi}{v_{(1)}}}
            \eval{\phi}{v_{(2)}^*}
                \eval{\phi}{v_{(3)}^* z}
		\\
		  & =
		\overline{\eval{\phi}{v_{(1)}}}
    		\eval{\phi}{v_{(1')}}
    		      \eval{\phi}{v_{(2)}^* v_{(2')}}
        =
        \eval{\phi}{z^*z},
        \\
    \intertext{and finally, by virtue of the previous equalities and the fact that $\phi$ is self-adjoint,}
		\eval{\phi}{z_{(1')}}
		      \eval{\phi}{z^* z_{(2')}}
		&=
		\overline{\eval{\phi}{z_{(1')}^*}}
            \overline{\eval{\phi}{z_{(2')}^*z}}
		=
        \overline{\eval{\phi}{z^*z}}
        =
        \eval{\phi}{z^*z}.
	\end{align*}
	Consequently, $\eval{\phi \otimes \phi}{Q^* Q} = 0$.
	By the Cauchy--Schwarz inequality for $\phi \otimes \phi$,
	\[
		|\eval{\phi \otimes \phi}{ Q (w \otimes h) }|^2
		\leq
        \eval{\phi \otimes \phi}{Q^* Q} 
		      \eval{\phi \otimes \phi}{w^* w\otimes h^* h}
        =
        0,
	\]
	and hence $\eval{\phi \otimes \phi}{Q(w\otimes h)} = 0$, i.e.,
	\[
		\eval{\phi \otimes \phi}{w \otimes z h}
		=
        \eval{\phi \otimes \phi}{\Delta(z)(w \otimes h)}.
	\]
	Recalling the definition of $z$, we obtain that
	\[
		\eval{\phi}{v_{(1)}} \eval{\phi}{v_{(2)} h} \eval{\phi}{w}
		=
        \eval{\phi}{v_{(1)}} \eval{\phi}{v_{(2)}w} \eval{\phi}{v_{(3)} h},
	\]
	since $v_{(1)} \otimes v_{(2)}h = v_{(1)} \otimes \eval{\varepsilon}{v_{(2)}}h = v \otimes h$,
	\[
		\eval{\phi}{v}\eval{\phi}{h}\eval{\phi}{w} 
		=
        \eval{\phi}{v_{(1)}} \eval{\phi}{v_{(2)}w} \eval{\phi}{h},
	\]
    and therefore
    \[
		\eval{\phi}{v}\eval{\phi}{w} 
		=
        \eval{\phi}{v_{(1)}} \eval{\phi}{v_{(2)}w}.
	\]
    since $\eval{\phi}{h} \neq 0$. 
    Indeed, let $\phi = u (\phi^* \phi)^{1/2}$ be the polar decomposition of $\phi$, where $u\in B^*$ is a partial isometry. Then, it is well known that the range projection $q \coloneqq u u^*\in B^*$ satisfies
    $q \phi =\phi$ and $ \phi q = q$, and hence
    \[
    \eval{\phi}{h}=\eval{q \phi }{h}=\eval{q}{h_{(1)}} \eval{\phi}{h_{(2)}} = \eval{q}{h_{(2)}} \eval{\phi}{h_{(1)}}=\eval{\phi q }{h} = \eval{q}{h}.
    \]
    Since $q$ is a positive projection and $\hat{h}q=q\hat{h}=\hat{h}$, $q\geq \hat{h}$ and therefore
    \[
        \eval{\phi}{h}=\eval{q}{h}\geq \eval{\hat{h}}{h} = D^{-1}>0.
    \]
	We can now prove the invariance of $\phi$ under the antipode:
	\begin{align*}
		\eval{\phi}{h}\eval{\phi}{w} 
        & =
        \eval{\phi}{h_{(1)}} \eval{\phi}{h_{(2)}w}
        =
        \eval{\phi}{h_{(1)}S(w)} \eval{\phi}{h_{(2)}}
		\\
        & =
        \eval{\phi}{h_{(1)}} \eval{\phi}{h_{(2)}S(w)}
        =
        \eval{\phi}{h} \eval{\phi}{S(w)},
	\end{align*}
	for all $w\in B$, where we used the property above for $v\equiv h$, the pulling-through of \cref{eq:PT} in the second equality, the fact that $h_{(1)} \otimes h_{(2)}= h_{(2)}\otimes h_{(1)}$ of \cref{eq:Haar} in the third, and the property for $v\equiv h$ again in the last.
\end{proof}

\begin{corollary}\label{cor:IdempPropExch}
    For any idempotent state $\phi\in B^*$ and all $v,w\in B$, it holds that
    \begin{equation*}
        \eval{\phi}{v}
            \eval{\phi}{w}
        =
        \eval{\phi}{w v_{(1)}}
            \eval{\phi}{v_{(2)}}.
    \end{equation*}
\end{corollary}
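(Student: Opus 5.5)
The corollary is the mirror image of \cref{prop:IdempProp}: that result puts the coproduct on the left factor, and here we want it on the right. The natural route is to transport the identity of \cref{prop:IdempProp} through the antipode, using the invariance $\phi\circ S=\phi$ proved there.

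Concretely, apply \cref{prop:IdempProp} to the pair $S(v),S(w)$:
\[
\eval{\phi}{S(v)}\eval{\phi}{S(w)}=\eval{\phi}{S(v)_{(1)}}\eval{\phi}{S(v)_{(2)}S(w)}.
\]
By $\phi\circ S=\phi$, the left-hand side equals $\eval{\phi}{v}\eval{\phi}{w}$.

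For the right-hand side, we rewrite it using anticomultiplicativity \cref{eq:Santicomult}, namely $S(v)_{(1)}\otimes S(v)_{(2)}=S(v_{(2)})\otimes S(v_{(1)})$. This turns it into
\[
\eval{\phi}{S(v_{(2)})}\eval{\phi}{S(v_{(1)})S(w)}.
\]
Since $S$ is an antihomomorphism, \cref{eq:PropsAntipode} gives $S(v_{(1)})S(w)=S(wv_{(1)})$. Invariance of $\phi$ under $S$ then removes both antipodes, yielding
\[
\eval{\phi}{v_{(2)}}\eval{\phi}{wv_{(1)}}.
\]
This is exactly the claimed expression.

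There is no real obstacle. The only care needed is bookkeeping of the Sweedler indices: the swap produced by \cref{eq:Santicomult} has to match the reversed order coming from the antihomomorphism property. No positivity or Cauchy--Schwarz argument is needed beyond what is already contained in \cref{prop:IdempProp}.
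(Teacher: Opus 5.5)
Your proposal is correct and matches the paper's proof step for step. It applies \cref{prop:IdempProp} to $S(v),S(w)$, then uses anticomultiplicativity, antimultiplicativity of $S$, and the invariance $\phi\circ S=\phi$ to remove the antipodes.
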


\begin{proof}
    Indeed, for all $v,w\in B$, it holds that
    \begin{align*}
        \eval{\phi}{v} & \eval{\phi}{w}
        =
        \eval{\phi}{S(v)}
            \eval{\phi}{S(w)}
        =
        \eval{\phi}{S(v)_{(1)}}
            \eval{\phi}{S(v)_{(2)} S(w)}
        \\
        &=
        \eval{\phi}{S(v_{(2)})}
            \eval{\phi}{S(v_{(1)}) S(w)}
        =
        \eval{\phi}{S(v_{(2)})}
            \eval{\phi}{S(w v_{(1)})}
        =
        \eval{\phi}{v_{(2)}}
            \eval{\phi}{w v_{(1)}},
    \end{align*}
    where we have recalled that $\phi\circ S = \phi$ by \cref{prop:IdempProp} in the first equality, we have applied the identity in \cref{prop:IdempProp} in the second, the anticomultiplicativity of the antipode in \cref{eq:Santicomult} in the third, its antimultiplicativity in \cref{eq:PropsAntipode} in the fourth, and we have used that $\phi \circ S = \phi$ again in the last.
\end{proof}

Let us now prove existence of the new fine-graining quantum channel at the fixed point.

\begin{proposition}\label{prop:newFine}
    Let $A$ be a $C^*$-Hopf algebra and consider a renormalization fixed point of the form exhibited in \cref{th:NewFixedPoints}, given by an idempotent state $\phi \in (A\otimes A^{\mathrm{op}})^*$, and let $F \coloneqq F_\phi$. Then, the linear map $\Delta_F : A \to A\otimes A$ defined, for all $x\in A$, by 
    \begin{equation}
    \Delta_F(x)=(1 \otimes F(1))\Delta(x),
    \end{equation}
    satisfies the intertwining property
    \begin{equation}\label{eq:new_FG}
        \Delta_F \circ F = (F \otimes F) \circ \Delta,
    \end{equation}
    and corresponds to the fine-graining quantum channel of the renormalization limit tensor,
    \begin{equation}
        \tilde\Fine \circ \Phi_b
        =
        (\Phi_b\otimes \Phi_b) \circ \Delta_F.
    \end{equation}
\end{proposition}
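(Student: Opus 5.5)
The plan is to prove the intertwining identity \cref{eq:new_FG} by dualizing it against the Haar measure. Since $\hat{h}$ is faithful, the pairing $\eval{\hat{h}\otimes\hat{h}}{(\placeholder)(y\otimes z)}$ is non-degenerate on $A\otimes A$. So it suffices to check, for all $x,y,z\in A$, that both sides of \cref{eq:new_FG} give the same value after multiplying on the right by $y\otimes z$ and applying $\hat{h}\otimes\hat{h}$.

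The left-hand side is immediate from the definition \cref{eq:FtophiF}:
\[
\eval{\hat{h}}{F(x_{(1)})y}\eval{\hat{h}}{F(x_{(2)})z}=\eval{\phi}{x_{(1)}\otimes y}\eval{\phi}{x_{(2)}\otimes z}.
\]
For the right-hand side, substitute $F(x)=D\eval{\phi}{x\otimes S(h_{(1)})}h_{(2)}$ from \cref{eq:phitoFphi}. Then use $\eval{\hat{h}}{F(1)w}=\eval{\phi}{1\otimes w}$, again from \cref{eq:FtophiF}. This gives
\[
D\,\eval{\phi}{x\otimes S(h_{(1)})}\,\eval{\hat{h}}{h_{(2)}y}\,\eval{\phi}{1\otimes h_{(3)}z}.
\]
Next, use the exchange identity \cref{eq:propHaarExchXY}, or equivalently the pulling-through relations \cref{eq:PT,eq:PTmodified}, to absorb the factor $\eval{\hat{h}}{h_{(2)}y}$. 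Combined with \cref{eq:DualInt}, this should reduce the expression to something of the form $\eval{\phi}{x\otimes y_{(1)}}\eval{\phi}{1\otimes y_{(2)}z}$, up to the placement of the antipode and the $A^{\mathrm{op}}$ ordering, which I would track carefully.

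The main obstacle is the last step: identifying this reduced expression with the left-hand side. I would apply \cref{prop:IdempProp} in $B=A\otimes A^{\mathrm{op}}$ with $v=x\otimes y$ and $w=1\otimes z$, or its mirrored form \cref{cor:IdempPropExch}. Recall that the coproduct of $B$ is $\Delta(x\otimes y)=(x_{(1)}\otimes y_{(1)})\otimes(x_{(2)}\otimes y_{(2)})$, and that $w$ is multiplied in $A^{\mathrm{op}}$. The difficulty is that the left-hand side has $x$ split across both factors of $\phi$ with $y,z$ unsplit, while the right-hand side splits $y$ instead. Matching them needs both the idempotent factorization identity and the $S$-invariance $\phi\circ S=\phi$; I expect some trial and error over which of the two identities to use. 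If the direct route fails, I would first prove the auxiliary identity $(F\otimes F)\Delta(x)=(\mathrm{Id}\otimes F)\Delta(F(x))$. This is the natural "one-sided" consequence of $F\star F=F$ together with \cref{prop:IdempProp}. Once it holds, it remains to show that $(\mathrm{Id}\otimes F)\Delta(F(x))=(1\otimes F(1))\Delta(F(x))$, which reduces to the same kind of pairing computation with $x$ replaced by $F(x)$.

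For the second claim I would take the candidate channel $\tilde\Fine\coloneqq(\mathrm{Id}\otimes\Noisy_F)\circ\Fine$. It is completely positive and trace-preserving as a composition of the channels from \cref{prop:NF} and the fine-graining channel. The limit tensor $M_\infty$ has physical components $\Phi_b(F(e_i))$, so the relevant domain is $\Phi_b(F(A))$. On this domain, \cref{prop:sigma_is_S,eq:PropertyNF} give
\[
\tilde\Fine\circ\Phi_b\circ F=(\Phi_b\otimes\Phi_b)\circ(\mathrm{Id}\otimes F)\circ\Delta\circ F.
\]
By the auxiliary identity and \cref{eq:new_FG}, this equals $(\Phi_b\otimes\Phi_b)\circ\Delta_F\circ F=(\Phi_b\otimes\Phi_b)\circ(F\otimes F)\circ\Delta$. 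This is exactly the fine-graining property \cref{eq:PropertyFine} for $M_\infty$, and it identifies the action of $\tilde\Fine$ with $\Delta_F$.
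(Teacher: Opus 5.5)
Your treatment of \cref{eq:new_FG} is sound. It is essentially the paper's computation read through the $\hat{h}\otimes\hat{h}$ pairing. Your reduction comes out exactly as $\eval{\phi}{x\otimes y_{(1)}}\eval{\phi}{1\otimes S(y_{(2)})z}$, since $D\,S(h_{(1)})\eval{\hat{h}}{h_{(2)}y}\otimes h_{(3)}=y_{(1)}\otimes S(y_{(2)})$. To see this, apply $\Delta$ to the Fourier inversion $D\eval{\hat{h}}{h_{(1)}y}S(h_{(2)})=y$ and use cocentrality of $h$. The choice $v=x\otimes y$, $w=1\otimes z$ does not close the argument. Instead, rewrite $\eval{\phi}{x_{(2)}\otimes z}=\eval{\phi}{S(x_{(2)})\otimes S(z)}$ and apply \cref{prop:IdempProp} with $v=x_{(1)}\otimes y$ and $w=S(x_{(2)})\otimes S(z)$. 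The first legs then collapse through $x_{(2)}S(x_{(3)})$, leaving $\eval{\phi}{x\otimes y_{(1)}}\eval{\phi}{1\otimes S(z)y_{(2)}}$. One more use of $\phi\circ S=\phi$ gives your reduced form. This is the dual of the paper's step, which uses $v=x_{(1)}\otimes h_{(1)}$ and $w=S(x_{(2)})\otimes h_{(1')}$.

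The genuine gap is the auxiliary identity $(F\otimes F)\Delta(x)=(\mathrm{Id}\otimes F)\Delta(F(x))$, on which your whole second part rests. It is false, and it does not follow from $F\star F=F$, which only constrains the two legs after $\sigma$ recombines them. Take Case~3 of the paper's $\mathbb{C}\mathbb{Z}_2$ example: $F(e)=e+g$, $F(g)=0$, which is $F_\phi$ for an admissible idempotent state. Then $(F\otimes F)\Delta(e)=(e+g)\otimes(e+g)$, but $(\mathrm{Id}\otimes F)\Delta(F(e))=e\otimes(e+g)$. Physically, your $\tilde\Fine=(\mathrm{Id}\otimes\Noisy_F)\circ\Fine$ sends the one-site limit state $\tfrac12(\mathrm{Id}+Z)$ to $\tfrac14\,\mathrm{Id}\otimes(\mathrm{Id}+Z)$ instead of $\tfrac14(\mathrm{Id}+Z)^{\otimes 2}$. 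So it is not a fine-graining channel for $M_\infty$. Similarly, for $\mathbb{C}\mathbb{Z}_3$ with $F(g)=g^{-1}$ one has $F(1)=1$ and $\Delta_F=\Delta$, so the correct fine-graining is $\Fine$ itself. Your channel instead produces $g^{-1}\otimes g$ where $g^{-1}\otimes g^{-1}$ is required. The point is that $\Delta_F$ acts on $F(A)$ by multiplying the second leg by $F(1)$, not by applying $F$ a second time.

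You do not need any auxiliary identity. \cref{eq:new_FG} by itself gives $(\Phi_b\otimes\Phi_b)\circ\Delta_F\circ F=(\Phi_b\otimes\Phi_b)\circ(F\otimes F)\circ\Delta$. This is exactly the fine-graining relation you correctly wrote down for the physical components $\Phi_b(F(e_i))$ of $M_\infty$. The paper's proof only establishes this algebraic identity and reads the second display as that identification; it does not construct an explicit channel. What you cannot claim is that $(\mathrm{Id}\otimes\Noisy_F)\circ\Fine$ realizes it.
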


\begin{proof}
    We have $F(1)= F_\phi(1)= D \eval{\phi}{1 \otimes S(h_{(1)})}h_{(2)}$. For all $x\in A$,
    \begin{align*}
        (F\otimes F)(\Delta(x))
        & =
        D^2 \eval{\phi}{x_{(1)} \otimes S(h_{(1)})}
            \eval{\phi}{x_{(2)} \otimes S(h_{(1')})}
                h_{(2)} \otimes h_{(2')}
        \\ &
        =
        D^2 \eval{\phi}{x_{(1)} \otimes h_{(1)}}
            \eval{\phi}{S(x_{(2)}) \otimes h_{(1')}}
                S(h_{(2)}) \otimes h_{(2')}
        \\ &
        =
        D^2 \eval{\phi}{x_{(1)} \otimes h_{(1)}}
            \eval{\phi}{x_{(2)}
                S(x_{(3)})
                \otimes h_{(2)} \cdot_{\mathrm{op}} h_{(1')}} S(h_{(3)})
                \otimes h_{(2')}
        \\ &
        =
        D^2 \eval{\phi}{x \otimes h_{(1)}}
            \eval{\phi}{1 \otimes h_{(1')}}
                S(h_{(3)}) \otimes h_{(2')} S(h_{(2)})
        \\ &
        =
        D^2 \eval{\phi}{x \otimes h_{(1)}}
            \eval{\phi}{1 \otimes h_{(1')}}
                S(h_{(2)})_{(1)} \otimes h_{(2')} S(h_{(2)})_{(2)}
        \\
        &=
        D \eval{\phi}{x \otimes h_{(1)}}
            S(h_{(2)})_{(1)} \otimes F(1) S(h_{(2)})_{(2)} 
        \\ &
        =
        F(x)_{(1)} \otimes F(1) F(x)_{(2)},
    \end{align*}
    where we used the definition of $F$ in the first equality, the invariance of $\phi$ under the antipode and the fact that $S(h_{(1)}) \otimes h_{(2)}=h_{(1)} \otimes S(h_{(2)})$ in the second, \cref{prop:IdempProp} with $v\equiv x_{(1)} \otimes h_{(1)}$ and $w\equiv S(x_{(2)}) \otimes h_{(1')}$ in the third, the first pulling-through property of \cref{eq:PT} in the fourth, the comultiplicativity of the antipode in the fourth and the definition of $F$ in the fifth and the sixth.
\end{proof}

In order to get a better understanding and characterization of those fixed points, we recall in the next section the correspondence between the idempotent states defined above and coideal $*$-subalgebras, with their associated conditional expectation.

\subsection{Conditional expectations and coideal \texorpdfstring{$\boldsymbol{*}$}{*}-subalgebras}
\label{subsec:expect}

\begin{definitions}
    \label{def:coideal}
    Let $B$ be a $C^*$-Hopf algebra. A subset $C\subseteq B$ is a \newterm{(right) coideal} of $B$ if
    \[
        \Delta(C)\subseteq C\otimes B.
    \]
    A \newterm{coideal $*$-subalgebra} of $B$ is simultaneously a coideal and a $*$-subalgebra of $B$.
\end{definitions}

\begin{remark}
One can also consider the notion of ``left'' coideal $C\subseteq B$, satisfying the condition $\Delta(C)\subseteq B\otimes C$. However, both notions are in one-to-one correspondence. Indeed, if $C\subseteq B$ is a coideal, then 
$
    \Delta(S(C)) \subseteq
    (S\otimes S)(B \otimes C) = B\otimes S(C)
$,
and hence $S(C)$ is a left coideal, since $S$ is one-to-one. Thus, we will simply speak of coideal $*$-subalgebras.
\end{remark}

\begin{example}
    Coideal $*$-subalgebras of $\mathbb{C}G$ are of the form $\mathbb{C}H$, where $H$ is a subgroup of $G$. Similarly, coideal $*$-subalgebras of $\mathbb{C}^G$ are of the form $\mathbb{C}^{G/H}$, the algebra of functions over the left cosets $G/H$ where $H$ is a (not necessarily normal) subgroup of $G$.
\end{example}

In a general $C^*$-Hopf algebra, idempotent states are in one-to-one correspondence with coideal $*$-subalgebras \cite{franz_new_2009}. 
\begin{proposition}
\label{prop:Cphi}
    Let $B$ be a $C^*$-Hopf algebra and let $\phi\in B^*$ be an idempotent state. Then, the linear map $E_\phi : B \to B$ defined by
    \begin{equation}
        E_\phi
            \coloneqq
                (\phi\otimes\mathrm{Id})\circ \Delta
    \end{equation}
    is a positive conditional expectation onto a coideal $*$-subalgebra $C_\phi \coloneqq E_\phi(B)$ sa\-tis\-fying $\hat{h} \circ E_\phi = \hat{h}$.
    Conversely, for every coideal $*$-subalgebra $C$ of $B$ with con\-di\-tio\-nal ex\-pecta\-tion $E_C : B \to C$ such that $\hat{h} \circ E_C= \hat{h}$, the linear functional
    \begin{equation}
        \phi_C
            \coloneqq
                \varepsilon\circ E_C \in B^*
    \end{equation}
    is an idempotent state. Moreover, the maps $\phi\mapsto C_\phi$ and $C\mapsto \phi_C$ establish a one-to-one correspondence between such idempotent states and coideal $*$-subalgebras.
\end{proposition}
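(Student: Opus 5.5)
The plan has three parts: properties of $E_\phi$, the converse direction, and bijectivity. Throughout I use \cref{prop:IdempProp} and \cref{cor:IdempPropExch} together with coassociativity.

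\textbf{Properties of $E_\phi$.}
\begin{itemize}
\item \emph{Idempotence.} By coassociativity and idempotence of $\phi$,
\[
E_\phi^2(x)=\eval{\phi}{x_{(1)}}\eval{\phi}{x_{(2)}}x_{(3)}=\eval{\phi\phi}{x_{(1)}}x_{(2)}=E_\phi(x).
\]
So $E_\phi$ is a projection onto $C_\phi\coloneqq E_\phi(B)$.
\item \emph{Positivity.} $E_\phi$ is completely positive, since $\Delta$ is a $*$-homomorphism and $\phi\otimes\mathrm{Id}$ is completely positive ($\phi$ is a state). It is also unital, because $\phi(1)=1$.
\item \emph{Coideal property.} $\Delta\circ E_\phi=(E_\phi\otimes\mathrm{Id})\circ\Delta$ by coassociativity, so $\Delta(C_\phi)\subseteq C_\phi\otimes B$.
\item \emph{Haar invariance.} $\hat h\circ E_\phi=\phi\hat h=\eval{\phi}{1}\hat h=\hat h$ by \cref{eq:DualHaar2}.
\end{itemize}

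\textbf{Conditional-expectation property.} The key step is the bimodule property $E_\phi(E_\phi(v)w)=E_\phi(v)E_\phi(w)$. I would derive it from the identity in \cref{prop:IdempProp}, applied to $v_{(1)},w_{(1)}$ with the remaining legs carried along. Expanding gives
\[
E_\phi(E_\phi(v)w)=\eval{\phi}{v_{(1)}}\eval{\phi}{v_{(2)}w_{(1)}}v_{(3)}w_{(2)}.
\]
I would lift \cref{prop:IdempProp} to the vector-valued form $\eval{\phi}{v_{(1)}}\eval{\phi}{v_{(2)}w_{(1)}}\,v_{(3)}\otimes w_{(2)}=\eval{\phi}{v_{(1)}}\eval{\phi}{w_{(1)}}\,v_{(2)}\otimes w_{(2)}$. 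This should follow by repeating the Cauchy--Schwarz argument of that proof with $Q$ tested against general elements in place of $w\otimes h$. Alternatively, one can apply the scalar identity to $\Delta(v),\Delta(w)$ paired against arbitrary functionals on the last legs.

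The left version comes from the same identity and \cref{cor:IdempPropExch}, or by using $*$-preservation: $\phi=\phi^*$ and $\phi\circ S=\phi$ give $E_\phi(x^*)=E_\phi(x)^*$. Then $C_\phi$ is closed under products (apply the bimodule property with $w\in C_\phi$, $w=E_\phi(w)$) and under $*$. Hence $C_\phi$ is a coideal $*$-subalgebra and $E_\phi$ is a conditional expectation onto it. I expect this vector-valued strengthening to be the main obstacle.

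\textbf{Converse direction.} Let $C$ be a coideal $*$-subalgebra with Haar-preserving conditional expectation $E_C$, and set $\phi_C\coloneqq\varepsilon\circ E_C$. Then $\phi_C$ is a state, since $E_C$ and $\varepsilon$ are positive and unital.

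For idempotence, first show $E_C=(\phi_C\otimes\mathrm{Id})\circ\Delta$. The trace $\hat h$ is faithful and $E_C$ is $\hat h$-preserving, so $E_C$ is the $\hat h$-orthogonal projection onto $C$. It therefore suffices to check two things for $E'\coloneqq(\varepsilon\otimes\mathrm{Id})\circ\Delta\circ E_C$-type candidates:
\begin{enumerate}
\item $E'\coloneqq(\phi_C\otimes\mathrm{Id})\Delta$ is self-adjoint for $\innerprod{\cdot}{\cdot}$, using \cref{rem:DaggerLR} and $\phi_C^*=\phi_C$.
\item $E'$ fixes $C$ and maps into $C$.
\end{enumerate}
For the fixing, take $c\in C$. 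Since $\Delta(c)\in C\otimes B$, we get $(E_C\otimes\mathrm{Id})\Delta(c)=\Delta(c)$, so $E'(c)=(\varepsilon\otimes\mathrm{Id})\Delta(c)=c$. For the range, I would show $(E_C\otimes\mathrm{Id})\Delta=\Delta E_C$ by testing against $C\otimes B$ with $\hat h\otimes\hat h$, using the invariance \cref{eq:DualHaar2}. This also yields $E'=E_C$.

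Idempotence then reads $\phi_C\phi_C=\varepsilon\circ E_C^2=\phi_C$.

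\textbf{Bijectivity.} $C_{\phi_C}=E_C(B)=C$. In the other direction, $\phi_{C_\phi}=\varepsilon\circ E_\phi=\phi$ by the counit axiom \cref{eq:axiom_eps_id}, once we know $E_{C_\phi}=E_\phi$. This follows from uniqueness of the $\hat h$-preserving conditional expectation onto a subalgebra, since $\hat h$ is a faithful trace.
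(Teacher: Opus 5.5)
Your architecture is the paper's. The forward direction (projection, complete positivity, coideal property via $\Delta\circ E_\phi=(E_\phi\otimes\mathrm{Id})\circ\Delta$, Haar invariance, bimodule identity from \cref{prop:IdempProp}) coincides with it. So does the converse via $(E_C\otimes\mathrm{Id})\circ\Delta=\Delta\circ E_C$, which is exactly the paper's \cref{eq:colinearity}, and so does the bijectivity check. You are in fact more explicit than the paper in noting that $E_{C_\phi}=E_\phi$ by uniqueness of the $\hat h$-preserving conditional expectation.

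However, you have misplaced the difficulty. The vector-valued form of \cref{prop:IdempProp} is immediate: slice the last legs with arbitrary functionals $f,g$ and apply the scalar identity to $v=x_{(1)}\eval{f}{x_{(2)}}$ and $w=y_{(1)}\eval{g}{y_{(2)}}$. The paper uses it exactly this way without comment. Your self-adjointness criterion for $E'=L_{\phi_C}$ is a separate problem. It needs $\phi_C^*=\phi_C$, which for a positive functional means $\phi_C\circ S=\phi_C$. Nothing available at that stage gives this: in the paper, $S$-invariance comes from idempotence, which is what you are trying to prove. That route is also unnecessary, since the intertwining identity gives $E'=(\varepsilon\otimes\mathrm{Id})\circ\Delta\circ E_C=E_C$ directly.

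The real crux is the intertwining identity, and ``testing against $C\otimes B$ with $\hat h\otimes\hat h$ using \cref{eq:DualHaar2}'' does not yet prove it. Against a simple tensor $c\otimes b$, the side $\Delta(E_C(x))$ produces $\eval{\hat h}{c^*E_C(x)_{(1)}}\eval{\hat h}{b^*E_C(x)_{(2)}}$. Plain invariance gives no way to pull $E_C$ out of the coproduct; the coideal property of $C$ has to enter somewhere. The paper does this by proving $\ker E_C=C^\perp$ and $\Delta(C^\perp)\subseteq C^\perp\otimes B$. It uses $R_\varphi(C)\subseteq C$ and $(R_\varphi)^\dagger=R_{\varphi^*}$ from \cref{rem:DaggerLR}, which rests on the strong invariance \cref{eq:propHaarExchXY}.

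Your testing idea does work with plain invariance if you use the test vectors $u=\Delta(c)(1\otimes b)$ with $c\in C$, $b\in B$. These span $C\otimes B$: the map $c\otimes b\mapsto\Delta(c)(1\otimes b)$ is a bijection of $B\otimes B$ (inverse $x\otimes y\mapsto x_{(1)}\otimes S(x_{(2)})y$), and it maps $C\otimes B$ into itself, hence onto by dimension. Then
\[(\hat h\otimes\hat h)\bigl(u^*\Delta(E_C(x))\bigr)=\eval{\hat h}{c^*E_C(x)}\eval{\hat h}{b^*}=\eval{\hat h}{c^*x}\eval{\hat h}{b^*}.\]
On the other side, write $\Delta(c^*)=\sum_i c_i\otimes b_i$ with $c_i\in C$; this gives
\[(\hat h\otimes\hat h)\bigl(u^*(E_C\otimes\mathrm{Id})\Delta(x)\bigr)=\sum_i\eval{\hat h}{c_ix_{(1)}}\eval{\hat h}{b^*b_ix_{(2)}}=\eval{\hat h}{c^*x}\eval{\hat h}{b^*}.\]
Both sides lie in $C\otimes B$ and $\hat h\otimes\hat h$ is faithful, so they agree. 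Some argument of this kind has to be supplied for the converse to go through.
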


For the sake of completeness, we include an adapted proof of this result.

\begin{proof}
    Let $\phi \in B^*$ be an idempotent state. First, $E_\phi$ is a projection since, for all $x\in B$,
    \[
    E_\phi(E_\phi(x))=\eval{\phi}{x_{(1)}}\eval{\phi}{x_{(2)}} x_{(3)}=\eval{\phi}{x_{(1)}} x_{(2)} = E_\phi(x).
    \]
    The map $E_\phi$ is completely positive since it is the composition of $\Delta$, a $*$-algebra homomorphism, and $\phi \otimes \mathrm{Id}$, both of which are completely positive.
    In addition, $E_\phi$ is trivially unital since $\phi$ is unital.
    Furthermore, for all $x, y \in B$, we have
    \begin{align*}
        E_\phi(E_\phi(x)y)
        &=
        \eval{\phi}{x_{(1)}} E_\phi(x_{(2)} y)
        =
        \eval{\phi}{x_{(1)}}
            \eval{\phi}{x_{(2)} y_{(1)}}
                x_{(3)} y_{(2)}
        \\&=
        \eval{\phi}{x_{(1)}}
            \eval{\phi}{y_{(1)}} x_{(2)}
                y_{(2)}
        =
        E_\phi(x) E_\phi(y),
    \end{align*}
    where we used \cref{prop:IdempProp} in the third equality with $v \equiv x_{(1)}$ and $w \equiv y_{(1)}$. This also shows that $C_\phi$ is closed under multiplication. 
    Moreover, for all $x\in B$,
    \[
        E_\phi(x^*)
        =
        \eval{\phi}{x_{(1)}^*}x_{(2)}^*
        =
        (\overline{\eval{\phi}{x_{(1)}^*}}x_{(2)})^*
        =
        (\eval{\phi}{x_{(1)}}x_{(2)})^*
        =
        (E_\phi(x))^*,
    \]
    since $\phi$ is self-adjoint, making $C_\phi$ a $*$-subalgebra of $B$.
    For all $x \in B$,
    \begin{align*}
        \Delta(E_\phi(x))
        &=
        \eval{\phi}{x_{(1)}} x_{(2)} \otimes x_{(3)}
        =
        E_\phi(x_{(1)}) \otimes x_{(2)}
    \end{align*}
    and therefore $\Delta(C_\phi) \subseteq C_\phi \otimes B$, i.e., $C_\phi$ is a coideal.
    Finally, we check that for all $x \in B$,
    \[
    \eval{\hat{h}}{E_\phi(x)}
    =
    \eval{\hat{h}}{\eval{\phi}{x_{(1)}} x_{(2)}}
    =
    \eval{\phi}{\eval{\hat{h}}{x_{(2)}}x_{(1)}}
    =
    \eval{\hat{h}}{x}\eval{\phi}{1}
    =
    \eval{\hat{h}}{x},
    \]
    and hence $\hat{h}\circ E_\phi = \hat{h}$.
    \smallbreak\noindent%
    Conversely, let $C$ be a coideal $*$-subalgebra of $B$ and let $E : B \to C$ be the positive conditional expectation such that $\hat{h} \circ E = \hat{h}$, and define $\phi_C \coloneqq \varepsilon \circ E$. Since $E$ is completely positive and $\varepsilon$ is a $*$-algebra homomorphism, $\phi_C$ is trivially a positive linear functional, and $\eval{\phi_C}{1} = \eval{\varepsilon}{E(1)} = \eval{\varepsilon}{1} = 1$.
    \smallbreak\noindent
    Now, we claim that $E$ has the following property, valid for all $x\in B$:
    \begin{equation} \label{eq:colinearity}
        E(x_{(1)})\otimes x_{(2)}
        =
        E(x)_{(1)}\otimes E(x)_{(2)}.
    \end{equation}
    For that purpose, let us prove that $\ker E$ is the $\hat{h}$-orthogonal complement of $C$:
    \begin{equation}
        \label{eq:kerEcomplHaar}
        \ker E = \{ x\in B : \eval{\hat{h}}{c^* x} = 0 \text{ for all }c\in C\}.
    \end{equation}
    For the inclusion from left to right, note that for all $x\in B$ and $c\in C$,
    \[
        \eval{\hat{h}}{c^*E(x)}
        =
        \eval{\hat{h}}{E(c^*)E(x)}
        =
        \eval{\hat{h}}{c^*x},
    \]
    since $E$ is a conditional expectation with $\hat{h}\circ E = \hat{h}$. Conversely, for all $x\in B$ let $c \coloneqq E(x)$,
    \[
        0=\eval{\hat{h}}{E(x)^*x} = \eval{\hat{h}}{E(E(x)^*x)}=\eval{\hat{h}}{E(x)^*E(x)},
    \]
    and faithfulness of $\hat{h}$ implies $E(x)=0$, as we wanted to prove.
    Now, let us show that
    \begin{equation}
    \label{eq:kerECoideal}
        \Delta(\ker E) \subseteq \ker E \otimes B,
    \end{equation}
    or, equivalently, that, for all $x\in \ker E$,
    \begin{equation}
    \label{eq:kerECoidealWithPhi}
        (\mathrm{Id}\otimes \varphi)(\Delta(x)) \in\ker E
            \quad \text{ for all } \varphi\in B^*.
    \end{equation}
    Indeed, $x\in \ker E$ if and only if 
    \[
        \eval{\hat{h}}{c^*x} = 0
            \quad \text{ for all } c\in C,
    \]
    as proven in \cref{eq:kerEcomplHaar}; with the notation of \cref{eq:DefLRfunctional}, this is equivalent to:
    \[
        \eval{\hat{h}}{R_{\varphi^*}(c)^* x} = 0
            \quad \text{ for all }c\in C \text{ and }\varphi\in B^*,
    \]
    and since $(R_\varphi)^\dagger = R_{\varphi^*}$ by \cref{rem:DaggerLR}, the previous statement is equivalent to
    \[
        \eval{\hat{h}}{c^* R_{\varphi}(x)} = 0
            \quad \text{ for all }c\in C \text{ and }\varphi\in B^*,
    \]
    this is equivalent, by virtue of \cref{eq:kerEcomplHaar}, to the fact that
    \[
        R_{\varphi}(x) \in \ker E
            \quad \text{ for all } \varphi\in B^*,
    \]
    which is \cref{eq:kerECoidealWithPhi}.
    Now, let us combine \cref{eq:kerEcomplHaar,eq:kerECoideal}. For all $x\in B$ we can write $x = E(x) + (x-E(x))$, where $E(x)\in C$ and $x-E(x)\in \ker E$, and 
    \[
        (E\otimes \mathrm{Id})(\Delta(x)) =
        (E\otimes \mathrm{Id})(\Delta(E(x))) + (E\otimes \mathrm{Id})(\Delta(x-E(x))) = \Delta(E(x)),
    \]
    since, on the one hand, $\Delta(E(x))\in C\otimes B$, the map $E\otimes \mathrm{Id}$ leaves that term invariant while, on the other hand, $\Delta(x-E(x)) \in \ker E \otimes B$ and hence $E\otimes \mathrm{Id}$ annihilates it. This concludes the proof of \cref{eq:colinearity}.
    \smallbreak\noindent%
    In particular, this allows us to prove that $\phi_C$ is an idempotent since, for all $x\in B$,
    \begin{align*}
        \eval{\phi_C}{x_{(1)}}\eval{\phi_C}{x_{(2)}}
        &=
        \eval{\varepsilon}{E(x_{(1)})}\eval{\varepsilon}{E(x_{(2)})}
        \\&
        =
        \eval{\varepsilon}{E(x)_{(1)}}\eval{\varepsilon}{E(E(x)_{(2)})}
        =
        \eval{\varepsilon}{E(E(x))} = \eval{\varepsilon}{E(x)} = \eval{\phi_C}{x},
    \end{align*}
    where we unfolded the definition of $\phi_C$ in the first equality, applied \cref{eq:colinearity} in the second, used the fact that $E$ is involutive in the third, applied \cref{eq:axiom_eps_id} in the fourth, and recalled the definition of $\phi_C$ in the last. 
    \smallbreak\noindent%
    Finally, to complete the proof, we check that these constructions are mutually inverse.
    Starting with an idempotent state $\phi$, for all $x\in B$ it holds that
    \[ 
        \eval{\phi_{C_\phi}}{x}
        =
        \eval{\varepsilon}{E_\phi(x)}
        =
        \eval{\phi}{x_{(1)}}
            \eval{\varepsilon}{x_{(2)}}
        =
        \eval{\phi}{x}.
    \]
    Conversely, given a coideal $*$-sub\-al\-ge\-bra $C$ with conditional expectation $E$, for all $x\in B$,
    \begin{align*}
        E_{\phi_C}(x)
        &=
        \eval{\phi_C}{x_{(1)}}x_{(2)}
        =
        \eval{\varepsilon}{E(x_{(1)})}  x_{(2)}
        =
        \eval{\varepsilon}{E(x)_{(1)}}  E(x)_{(2)}=E(x),
    \end{align*}
    as we wanted to prove.
\end{proof}

We can also characterize the coideal $*$-subalgebras $C_\phi \subseteq A\otimes A^{\mathrm{op}}$ corresponding to idempotent states of $A\otimes A^{\mathrm{op}}$ with the trace-preserving condition.
\begin{remark}
Let $A$ be a $C^*$-Hopf algebra, let $B=A\otimes A^{\mathrm{op}}$ be the enveloping $C^*$-Hopf algebra of $A$, let $\phi\in B^*$ be an idempotent unital linear functional and let $E: B \to C$ be the corresponding conditional expectation on the coideal $*$-subalgebra $C$. Then, the following conditions are equivalent:
\begin{enumerate}
    \item[(i)] $\eval{\phi}{(\placeholder)\otimes 1}= \hat{h}$;
    \item[(ii)] $(\mathrm{Id}_A\otimes \hat{h})\circ E=\eval{\hat{h} \otimes \hat{h}}{(\placeholder)} 1$.
\end{enumerate}
Note that we then have $(\mathrm{Id}_A\otimes \hat{h})(C) = \mathbb{C}1$.
\end{remark}

\begin{proof}
    Let $x\in A$ and $y\in A^{\mathrm{op}}$, then
    \begin{align*}
        ((\mathrm{Id} \otimes \hat{h})\circ E)(x\otimes y)
        &= \eval{\phi}{x_{(1)}\otimes \eval{\hat{h}}{y_{(2)}}y_{(1)}}x_{(2)} = \eval{\phi}{x_{(1)}\otimes \eval{\hat{h}}{y} 1}x_{(2)}\\
        &= \eval{\hat{h}}{y}\eval{\hat{h}}{x_{(1)}}x_{(2)} = \eval{\hat{h}}{x}\eval{\hat{h}}{y} 1.
    \end{align*}
    where we used the fact that $E = (\phi \otimes \mathrm{Id}) \circ \Delta$ in the first equality, used \cref{eq:DualHaar2} in the second, then the trace-preserving condition of $\phi$ in the third and \cref{eq:DualHaar2} again in the last. Conversely, let $x\in A$, 
    \[
        \eval{\phi}{x \otimes 1}
       = (\varepsilon \otimes  \varepsilon) \circ E(x\otimes 1)
        = (\varepsilon \otimes  \hat{h}) \circ E(x\otimes 1)
        =\eval{\hat{h}}{x}.
    \]
    where we used that $\phi = (\varepsilon \otimes \varepsilon)\circ E$ in the first equality, the fact that $\eval{\hat{h}}{1}=  \eval{\varepsilon}{1} = 1$ in the second, and the trace-preserving condition of $E$ in the last.
\end{proof}

Contrary to what one might expect from the group case, coideal $*$-subalgebras might not, in general, correspond to a ``quantum subgroup'' structure: there are coideal $*$-sub\-al\-ge\-bras that are not $C^*$-Hopf subalgebras. An interesting example of this was given by Pal in \cite{pal_counterexample_1996}.
\begin{example}
    Let $H_8$ be the Kac--Paljutkin $C^*$-Hopf algebra introduced in \cref{example:H8} and consider the state given by
    \[
        \psi
        \coloneqq
        \frac{1}{4} (e^1+e^4) + \frac{1}{2} e^5
        \in H_8^*.
    \]
    One can check that it is an idempotent linear functional on $H_8$ and that the coideal $*$-subalgebra $C_\psi$ cannot be a $C^*$-Hopf subalgebra; in particular, $\Delta(C_\psi) \not\subset C_\psi \otimes C_\psi$.
\end{example}

\subsection{From idempotent states to group-like projections}

We now turn to an equivalent characterization of idempotent states, in terms of elements of the $C^*$-Hopf algebra $B$ called group-like projections. Those objects are more amenable to explicit computation and will be used extensively in the remainder of this paper.

\begin{definition}
\label{def:GrouplikeProjection}
	Let $B$ be a $C^*$-Hopf algebra.
	A non-zero self-adjoint projection $p\in B$ is called a \newterm{group-like projection} if
	\begin{equation}\label{eq:GrouplikeProjection}
		\Delta(p) (1 \otimes p) = p \otimes p.
	\end{equation}
\end{definition}

\begin{remark}\label{rem:basicPropGroupLike}
	Every group-like projection $p$ satisfies $ \eval{\varepsilon}{p} = 1 $ and $S(p) = p$, and the following equalities hold:
	\begin{equation}
    \label{eq:EquivGrouplikeProjDefs}
		p_{(1)} \otimes p_{(2)}p
        =
        p_{(1)} \otimes p p_{(2)}
        =
		p_{(1)}p \otimes p_{(2)}
        =
		p p_{(1)} \otimes p_{(2)}.
	\end{equation}
    Indeed, it is well known that $S^2 = \mathrm{Id}$ for all $C^*$-Hopf algebras, and hence the result follows as a consequence of Lemma~3.3 and Theorem~3.5 in Ref.~\cite{chirvasitu_integrals_2020}.
\end{remark}

\begin{remark}\label{rem:grouplike3}
    Note that if $p$ is a group-like projection, then
    \[
        p \otimes p \otimes p
        =
        p_{(1)} \otimes p_{(2)}p \otimes p_{(3)}p
        =
        p_{(1)}p \otimes p_{(2)} \otimes p_{(3)}p
        =
        p_{(1)}p \otimes p_{(2)}p \otimes p_{(3)}.
    \]
The first equality follows from the following calculation:
    \begin{align*}
        p \otimes p \otimes p
        =
        \Delta(p)(1\otimes p) \otimes p
        =
        \Delta(p_{(1)})(1\otimes p) \otimes p_{(2)} p
        =
        p_{(1)}\otimes p_{(2)} p \otimes p_{(3)} p.
    \end{align*}
    The other identities follow analogously.
\end{remark}

As established in Ref.~\cite{Franz_2009}, idempotent states are in one-to-one correspondence with the following group-like projections.
\begin{proposition}
\label{prop:phip}
    Let $B$ be a $C^*$-Hopf algebra.
	For every group-like projection $p \in B$, the linear functional $\phi_p \in B^*$ defined by
	\begin{equation}\label{eq:pTophi}
        \phi_p
        \coloneqq
        \frac{1}{\eval{\hat{h}}{p}}\eval{\hat{h}}{p (\placeholder) p}
        =
        \frac{1}{\eval{\hat{h}}{p}}\eval{\hat{h}}{(\placeholder) p}
        \in B^*
	\end{equation}
    is an idempotent state in $B^*$.
	Conversely, for every idempotent state $\phi \in B^*$,
	\begin{equation}\label{eq:phiTop}
		p_\phi
        \coloneqq
        \frac{1}{\eval{\phi}{h}}
            \eval{\phi}{h_{(1)}}
                h_{(2)}
        \in B
	\end{equation}
	is a group-like projection. Moreover, these maps establish a one-to-one correspondence between group-like projections and idempotent states.
\end{proposition}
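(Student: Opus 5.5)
We prove the two directions separately and then check that the two constructions are mutually inverse. Throughout, the main tools are the Haar identities \cref{eq:Haar2,eq:DualHaar2,eq:DualHaar}, the group-like identities of \cref{rem:basicPropGroupLike,rem:grouplike3}, and the multiplicativity property of idempotent states from \cref{prop:IdempProp} and \cref{cor:IdempPropExch}.

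\textbf{From $p$ to $\phi_p$.} First, the two expressions in \cref{eq:pTophi} agree: since $\hat{h}$ is trace-like and $p^2=p$, we have $\eval{\hat{h}}{pxp}=\eval{\hat{h}}{xp}$. The functional is positive, because $x^*x\mapsto\eval{\hat{h}}{px^*xp}\ge 0$ and $\eval{\hat{h}}{p}>0$ by faithfulness of $\hat{h}$ and $p\neq 0$. It is unital by construction.

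For idempotency, we compute
\[
\eval{\hat{h}}{x_{(1)}p}\eval{\hat{h}}{x_{(2)}p}=\eval{\hat{h}\otimes\hat{h}}{\Delta(x)(p\otimes p)} .
\]
Using \cref{eq:GrouplikeProjection} in the form $p\otimes p=\Delta(p)(1\otimes p)$, the right-hand side becomes $\eval{\hat{h}\otimes\hat{h}}{\Delta(xp)(1\otimes p)}$. Slicing with \cref{eq:DualHaar2} in the first leg, the first factor of the coproduct collapses, giving $\eval{\hat{h}}{xp}\eval{\hat{h}}{p}$. Dividing by $\eval{\hat{h}}{p}^2$ yields $\phi_p\phi_p=\phi_p$.

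\textbf{From $\phi$ to $p_\phi$.} We know $\eval{\phi}{h}>0$ from the proof of \cref{prop:IdempProp}. Set $p_\phi=E_\phi(h)/\eval{\phi}{h}$, with $E_\phi$ the conditional expectation of \cref{prop:Cphi}.

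Self-adjointness follows from $h=h^*$ and $E_\phi$ being $*$-preserving. For idempotency, apply the multiplicativity $E_\phi(E_\phi(x)y)=E_\phi(x)E_\phi(y)$ with $x=y=h$. Then $E_\phi(h)^2=E_\phi(E_\phi(h)h)$. Writing $E_\phi(h)=\eval{\phi}{h_{(1)}}h_{(2)}$ and using $h_{(2)}h$-type collapses via \cref{eq:Haar2} together with \cref{eq:PT}, this should reduce to $\eval{\phi}{h}E_\phi(h)$, so $p_\phi^2=p_\phi$.

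For the group-like condition, expand
\[
\Delta(E_\phi(h))(1\otimes E_\phi(h))=\eval{\phi}{h_{(1)}}\eval{\phi}{h'_{(1)}}\,h_{(2)}\otimes h_{(3)}h'_{(2)} .
\]
Then use the pulling-through identity \cref{eq:PT} to move $h'$ onto the $h$-legs, apply \cref{prop:IdempProp}/\cref{cor:IdempPropExch} to merge the two $\phi$'s, and use the cocentrality of $h$. I expect this to reproduce $E_\phi(h)\otimes E_\phi(h)$ up to the factor $\eval{\phi}{h}$. This bookkeeping is the main obstacle. I would try first to prove the cleaner intermediate identity $E_\phi(h)\,x=\eval{\phi}{x}$-twisted, namely
\[
E_\phi(h)\,c=\eval{\varepsilon}{c}\,E_\phi(h)\qquad\text{for all } c\in C_\phi,
\]
meaning $E_\phi(h)$ is (up to scaling) the Haar-type integral of the coideal. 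From this, both idempotency and the group-like relation follow quickly, using $\Delta(C_\phi)\subseteq C_\phi\otimes B$.

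\textbf{Bijectivity.} For $\phi_{p_\phi}=\phi$, evaluate
\[
\eval{\hat{h}}{x\,E_\phi(h)}=\eval{\phi}{h_{(1)}}\eval{\hat{h}}{xh_{(2)}} .
\]
Using \cref{eq:propHaarExchXY} this becomes $\eval{\phi}{S(x_{(1)})\ldots}$, and with \cref{eq:DualInt} it reduces to $D^{-1}\eval{\phi}{S(x)}=D^{-1}\eval{\phi}{x}$ by antipode invariance from \cref{prop:IdempProp}. Normalizing at $x=1$ gives the claim.

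For $p_{\phi_p}=p$, compute
\[
\eval{\hat{h}}{h_{(1)}p}\,h_{(2)}=\jmath^{-1}\text{-type expression}\propto S(p)\,\eval{\hat{h}}{\cdot}\ \text{collapsed},
\]
which by \cref{lemma:fourier} and $S(p)=p$ is proportional to $p$. Since both sides are projections, the constant must be $1$.
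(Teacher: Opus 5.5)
The direction $p\mapsto\phi_p$ and both inverse checks are correct and essentially the paper's. For $p_{\phi_p}=p$, your Fourier remark amounts to $\eval{\hat{h}}{h_{(1)}p}h_{(2)}=D^{-1}S(\jmath^{-1}(\jmath(p)))=D^{-1}S(p)$. For $\phi_{p_\phi}=\phi$, you use \cref{eq:propHaarExchXY} where the paper uses \cref{eq:PT}.

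Your proof that $p_\phi^2=p_\phi$ differs from the paper's and is cleaner. By the module property in \cref{prop:Cphi}, $E_\phi(h)^2=E_\phi(E_\phi(h)h)$. Then $E_\phi(h)h=\eval{\varepsilon}{E_\phi(h)}h=\eval{\phi}{h}h$ follows from \cref{eq:Haar2} alone, with no pulling-through needed.

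The genuine gap is the group-like relation $\Delta(p_\phi)(1\otimes p_\phi)=p_\phi\otimes p_\phi$, which is the core of the converse and which you leave as an expectation.

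Your intermediate identity is true and easy to prove. Taking adjoints in $E_\phi(E_\phi(x)y)=E_\phi(x)E_\phi(y)$ gives $E_\phi(yc)=E_\phi(y)c$ for $c\in C_\phi$. Hence $E_\phi(h)c=E_\phi(hc)=\eval{\varepsilon}{c}E_\phi(h)$, and adjoints again give $cE_\phi(h)=\eval{\varepsilon}{c}E_\phi(h)$.

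The problem is what you can deduce from it. The inclusion $\Delta(C_\phi)\subseteq C_\phi\otimes B$ only places the \emph{first} leg of $\Delta(p_\phi)$ in $C_\phi$. So what ``follows quickly'' is $(p_\phi\otimes 1)\Delta(p_\phi)=p_\phi\otimes p_\phi$. That is not the defining identity of \cref{def:GrouplikeProjection}, where $p_\phi$ multiplies the second leg. You also cannot switch sides via \cref{rem:basicPropGroupLike}, because that remark presupposes $p$ is already group-like.

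The missing ingredient is $S(p_\phi)=p_\phi$. It follows from $\phi\circ S=\phi$ together with $(S\otimes S)\Delta(h)=\Delta(h)$, which holds by $S(h)=h$, \cref{eq:Santicomult} and cocentrality. With it you can close the step in either of two ways:
\begin{itemize}
\item Apply the flip composed with $S\otimes S$ to $(p_\phi\otimes 1)\Delta(p_\phi)=p_\phi\otimes p_\phi$. By anti(co)multiplicativity this yields $\Delta(S(p_\phi))(1\otimes S(p_\phi))=S(p_\phi)\otimes S(p_\phi)$.
\item Compute directly, using $\Delta(E_\phi(h))=E_\phi(h_{(1)})\otimes h_{(2)}$ and \cref{eq:PT} with $S(p_\phi)=p_\phi$:
\[
\Delta(p_\phi)(1\otimes p_\phi)=\eval{\phi}{h}^{-1}E_\phi(h_{(1)}p_\phi)\otimes h_{(2)}=\eval{\phi}{h}^{-1}E_\phi(h_{(1)})p_\phi\otimes h_{(2)}=\eval{\phi}{h}^{-1}\eval{\phi}{h_{(1)}}p_\phi\otimes h_{(2)}=p_\phi\otimes p_\phi .
\]
\end{itemize}
The paper instead carries out the Sweedler computation you first outlined. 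It uses antipode invariance of $\phi$, then \cref{eq:PTmodified} to move $h_{(3)}$ inside $\phi$, then cyclicity of $\Delta^{(2)}(h)$, and finally \cref{prop:IdempProp}.
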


\begin{proof}
	Let $p\in B$ be a group-like projection. First, $\eval{\hat{h}}{p} > 0$ and $\phi_p$ is a positive linear functional since $\hat{h}$ is positive and $p$ is a self-adjoint projection. Note that $\eval{\phi_p}{1} = 1$ by construction. Let us show that $\phi_p$ is idempotent. Indeed, for all $x\in A$,
	\begin{align*}
        \eval{\phi_p}{x_{(1)}}
            \eval{\phi_p}{x_{(2)}}
        &=
        \eval{\hat{h}}{p}^{-2}
            \eval{\hat{h}}{x_{(1)}p}
                \eval{\hat{h}}{x_{(2)}p}
        \\&
        =
        \eval{\hat{h}}{p}^{-2}
            \eval{\hat{h}}{x_{(1)}p_{(1)}}
                \eval{\hat{h}}{x_{(2)} p_{(2)} p}
        =
        \eval{\hat{h}}{p}^{-2}
            \eval{\hat{h}}{xp}\eval{\hat{h}}{p}
        =
        \eval{\phi_p}{x}
	\end{align*}
    where we used the definition of $\phi_p$ in the first equality, the group-like projection property $p \otimes p = p_{(1)} \otimes p_{(2)}p$ in the second and finally the multiplicativity of $\Delta$ and the integral property of \cref{eq:DualHaar2} in the third.
    \smallbreak\noindent%
    Conversely, let us consider an idempotent state $\phi \in B^*$ and prove that $p_\phi$ is a group-like projection. We show that $p_\phi$ is a projection:
    \begin{align*}
        p_\phi^2
        &=
        \eval{\phi}{h}^{-2}\eval{\phi}{h_{(1)}} \eval{\phi}{h_{(1')}} h_{(2)} h_{(2')}
        =
        \eval{\phi}{h}^{-2}\eval{\phi}{h_{(1)}} \eval{\phi}{S(h_{(2)})h_{(1')}} h_{(2')}\\
        &=
        \eval{\phi}{h}^{-2}\eval{\phi}{h_{(1)}} \eval{\phi}{h_{(2)}h_{(1')}} h_{(2')}
        =
        \eval{\phi}{h}^{-2}\eval{\phi}{h} \eval{\phi}{h_{(1')}} h_{(2')}
        =
        p_\phi,
    \end{align*}
    where we used the definition of $p_\phi$ in the first equality, the first pulling-through property of \cref{eq:PT} in the second, the fact that $h_{(1)} \otimes S(h_{(2)})=S(h_{(1)}) \otimes h_{(2)}$ together with the invariance of $\phi$ under the antipode in the third and \cref{prop:IdempProp} with $v\equiv h$ and $w\equiv h_{(1')}$ in the last.
    \smallbreak\noindent%
    We show the group-like property
    \begin{align*}
    	\Delta(p_\phi)(1\otimes p_\phi)
    	&=
    	\eval{\phi}{h}^{-2} \eval{\phi}{h_{(1)}} \eval{\phi}{h_{(1')}} h_{(2)} \otimes h_{(3)} h_{(2')}
    	\\ &=
    	\eval{\phi}{h}^{-2} \eval{\phi}{h_{(1)}} \eval{\phi}{S(h_{(1')})} h_{(2)} \otimes h_{(3)} h_{(2')}
    	\\ &=
    	\eval{\phi}{h}^{-2} \eval{\phi}{h_{(1)}} \eval{\phi}{S(h_{(1')})h_{(3)}} h_{(2)} \otimes h_{(2')}
    	\\ &=
        \eval{\phi}{h}^{-2} \eval{\phi}{h_{(2)}} \eval{\phi}{S(h_{(1')})h_{(1)}} h_{(3)} \otimes h_{(2')}
        \\ &=
        \eval{\phi}{h}^{-2} \eval{\phi}{h_{(1)}} \eval{\phi}{S(h_{(1')})} h_{(2)} \otimes h_{(2')}
        \\ &=
        \eval{\phi}{h}^{-2} \eval{\phi}{h_{(1)}} \eval{\phi}{h_{(1')}} h_{(2)} \otimes h_{(2')}
        \\ &=
    	p_\phi \otimes p_\phi,
    \end{align*}
	where we used the definition of $p_\phi$ in the first equality, the invariance of $\phi$ under the antipode in the second, the first pulling-through of \cref{eq:PTmodified} in the third, the cyclicity of $h_{(1)}\otimes h_{(2)} \otimes h_{(3)}$ in the fourth, \cref{prop:IdempProp} with $v\equiv h_{(1)}$ and $w\equiv S(h_{(1')})$ in the fifth and the invariance of $\phi$ under the antipode again in the sixth.
    \smallbreak\noindent%
    Finally, we check that the two constructions are inverse to each other.
	\begin{equation*}
		p_{\phi_q}
        =
        \frac{\eval{ \phi_q }{ h_{(1)} }}{\eval{ \phi_q }{ h }} h_{(2)}
        =
        \frac{\eval{ \hat{h} }{ q h_{(1)} }}{\eval{ \hat{h} }{ qh }} h_{(2)}
		  =
        \frac{\eval{ \hat{h} }{ h_{(1)} }}{\eval{ \hat{h} }{ qh }} S(q) h_{(2)}
		  =
        \frac{1}{\eval{\varepsilon}{q}} S(q) =
        q,
	\end{equation*}
	where we used the first pulling-through identity of \cref{eq:PT} in the third equality, then simplified the fraction using \cref{eq:DualInt} (applying its first property to the numerator and its second property to the denominator) in the fourth equality and finally concluded using $\varepsilon(q)=1$ and $S(q)=q$ given in \cref{rem:basicPropGroupLike}.
    \smallbreak\noindent%
	Conversely, for all $x\in B$,
	\begin{align*}
		\eval{\phi_{p_\psi}}{x}
        &=
        \frac{\eval{\hat{h}}{p_\psi x}}{\eval{\hat{h}}{p_\psi}}
        =
        \frac{\eval{\hat{h}}{\eval{\psi}{h_{(1)}} h_{(2)} x}}
		{\eval{\hat{h}}{\eval{\psi}{h_{(1')}} h_{(2')}}}%
		=
        \frac{\eval{\psi}{h_{(1)}} 
            \eval{\hat{h}}{h_{(2)} x}}{D^{-1} \eval{\psi}{1}}
        \\
        &=
        D \eval{\psi}{h_{(1)} S(x)}
            \eval{\hat{h}}{h_{(2)}}
        =
        \eval{\psi}{S(x)}
        =
            \eval{\psi}{x},
    \end{align*}
	where we applied \cref{eq:DualInt} to the denominator after bringing the $\hat{h}$ inside in the third equality, used $\eval{\psi}{1}=1$ and applied the second pulling-through property of \cref{eq:PT} in the fourth equality and used \cref{eq:DualInt} again in the fifth equality, and finally concluded using the fact that $\psi$ is invariant under the antipode.
\end{proof}

Moreover, we can see how the trace-preserving condition now translates for these group-like projections.
\begin{remark}
\label{prop:TPgrouplike}
Let $A$ be a $C^*$-Hopf algebra, let $B=A\otimes A^{\mathrm{op}}$ be the enveloping $C^*$-Hopf algebra of $A$, let $\phi\in B^*$ be an idempotent unital linear functional and let $p \in B$ be the corresponding group-like projection. Then, the following conditions are equivalent:
\begin{enumerate}
    \item[(i)] $\eval{\phi}{(\placeholder)\otimes 1}= \hat{h}$;
    \item[(ii)] $\eval{\hat{h}\otimes \hat{h}}{p ((\placeholder) \otimes 1)}=\eval{\hat{h} \otimes \hat{h}}{p} \hat{h}$.
\end{enumerate}
\end{remark}

\subsection{From group-like projections to finite \texorpdfstring{$\boldsymbol{*}$}{*}-quantum hypergroups}

Finally, let us introduce the notion of finite $*$-quantum hypergroup in the sense of Landstad and Van Daele \cite[Definition~3.11]{landstad_finite_2025}.

\begin{definition}
    \label{def:qHG}
    A \newterm{finite $*$-quantum hypergroup} is a finite-dimensional $C^*$-al\-ge\-bra $Q$ over the complex numbers equipped with the structure of a coalgebra, for which the co\-multi\-pli\-ca\-tion is a positive unital $*$-map (not necessarily multiplicative), i.e., for all $x,y\in Q$,
	\begin{gather}
        \label{eq:DeltaHypergroup}
        1_{(1)}\otimes 1_{(2)}
        =
        1\otimes 1,
        \quad
        \text{and}
        \quad
        (x^*)_{(1)}\otimes (x^*)_{(2)}
        =
        (x_{(1)})^* \otimes (x_{(2)})^*,
    \end{gather}
	the counit, $\varepsilon\in Q^*$, is a $*$-algebra homomorphism, i.e., for all $x,y\in Q$,
	\begin{equation}
        \label{eq:axiom_eps_Hypergroup}
        \eval{\varepsilon}{xy} = \eval{\varepsilon}{x}\eval{\varepsilon}{y},
        \quad
        \eval{\varepsilon}{1}= 1,
        \quad 
        \text{and}
        \quad
        \eval{\varepsilon}{x^*}
        =
        \overline{\eval{\varepsilon}{x}},
    \end{equation}
    and there exists a faithful linear functional $\varphi\in Q^*$ satisfying, for all $x\in Q$,
    \begin{equation}
    x_{(1)} \eval{\varphi}{x_{(2)}} = \eval{\varphi}{x} 1,
    \end{equation}
    and defines an antimultiplicative linear map $S_\varphi:Q\to Q$ such that, for all $x,y\in Q$,
    \begin{equation}
        S_\varphi(x_{(1)}) \eval{\varphi}{x_{(2)} y}
        =
        y_{(1)} \eval{\varphi}{x y_{(2)}}.
    \end{equation}
\end{definition}

\begin{remark}
In any finite $*$-quantum hypergroup $Q$, the map $S_\varphi$ satisfies:
\begin{itemize}
    \item[(i)] $S_\varphi: Q \to Q$ is unique.
    \item[(ii)] $S_\varphi: Q \to Q$ is a one-to-one correspondence;
    \item[(iii)] $S_\varphi(1) = 1$ and $\varepsilon \circ S_\varphi = \varepsilon$;
    \item[(iv)] $S_\varphi(x)_{(1)}\otimes S_\varphi(x)_{(2)}= S_\varphi(x_{(2)})\otimes S_\varphi(x_{(1)})$ for all $x\in Q$;
    \item[(v)] $\psi \coloneqq \varphi\circ S_\varphi$ satisfies $\eval{\psi}{x_{(1)}} x_{(2)} = \eval{\psi}{x}1$ for all $x\in Q$;
\end{itemize}
See Propositions~2.6 and~2.8 in Ref.~\cite{landstad_finite_2025}.
\end{remark}

\begin{remark}
    Every $C^*$-Hopf algebra is a finite $*$-quantum hypergroup, where $\varphi \coloneqq \hat{h}$.
\end{remark}

\begin{proposition}
\label{prop:qHG_1}
Let $B$ be a $C^*$-Hopf algebra with comultiplication $\Delta$ and let $p \in B$ be a group-like projection. Then, the corner $*$-subalgebra $c_p(B) = pBp$, equipped with
    \[
        \Delta_p
            \coloneqq
            (p \otimes p)\Delta(\placeholder)(p \otimes p),
    \]
    is a finite $*$-quantum hypergroup.
\end{proposition}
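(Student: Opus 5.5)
We verify the axioms of \cref{def:qHG} one by one for $Q \coloneqq pBp$, with unit $1_Q \coloneqq p$, the involution inherited from $B$, counit $\varepsilon_Q \coloneqq \varepsilon|_{pBp}$, comultiplication $\Delta_p$, and candidate faithful functional $\varphi \coloneqq \hat{h}|_{pBp}$ (possibly rescaled by $\eval{\hat{h}}{p}^{-1}$). Since $p$ is a self-adjoint projection, $pBp$ is a finite-dimensional $C^*$-algebra with unit $p$. Faithfulness of $\varphi$ on $pBp$ follows directly from faithfulness of $\hat{h}$ on $B$.

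\textbf{Step 1: coalgebra structure.} We first check that $\Delta_p$ maps $pBp$ into $pBp\otimes pBp$, which is clear. It is a $*$-map and completely positive, being a compression of the $*$-homomorphism $\Delta$. Unitality, $\Delta_p(p) = p\otimes p$, follows from \cref{eq:GrouplikeProjection} together with \cref{eq:EquivGrouplikeProjDefs}: these give $(p\otimes p)\Delta(p)(p\otimes p) = (p\otimes 1)(p\otimes p)(p\otimes 1)\cdots = p\otimes p$.

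For coassociativity we expand
\[
(\Delta_p\otimes\mathrm{Id})\Delta_p(x) = (p\otimes p\otimes p)\,(\Delta\otimes\mathrm{Id})\bigl((p\otimes p)\Delta(x)(p\otimes p)\bigr)\,(p\otimes p\otimes p).
\]
The inner projections $\Delta(p)\otimes p$ get sandwiched by $p\otimes p\otimes p$. The key identity is $(p\otimes p\otimes p)(\Delta(p)\otimes p) = p\otimes p\otimes p$, which follows from \cref{rem:basicPropGroupLike} via $(p\otimes p)\Delta(p) = p\otimes p$. Hence both iterated coproducts equal $(p^{\otimes3})\Delta^{(2)}(x)(p^{\otimes 3})$. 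The counit property then reduces to $(\varepsilon\otimes\mathrm{Id})\bigl((p\otimes p)\Delta(x)(p\otimes p)\bigr) = \varepsilon(p)^2\, p\,x\,p = pxp$, using $\varepsilon(p)=1$. Multiplicativity of $\varepsilon_Q$ on $pBp$ is inherited from $\varepsilon$.

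\textbf{Step 2: invariance of $\varphi$.} Using \cref{eq:DualHaar2} and the trace property of $\hat{h}$, for $x\in pBp$ we compute
\[
(\mathrm{Id}\otimes\varphi)\Delta_p(x) = p\,x_{(1)}p\,\eval{\hat{h}}{p x_{(2)} p} = p\,x_{(1)}\eval{\hat{h}}{x_{(2)} p}\,p.
\]
Next we write $x = pxp$ and use $\Delta(p)(1\otimes p) = p\otimes p$, in the form $p_{(1)}\otimes p_{(2)}p$, to absorb the $p$ in the second leg. The integral property then yields $\eval{\hat{h}}{x}\,p$, which is $\eval{\varphi}{x}1_Q$ as required.

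\textbf{Step 3: the antipode $S_\varphi$ (main obstacle).} We set $S_\varphi \coloneqq S|_{pBp}$. This is well defined and antimultiplicative on $pBp$ because $S(p)=p$ and $S$ is an antihomomorphism. The required identity is a compressed version of \cref{eq:propHaarExchXY}. For $x,y\in pBp$ it reads
\[
p\,S(x_{(1)})\,p\,\eval{\hat{h}}{p x_{(2)} p\, y} = p\, y_{(1)}\, p\,\eval{\hat{h}}{x\, p\, y_{(2)} p}.
\]
Since $y=ypy$-type absorptions give $py = y$ and $xp = x$, the extra $p$'s inside $\hat{h}$ must be removed using the group-like relations \cref{eq:EquivGrouplikeProjDefs} applied to $x = pxp$ and $y=pyp$. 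For example, $\Delta(x) = \Delta(p)\Delta(x)\Delta(p)$, and one factor $p\otimes p$ is absorbed via $(p\otimes p)\Delta(p)=p\otimes p$ after noting $S$ is anticomultiplicative. After these absorptions, both sides reduce to \cref{eq:propHaarExchXY} sandwiched by $p$.

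The delicate point is tracking on which leg each $p$ can be pulled through. I expect to need \cref{rem:grouplike3} to handle three tensor legs at once, and this bookkeeping is where I anticipate the main difficulty.
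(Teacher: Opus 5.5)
Your proposal is correct, and it follows a different route from the paper. The paper gives no argument of its own: it simply cites \cite[Theorem~2.7]{landstad_compact_2007}. Your direct check of \cref{def:qHG} is self-contained and uses only \cref{rem:basicPropGroupLike}, the integral and trace properties of $\hat{h}$, and \cref{eq:propHaarExchXY}. The citation is shorter and places the statement in the general algebraic-quantum-group framework.

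Three points would tighten your write-up.

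\begin{itemize}
\item \textbf{Unit check.} Your unit line is garbled. It should read $\Delta(p)(p\otimes p)=\Delta(p)(1\otimes p)(p\otimes 1)=p\otimes p$, after which $(p\otimes p)(p\otimes p)=p\otimes p$.

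\item \textbf{Step 2.} Absorbing the $p$ via $p_{(1)}\otimes p_{(2)}p$ alone is circular: it just reproduces $x_{(1)}p\,\eval{\hat{h}}{x_{(2)}p}$. What you actually need from \cref{eq:EquivGrouplikeProjDefs} is $p_{(1)}\otimes p_{(2)}p=p_{(1)}p\otimes p_{(2)}$. Then for $x\in Bp$,
\[
\Delta(x)(1\otimes p)=\Delta(x)\Delta(p)(1\otimes p)=\Delta(x)\Delta(p)(p\otimes 1)=\Delta(x)(p\otimes 1).
\]
This moves the projection to the first leg, and the integral property gives $x_{(1)}\eval{\hat{h}}{x_{(2)}p}=\eval{\hat{h}}{x}\,p$.

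\item \textbf{Step 3.} You flag this as the main obstacle, but it needs no group-like relation and no \cref{rem:grouplike3}. For $x,y\in pBp$, traciality of $\hat{h}$ together with $px=xp=x$ and $py=yp=y$ gives
\[
\eval{\hat{h}}{p x_{(2)} p y}=\eval{\hat{h}}{x_{(2)}y}
\quad\text{and}\quad
\eval{\hat{h}}{x p y_{(2)} p}=\eval{\hat{h}}{x y_{(2)}}.
\]
So the required identity is exactly $p(\cdot)p$ applied to both sides of \cref{eq:propHaarExchXY}.
\end{itemize}

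The group-like property is genuinely needed only in the unit, coassociativity and invariance checks.
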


\begin{proof}
     See Ref.~\cite[Theorem~2.7]{landstad_compact_2007}.
\end{proof}

\begin{proposition}
\label{prop:qHG_2}
Let $B$ be a $C^*$-Hopf algebra with comultiplication $\Delta$ and let $\phi \in B^*$ be an idempotent state. Then, the image $*$-subalgebra $c_\phi(B) = (\phi\otimes \mathrm{Id}\otimes \phi)\circ \Delta^{(2)}(B)$, equipped with the comultiplication defined for all $x\in B$ by the expression
\[
    \Delta_\phi(x)
    \coloneqq
    x_{(1)}\otimes \eval{\phi}{x_{(2)}} x_{(3)},
\]
is a finite $*$-quantum hypergroup.
\end{proposition}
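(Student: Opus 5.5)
The plan is to verify the axioms of \cref{def:qHG} directly. The key observation is that the map $c_\phi$ factors into two commuting conditional expectations. Write $E \coloneqq E_\phi = (\phi\otimes\mathrm{Id})\circ\Delta$ as in \cref{prop:Cphi}, and let $E' \coloneqq (\mathrm{Id}\otimes\phi)\circ\Delta$ be its left-handed counterpart. By coassociativity, $c_\phi = E\circ E' = E'\circ E$.

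\emph{Step 1: the algebra structure.} The relation $E' = S\circ E\circ S$ follows from $\phi\circ S=\phi$ (\cref{prop:IdempProp}) and the anticomultiplicativity of $S$. Combined with \cref{prop:Cphi}, this shows that $E'$ is a unital, $\hat{h}$-preserving conditional expectation onto the left coideal $*$-subalgebra $C'_\phi = S(C_\phi)$. Since $E$ and $E'$ are commuting idempotents, $c_\phi(B) = C_\phi\cap C'_\phi$. An intersection of unital $*$-subalgebras is again a unital $*$-subalgebra, hence a finite-dimensional $C^*$-algebra $Q$. Elements $x\in Q$ are exactly those with $\eval{\phi}{x_{(1)}}x_{(2)} = x = x_{(1)}\eval{\phi}{x_{(2)}}$. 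Because $S$ exchanges $E$ and $E'$, we also get $S(Q)=Q$.

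\emph{Step 2: the coalgebra structure.} For $x\in Q$ we have $\Delta_\phi(x) = (\mathrm{Id}\otimes E)\Delta(x)$. Using $x=E(x)$, the first leg lies in $C_\phi\cap C'_\phi$, and the second leg lies in $Q$ as well. Both facts follow by rewriting $x_{(1)}\otimes E(x_{(2)})$ with coassociativity and idempotency of $\phi$: for instance $E'$ applied to the second leg gives $x_{(1)}\otimes\eval{\phi}{x_{(2)}}x_{(3)}\eval{\phi}{x_{(4)}}$, and $\eval{\phi}{x_{(3)}}\eval{\phi}{x_{(4)}} \to \eval{\phi}{x_{(3)}}$ absorbs. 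Coassociativity of $\Delta_\phi$ reduces, by the same idempotency $(\phi\otimes\phi)\circ\Delta=\phi$, to that of $\Delta$. For the counit, $(\varepsilon\otimes\mathrm{Id})\Delta_\phi(x)=E(x)=x$ and $(\mathrm{Id}\otimes\varepsilon)\Delta_\phi(x)=E'(x)=x$ on $Q$, while $\varepsilon|_Q$ remains a $*$-homomorphism. Positivity, unitality and the $*$-property of $\Delta_\phi$ hold because it is the composition of the $*$-homomorphism $\Delta$ with the unital completely positive self-adjoint map $\mathrm{Id}\otimes E$.

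\emph{Step 3: the integral and antipode.} I take $\varphi\coloneqq\hat{h}|_Q$, which is faithful since $\hat{h}$ is. The invariance $x_{(1)}\eval{\hat{h}}{E(x_{(2)})} = x_{(1)}\eval{\hat{h}}{x_{(2)}} = \eval{\hat{h}}{x}1$ follows from $\hat{h}\circ E=\hat{h}$ and \cref{eq:DualHaar2}. For the antipode I take $S_\varphi \coloneqq S|_Q$, which is antimultiplicative and maps $Q$ to $Q$ by Step 1. The required identity comes from \cref{eq:propHaarExchXY}, $S(x_{(1)})\eval{\hat{h}}{x_{(2)}y}=y_{(1)}\eval{\hat{h}}{xy_{(2)}}$, once the $E$'s inserted by $\Delta_\phi$ are removed on both sides. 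For this I use the $\hat{h}$-selfadjointness of a $\hat{h}$-preserving conditional expectation: $\eval{\hat{h}}{E(z)c}=\eval{\hat{h}}{E(zc)}=\eval{\hat{h}}{zc}$ for $c\in C_\phi$. This handles the left side, where $y\in Q\subseteq C_\phi$. On the right side, the same property together with trace-likeness of $\hat{h}$ gives $\eval{\hat{h}}{xE(y_{(2)})}=\eval{\hat{h}}{E(y_{(2)})x}=\eval{\hat{h}}{y_{(2)}x}$, since $x\in C_\phi$.

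I expect the main obstacle to be the bookkeeping in Step 2, namely showing that both legs of $\Delta_\phi(x)$ land in $Q=C_\phi\cap C'_\phi$ rather than merely in $C_\phi\otimes B$. If the direct Sweedler manipulation becomes unwieldy, the fallback is to transport the structure from \cref{prop:qHG_1}. Using $p=p_\phi$ from \cref{prop:phip}, I would show that $x\mapsto \eval{\hat{h}}{p}^{-1}\ldots$ gives a Fourier-type isomorphism between $pBp$ and $Q$ intertwining $\Delta_p$ and $\Delta_\phi$.
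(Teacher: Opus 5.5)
Your proof is correct, but it follows a different route from the paper. The paper does not prove this proposition itself: it cites \cite[Theorem~3.17]{landstad_compact_2007}. Its closest in-house argument is the Sweedler-calculus verification of \cref{prop:qHG_3}, which reduces to the present statement when $q=1$.

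The key gain in your approach is the factorisation $c_\phi=E\circ E'=E'\circ E$ with $E'=S\circ E\circ S$. It identifies $c_\phi(B)=C_\phi\cap S(C_\phi)$. As a result, the unital $*$-subalgebra property and $S(Q)=Q$ follow directly from \cref{prop:Cphi}. It also lets you check explicitly that $\Delta_\phi(Q)\subseteq Q\otimes Q$ and that $\Delta_\phi$ is positive. The paper's verification of \cref{prop:qHG_3} does not spell these points out; it even takes the subalgebra property "by construction".

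For the antipode identity, you remove the inserted $E$'s using the bimodule property together with $\hat{h}\circ E=\hat{h}$, and land directly on \cref{eq:propHaarExchXY}. The needed right-module property $E(zc)=E(z)c$ follows from the left one in \cref{prop:Cphi} by taking adjoints. The paper instead removes $\phi$ by combining $\phi\circ S=\phi$, \cref{eq:DualHaar2} and the antipode axiom. The citation buys brevity and the broader setting of algebraic quantum groups; your version is self-contained and more conceptual.

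There is one bookkeeping slip in Step 2. Applying $E'$ to the second leg gives
\[
x_{(1)}\otimes\eval{\phi}{x_{(2)}}x_{(3)}\eval{\phi}{x_{(4)}}.
\]
This collapses back to $\Delta_\phi(x)$ because $x=E'(x)$, not because of idempotency: no adjacent pair of $\phi$'s appears there. The four membership checks split as follows:
\begin{itemize}
\item $E$ on the first leg uses $x=E(x)$;
\item $E'$ on the first leg uses idempotency;
\item $E$ on the second leg uses idempotency;
\item $E'$ on the second leg uses $x=E'(x)$.
\end{itemize}
Coassociativity of $\Delta_\phi$ needs no idempotency at all. By coassociativity of $\Delta$, both iterates equal $x_{(1)}\otimes\eval{\phi}{x_{(2)}}x_{(3)}\otimes\eval{\phi}{x_{(4)}}x_{(5)}$. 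The fallback through \cref{prop:qHG_1} is unnecessary.
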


\begin{proof}
    See Ref.~\cite[Theorem~3.17]{landstad_compact_2007}.
\end{proof}

In fact, we will need a slightly stronger result in \cref{sec:quantumgoursat}, by combining the previous two statements, we obtain the following result.
\begin{proposition}
\label{prop:qHG_3}
Let $B$ be a $C^*$-Hopf algebra with comultiplication $\Delta$, let $\phi\in B^*$ be an idempotent state and let $q\in B$ be a group-like projection such that, for all $x\in B$,
\begin{equation}\label{eq:prop_q_phi}
\eval{\phi}{xq} = \eval{\phi}{x}
,\quad\text{and}\quad
c_\phi(x) q
=
q c_\phi(x),
\end{equation}
i.e.,
$q\in\operatorname{Center} c_\phi(B)$. 
Then, the $*$-sub\-al\-ge\-bra $Q \coloneqq qc_\phi(B)q$ equipped with
\begin{equation}
    \Delta_Q(x)
        \coloneqq
        x_{(1)} q 
            \otimes \eval{\phi}{x_{(2)}} x_{(3)} q,
\end{equation}
and $1_Q \coloneqq q$, $\varepsilon_Q \coloneqq \varepsilon|_Q = \phi|_Q$ and $\varphi \coloneqq  \hat{h}|_Q / \eval{\hat{h}}{q}$ is a finite $*$-quantum hypergroup.
\smallbreak\noindent%
Moreover, $\Pi_Q:= c_q \circ c_\phi = c_\phi \circ c_q$ is a projection onto $Q$ satisfying
\begin{equation} \label{eq:intertwine_Q}
        (\Pi_Q \otimes \Pi_Q) \circ \Delta = \Delta_Q \circ \Pi_Q.
    \end{equation}
\end{proposition}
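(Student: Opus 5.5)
The plan is to reduce the statement to \cref{prop:qHG_2}, via a second idempotent state, and then read off the remaining structure. Let $p \coloneqq p_\phi$ be the group-like projection of \cref{prop:phip}, so that $c_\phi(B)$ is the image of $c_\phi = (\phi\otimes\mathrm{Id}\otimes\phi)\circ\Delta^{(2)}$. First I will record the elementary facts. Since $\phi\circ S=\phi$ and $\phi$ is idempotent, $c_\phi$ is idempotent. Using \cref{prop:IdempProp}, \cref{cor:IdempPropExch} and the hypothesis $\eval{\phi}{xq}=\eval{\phi}{x}$, one gets $c_\phi(xq)=c_\phi(x)q$ whenever $q$ commutes with the range. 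Also $\varepsilon\circ c_\phi=\phi$. Because $q$ is a central projection in the $*$-algebra $c_\phi(B)$ (once we know $q\in c_\phi(B)$; I will check this from $\eval{\phi}{xq}=\eval{\phi}{x}$ by writing $q$-invariance as $c_\phi(q)=q$), the set $Q=qc_\phi(B)q=c_\phi(B)q$ is a unital $*$-subalgebra with unit $q$. It is a $C^*$-algebra as a corner of a finite-dimensional $C^*$-algebra.

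Next I will introduce the state $\psi \coloneqq \phi_q = \eval{\hat h}{(\placeholder)q}/\eval{\hat h}{q}$, which is idempotent by \cref{prop:phip}. I will show that $\phi\psi=\psi\phi=\psi$. This should follow from $\eval{\phi}{xq}=\eval{\phi}{x}$ together with the pulling-through identity \cref{eq:PT}, which gives $\eval{\hat h}{x_{(1)}q}\,x_{(2)}$-type rewritings. With this in hand, I expect $Q$ to be identified as $c_\phi$ composed with the corner $c_q$. Concretely, $\Pi_Q=c_q\circ c_\phi$, and the commutation $c_q\circ c_\phi=c_\phi\circ c_q$ comes from centrality of $q$ in $c_\phi(B)$ and the relation $c_\phi(qxq)=qc_\phi(x)q$ derived above. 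Idempotency of $\Pi_Q$ and $\Pi_Q(B)=Q$ are then immediate.

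For the hypergroup axioms I will verify the conditions of \cref{def:qHG} directly:
\begin{itemize}
\item Coassociativity of $\Delta_Q$ follows from coassociativity of $\Delta$ and idempotency of $\phi$.
\item The counit property uses $\varepsilon_Q=\phi|_Q$ and \cref{prop:IdempProp}.
\item Unitality, $\Delta_Q(q)=q\otimes q$, follows from the group-like property of $q$ in the form of \cref{rem:grouplike3}, together with $\eval{\phi}{q_{(1)}}q_{(2)}=q$ type identities.
\item The $*$-compatibility and positivity come from $\Delta_Q=(\Pi_Q\otimes\Pi_Q)\circ\Delta|_Q$ being a compression of a $*$-homomorphism.
\item The left-invariance of $\varphi=\hat h|_Q/\eval{\hat h}{q}$ is \cref{eq:DualHaar2} compressed by $q$.
\item The antipode $S_\varphi$ will be $S|_Q$, which preserves $Q$ because $S(q)=q$ and $\phi\circ S=\phi$. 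Its defining relation is obtained by compressing \cref{eq:propHaarExchXY}.
\end{itemize}

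The intertwining relation \cref{eq:intertwine_Q} is, I expect, the main obstacle. It requires showing
\[
c_q c_\phi(x_{(1)})\otimes c_q c_\phi(x_{(2)})=c_q c_\phi(x)_{(1)}q\otimes\eval{\phi}{c_qc_\phi(x)_{(2)}}c_qc_\phi(x)_{(3)}q .
\]
I would first prove the analogous identity for $c_\phi$ alone, $(c_\phi\otimes c_\phi)\Delta=\Delta_\phi\circ c_\phi$, using \cref{prop:IdempProp} and \cref{cor:IdempPropExch} to absorb the inner $\phi$'s. I would then insert $q$ using its group-like identities \cref{eq:EquivGrouplikeProjDefs}, together with the fact that $q$ commutes with $c_\phi(B)$ so that the projections can be moved through.
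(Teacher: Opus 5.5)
Your plan is sound, and what you actually carry out (checking \cref{def:qHG} directly with $\Pi_Q=c_q\circ c_\phi$) is the paper's strategy. The auxiliary state $\phi_q$ and the announced reduction to \cref{prop:qHG_2} play no role and can be deleted. A naive reduction would fail anyway: for $B=\mathbb{C}G$, $\phi=\phi_H$ and $q=p_K$ with $K$ normal in $H$, one has $c_{\phi_q}(B)=\mathbb{C}K$, whereas $Q=\mathbb{C}H\,p_K$.

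Two points genuinely differ from the paper.
\begin{enumerate}
\item For $c_\phi\circ c_q=c_q\circ c_\phi$ you use that $L_\phi$ and $R_\phi$ are conditional expectations (\cref{prop:IdempProp}, \cref{cor:IdempPropExch}), so $c_\phi$ is a $c_\phi(B)$-bimodule map. This needs only $q\in c_\phi(B)$, not the group-like property. The paper instead computes with the triple identity of \cref{rem:grouplike3} and $\eval{\phi}{qxq}=\eval{\phi}{x}$.
\item More importantly, you prove \cref{eq:intertwine_Q}. The paper asserts it, and later uses it in proving \cref{thm:QuantumGoursat}, but never verifies it. You also derive positivity and $*$-compatibility of $\Delta_Q$ from it, which the paper leaves unchecked.
\end{enumerate}
The intertwining relation is not an obstacle. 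First, $(c_\phi\otimes c_\phi)\circ\Delta=\Delta_\phi\circ c_\phi$ is immediate from $(\phi\otimes\phi)\circ\Delta=\phi$. Then multiply $\Delta^{(2)}(c_\phi(x)q)$ on the right by $q\otimes 1\otimes q$ and use $q_{(1)}q\otimes q_{(2)}\otimes q_{(3)}q=q\otimes q\otimes q$, $\eval{\phi}{zq}=\eval{\phi}{z}$ and centrality of $q$. This gives $\Delta_Q(\Pi_Q(x))=\Pi_Q(x_{(1)})\otimes\Pi_Q(x_{(2)})$.

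So do this step first. Since $\Delta_Q=(\Pi_Q\otimes\Pi_Q)\circ\Delta$ on $Q$, it yields at once $\Delta_Q(Q)\subseteq Q\otimes Q$, $\Delta_Q(q)=q\otimes q$, the counit identities, and coassociativity (both iterates equal $\Pi_Q^{\otimes 3}\circ\Delta^{(2)}$ on $Q$). As written, your coassociativity bullet invokes only coassociativity of $\Delta$ and idempotency of $\phi$. That ignores the $q$-insertions, which is exactly where \cref{rem:grouplike3} is needed.

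The one step your sketch underspecifies is the antipode relation. Compressing \cref{eq:propHaarExchXY} by $q$ does not remove the $\phi$-slice inside $\Delta_Q$. As in the paper, move that slice onto the other variable using $\phi\circ S=\phi$, the trace property of $\hat{h}$ and $\eval{\hat{h}}{z}1=\eval{\hat{h}}{z_{(1)}}z_{(2)}$. Then absorb it because that variable lies in $c_\phi(B)$, e.g.\ $y_{(1)}\eval{\phi}{y_{(2)}}=y$.
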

\begin{proof}
By construction, $Q$ is a unital $*$-subalgebra of $B$.
With the notation of \cref{def:qHG}, $\varphi$ is the faithful linear functional of $Q$ and $S_{\varphi} = S|_{\smash{Q}}$.
\begin{step-hypergroup}
Let $q\in B$ be a group-like projection. We prove that the following are equivalent:
\begin{enumerate}
    \item[(i)] $\eval{\phi}{qx} = \eval{\phi}{x}$ for all $x\in B$,
    \item[(ii)] $\eval{\phi}{xq} = \eval{\phi}{x}$ for all $x\in B$,
    \item[(iii)] $q\in c_\phi(B)$
    \item[(iv)] $\phi\in c_q(B^*)$, i.e., $\eval{\phi}{qxq} = \eval{\phi}{x}$ for all $x\in B$.
\end{enumerate}
First, note that (i) implies (ii) since, for all $x\in B$,
\[
    \eval{\phi}{xq}
    = \eval{\phi}{S(xq)}
    = \eval{\phi}{S(q)S(x)}
    = \eval{\phi}{qS(x)}
    = \eval{\phi}{S(x)} = \eval{\phi}{x},
\]
where we used that $\phi = \phi \circ S$ by \cref{prop:IdempProp} in the first and last equalities, that $S$ is antimultiplicative in the second, that $S(q) = q$ by \cref{rem:basicPropGroupLike} in the third, and (i) in the fourth. That (ii) implies (i) is proven analogously.
\smallbreak\noindent%
Second, let us prove that (ii) implies (iii). Note that the condition $q\in c_\phi(B)$ is equivalent to the fact that $q = c_\phi(q)$ since $c_\phi\circ c_\phi = c_\phi$. Therefore,
\begin{align*}
    c_\phi(q)
    &=
    \eval{\phi}{q_{(1)}} q_{(2)} \eval{\phi}{q_{(3)}}
    = 
    \eval{\phi}{q_{(1)}q} q_{(2)} \eval{\phi}{q_{(3)}q}
    = 
    \eval{\phi}{q} q \eval{\phi}{q}
    = 
    \eval{\phi}{1}^2 q = q,
\end{align*}
where we used the definition of $c_\phi$ in the first equality, (ii) to insert $q$ in both slots in the second, the group-like identity $q\otimes q\otimes q = q_{(1)}q\otimes q_{(2)}\otimes q_{(3)}q$ of \cref{rem:grouplike3} in the third, that $\eval{\phi}{q} = \eval{\phi}{1}$ in the fourth, and the fact that $\eval{\phi}{1}=1$ in the last.
\smallbreak\noindent%
Third, let us prove that (iii) implies (i). Indeed, for all $x\in B$,
\[
    \eval{\phi}{qx} 
    = \eval{\phi}{q_{(1)}}\eval{\phi}{q_{(2)}x} \eval{\phi}{q_{(3)}}
    = \eval{\phi}{q_{(1)}}\eval{\phi}{x} \eval{\phi}{q_{(2)}}
    = \eval{\varepsilon}{c_\phi(q)}\eval{\phi}{x}
    = \eval{\phi}{x},
\]
where we used that $q = c_\phi(q)$ by hypothesis in the first equality, \cref{prop:IdempProp} with $v\equiv q_{(1)}$ and $w\equiv x$ in the second, that $\eval{\varepsilon}{c_\phi(q)} = \eval{\phi}{q_{(1)}}\eval{\phi}{q_{(2)}}$ in the third, and $\eval{\varepsilon}{c_{\phi}(q)}=\eval{\varepsilon}{q}=1$ by the hypothesis and \cref{rem:basicPropGroupLike} in the last.
\smallbreak\noindent%
Fourth, let us prove that (iv) implies (ii). For all $x\in B$, 
\[
    \eval{\phi}{xq} = \eval{\phi}{q(xq)q} = \eval{\phi}{qxq} = \eval{\phi}{x}.
\]
where we applied the hypothesis (iv) for $xq$ in the first equality, that $q^2 = q$ in the second, and again (iv) for $x$ in the third. Finally, that (i), and thus (ii), implies (iv) is trivial. 
\end{step-hypergroup}
\begin{step-hypergroup}\label{stepHypergroupPi}
    Set
    $\Pi_Q := c_q \circ c_\phi$. Since $q \in c_\phi(B)$, the maps $c_\phi$ and $c_q$ commute:
    for all $x\in B$,
    \begin{align*}
        c_q(c_\phi(x))
        &=
        \eval{\phi}{x_{(1)}}  q x_{(2)} q \eval{\phi}{x_{(3)}}
        \\
        &=
        \eval{\phi}{q x_{(1)} q} q x_{(2)} q \eval{\phi}{q x_{(3)} q}
        \\
        &=
        \eval{\phi}{q (qxq)_{(1)} q} (qxq)_{(2)} \eval{\phi}{q (qxq)_{(3)} q}
        \\
        &=
        \eval{\phi}{(q q_{(1)}) x_{(1)} (q_{(1')} q)} q_{(2)} x_{(2)} q_{(2')} \eval{\phi}{(q q_{(3)})x_{(3)} (q_{(3')} q)}
        \\
        &=
        \eval{\phi}{q x_{(1)} q} q x_{(2)} q \eval{\phi}{q x_{(3)} q}
        \\
        &=
        c_\phi(c_q(x)),
    \end{align*}
    using \cref{eq:prop_q_phi} in the second equality, \cref{rem:grouplike3} in the third,
    and \cref{eq:prop_q_phi} again in the last. In particular $\Pi_Q$ is idempotent.
\end{step-hypergroup}
\begin{step-hypergroup}
    Let us prove that $S(Q) = Q$.
    Note that $S(c_\phi(B))=c_\phi(B)$ since, for all $x\in B$,
    \[
        S ( c_\phi(x) )
        =
        \eval{\phi}{x_{(1)}} \eval{\phi}{x_{(3)}} S(x_{(2)})
        =
        \eval{\phi}{S(x)_{(3)}} \eval{\phi}{S(x)_{(1)}} S(x)_{(2)}
        =
        c_\phi(S(x)),
    \]
using that $\phi\circ S=\phi$ by \cref{prop:IdempProp}, and the anticomultiplicativity of the $S$; moreover,
    \[
        S(q  c_\phi(x) q )
        =
        S(q) S( c_\phi(x) ) S(q)
        =
        q c_\phi( S(x) ) q
        \in Q,
    \]
where we have used that $S(q)=q$ and that $q \in \operatorname{Center}c_\phi(B)$ by assumption.
\end{step-hypergroup}
\begin{step-hypergroup}
    Note that, for all $x\in Q$, one can rewrite
    \begin{equation}\label{eq:DeltaQsimpl}
        \Delta_Q(x)
        =
        x_{(1)} q \otimes \eval{\phi}{x_{(2)}} x_{(3)}
        =
        x_{(1)} \otimes \eval{\phi}{x_{(2)}} x_{(3)} q.
    \end{equation}
    Indeed, the first equality follows from the following calculation, valid for all $x\in Q$,
    \begin{align*}
        \Delta_Q(x) 
        & =
        x_{(1)} q \otimes \eval{ \phi }{ x_{(2)} } x_{(3)} q
        \\ &
        =
        x_{(1)} q \otimes \eval{ \phi }{ x_{(2)}q } x_{(3)} q
        \\ &
        =
        x_{(1)} q_{(1)} q \otimes
            \eval{ \phi }{ x_{(2)} q_{(2)} q } x_{(3)} q_{(3)}
        \\ &
        =
        x_{(1)} q \otimes \eval{ \phi }{ x_{(2)} q } x_{(3)}
        \\ &
        =
        x_{(1)} q \otimes \eval{ \phi }{ x_{(2)} } x_{(3)},
    \end{align*}
    where we recalled the definition of $\Delta_Q$ in the first equality, we used \cref{eq:prop_q_phi} in the second, \cref{rem:grouplike3} in the third, that $xq = x$ since $x\in Q$ and hence $x_{(1)}q_{(1)} \otimes x_{(2)}q_{(2)}\otimes x_{(3)} q_{(3)} = x_{(1)}\otimes x_{(2)}\otimes x_{(3)}$ in the fourth, and \cref{eq:prop_q_phi} again in the last.
    The proof of the second equality is similar. By applying this property twice, for all $x\in Q$,
    \begin{align*}
        (\mathrm{Id} \otimes \Delta_Q) \circ \Delta_Q(x)
        &=
        x_{(1)} q
            \otimes
            \eval{ \phi }{ x_{(2)} } \Delta_Q(x_{(3)})
        \\
        &=
        x_{(1)} q
            \otimes
            \eval{ \phi }{ x_{(2)} } x_{(3)} q
            \otimes
            \eval{ \phi }{ x_{(4)} } x_{(5)},
    \intertext{while, on the other hand,}
        (\Delta_Q\otimes\mathrm{Id})\circ\Delta_Q(x)
        & =
         \Delta_Q( x_{(1)} q )
            \otimes \eval{ \phi }{ x_{(2)} } x_{(3)}
        \\ &
        =
            x_{(1)} q_{(1)} q
            \otimes \eval{ \phi }{ x_{(2)} q_{(2)} } x_{(3)} q_{(3)}
            \otimes \eval{ \phi }{ x_{(4)} } x_{(5)}
            \\
        &=
            x_{(1)} q_{(1)} q
            \otimes  \eval{ \phi }{ x_{(2)} q_{(2)} q } x_{(3)} q_{(3)}
            \otimes \eval{ \phi }{ x_{(4)} } x_{(5)}
            \\
        &=
             x_{(1)}q
            \otimes \eval{\phi}{x_{(2)}q} x_{(3)} q
            \otimes \eval{\phi}{x_{(4)}} x_{(5)}
            \\
        &= 
            x_{(1)} q
            \otimes \eval{ \phi }{ x_{(2)} } x_{(3)} q
            \otimes  \eval{ \phi }{ x_{(4)} }  x_{(5)},
    \end{align*}
    where we used \cref{eq:DeltaQsimpl} in the first and second equality, the first property of \cref{eq:prop_q_phi} in the third, \cref{rem:grouplike3} in the fourth and \cref{eq:prop_q_phi} again in the last.
    The two expressions coincide, so $\Delta_Q$ is coassociative.
\end{step-hypergroup}
\begin{step-hypergroup}
Take $\varepsilon_Q \coloneqq \varepsilon|_Q$,
which is multiplicative.
Note that $\varepsilon_Q = \phi|_Q$ since
\[
    \eval{\phi}{x_{(1)}}x_{(2)}
    =
    x
    ,\quad\text{and hence}\quad
    \eval{\varepsilon_Q}{x}
    =
    \eval{\phi}{x_{(1)}}\eval{\varepsilon}{x_{(2)}}
    =
    \eval{\phi}{x},
\]
as $x\in c_\phi(B)$ by \cref{stepHypergroupPi} and $\phi$ is idempotent. Then, for all $x\in Q$,
\[
    ( \varepsilon_Q \otimes \mathrm{Id} ) \circ \Delta_Q(x)
    =
        \eval{ \varepsilon }{ x_{(1)} } \eval{ \phi }{ x_{(2)} } x_{(3)}q
    =
        \eval{ \phi }{ x_{(1)} } x_{(2)}q
    =
    xq = x,
\]
where we used \cref{eq:DeltaQsimpl} in the first equality, the counit axiom \cref{eq:axiom_eps_id} in the second, that $\eval{\phi}{x_{(1)}}x_{(2)} = x$ since $x\in c_{\phi}(B)$ in the third, and that $x\in Q = q c_\phi(B) q$ in the last.
\end{step-hypergroup}
\begin{step-hypergroup}
Let us prove that the antipode is anticomultiplicative. For all $x\in Q$,
\begin{align*}
    \Delta_Q( S(x) )
    &=
        S(x)_{(1)} q
            \otimes \eval{ \phi }{ S(x)_{(2)} } S(x)_{(3)} q
    \\
    &=
        S(x_{(3)}) q
            \otimes \eval{ \phi }{S(x_{(2)})} S(x_{(1)}) q
    \\
    &=
        S(x_{(3)}) S(q)
            \otimes \eval{ \phi }{ x_{(2)} } S(x_{(1)}) S(q)
    \\
    &=
        S( x_{(3)} q )
            \otimes \eval{ \phi }{ x_{(2)} } S( x_{(1)} q )
    \\
    &=
        (S\otimes S)(\Delta_Q^{\mathrm{op}}(x)),
\end{align*}
where we used the definition of $\Delta_Q$ in the first equality, the anticomultiplicativity of $S$ in the second, the invariance of $\phi$ and $q$ under the antipode in the third and the centrality property of \cref{eq:prop_q_phi} in the fourth.
\end{step-hypergroup}
\begin{step-hypergroup}
    The state in $Q$ is $\varphi \coloneqq \eval{\hat{h}}{q}^{-1} \hat{h}|_Q$.
    Indeed, for all $x\in Q$,
    \begin{align*}
        (\varphi \otimes \mathrm{Id}) \circ \Delta_Q(x)
        &=
            \eval{\hat{h}}{q}^{-1} \eval{\hat{h}}{x_{(1)}q}
                \eval{\phi}{x_{(2)}q} x_{(3)} q
        \\ &
        =
            \eval{\hat{h}}{q}^{-1} \eval{\hat{h}}{x_{(1)} q_{(1)}}
                \eval{\phi}{x_{(2)}q_{(2)} q} x_{(3)} q_{(3)} q
        \\ &
        =
            \eval{\hat{h}}{q}^{-1} \eval{\hat{h}}{x_{(1)}}
                \eval{\phi}{x_{(2)} q} x_{(3)} q
        \\ &
        =
            \eval{\hat{h}}{q}^{-1} \eval{\hat{h}}{x}
                \eval{\phi}{1} q
        \\ &
        =
        \eval{\varphi}{x} 1_Q,
    \end{align*}
    where we used the definitions and the first property of \cref{eq:prop_q_phi} in the first equality, \cref{rem:grouplike3} in the second equality, the fact that $xq=x\in Q$ and the multiplicativity of $\Delta$ in the third, the integral property of \cref{eq:DualHaar2} $\eval{\hat{h}}{x_{(1)}}\Delta(x_{(2)})=\eval{\hat{h}}{x}\Delta(1)=\eval{\hat{h}}{x} 1\otimes 1$ in the fourth and finally the definitions of $1_Q$ and $\varphi$ in the last. The proof that it is also right invariant is analogous. We remark that the factor $\eval{\hat{h}}{q}^{-1}$ is not relevant here but chosen for the subsequent corollary.
\end{step-hypergroup}
\begin{step-hypergroup}
    We now check the compatibility of $\varphi$ with the antipode. 
    For all $x,y\in Q$, 
    \[
    S_Q((\varphi \otimes \mathrm{Id})(\Delta_Q(x)(y\otimes 1)))
    =
    (\varphi \otimes \mathrm{Id})((x\otimes 1)\Delta_Q(y)).
    \]
    On the one hand, the right-hand side takes the form, for all $x,y\in Q$,
    \begin{align*}
        (\varphi \otimes \mathrm{Id})((x\otimes 1)\Delta_Q(y))
        &=
            \eval{\hat{h}}{q}^{-1}\eval{\hat{h}}{xy_{(1)}q}
                \eval{\phi}{y_{(2)}} y_{(3)}
        \\
        &=
            \eval{\hat{h}}{q}^{-1}\eval{\hat{h}}{xy_{(1)}}
                \eval{\phi}{S(y_{(2)})} y_{(3)}
        \\
        &=
            \eval{\hat{h}}{q}^{-1}\eval{\hat{h}}{x_{(1)}y_{(1)}}
                \eval{\phi}{x_{(2)}y_{(2)}S(y_{(3)})} y_{(4)}
        \\
        &=
            \eval{\hat{h}}{q}^{-1}\eval{\hat{h}}{x_{(1)}y_{(1)}}
                \eval{\phi}{x_{(2)}} y_{(2)}
        \\
        &=
            \eval{\hat{h}}{q}^{-1}\eval{\hat{h}}{xy_{(1)}} y_{(2)},
    \end{align*}
    where we used the definition of $\varphi$ together with \cref{eq:DeltaQsimpl} in the first equality, the trace-like property of $\hat{h}$ from \cref{eq:DualHaar} together with the fact that $qx=x\in Q$ and the invariance $\phi \circ S= \phi$ in the second, the integral property \cref{eq:DualHaar2} $\eval{\hat{h}}{z}1=\eval{\hat{h}}{z_{(1)}}z_{(2)}$ with $z\equiv xy_{(1)}$ in the third, \cref{eq:axiomSeps} together with \cref{eq:axiom_eps_id} in the fourth and the fact that $\eval{\phi}{x_{(2)}}x_{(1)}=x$ in the last. Analogously,
    \begin{align*}
        S_Q((\varphi \otimes \mathrm{Id})(\Delta_Q(x)(y\otimes 1)))
        &=
            \eval{\hat{h}}{q}^{-1}
                \eval{\hat{h}}{x_{(1)}qy}
                    \eval{\phi}{x_{(2)}}
                        S(x_{(3)})\\
        &=
            \eval{\hat{h}}{q}^{-1}
                \eval{\hat{h}}{yx_{(1)}}
                    \eval{\phi}{S(x_{(2)})}
                        S(x_{(3)})
        \\
        &=
            \eval{\hat{h}}{q}^{-1}
                \eval{\hat{h}}{y_{(1)}x_{(1)}}
                    \eval{\phi}{y_{(2)}x_{(2)}S(x_{(3)})}
                        S(x_{(4)})
        \\ &
        = 
            \eval{\hat{h}}{q}^{-1}
            \eval{\hat{h}}{x_{(1)}y_{(1)}}
                    \eval{\phi}{y_{(2)}}
                        S(x_{(2)})
        \\ &
        =
            \eval{\hat{h}}{q}^{-1}
            \eval{\hat{h}}{x_{(1)}y}
                S(x_{(2)}).
    \end{align*}
    Then, both expressions are equal due to \cref{eq:propHaarExchXY}.
\end{step-hypergroup}
\noindent%
This concludes the proof of \cref{prop:qHG_3}.
\end{proof}

\begin{corollary} \label{prop:qHG_3_coro}
    If we assume, in addition, that for all $x\in B$,
    \begin{equation} \label{eq:prop_q_phi2}
        L_\phi(x)q=R_\phi(x)q
        \quad\text{or, equivalently,}\quad
        \phi \in \operatorname{Center} c_q(B^*),
    \end{equation}
    then $\Delta_Q$ is multiplicative and $Q$ is a $C^*$-Hopf algebra.
\end{corollary}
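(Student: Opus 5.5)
The plan is to show directly that $\Delta_Q(xy)=\Delta_Q(x)\Delta_Q(y)$ for $x,y\in Q$. Once that holds, the $C^*$-Hopf algebra structure follows from \cref{prop:qHG_3} together with the standard fact that a finite $*$-quantum hypergroup with multiplicative comultiplication is a Hopf algebra.

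\textbf{Equivalence of the two forms of the hypothesis.} This is a short dual computation. Under $\Psi\mapsto \eval{\Psi}{(\placeholder)q}$ on $B^*$, the product $\phi\,\omega$ in $B^*$ pairs with $x$ as $\eval{\omega}{R_\phi(x)}$, and $\omega\,\phi$ pairs with $x$ as $\eval{\omega}{L_\phi(x)}$. Using \cref{rem:grouplike3} and $\phi\in c_q(B^*)$ (Step~1 of \cref{prop:qHG_3}), the condition $L_\phi(x)q=R_\phi(x)q$ for all $x$ then becomes commutation of $\phi$ with every element of $c_q(B^*)$.

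\textbf{Rewriting $\Delta_Q$.} By \cref{eq:DeltaQsimpl}, for $x\in Q$ we have
\[
\Delta_Q(x)=x_{(1)}q\otimes L_\phi(x_{(2)})\,q = R_\phi(x_{(1)})\,q\otimes x_{(2)}q ,
\]
where the second form comes from the same calculation applied on the left leg. Recall that $E_\phi=L_\phi$ is a conditional expectation onto $C_\phi$ with $E_\phi(E_\phi(a)b)=E_\phi(a)E_\phi(b)$ (proof of \cref{prop:Cphi}); the mirror identity holds for $R_\phi$ by \cref{cor:IdempPropExch}. These give only one-sided multiplicativity, which is exactly the defect the extra hypothesis must fix.

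\textbf{Multiplicativity.} I would expand
\[
\Delta_Q(x)\Delta_Q(y)=x_{(1)}q\,y_{(1)}q\otimes L_\phi(x_{(2)})\,q\,L_\phi(y_{(2)})\,q ,
\]
and then argue in two steps.
\begin{itemize}
\item \emph{Removing the inner $q$'s.} I would use that $q$ is central in $c_\phi(B)$, together with the group-like identities in \cref{rem:grouplike3}, to remove the inner $q$'s. This requires transporting $x_{(1)}$ and $y_{(1)}$ into $c_\phi(B)$-type positions, which I would do using $x=c_\phi(x)$ for $x\in Q$.
\item \emph{Merging the slices.} The key step is to replace $L_\phi(x_{(2)})\,q$ by $R_\phi(x_{(2)})\,q$ via the hypothesis. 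Then $R_\phi(x_{(2)})L_\phi(y_{(2)})$ can be absorbed into a single slice of $x_{(2)}y_{(2)}$, using \cref{prop:IdempProp} for one factor and \cref{cor:IdempPropExch} for the other.
\end{itemize}
Concretely, I expect to show $L_\phi(a)\,q\,L_\phi(b)\,q=L_\phi(ab)\,q$ for the relevant $a,b$, which yields $\Delta_Q(xy)$.

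\textbf{Main obstacle.} This is the bookkeeping in the multiplicativity step: keeping track of which Sweedler legs lie in $c_\phi(B)$ so that $q$ may be commuted, and applying the hypothesis on the correct leg. If the direct approach gets tangled, I would first prove the auxiliary identity $L_\phi(a)q\,L_\phi(b)q=L_\phi(ab)q$ for $a\in B$ and $b\in c_\phi(B)$ separately, by pairing both sides with arbitrary $\omega\in c_q(B^*)$ and using the dual form of the hypothesis, namely that $\phi$ is central in $c_q(B^*)$.

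\textbf{Conclusion.} With multiplicativity in hand, $Q$ is a finite-dimensional $C^*$-algebra (a corner $q\,c_\phi(B)\,q$ of $B$). Its comultiplication is a unital $*$-homomorphism, since $\Delta_Q(q)=q\otimes q$ by \cref{rem:grouplike3}. Its counit $\varepsilon|_Q$ is multiplicative, and it carries the faithful invariant functional $\varphi$ of \cref{prop:qHG_3}. The antipode axiom $S(x_{(1)})x_{(2)}=\eval{\varepsilon}{x}1_Q$ for $S|_Q$ then follows from the relation in \cref{def:qHG} between $S_\varphi$ and $\varphi$. To see this, set $y=1_Q$, use the invariance of $\varphi$, and use faithfulness of $\varphi$ together with multiplicativity to cancel $\varphi$; this is the standard Van Daele argument, as in Ref.~\cite{landstad_finite_2025}. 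Hence $Q$ is a $C^*$-Hopf algebra.
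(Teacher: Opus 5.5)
There is a genuine gap in the multiplicativity step, which is the whole content of the corollary. The identity you aim for, $L_\phi(a)\,q\,L_\phi(b)\,q=L_\phi(ab)\,q$, is false as an identity in $B$ for general $a,b$. To see this, take $\phi=\hat{h}$ and $q=1$. All hypotheses hold, since $L_{\hat h}(a)=R_{\hat h}(a)=\eval{\hat{h}}{a}1$. Now set $a=b=h$: the identity would say $\eval{\hat{h}}{h}^2=D^{-2}$ equals $\eval{\hat{h}}{h}=D^{-1}$, which fails when $\dim B>1$.

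General $a,b$ is exactly your situation, because the legs $x_{(2)},y_{(2)}$ of $x,y\in Q$ are arbitrary elements of $B$. Your fallback with $b\in c_\phi(B)$ is true but trivial: it is just the bimodule property of the conditional expectation $L_\phi$ plus centrality of $q$. It does not apply, since $y_{(2)}\notin c_\phi(B)$ in general. What has to be proved is an identity in $B\otimes B$ in which the second legs stay coupled to $x_{(1)}y_{(1)}$. Applying \cref{prop:IdempProp} or \cref{cor:IdempPropExch} to the middle slices only re-couples a slice to another leg. It does not fuse $R_\phi(x_{(2)})$ and $L_\phi(y_{(2)})$ into $L_\phi(x_{(2)}y_{(2)})$. (A minor slip: $\eval{\phi\omega}{x}=\eval{\omega}{L_\phi(x)}$, not $R_\phi$; this is harmless for the equivalence.)

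The missing observation is that a slice sitting on an \emph{outer} leg of $x\in Q\subseteq c_\phi(B)$ is absorbed, because $L_\phi(x)=R_\phi(x)=x$. The paper proceeds as follows:
\begin{enumerate}
\item[(i)] It uses the hypothesis on the first leg to write $\Delta_Q(x)=R_\phi(x_{(1)})q\otimes x_{(2)}q=L_\phi(x_{(1)})q\otimes x_{(2)}q=\eval{\phi}{x_{(1)}}x_{(2)}q\otimes x_{(3)}q$.
\item[(ii)] It multiplies and removes the inner $q$'s via $(q\otimes q)\Delta_Q(y)=\Delta_Q(y)$.
\item[(iii)] It merges the two \emph{outer} slices by \cref{prop:IdempProp}: $\eval{\phi}{x_{(1)}}\eval{\phi}{y_{(1)}}=\eval{\phi}{x_{(1)}}\eval{\phi}{x_{(2)}y_{(1)}}$.
\item[(iv)] It absorbs $\eval{\phi}{x_{(1)}}x_{(2)}=x$.
\end{enumerate}

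Your own key step (replacing $L_\phi(x_{(2)})q$ by $R_\phi(x_{(2)})q$) closes the gap just as well once you notice the absorption $x_{(1)}\otimes R_\phi(x_{(2)})=\Delta(R_\phi(x))=\Delta(x)$. The hypothesis then gives $\Delta_Q(z)=\Delta(z)(q\otimes q)$ for every $z\in Q$, and hence
\[
\Delta_Q(x)\Delta_Q(y)=\Delta(x)(q\otimes q)\Delta_Q(y)=\Delta(x)\Delta_Q(y)=\Delta(x)\Delta(y)(q\otimes q)=\Delta(xy)(q\otimes q)=\Delta_Q(xy),
\]
using $(q\otimes q)\Delta_Q(y)=\Delta_Q(y)$ and $xy\in Q$.

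Your closing derivation of the antipode axiom goes beyond the paper. The paper treats the passage from a multiplicative finite $*$-quantum hypergroup to a $C^*$-Hopf algebra as immediate, and your extra argument there is fine.
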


\begin{proof}
    The two conditions of \cref{eq:prop_q_phi2} are trivially equivalent.
    \smallbreak\noindent%
    Note that, for all $x\in Q$, we can rewrite
    \begin{equation}\label{eq:DeltaQsimpl2}
        \Delta_Q(x)
        =
        \eval{\phi}{x_{(1)}} x_{(2)}q \otimes x_{(3)} q.
    \end{equation}
    We have, for all $x\in Q$,
    \[
        \Delta_Q(x) 
        =
        R_\phi(x_{(1)})q \otimes x_{(2)}q
        =
        L_\phi(x_{(1)})q \otimes x_{(2)}q
        =
        \eval{\phi}{x_{(1)}} x_{(2)}q \otimes x_{(3)}q.
    \]
    where we used the definition of $\Delta_Q$ and $R_{\phi}$ in the first equality, \cref{eq:prop_q_phi2} in the second and the definition of $L_{\phi}$ in the last.
    We can now prove the multiplicativity of $\Delta_Q$. Indeed, for all $x, y \in Q$,
    \begin{align*}
        \Delta_{Q}(x) \Delta_{Q}(y)
        &=
            \eval{\phi}{x_{(1)}}
                \eval{\phi}{y_{(1)}}
                    x_{(2)}q y_{(2)}q \otimes x_{(3)} q y_{(3)}q
        \\
        &=
            \eval{\phi}{x_{(1)}}
                \eval{\phi}{y_{(1)}}
                    x_{(2)} y_{(2)}q \otimes x_{(3)} y_{(3)}q
        \\
        &=
            \eval{\phi}{x_{(1)}}
                \eval{\phi}{x_{(2)}y_{(1)}}
                    x_{(3)} y_{(2)}q \otimes x_{(4)} y_{(3)}q
        \\
        &=
        \eval{\phi}{L_{\phi}(x)_{(1)}y_{(1)}}
            L_{\phi}(x)_{(2)} y_{(2)}q \otimes L_{\phi}(x)_{(3)} y_{(3)}q
        \\
        &=
        \eval{\phi}{x_{(1)}y_{(1)}}
            x_{(2)} y_{(2)}q \otimes x_{(3)} y_{(3)}q
        \\
        &=
        \Delta_{Q}(xy),
    \end{align*}
    where we used \cref{eq:DeltaQsimpl2} in the first equality, the fact that $(q\otimes q)\Delta_Q(y)=\Delta_Q(y)$ in the second, \cref{prop:IdempProp} with $v\equiv x_{(1)}$ and $w\equiv y_{(1)}$ in the third, the definition of $L_{\phi}$ in the fourth, the fact that $L_{\phi}(x)=x$ in $Q$ in the fifth and the definition of $\Delta_{Q}$ in the last.
\end{proof}

\section{Classification in the finite-group case}
\label{sec:groups}

We now specialize the classification of \cref{sec:fixedpoints} to group algebras and their duals, where idempotent states correspond to subgroups and Goursat's Lemma yields a complete and explicit classification.

\subsection{Group-like projections in (co)commutative {\itshape{}C*}-Hopf algebras}

We now determine the general form of group-like projections in $\mathbb{C}G$ and $\mathbb{C}^G$.

\begin{proposition}
\label{prop:grouplikeProjsInCG}
	Let $G$ be a finite group.
    Then group-like projections of $\mathbb{C}G$ and $\mathbb{C}^G$ are, respectively, of the form
	\[
		p_H \coloneqq
            \frac{1}{|H|} \sum_{g \in H} g
        ,\quad\text{and}\quad
        \hat{p}_H \coloneqq
            \sum_{g \in H} \delta_g,
	\]
	where $H$ is a subgroup of $G$.
    Note that
    $\eval{\hat{h}_{\mathbb{C}G}}{p_H} = |H|^{-1}$ and $\eval{\hat{h}_{\mathbb{C}^G}}{\hat{p}_H} = |H||G|^{-1}$.
\end{proposition}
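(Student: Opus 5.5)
We treat the two algebras separately. In both cases the plan is to unpack the defining identity $\Delta(p)(1\otimes p)=p\otimes p$ of \cref{def:GrouplikeProjection} in the natural basis and read off a subgroup.

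\emph{The dual algebra $\mathbb{C}^G$.} Projections in the commutative $C^*$-algebra $\mathbb{C}^G$ are exactly the indicator functions $\hat{p}_S=\sum_{g\in S}\delta_g$ for nonempty subsets $S\subseteq G$. Using $\hat\Delta(\delta_g)=\sum_{g_1g_2=g}\delta_{g_1}\otimes\delta_{g_2}$, we compute
\[
\hat\Delta(\hat p_S)(\varepsilon\otimes\hat p_S)=\sum_{g_1,g_2}\mathds{1}_S(g_1g_2)\mathds{1}_S(g_2)\,\delta_{g_1}\otimes\delta_{g_2}.
\]
This must equal $\hat p_S\otimes\hat p_S$. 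Comparing coefficients gives $\mathds{1}_S(g_1g_2)\mathds{1}_S(g_2)=\mathds{1}_S(g_1)\mathds{1}_S(g_2)$ for all $g_1,g_2$. Equivalently, for every $g_2\in S$ we have $g_1g_2\in S\iff g_1\in S$, that is, $Sg_2=S$.

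Now pick $s\in S$. From $Ss=S$ we get $S=Ss^{-1}\ni e$. Then $S\cdot S=S$ makes $S$ a finite subsemigroup containing $e$, and a finite subsemigroup of a group is a subgroup. Conversely, if $S$ is a subgroup $H$, the identity $Hg_2=H$ holds for every $g_2\in H$, so $\hat p_H$ is group-like. Finally, $\hat h=\delta_e$ on $\mathbb{C}G$ corresponds on $\mathbb{C}^G$ to the normalized sum $|G|^{-1}\sum_g\eval{\placeholder}{g}$, and evaluating it on $\hat p_H$ gives $|H|/|G|$.

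\emph{The group algebra $\mathbb{C}G$.} This is the harder direction. A projection in $\mathbb{C}G$ is not visibly of the form $p_H$, so the plan is to pass through idempotent states. By \cref{prop:phip}, group-like projections of $\mathbb{C}G$ are in bijection with idempotent states on $\mathbb{C}G$, namely the functions $\phi$ on $G$ that are positive definite, satisfy $\phi(e)=1$, and obey $\phi(g)^2=\phi(g)$ pointwise, because $\Delta(g)=g\otimes g$. Hence $\phi$ is the indicator of a subset $H$ with $e\in H$.

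To see that $H$ is a subgroup, apply \cref{prop:IdempProp} with $v=g$ and $w=k$. It gives $\phi(g)\phi(k)=\phi(g)\phi(gk)$, so $g,k\in H$ implies $gk\in H$. Thus $H$ is a finite subsemigroup containing $e$, and therefore a subgroup. (Alternatively, positive definiteness together with $\phi(g)\in\{0,1\}$ yields the same conclusion through Kawada--Itô \cite{kawada_probability_1940}.)

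It remains to compute $p_\phi$ from \cref{eq:phiTop}. With $h=|G|^{-1}\sum_g g$ we have $\Delta(h)=|G|^{-1}\sum_g g\otimes g$. Then $\eval{\phi}{h}=|H|/|G|$, and
\[
p_\phi=\frac{|G|}{|H|}\cdot\frac1{|G|}\sum_{g\in H}g=p_H.
\]
Conversely, $p_H$ is a projection, and $\Delta(p_H)(1\otimes p_H)=|H|^{-1}\sum_{g\in H}g\otimes gp_H=p_H\otimes p_H$ because $gp_H=p_H$ for $g\in H$. Lastly, $\eval{\delta_e}{p_H}=|H|^{-1}$.

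The only step that needs care is confirming that idempotent states on $\mathbb{C}G$ are exactly the indicators of subsets; idempotency is the pointwise identity $\phi(g)^2=\phi(g)$. The rest is routine bookkeeping.
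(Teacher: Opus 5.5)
Your proof is correct. For $\mathbb{C}^G$ it is the paper's argument. Projections are indicator functions, and expanding the group-like identity forces $Sg=S$ for $g\in S$. The paper writes this as $gh^{-1}\in H$ and gets inverses from $\hat{S}(\hat p)=\hat p$, where you use finiteness.

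For $\mathbb{C}G$ you take a genuinely different route. The paper expands $p=\sum_g a_g g$ and reads off from $\Delta(p)(1\otimes p)=p\otimes p$ that $gp=p$ for every $g$ in the support $H$, so $a_{g^{-1}h}=a_h$. This gives constant coefficients, $e\in H$ and closure, and $p^2=p$ then fixes the constant to $|H|^{-1}$. That argument is fully elementary and uses only \cref{def:GrouplikeProjection}. It also shows that the group-like identity by itself forces the support to be a subgroup; the projection property only normalizes.

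You instead pass to the dual side via \cref{prop:phip}. Idempotent states on $\mathbb{C}G$ are $\{0,1\}$-valued because $\Delta(g)=g\otimes g$, closure comes from \cref{prop:IdempProp}, and $p=p_{\phi_p}=p_H$ follows from \cref{eq:phiTop}. This depends on heavier earlier results: the bijection of \cref{prop:phip} and the Cauchy--Schwarz argument behind \cref{prop:IdempProp}. In return, it yields the Kawada--It\^o-type classification of \cref{prop:idempotentStatesInCG} at the same time, rather than deducing it afterwards.

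Your route also makes visible that on this side positivity does the work. Algebraically, the indicator of any subset containing $e$ satisfies $(\phi\otimes\phi)\circ\Delta=\phi$. So your closing sentence should say that idempotent \emph{states} are the indicators of subgroups, not of arbitrary subsets.

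One small wording point: the Haar state of $\mathbb{C}^G$ is evaluation against the Haar integral $|G|^{-1}\sum_g g$ of $\mathbb{C}G$, not something ``corresponding to'' $\delta_e$. The formula you actually use is the right one.
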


\begin{proof}
Consider the group algebra $\mathbb{C}G$ and write
\[
    p = \sum_{g\in G} a_g g, \qquad a_g  \coloneqq  \delta_g(p),
\]
for a non-zero self-adjoint projection satisfying $\Delta(p)(1\otimes p) = p\otimes p$. Then,
\[
    \sum_{g,h\in G} a_g a_h (g \otimes gh)
    =
    \sum_{g,h\in G} a_g a_h (g \otimes h),
\]
and by comparing coefficients of the first tensor factor, this proves that
\[
    a_g \big( \sum_{h\in G} a_h gh - \sum_{h\in G} a_h h \big)=0,
\]
for each $g\in G$. Let $H  \coloneqq  \{ g\in G : a_g \neq 0 \} \neq \emptyset$.
Then, for every $g\in H$,
\[
    \sum_{h\in G} a_h gh = \sum_{h\in G} a_h h
    \quad\text{ and hence }\quad
    \sum_{h\in G} a_{g^{-1}h} h = \sum_{h\in G} a_h h
\]
by reindexing the left-hand side by $h \mapsto g^{-1}h$. Therefore,
for all $h\in G$ and $g\in H$,
\begin{equation}
\label{eq:PropReordCoeffGlike}
    a_{g^{-1}h} = a_h.
\end{equation}
Let us prove that $H$ is a subgroup of $G$.
First, taking $h = g$ in the previous expression yields $a_e = a_g \neq 0$ for all $g\in H$, and thus $e\in H$ and all non-zero coefficients coincide.
In other words, there exists $c\neq 0$ such that $a_g = c \delta_{g\in H}$.
Second, for all $g,h\in H$,
\[
    a_{gh}
    =
    a_{g^{-1}(gh)}
    =
    a_h
    \neq 0,
\]
by virtue of \cref{eq:PropReordCoeffGlike}, so $gh \in H$.
Now, since $p$ is self-adjoint by assumption,
\[
    p^*
    =
    \sum_{g\in G} \overline{a_g} g^*
    = 
    \sum_{g\in G} \overline{a_g} g^{-1}
    \overset{!}=
    \sum_{g\in G} a_{\smash{g^{-1}}} g^{-1}
    =
    p,
\]
thus $a_{\smash{g^{-1}}} = \overline{a_g}$ and
hence for all $g\in H$ it trivially holds that $g^{-1}\in H$.
Furthermore,
\[
    p^2
    =
    c^2 \sum_{g,h\in H} gh
    =
    c^2 |H| \sum_{k\in H} k
    =
    c|H| p,
\]
and since $p^2=p\neq 0$, we conclude that $c=|H|^{-1}$, as we wanted to prove.
\smallbreak\noindent%
In the case of $\mathbb{C}^G$,
any element $\hat{p} = \sum_{g \in G} a_g \delta_g $ is a projection, i.e.,
\[
    \hat{p}^2
    =
    \sum_{g,h\in G} a_g a_h \delta_g\delta_h
    =
    \sum_{g,h\in G} \delta_{g,h} a_g a_h \delta_{g}
    =
    \sum_{g\in G} a_g^2 \delta_{g}
    =
    \sum_{g\in G} a_g \delta_{g}
    =
    \hat{p},
\]
if and only if $a_g \in \{0, 1\} $.
Let us define $ H  \coloneqq  \{ g \in G : a_g = 1 \} $.
Then, it follows that
\[
	\hat\Delta(\hat{p})(1 \otimes \hat{p})
    =
    \sum_{g, h \in H} \delta_{g h^{-1}} \otimes \delta_h,
    \quad
    \hat{p} \otimes \hat{p}
    =
    \sum_{g, h \in H} \delta_g \otimes \delta_h.
\]
See \cref{examples:CG}.
Matching terms gives $ gh^{-1} \in H $ for all $ g, h \in H $,
and $\hat{S}(\hat{p}) = \hat{p} $ implies $ g^{-1} \in H $,
so $ H $ is a subgroup.
\smallbreak\noindent%
The converse implication is straightforward.
\end{proof}

\subsection{Back to idempotent states}

Let us determine the corresponding idempotent states $\phi_p$ for each group-like projection found above. 

\begin{proposition}
\label{prop:idempotentStatesInCG}
Let $G$ be a finite group.
Idempotent states of $\mathbb{C}G$ and $\mathbb{C}^G$ are of the form
\[
	\phi_H  \coloneqq  \phi_{p_H} = \sum_{g \in H} \delta_g
    ,\quad\text{and}\quad
	\hat{\phi}_H  \coloneqq  \phi_{\hat{p}_H} = \frac{1}{|H|}\sum_{g \in H} g,
\]
respectively, where $H$ is a subgroup of $G$, and we identified $(\mathbb{C}^G)^*\cong \mathbb{C}G$.
\end{proposition}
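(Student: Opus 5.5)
The plan is to combine \cref{prop:phip} with \cref{prop:grouplikeProjsInCG}. By \cref{prop:phip}, the idempotent states of a $C^*$-Hopf algebra $B$ are exactly the functionals $\phi_p = \eval{\hat{h}}{p}^{-1}\eval{\hat{h}}{(\placeholder)\,p}$, where $p$ runs over the group-like projections of $B$, and $p\mapsto\phi_p$ is a bijection. \Cref{prop:grouplikeProjsInCG} lists these projections for $B=\mathbb{C}G$ and $B=\mathbb{C}^G$. It therefore remains to evaluate $\phi_{p_H}$ and $\phi_{\hat{p}_H}$ on basis elements using the Haar measures from \cref{examples:CG}.

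For $\mathbb{C}G$ we have $\hat{h}=\delta_e$ and $\eval{\hat{h}}{p_H}=|H|^{-1}$. For $g\in G$, $g\,p_H=|H|^{-1}\sum_{k\in H}gk$, and the coefficient of $e$ in this sum is $|H|^{-1}$ exactly when $g^{-1}\in H$, i.e.\ when $g\in H$, and zero otherwise. Hence
\[
\eval{\phi_{p_H}}{g}=|H|\cdot|H|^{-1}\,\delta_{g\in H},
\]
so $\phi_{p_H}=\sum_{g\in H}\delta_g$.

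For $\mathbb{C}^G$, the Haar measure $\hat{h}_{\mathbb{C}^G}$, viewed as an element of $(\mathbb{C}^G)^*\cong\mathbb{C}G$, is the Haar integral $|G|^{-1}\sum_{g\in G}g$ of $\mathbb{C}G$. This is consistent with $\eval{\hat{h}_{\mathbb{C}^G}}{\hat{p}_H}=|H||G|^{-1}$. For $k\in G$, $\delta_k\hat{p}_H=\delta_{k\in H}\,\delta_k$, so
\[
\eval{\hat{h}}{\delta_k\hat{p}_H}=|G|^{-1}\delta_{k\in H}.
\]
Dividing by $|H||G|^{-1}$ gives $\eval{\phi_{\hat{p}_H}}{\delta_k}=|H|^{-1}\delta_{k\in H}$. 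Under the identification $(\mathbb{C}^G)^*\cong\mathbb{C}G$ this is the element $|H|^{-1}\sum_{g\in H}g$.

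Surjectivity onto all idempotent states follows from the bijection in \cref{prop:phip} together with the completeness of the list in \cref{prop:grouplikeProjsInCG}. The only step needing care is fixing the identification $(\mathbb{C}^G)^*\cong\mathbb{C}G$ and the normalization of $\hat{h}_{\mathbb{C}^G}$, in particular checking that no inverse $g\mapsto g^{-1}$ appears. No such inverse can arise, because $H$ is closed under inverses.
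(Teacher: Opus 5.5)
Your proof is correct and follows the same route as the paper. The paper's proof just says the result follows from \cref{prop:grouplikeProjsInCG,prop:phip}, and you have additionally carried out the explicit evaluations of $\phi_{p_H}$ and $\phi_{\hat{p}_H}$ on basis elements, including the normalization of $\hat{h}_{\mathbb{C}^G}$, and these computations are right.
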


\begin{proof}
    It follows easily from \cref{prop:grouplikeProjsInCG,prop:phip}.
\end{proof}

Let us note that, in the case of $\mathbb{C}^G$, this recovers the classical Kawada--It\^o theorem \cite{kawada_probability_1940}, which states that idempotent distributions on a finite group $G$ correspond to uniform measures on the subgroups of $G$. See also Ref.~\cite{franz_idempotent_2013}.
Now, using the identifications
\begin{align*}
    \mathbb{C}G \otimes (\mathbb{C}G)^{\mathrm{op}}
    & \overset{\mathrm{Id}\otimes S}{\cong} \mathbb{C}G \otimes \mathbb{C}G
    \cong \mathbb{C}[G \times G],
    \\
    \mathbb{C}^G \otimes (\mathbb{C}^G)^{\mathrm{op}}
    & \overset{\hphantom{\mathrm{Id}\otimes S}}{\cong} \mathbb{C}^G \otimes \mathbb{C}^G
    \cong \mathbb{C}^{G \times G},
\end{align*}
we can now describe the general form of idempotent states on
$\mathbb{C}[G\times G]$ and
$\mathbb{C}^{G\times G}$.
To obtain a more refined description of these states and of the corresponding subgroups of $G \times G$, we recall Goursat's Lemma \cite{goursat_sur_1889}.

\begin{goursat-lemma-num}
\label{lemma:Goursat}
Let $G_1$ and $G_2$ be two groups. Let $H$ be a subgroup of $G_1 \times G_2$, let $\pi_i:H\to G_i$, $i=1,2$, stand for the canonical projections, define the subgroups
\[
    H_1 \coloneqq \pi_1(H),
    \quad\text{and}\quad
    H_2 \coloneqq \pi_2(H), 
\]
of $G_1$ and $G_2$, respectively, and define the normal subgroups
\begin{equation}
    N_1 \coloneqq  \{ g_1 \in H_1 : (g_1,e) \in H \},
    \quad\text{and}\quad
    N_2 \coloneqq  \{ g_2 \in H_2 : (e,g_2) \in H \},
\end{equation}
of $H_1$ and $H_2$, respectively.
Then, there exists a unique group isomorphism
\[
    \lambda : H_1 / N_1 \to H_2 / N_2
\]
such that
\[
H = \{ (g_1,g_2) \in H_1 \times H_2 : 
       \lambda(g_1 N_1) = g_2 N_2 \}.
\]
Conversely,
for $i=1,2$,
let $H_i$ be a subgroup of $G_i$,
$N_i$ be a normal sub\-group of $H_i$,
and
$\lambda : H_1 / N_1 \to H_2 / N_2$ be a group isomorphism, then
\[
\{ (g_1,g_2) \in H_1 \times H_2 : \lambda(g_1 N_1) = g_2 N_2 \}
\]
is a subgroup of $G_1 \times G_2$.
\end{goursat-lemma-num}

We refer the reader to Ref.~\cite{dung_subgroups_2009} for an elementary proof. Together with \cref{prop:idempotentStatesInCG}, this provides further structure on the new renormalization fixed point tensors.
\begin{theorem}\label{thm:MinftyGroups}
    Let $G$ be a finite group. With the notation of \cref{lemma:Goursat}, the new renormalization fixed points obtained in \cref{sec:fixedpoints}, in the case of the $C^*$-Hopf algebras $\mathbb{C}G$ and $\mathbb{C}^G$, are the following:
    In the case of $\mathbb{C}G$, it holds that $N_1 = \{e\}$ and
	\begin{equation}\label{eq:MinftyCG}
        M_{\infty}
        =
       \frac{1}{|G|}\sum_{g\in H_1}
            \eval{\delta_g}{\placeholder} g \otimes
            \sum_{k\in \lambda(g)} \mathcal{L}(k).
	\end{equation}
    In the case of $\mathbb{C}^G$, it holds that $H_1 = G$ and
	\begin{equation}\label{eq:MinftyCGdual}
        M_{\infty}
        =
            \frac{1}{|H|}\sum_{g\in G}
                \mathcal{L}(g) \otimes 
                \sum_{k\in \lambda(gN_1)}
                    \eval{\delta_k}{\placeholder} k
                    .
	\end{equation}
\end{theorem}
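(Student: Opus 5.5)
The plan is to combine \cref{th:NewFixedPoints} with the classification of idempotent states in \cref{prop:idempotentStatesInCG}, transported through the identifications $\mathbb{C}G\otimes(\mathbb{C}G)^{\mathrm{op}}\cong\mathbb{C}[G\times G]$ via $\mathrm{Id}\otimes S$ and $\mathbb{C}^G\otimes(\mathbb{C}^G)^{\mathrm{op}}\cong\mathbb{C}^{G\times G}$. Every limit tensor is $M_\infty=\sum_i D\eval{\phi}{e_i\otimes S(h_{(1)})}\Psi(e^i)\otimes\hat b(\hat h)\Phi(h_{(2)})$ for an idempotent state $\phi$ with $\eval{\phi}{(\placeholder)\otimes 1}=\hat h$. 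By \cref{prop:idempotentStatesInCG}, such a $\phi$ is attached to a subgroup $H\le G\times G$. Goursat's Lemma then parametrizes $H$ by $(H_1,N_1,H_2,N_2,\lambda)$. The proof therefore has three steps: translate the marginal condition into a constraint on the Goursat data, evaluate the formula for $M_\infty$ explicitly, and read off \cref{eq:MinftyCG,eq:MinftyCGdual}.

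\textbf{Case $\mathbb{C}G$.} Here $\phi_H=\sum_{(g,k)\in H}\delta_{(g,k)}$, with the second coordinate twisted by $S$, i.e.\ $k\mapsto k^{-1}$. This twist only relabels $\lambda$, which I absorb. Restricting to $(\placeholder)\otimes 1$ gives $\eval{\phi}{g\otimes e}=\mathds{1}[(g,e)\in H]=\mathds{1}[g\in N_1]$. Setting this equal to $\hat h=\delta_e$ forces $N_1=\{e\}$. Next I use $D=|G|$, $h=|G|^{-1}\sum_k k$, and $\Delta(k)=k\otimes k$. Then $D\eval{\phi}{g\otimes S(h_{(1)})}h_{(2)}=\sum_k\eval{\phi}{g\otimes k^{-1}}\,k=\sum_{k:(g,k)\in H}k$, up to the $S$-convention. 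Since $N_1$ is trivial, $H_1\cong H_1/N_1$, and $\{k:(g,k)\in H\}$ is the coset $\lambda(g)\subseteq H_2$ for $g\in H_1$, and empty otherwise. Substituting into the tensor with $\Psi$ from \cref{example:RFPTensorCG}, and using $\hat b(\hat h)=|G|^{-1}\mathrm{Id}$ for the left regular representation, gives \cref{eq:MinftyCG}.

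\textbf{Case $\mathbb{C}^G$.} Here $\phi_H=|H|^{-1}\sum_{(g,k)\in H}(g,k)$, viewed as a functional on $\mathbb{C}^{G\times G}$. The marginal is $\eval{\phi}{\delta_g\otimes 1}=|H|^{-1}\,\#\{k:(g,k)\in H\}$, which equals $|N_2|/|H|$ when $g\in H_1$. Matching this with $\hat h_{\mathbb{C}^G}$, the uniform weight $|G|^{-1}$ on all $g$, forces $H_1=G$, and consistently $|H|=|G||N_2|$. For the evaluation I use $h=\delta_e$ and $\hat\Delta(\delta_e)=\sum_{a}\delta_a\otimes\delta_{a^{-1}}$. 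Then $D\eval{\phi}{\delta_g\otimes S(h_{(1)})}h_{(2)}$ collapses to a sum over pairs $(g,k)\in H$, with $k$ ranging over the coset $\lambda(gN_1)$, weighted by $|G|/|H|$. Finally I swap the roles of $\Phi,\Psi$: $\Phi=\mathcal L$ now represents the virtual side and $\Psi$ the physical side. This yields \cref{eq:MinftyCGdual}.

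\textbf{Main obstacle.} The hard part is the bookkeeping of conventions rather than any conceptual step. The twist by $S$ in the identification of $A^{\mathrm{op}}$ and the orientation of $\hat\Delta$ on $\mathbb{C}^G$ together decide whether one obtains $\lambda(g)$ or $\lambda(g)^{-1}$. Likewise, the normalization $D$ together with $\hat b(\hat h)$ fixes the prefactors $1/|G|$ versus $1/|H|$. I would settle these by checking the $\mathbb{Z}_2$ example, whose three fixed points must reproduce the three admissible subgroups.
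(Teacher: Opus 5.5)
Your proposal is correct and follows essentially the same route as the paper: pull $\phi$ back along $\mathrm{Id}\otimes S$ (resp.\ use commutativity of $\mathbb{C}^G$) to an idempotent state attached to a subgroup $H\le G\times G$, parametrize $H$ by Goursat's Lemma, read off $N_1=\{e\}$ (resp.\ $H_1=G$) from the marginal condition $\eval{\phi}{(\placeholder)\otimes 1}=\hat{h}$, and substitute into the formula of \cref{th:NewFixedPoints}. The only slip is in your final labeling remark for $\mathbb{C}^G$: in the paper's conventions, $\Psi=\mathcal{L}$ (a representation of $A^*=\mathbb{C}G$) sits on the virtual leg and $\Phi(\delta_k)=\eval{\delta_k}{\placeholder}k$ sits on the physical leg, with $\hat{b}(\hat{h})=|G|^{-1}\mathrm{Id}$ there as well; combined with your weight $|G|/|H|$, this yields the prefactor $1/|H|$.
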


\begin{proof}
    We consider the notation of \cref{examples:CG}.
    First, note that any idempotent state $\phi \in (\mathbb{C}G \otimes (\mathbb{C}G)^{\mathrm{op}})^*$ corresponds to an idempotent state $\phi'\coloneqq \phi \circ (\mathrm{Id} \otimes S) \in (\mathbb{C}G \otimes \mathbb{C}G)^*$ and therefore to a subgroup $H$ of $G\times G$.
    Consider, by \cref{lemma:Goursat}, the associated quintuple $(H_1, H_2, N_1, N_2, \lambda)$, from which it follows by \cref{prop:idempotentStatesInCG} that
    \[
        \phi = \phi' \circ (\mathrm{Id} \otimes S)
        =
        \sum_{C\in H_1/N_1}
            \big( \sum_{g\in C} \delta_g \big)
                \otimes
                    \big( \sum_{k\in \lambda(C)} \delta_{k^{-1}} \big)
        =
            \sum_{g\in H_1} \delta_g \otimes \big(\sum_{k\in \lambda(gN_1)} \delta_{k^{-1}}\big).
    \]
    Since we assume that $\eval{\phi}{g \otimes 1}=\eval{\hat{h}_{\mathbb{C}G}}{g}$ for all $g\in G$, 
    \[
        \eval{\phi}{g \otimes 1}=\delta_{g\in H_1} \delta_{e\in \lambda(gN_1)}=\delta_e,
    \]
    which implies that $N_1=\{e\}$.
    We can now use \cref{th:NewFixedPoints} to obtain that
    \begin{align*}
		M_{\smash{\infty}}
        &=
        \sum_{g\in G} |G| \eval{\phi}{g\otimes S(h_{\mathbb{C}G\,(1)})}  \Psi(\delta_g) \otimes \hat{b}(\hat{h}_{\mathbb{C}G}) \mathcal{L}(h_{\mathbb{C}G\,(2)})
		\\
        & =
            \sum_{g\in G} \sum_{k\in G}  \eval{\phi}{g\otimes k^{-1}}  \Psi(\delta_g) \otimes \frac{1}{|G|} \mathcal{L}(k)
        \\
        &=
        \sum_{g\in H_1} \sum_{k\in \lambda(gN_1)}  \Psi(\delta_g) \otimes \frac{1}{|G|} \mathcal{L}(k),
	\end{align*}
    which proves \cref{eq:MinftyCG}.
    Second, by \cref{prop:idempotentStatesInCG,lemma:Goursat}, any idempotent state $\hat\phi\in (\mathbb{C}^G\otimes \mathbb{C}^G)^*\cong \mathbb{C}G\otimes\mathbb{C}G$ is of the form
    \[
        \hat\phi =
        \frac{1}{|H|} \sum_{C\in H_1/N_1} \big( \sum_{g\in C} g \big) \otimes \big( \sum_{k\in \lambda(C)}  k \big).
    \]
    The condition $\eval{\hat\phi}{\delta_g \otimes 1}=\eval{\hat{h}_{\smash{\mathbb{C}^G}}}{\delta_g}$ now provides
    \[
        \eval{\hat\phi}{\delta_g \otimes 1}
        =
        \frac{1}{|H|}|\lambda(gN_1)|\,\delta_{g\in H_1}
        =
        \frac{|N_2|}{|H|}\delta_{g\in H_1}
        =
        \frac{1}{|H_1|}\delta_{g\in H_1}
        \overset{!}{=}
        \frac{1}{|G|},
    \]
    which holds for all $g\in G$ if and only if $H_1=G$.
    Since $\Delta(h_{\smash{\mathbb{C}^G}}) = \sum_{k\in G} \delta_{\smash{k^{-1}}} \otimes \delta_k$,
    \begin{align*}
		M_{\smash{\infty}}
		    & = \sum_{g\in G} |G| \sum_{k\in G} \eval{\hat\phi}{\delta_g\otimes S(\delta_{k^{-1}})}  \Psi(g) \otimes \frac{1}{|G|} \Phi(\delta_k)\\
            &= \sum_{g\in H_1} \sum_{k\in \lambda(gN_1)}  \Psi(g) \otimes \frac{1}{|H|} \Phi(\delta_k),
	\end{align*}
    as we wanted to prove.
\end{proof}


\begin{remark}\label{rem:RewriteMinftyGroup}
    Note that, in both cases, $M_{\infty}$ can be rewritten as tensors arising from the $C^*$-Hopf algebras $Q\coloneqq \mathbb{C}H_1$ in the group case, and $Q \coloneqq \mathbb{C}^{\smash{G/N_1}}$ in the dual of the group case. In the group case, one can rewrite
    \[
        M_{\infty} = \frac{|N_2|}{|G|} \sum_{g\in H_1} \eval{\delta_g}{(\placeholder)} g \otimes \tilde\Phi(g),
	\]
    where $\tilde\Phi$ is the $*$-representation of $\mathbb{C}H_1$ given, for all $g\in H_1$, by the expression
    \[
        \tilde\Phi(g)
            \coloneqq
                \frac{1}{|N_2|} \sum_{k\in \lambda(g)} \mathcal{L}(k),
    \]
    $\delta_g\mapsto \eval{\delta_g}{(\placeholder)} g$ is a faithful $*$-representation of $(\mathbb{C}H_1)^*$, and note that $\hat{b}(\hat{h}_{Q}) = |N_2|/|G|$.
    Similarly, in the dual case,
    \[
        M_{\infty}
        =
            \frac{1}{|H|} \sum_{C \in G/N_1} \big( \sum_{g \in C} \mathcal{L}(g) \big)
                \otimes  \big( \sum_{\smash{k\in \lambda(C)}} \eval{\delta_k}{(\placeholder)} k \big)
        =
            \frac{|N_1|}{|H|} \sum_{C \in G/N_1} \tilde{\Psi}(C) \otimes  \tilde\Phi(\delta_C),
    \]
    where $\tilde\Phi$ and $\tilde\Psi$ are now the $*$-representations of $\mathbb{C}^{G/N_1}$ and $(\mathbb{C}^{G/N_1})^*\cong \mathbb{C}[G/N_1]$, respectively, defined for all $C\in G/N_1$ by the expressions
    \[
        \tilde\Phi(\delta_C)
            \coloneqq
                \sum_{k\in \lambda(C)} \eval{\delta_k}{(\placeholder)} k
        ,\quad\text{and}\quad
        \tilde\Psi(C)
            \coloneqq
                \frac{1}{|N_1|} \sum_{g\in C} \mathcal{L}(g).
    \]
\end{remark}

\section{A Quantum Goursat Lemma}
\label{sec:quantumgoursat}

\subsection{An extension of Goursat's Lemma}
In the previous section we showed that, for group and dual-group algebras, the fixed points arising from the convergent renormalization flows can be described explicitly through the classical Goursat's Lemma. In this section, we establish a more general structural result for finite-dimensional $C^*$-Hopf algebras.

The guiding idea is that an idempotent state on a tensor product is determined by the data it induces on each tensor factor, together with the residual correlation between them. In this generalized setting, subgroups are replaced by finite $*$-quantum hypergroups, while the quotient data in Goursat's Lemma is encoded by suitable group-like projections and the corner $C^*$-Hopf algebras they determine. The remaining coupling is described by an anti-isomorphism between these corner algebras. Since it is strongly inspired and motivated from a quantum perspective, we dub it a quantum Goursat lemma.

\begin{theorem}
    \label{thm:QuantumGoursat}
    Let $A_1$ and $A_2$ be two $C^*$-Hopf algebras.
    Then, for any idempotent state $\phi\in (A_1 \otimes A_2)^*$, for $i=1,2$, there exists an idempotent state $\phi_i\in A_i^*$, and hence
    \[
        H_i \coloneqq (\phi_i\otimes\mathrm{Id}\otimes\phi_i)(\Delta^{(2)}(A_i))
    \]
    is a finite $*$-quantum hypergroup,
    there exists a group-like projection $q_i\in A_i$ satisfying
    \begin{equation}\label{eq:QGL_qiprop}
        q_i \in \operatorname{Center}H_i, \quad\text{and}\quad
        \phi_i \in \operatorname{Center} A_i^* (q_i(\placeholder)q_i),
    \end{equation}
    which defines a $C^*$-Hopf algebra
    \begin{equation}
        Q_i \coloneqq q_i H_i q_i,
    \end{equation}
    and there exists a linear map $\lambda:A_1 \to A_2$, whose restriction $\lambda|_{\smash{Q_1}}:Q_1 \to Q_2$ is a $C^*$-Hopf al\-ge\-bra anti-isomorphism,
    such that $\phi$ can be rewritten, for all $x\in A_1$ and $y\in A_2$, as 
    \begin{equation}
        \eval{\phi}{x\otimes y}
        =
        \frac{\eval{\hat{h}_{A_2}}{\lambda (x)y}}{\eval{\hat{h}_{A_2}}{\lambda(1)}}.
    \end{equation}
    Conversely, every such quintuple $(\phi_1, \phi_2, q_1, q_2, \lambda)$ determines a unique idempotent state $\phi \in (A_1 \otimes A_2)^*$ where, for $i=1,2$, $\phi_i\in A_i^*$ stands for an idempotent state, $q_i\in A_i$ for a group-like projection of $A_i$ satisfying \cref{eq:QGL_qiprop}, and $\lambda$ is a $C^*$-Hopf algebra anti-isomorphism as above.
\end{theorem}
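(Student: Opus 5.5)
The plan is to mimic the proof of \cref{lemma:Goursat}, replacing subgroups by group-like projections via \cref{prop:phip}. Let $p\coloneqq p_\phi\in A_1\otimes A_2$ be the group-like projection of $\phi$.

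\textbf{Projections $H_1,H_2$.} The group analogues of $H_1=\pi_1(H)$ and $H_2=\pi_2(H)$ are the marginals
\[
\phi_1\coloneqq\eval{\phi}{(\placeholder)\otimes 1},\qquad \phi_2\coloneqq\eval{\phi}{1\otimes(\placeholder)}.
\]
They are idempotent states, since $\Delta_{A_1\otimes A_2}(x\otimes 1)=x_{(1)}\otimes 1\otimes x_{(2)}\otimes 1$. Their group-like projections are $(\mathrm{Id}\otimes\varepsilon)(p)$ and $(\varepsilon\otimes\mathrm{Id})(p)$, and \cref{prop:qHG_2} makes $H_i$ finite $*$-quantum hypergroups.

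\textbf{Kernels $N_1,N_2$.} In the group case, $p=|H|^{-1}\sum_{(g_1,g_2)\in H}g_1\otimes g_2$. Slicing with $\hat h_{A_2}=\delta_e$ gives a multiple of $p_{N_1}$. So I would define
\[
q_1\coloneqq\frac{(\mathrm{Id}\otimes\hat h_{A_2})(p)}{\eval{\varepsilon}{(\mathrm{Id}\otimes\hat h_{A_2})(p)}},
\]
and $q_2$ symmetrically. First I check that $q_1$ is a self-adjoint group-like projection. I would slice $\Delta(p)(1\otimes p)=p\otimes p$ in the second and fourth legs and use \cref{eq:DualHaar2} to absorb one Haar slice. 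Next I show $\phi_1(xq_1)=\phi_1(x)$ and $c_{\phi_1}(q_1)=q_1$ using \cref{rem:grouplike3}, which gives $q_1\in H_1$. Centrality of $q_1$ in $H_1$, and the condition of \cref{prop:qHG_3_coro}, should follow from the two-sided identities \cref{eq:EquivGrouplikeProjDefs} for $p$ after slicing. By \cref{prop:qHG_3} and \cref{prop:qHG_3_coro}, $Q_i=q_iH_iq_i$ is then a $C^*$-Hopf algebra.

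\textbf{The map $\lambda$.} Following \cref{prop:Fphi}, I set
\[
\lambda(x)\coloneqq D_2\,\eval{\phi}{x\otimes S(h_{2(1)})}\,h_{2(2)}.
\]
The pulling-through computation in the proof of \cref{prop:Fphi} gives $\eval{\hat h_{A_2}}{\lambda(x)y}=\eval{\phi}{x\otimes y}$, with $\eval{\hat h_{A_2}}{\lambda(1)}=1$. This yields the representation formula for $\phi$.

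\textbf{Main obstacle.} The hard step is proving that $\lambda|_{Q_1}$ is a $C^*$-Hopf anti-isomorphism onto $Q_2$. For comultiplicativity, I would rerun the computation of \cref{prop:newFine} in $A_1\otimes A_2$. Combining \cref{prop:IdempProp} with pulling-through should give
\[
(\lambda\otimes\lambda)\circ\Delta=(1\otimes\lambda(1))\,\Delta\circ\lambda .
\]
Composing with $\Pi_{Q_i}$ from \cref{prop:qHG_3} and using \cref{eq:intertwine_Q} should then show that $\lambda$ intertwines $\Delta_{Q_1}$ with $\Delta_{Q_2}$ on $Q_1$.

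For anti-multiplicativity, I would first try to show
\[
\lambda(xx')=\lambda(x')\lambda(x)\quad\text{on }Q_1,
\]
using \cref{cor:IdempPropExch} in the form $\phi(v)\phi(w)=\phi(wv_{(1)})\phi(v_{(2)})$, with $v$ and $w$ split along the two tensor factors. For bijectivity, I would construct the symmetric map $\mu\colon A_2\to A_1$ from $\phi$ and check $\mu\lambda=\mathrm{Id}$ on $Q_1$. The check that $\lambda(Q_1)\subseteq Q_2$, i.e.\ that $\lambda$ lands in $q_2H_2q_2$, is where I expect the real difficulty, because it requires $q_2$ and $c_{\phi_2}$ to interact correctly with $\lambda$. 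Here I would lean on the identity $(1\otimes\lambda(1))$ above and on the fact that $\lambda(1)$ is proportional to $q_2$.

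\textbf{Converse and uniqueness.} Given a quintuple $(\phi_1,\phi_2,q_1,q_2,\lambda)$, define $\phi$ by the displayed formula. It is positive because $\lambda$, restricted through $\Pi_{Q_1}$, is a $*$-anti-isomorphism, so $\phi$ is built from $\hat h_{A_2}$ and an anti-multiplicative $*$-map. It is idempotent by reversing the comultiplicativity computation via \cref{eq:intertwine_Q}. Uniqueness, in both directions, follows because $\phi$ determines $p$ by \cref{prop:phip}, and all the data above are explicit slices of $p$.
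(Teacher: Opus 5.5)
There is a genuine gap at the first step: your $\phi_1$ and your $q_1$ are attached to the \emph{same} idempotent state, and the decomposition collapses. Write $k,h$ for the Haar integrals of $A_1,A_2$, so that $p=\eval{\phi}{k_{(1)}\otimes h_{(1)}}\,k_{(2)}\otimes h_{(2)}/\eval{\phi}{k\otimes h}$.

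Since $\eval{\hat h}{h_{(2)}}h_{(1)}=D_2^{-1}1$, your slice is $(\mathrm{Id}\otimes\hat h)(p)\propto\eval{\phi}{k_{(1)}\otimes 1}k_{(2)}$. So your $q_1$ is exactly $p_{\phi_1}$ for your marginal $\phi_1=\eval{\phi}{(\placeholder)\otimes 1}$.

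For any idempotent state $\psi$ with $q=p_\psi$, one has $q\,L_\psi(x)=\eval{\psi}{x}\,q$. To see this, combine $\eval{\hat h}{x_{(1)}q}x_{(2)}=\eval{\hat h}{xq_{(1)}}S(q_{(2)})$ with $q_{(1)}\otimes qS(q_{(2)})=q\otimes q$. Hence $q\,c_\psi(A)\,q=\mathbb{C}q$. The consequences are:
\begin{itemize}
\item your $Q_i$ are one-dimensional, so $\lambda|_{Q_1}$ carries no information;
\item $\lambda\circ\Pi_{Q_1}=\lambda$ fails unless $\phi=\phi_1\otimes\phi_2$;
\item the converse construction applied to your quintuple returns the product state $\phi_1\otimes\phi_2$ rather than $\phi$.
\end{itemize}
Concretely, for $A_1=A_2=\mathbb{C}G$ the diagonal subgroup and the trivial subgroup of $G\times G$ produce identical data $\phi_i=\delta_e$, $q_i=1$, $Q_i=\mathbb{C}$. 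In the renormalization application, where $\eval{\phi}{(\placeholder)\otimes 1}=\hat h$, your $H_1$ would always be $\mathbb{C}1$.

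The error comes from mixing the two group pictures in your heuristic. In $\mathbb{C}[G_1\times G_2]$, where $p=p_H$, the state is the indicator of $H$, and its marginal is the indicator of $N_1$, not $H_1$. In $\mathbb{C}^{G_1\times G_2}$ the marginal does see $H_1$, but then the $\hat h$-slice of $p=\sum_{(g_1,g_2)\in H}\delta_{g_1}\otimes\delta_{g_2}$ also gives $H_1$.

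Relatedly, your claim that the marginal has group-like projection $(\mathrm{Id}\otimes\varepsilon)(p)$ is false. In fact $(\mathrm{Id}\otimes\varepsilon)(p)=\eval{\phi}{k_{(1)}\otimes h}k_{(2)}/\eval{\phi}{k\otimes h}$, which equals $p_{H_1}$ for $\mathbb{C}G$, is the group-like projection of the Haar slice $\eval{\phi}{(\placeholder)\otimes h}/\eval{\phi}{1\otimes h}$.

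The fix, which is what the paper does, is to set $\phi_1\coloneqq\eval{\phi}{(\placeholder)\otimes h}/\eval{\phi}{1\otimes h}$ and keep $q_1$ as the group-like projection of the marginal.

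Even with this fix, the step you flagged as the main obstacle needs a concrete mechanism that your proposal does not supply. The paper inserts $k\otimes 1$ or $1\otimes h$ into \cref{prop:IdempProp} and \cref{cor:IdempPropExch} to prove two things:
\begin{itemize}
\item the commutation $L_{\phi_1}(x)q_1=q_1L_{\phi_1}(x)=q_1R_{\phi_1}(x)$, which gives \cref{eq:QGL_qiprop};
\item the invariances $\eval{\phi}{L_\alpha(x)\otimes y}=\eval{\phi}{R_\alpha(x)\otimes y}=\eval{\phi}{x\otimes y}$ for $\alpha=q_1,\phi_1$, together with their second-leg analogues.
\end{itemize}
The invariances yield $\lambda\circ\Pi_1=\Pi_2\circ\lambda=\lambda$. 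From this follow $\lambda(Q_1)\subseteq Q_2$, the reconstruction of $\phi$ from $\lambda|_{Q_1}$, and the converse.
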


\begin{proof}
For the sake of clarity, we write $h$ (resp.~$k$) for the Haar integral of $A_2$, (resp.~$A_1$), $\hat{h}$ (resp.~$\hat{k}$) for the Haar integrals of $A_2^*$ (resp.~$A_1^*$), and $D_i \coloneqq \dim A_i$ for $i=1,2$.

\begin{step-goursat}
Assume that $\phi\in (A_1\otimes A_2)^*$ is an idempotent state.
Let $\phi_1\in A_1^*$ and $\phi_2\in A_2^*$ be the linear functionals given, for all $x\in A_1$ and $y\in A_2$, by
\[
    \eval{\phi_1}{x}
        \coloneqq
        \frac{\eval{\phi}{x\otimes h}}{\eval{\phi}{1\otimes h}}
    ,\quad\text{and}\quad
    \eval{ \phi_2 }{ y }
        \coloneqq
        \frac{\eval{\phi}{k\otimes y}}{\eval{\phi}{k\otimes 1}},
\]
and let $\psi_1\in A_1^*$ and $\psi_2\in A_2^*$ be the linear functionals given, for all $x\in A_1$ and $y\in A_2$, by
\[
    \eval{\psi_1}{x}
        \coloneqq
        \eval{\phi}{x\otimes 1}
    ,\quad\text{and}\quad
     \eval{\psi_2}{y}
        \coloneqq
        \eval{\phi}{1\otimes y}.
\]
We first prove that \(\phi_1\) and \(\psi_1\) are idempotent states on \(A_1\). Indeed, for all $x\in A_1$,
\begin{align*}
    \eval{\phi}{1\otimes h}^2
        \eval{\phi_1}{x}
    &=
    \eval{\phi}{x\otimes h}
        \eval{\phi}{1\otimes h}
    \\
    &=
    \eval{\phi}{x_{(1)}\otimes h_{(1)}}
        \eval{\phi}{x_{(2)}\otimes h_{(2)}h}
    \\
    &=
    \eval{\phi}{x_{(1)}\otimes h_{(1)}}
        \eval{\phi}{x_{(2)}\otimes \eval{\varepsilon}{h_{(2)}}h}
    \\
    &=
    \eval{\phi}{x_{(1)}\otimes h}
        \eval{\phi}{x_{(2)}\otimes h}
    \\
    &=
    \eval{\phi}{1\otimes h}^2
        \eval{\phi_1}{x_{(1)}}
            \eval{\phi_1}{x_{(2)}},
\end{align*}
where we used the definition of $\phi_1$ in the first equality, \cref{prop:IdempProp} in the second equality with $v \equiv x\otimes h$ and $w \equiv 1\otimes h$, the integral property of $h$ in the third, \cref{eq:axiom_eps_id} in the fourth and the definition of $\phi_1$ again in the fifth.
Moreover, for all $x\in A_1$,
\begin{equation*}
    \eval{\psi_1}{x}
    =
    \eval{\phi}{x\otimes 1}
    =
    \eval{\phi}{x_{(1)}\otimes 1}
        \eval{\phi}{x_{(2)}\otimes 1}
    =
    \eval{\psi_1}{x_{(1)}}
        \eval{\psi_1}{x_{(2)}}
\end{equation*}
where the second equality follows from the idempotency of $\phi$. Similarly, one can prove that \(\phi_2\) and \(\psi_2\) are idempotent states as well. 
By virtue of \cref{prop:phip}, we let
\begin{equation}
    q_1
    \coloneqq
        \frac{\eval{\phi}{k_{(1)}\otimes 1} k_{(2)}}{\eval{\phi}{k\otimes 1}} \in A_{1}
    ,\quad\text{and}\quad
    q_2
    \coloneqq
        \frac{\eval{\phi}{1\otimes h_{(1)}} h_{(2)}}{\eval{\phi}{1\otimes h}} \in A_{2},
\end{equation}
stand for the group-like projections associated to $\psi_1$ and $\psi_2$, respectively.
\smallbreak\noindent%
    These group-like projections satisfy,
    for all $x\in A_1$, that
    \begin{equation}\label{eq:phi_absorb_q}
        \eval{\phi_1}{q_1x}
        =
        \frac{\eval{\phi}{k_{(1)}\otimes 1}
        \eval{\phi}{k_{(2)} x\otimes h}}{\eval{\phi}{k\otimes 1}\eval{\phi}{1\otimes h}}
        =
        \frac{\eval{\phi}{k\otimes 1}\eval{\phi}{x\otimes h}}{\eval{\phi}{k\otimes 1}\eval{\phi}{1\otimes h}}
        =
        \eval{\phi_1}{x},
    \end{equation}
    where we applied \cref{prop:IdempProp} with $v\equiv k\otimes 1$ and $w\equiv x\otimes h$. Similar arguments can be used to prove $\eval{\phi_1}{xq_1}=\eval{\phi_1}{x}$ and $\eval{\phi_2}{q_2 y}=\eval{\phi_2}{y}=\eval{\phi_2}{y q_2}$ for all $y\in A_2$.
    \end{step-goursat}
    \begin{step-goursat}\label{prop:invariance_q_i_phi_i}
        We prove the following identities for all $x\in A_1$ and $y\in A_2$:
        \begin{equation*}
        \eval{\phi}{x \otimes y}
        =
        \eval{\phi}{L_{\alpha}(x) \otimes y}
        =
        \eval{\phi}{R_{\alpha}(x) \otimes y}
        =
        \eval{\phi}{x \otimes L_{\beta}(y)}
        =
        \eval{\phi}{x \otimes R_{\beta}(y)}
        \end{equation*}
        for $\alpha \equiv q_1,\phi_1$ and $\beta \equiv q_2, \phi_2$.
        For simplicity, we only prove the first equality for both $\alpha\equiv q_1,\phi_1$, the other equalities can be proven the same way.
        First, note that
        \begin{equation*}
            \eval{\phi}{q_1 x\otimes y}= \eval{\phi}{k\otimes 1}^{-1} \eval{\phi}{k_{(1)} \otimes 1} \eval{\phi}{k_{(2)} x\otimes y} =\eval{\phi}{x \otimes y},
        \end{equation*}
        by virtue of \cref{prop:IdempProp} with $v\equiv k\otimes 1$ and $w\equiv x \otimes y$.
        Second, note that
        \begin{align*}
            \eval{\phi}{x \otimes y}
            &=
            \eval{\phi}{1 \otimes h}^{-1}
                \eval{\phi}{1 \otimes h}
                    \eval{\phi}{x \otimes y}
            \\
            &=
            \eval{\phi}{1\otimes h}^{-1}
                \eval{\phi}{x_{(1)}\otimes h y_{(1)}}
                    \eval{\phi}{x_{(2)}\otimes y_{(2)}}
            \\
            &=
            \eval{\phi}{1\otimes h}^{-1}
                \eval{\phi}{x_{(1)}\otimes h}
                    \eval{\phi}{x_{(2)}\otimes y}
            \\
            &=
            \eval{\phi}{L_{\phi_1}(x)\otimes y},
        \end{align*}
        using \cref{cor:IdempPropExch} with $v\equiv x\otimes y$ and $w\equiv 1\otimes h$ in the second equality and the integral property of \cref{eq:Haar2} together with \cref{eq:axiom_eps_id} in the third. 
    \end{step-goursat}
    
    \begin{step-goursat}
    For $i=1,2$, we define 
    \[
        H_i
            \coloneqq
            c_{\phi_i}(A_i),
        \qquad
        \Delta_{\phi_i}
            \coloneqq
            (\mathrm{Id} \otimes \phi_i \otimes \mathrm{Id}) \circ \Delta_{A_i}^{(2)},
    \]
    which, by \cref{prop:qHG_2}, is a finite quantum $*$-hypergroup.
    \smallbreak\noindent%
    Let us prove that $q_1\in \operatorname{Center}H_1$ and $\phi_1 \in \operatorname{Center}c_{q_1}(A_1^*)$. 
    \smallbreak\noindent%
    For that purpose, we show first that, for all $x\in A_1$,
    \begin{align} \label{eq:q_commute_H}
        L_{\phi_1}(x) q_1
        =
        q_1 L_{\phi_1}(x)
        =
        q_1 R_{\phi_1}(x).
    \end{align}
    On the one hand, the first equality follows from the following calculation, for all $x\in A_1$:
    \begin{align*}
        L_{\phi_1}(x) q_1
        &=
            \eval{\phi}{1\otimes h}^{-1}
                \eval{\phi}{k\otimes 1}^{-1}
                    \eval{\phi}{x_{(1)}\otimes h}
                        \eval{\phi}{k_{(1)}\otimes 1}
                            x_{(2)} k_{(2)}
        \\
        &=
            \eval{\phi}{1\otimes h}^{-1}
                \eval{\phi}{k\otimes 1}^{-1}
                    \eval{\phi}{x_{(1)}\otimes h_{(1)}}
                        \eval{\phi}{x_{(2)}
                            k_{(1)} \otimes h_{(2)}} x_{(3)}k_{(2)}
        \\
        &=
            \eval{\phi}{1 \otimes h}^{-1}
                \eval{\phi}{k \otimes 1}^{-1}
                    \eval{\phi}{x \otimes h_{(1)}}
                        \eval{\phi}{k_{(1)} \otimes h_{(2)}}
                            k_{(2)}
        \\
        &= 
            \eval{\phi}{1\otimes h}^{-1}
                \eval{\phi}{k \otimes 1}^{-1}
                    \eval{\phi}{x_{(1)} \otimes h_{(1)}}
                        \eval{\phi}{x_{(2)}S(k_{(1)}) \otimes h_{(2)}S(h_{(3)})}
                            k_{(2)}
        \\
        &= 
            \eval{\phi}{1\otimes h}^{-1}
                \eval{\phi}{k \otimes 1}^{-1}
                    \eval{\phi}{x_{(1)} \otimes h}
                        \eval{\phi}{k_{(1)} \otimes 1}
                            k_{(2)}x_{(2)}
        \\
        &= 
            q_1 L_{\phi_1}(x),
    \end{align*}
    where we used the definition of $\phi_1$ and $q_1$ in the first equality, \cref{prop:IdempProp} with $v\equiv x_{(1)}\otimes h$ and $w\equiv k_{(1)} \otimes 1$ in the second, the fact that $x_{(1)} \otimes \Delta(x_{(2)}k)= x \otimes \Delta(k)$ from the integral property of $k$ in the third, \cref{prop:IdempProp} with $v\equiv x\otimes h_{(1)}$ and $w\equiv k_{(1)} \otimes h_{(2)}$ after using the antipode invariance of $\phi$ in the fourth and \cref{eq:axiomSeps}, and the second pulling-through of \cref{eq:PTmodified} in the fifth.
    On the other hand,
    \begin{align*}
            q_1 R_{\phi_1}(x) 
        &=
            \eval{\phi}{1\otimes h}^{-1}
                \eval{\phi}{k\otimes 1}^{-1}
                    \eval{\phi}{x_{(2)}\otimes h}
                        \eval{\phi}{k_{(1)}\otimes 1}
                            k_{(2)}x_{(1)}
        \\
        &=
            \eval{\phi}{1\otimes h}^{-1}
                \eval{\phi}{k\otimes 1}^{-1}
                    \eval{\phi}{S(x_{(2)})\otimes h}
                        \eval{\phi}{S( k_{(1)})\otimes 1}
                            k_{(2)}x_{(1)}
        \\
        &=
            \eval{\phi}{1\otimes h}^{-1}
                \eval{\phi}{k\otimes 1}^{-1}
                    \eval{\phi}{S(x_{(3)})\otimes h_{(1)}}
                        \eval{\phi}{S(x_{(2)}) S(k_{(1)})\otimes h_{(2)}}
                            k_{(2)} x_{(1)}
        \\
        &=
            \eval{\phi}{1\otimes h}^{-1}
                \eval{\phi}{k\otimes 1}^{-1}
                    \eval{\phi}{S(x_{(3)})\otimes h_{(1)}}
                        \eval{\phi}{S(x_{(2)}) x_{(1)} S(k_{(1)})\otimes h_{(2)}}
                            k_{(2)}
        \\
        &=
            \eval{\phi}{1\otimes h}^{-1}
                \eval{\phi}{k\otimes 1}^{-1}
                    \eval{\phi}{S(x)\otimes h_{(1)}} \eval{\phi}{ S(k_{(1)})\otimes h_{(2)}} k_{(2)}\\
        &=
            \eval{\phi}{1\otimes h}^{-1}
                \eval{\phi}{k\otimes 1}^{-1}
                    \eval{\phi}{x\otimes h_{(1)}}
                        \eval{\phi}{ k_{(1)}\otimes h_{(2)}}
                            k_{(2)}\\
        &= 
        q_1 L_{\phi_1}(x),
    \end{align*}
    using similar steps.
    This concludes the proof of \cref{eq:q_commute_H}.
    Now, for all $x\in A_1$,
    \[
        c_{\phi_1}(x) q_1 = L_{\phi_1}(R_{\phi_1}(x)) q_1 = q_1 L_{\phi_1}(R_{\phi_1}(x))  = q_1  c_{\phi_1}(x),
    \]
    and hence $q_1\in\operatorname{Center} H_1$. Also, for any $f\in c_{q_1}(A_1^*)$, it holds that $f = \eval{\varphi}{q_1(\placeholder)q_1}$ for some $\varphi\in A_1^*$ by definition (in fact, one can choose $\varphi = f$), and thus, for all $x\in A_1$,
    \[
        \eval{\phi_1 f}{x}
        =
        \eval{\varphi}{q_1 L_{\phi_1}(x) q_1}
        =
        \eval{\varphi}{q_1 R_{\phi_1}(x) q_1}
        =
        \eval{f \phi_1}{x},
    \]
    which proves that $\phi_1\in \operatorname{Center}c_{q_1}(A_1^*)$.
    \end{step-goursat}
    \begin{step-goursat}
    Now, since $q_i\in \operatorname{Center}H_i$ and $\phi_i \in c_{q_i}(A_i^*)$ for $i=1,2$, by \cref{prop:qHG_3_coro} there are two $C^*$-Hopf algebras corresponding to the images of the previous projections:
    \begin{equation}
        Q_i
            \coloneqq
            c_{q_i}(H_i)=H_i q_i,
        \qquad
        \Delta_{Q_i}(x) \coloneqq
            \eval{\phi_i}{x_{(2)}} x_{(1)}q_i \otimes x_{(3)} q_i,
    \end{equation}
    and one can check that $1_{Q_i}=q_i$ and for all $x = \eval{\phi_i}{x_{(1)}}\eval{\phi_i}{x_{(3)}}x_{(2)} q_i\in Q_i$
    \begin{equation}
        \varepsilon_{Q_i}(x)
        \coloneqq
        \varepsilon|_{\smash{Q_i}}(x)
        =
        \eval{\phi_i}{x_{(1)}}
            \eval{\phi_i}{x_{(3)}}
                \eval{\varepsilon}{x_{(2)} q_i}
        =
        \eval{\phi_i}{x}.
    \end{equation}
    The Haar measure is then given  by the expression 
    \begin{equation} \label{eq:hathQiexpression}
        \hat{h}_{Q_i}
        =
        \frac{1}{\eval{\hat{h}}{q_i}}\hat{h}|_{\smash{Q_i}}
        =
        \frac{\eval{\hat{h}}{(\placeholder) q_i}}{\eval{\hat{h}}{q_i}},
    \end{equation}
    which satisfies the integral property and $\eval{\hat{h}_{Q_i}}{1_{Q_i}}=1$.
    \smallbreak\noindent%
    Finally, for $i=1, 2$, 
    since $c_{q_i}$ and $c_{\phi_i}$ commute by \cref{prop:qHG_3}, for all $x\in Q_i$,
    \begin{equation} \label{eq:invQunderProj}
        \eval{\phi_i}{x_{(1)}}x_{(2)}
            =
            \eval{\phi_i}{x_{(2)}} x_{(1)}
            =
            x,
        \quad\text{and}\quad
        q_i x = x q_i = x.
    \end{equation}
    \end{step-goursat}
    \begin{step-goursat}
    Let $\lambda:A_1\to A_{2}$ be defined, for all $x\in A_1$, by the expression
    \begin{align*}
        \lambda(x)
        \coloneqq
        \frac{1}{\eval{\phi}{1 \otimes h}}
        \eval{\phi}{x \otimes S(h_{(1)})} h_{(2)}.
    \end{align*}
    First, let us note that 
    \begin{equation}\label{eq:normalizationVarPhi}
        \eval{\hat{h}}{\lambda(1)}
        =
        \frac{\eval{\phi}{1\otimes S(h_{(1)})} \eval{\hat{h}}{h_{(2)}}}{\eval{\phi}{1\otimes h}}
        =
        \frac{1}{D_2\eval{\phi}{1\otimes h}},
    \end{equation}
    by virtue of \cref{eq:DualInt}.
    Moreover, one can recover $\phi$ as follows:
    \begin{equation}
    \eval{\phi}{x\otimes y}
    =
    \frac{\eval{\hat{h}}{\lambda (x)y}}{\eval{\hat{h}}{\lambda(1)}},
    \end{equation}
    as $\lambda$ is constructed similarly to $F_\phi$ from \cref{prop:Fphi} except for a factor $(D_2 \eval{\phi}{1\otimes h})^{-1}$.
    \smallbreak\noindent%
    We now show that $\lambda$ restricts to a map from $Q_1$ to $Q_2$, by proving that
    \begin{equation}\label{eq:lambdaPii}
        \lambda \circ \Pi_1
        =
        \Pi_2 \circ \lambda
        =
        \lambda.
    \end{equation}
    We make use of the invariance properties proven in Step~4.
    It suffices to prove that
    \[
        \lambda \circ L_\alpha
        =
        L_\alpha \circ \lambda
        =
        \lambda
        ,\quad\text{and}\quad
        \lambda\circ R_\alpha
        =
        R_\alpha\circ \lambda
        =
        \lambda
        ,\quad\text{for }
        \alpha = \phi_1,\phi_2, q_1, q_2.
    \]
    For $\alpha = q_1$, note that, or all $x\in A_1$,
    \[
        \lambda \circ L_{q_1}(x) = \frac{\eval{\phi}{q_1x \otimes S(h_{(1)})}h_{(2)}}{\eval{\phi}{1\otimes h}} = \frac{\eval{\phi}{x \otimes S(h_{(1)})} h_{(2)}}{\eval{\phi}{1\otimes h}} = \lambda(x),
    \]
    and analogously $\lambda \circ R_{q_1} = \lambda$.
    For $\alpha = q_2$, one can simply apply the second pulling-through property of \cref{eq:PTmodified} and then using \cref{prop:invariance_q_i_phi_i}:
    \begin{align*}
        L_{q_2} \circ \lambda(x)
        = \frac{\eval{\phi}{x \otimes S(h_{(1)})} q_2 h_{(2)}}{\eval{\phi}{1\otimes h}}
        = \frac{\eval{\phi}{x \otimes q_2S(h_{(1)})} h_{(2)}}{\eval{\phi}{1\otimes h}}
        = \lambda(x),
    \end{align*}
    and analogously $R_{q_2} \circ \lambda = \lambda$.
    For $\alpha = \phi_1$,  note that, using \cref{prop:invariance_q_i_phi_i}, for all $x\in A_1$:
    \[
        \lambda \circ L_{\phi_1}(x) = \frac{\eval{\phi}{L_{\phi_1}(x)\otimes S(h_{(1)})} h_{(2)}}{\eval{\phi}{1\otimes h}} = \frac{\eval{\phi}{x\otimes S(h_{(1)})} h_{(2)}}{\eval{\phi}{1\otimes h}} = \lambda(x),
    \]
    and analogously $\lambda \circ R_{\phi_1} = \lambda$.
    For $\alpha = \phi_2$, note that for all $x\in A_1$: 
    \begin{align*}
        L_{\phi_2} \circ \lambda(x) &= \eval{\phi}{1\otimes h}^{-1} \eval{\phi}{x\otimes S(h_{(1)})} L_{\phi_2}(h_{(2)})\\&= \eval{\phi}{1\otimes h}^{-1} \eval{\phi}{k\otimes 1}^{-1} \eval{\phi}{S(x)\otimes h_{(1)}} \eval{\phi}{k\otimes S(h_{(2)})} h_{(3)}\\
        &= \eval{\phi}{1\otimes h}^{-1} \eval{\phi}{k\otimes 1}^{-1} \eval{\phi}{S(x)_{(1)}\otimes h_{(1)}} \eval{\phi}{S(x)_{(2)}k\otimes h_{(2)}S(h_{(3)})}h_{(4)}\\
        &= \eval{\phi}{1\otimes h}^{-1} \eval{\phi}{k\otimes 1}^{-1} \eval{\phi}{S(x)\otimes h_{(1)}} \eval{\phi}{k\otimes 1}h_{(2)}\\
        &=\lambda(x),
    \end{align*}
    where we used the definition of $\lambda$ in the first equality, the invariance of $\phi$ under the antipode together with the definition of $L_{\phi_2}$ in the second, \cref{prop:IdempProp} with $v\equiv S(x) \otimes h_{(1)}$ and $w\equiv k\otimes S(h_{(2)})$ in the third, \cref{eq:Haar2} for $k$ together with \cref{eq:axiomSeps} and \cref{eq:axiom_eps_id} for $h$ in the fourth, and the invariance of $\phi$ under the antipode together with the definition of $\lambda$ in the last equality.
    \end{step-goursat}
    \begin{step-goursat}
    The candidate inverse map $\lambda' : A_2 \to A_1$ is defined, for all $y\in A_2$, by
    \begin{equation}
        \lambda'(y) \coloneqq
            \frac{1}{\eval{\phi}{k \otimes 1}}\eval{\phi}{S(k_{(1)}) \otimes y} k_{(2)}.
    \end{equation}
    Then, $\lambda$ and $\lambda'$ are mutually inverse restricted to $Q_1$ and $Q_2$.
    For all $x\in Q_1$, 
    \begin{align*}
        (\lambda' \circ \lambda)(x)
        &= \eval{\phi}{1\otimes h}^{-1}\eval{\phi}{k\otimes 1}^{-1}\eval{\phi}{x \otimes S(h_{(1)})} \eval{\phi}{S(k_{(1)}) \otimes h_{(2)}} k_{(2)}\\
        &= \eval{\phi}{1\otimes h}^{-1}\eval{\phi}{k\otimes 1}^{-1}\eval{\phi}{x \otimes h_{(1)}} \eval{\phi}{S(k_{(1)}) \otimes S(h_{(2)})} k_{(2)}\\
        &= \eval{\phi}{1\otimes h}^{-1}\eval{\phi}{k\otimes 1}^{-1}\eval{\phi}{x_{(1)} \otimes h_{(1)}} \eval{\phi}{x_{(2)}S(k_{(1)}) \otimes h_{(2)} S(h_{(3)})} k_{(2)}\\
        &= \eval{\phi}{1\otimes h}^{-1}\eval{\phi}{k\otimes 1}^{-1}\eval{\phi}{x_{(1)} \otimes h_{(1)}} \eval{\phi}{x_{(2)}S(k_{(1)}) \otimes \varepsilon(h_{(2)}) 1} k_{(2)}\\
        &= \eval{\phi}{1\otimes h}^{-1}\eval{\phi}{k\otimes 1}^{-1}\eval{\phi}{x_{(1)} \otimes h} \eval{\phi}{S(k_{(1)}) \otimes 1} k_{(2)}x_{(2)}\\
        &= \eval{\phi}{1\otimes h}^{-1}\eval{\phi}{k\otimes 1}^{-1}\eval{\phi}{x_{(1)} \otimes h} \eval{\phi}{k_{(1)} \otimes 1} k_{(2)}x_{(2)}\\
        &= \eval{\phi_1}{x_{(1)}} q_1 x_{(2)}\\
        &= x,
    \end{align*}
    where we used that $S(h_{(1)})\otimes h_{(2)}= h_{(1)}\otimes S(h_{(2)})$ in the second equality, \cref{prop:IdempProp} with $v\equiv x\otimes h_{(1)}$ and $w\equiv S(k_{(1)}) \otimes S(h_{(2)})$ in the third, \cref{eq:axiomSeps} in the fourth, \cref{eq:axiom_eps_id} and the second pulling-through property of \cref{eq:PTmodified} in the fifth, the invariance of $\phi$ under the antipode in the sixth and \cref{eq:invQunderProj} in the last. Hence $\lambda'\circ\lambda=\mathrm{Id}_{\smash{Q_1}}$. The equality $\lambda\circ\lambda'=\mathrm{Id}_{\smash{Q_2}}$ follows analogously.
    \end{step-goursat}
    \begin{step-goursat}
    Let us now check that $\lambda : Q_1 \to Q_2$ is an anti-homomorphism of $C^*$-Hopf algebras. Indeed, for all $x,y\in Q_1$, it holds that
    \begin{align*}
        \lambda(x)\lambda(y)
        &=\eval{\phi}{1\otimes h}^{-2}\eval{\phi}{x \otimes S(h_{(1)})} \eval{\phi}{y \otimes S(h_{(1')})} h_{(2)}h_{(2')}\\
        &=\eval{\phi}{1\otimes h}^{-2} \eval{\phi}{y \otimes h_{(1')}} \eval{\phi}{x \otimes S(h_{(1)})} h_{(2)}S(h_{(2')})\\
        &=\eval{\phi}{1\otimes h}^{-2} \eval{\phi}{y_{(1)} \otimes h_{(1')}} \eval{\phi}{y_{(2)} x \otimes h_{(2')} S(h_{(1)})} h_{(2)}S(h_{(3')})\\
        &=\eval{\phi}{1\otimes h}^{-2} \eval{\phi}{y_{(1)} \otimes h_{(1')}} \eval{\phi}{y_{(2)} x \otimes S(h_{(1)})} h_{(2)}h_{(2')}S(h_{(3')})\\
        &=\eval{\phi}{1\otimes h}^{-2} \eval{\phi}{y_{(1)} \otimes h_{(1')} \varepsilon(h_{(2')})} \eval{\phi}{y_{(2)} x \otimes S(h_{(1)})} h_{(2)}\\
        &=\eval{\phi}{1\otimes h}^{-2} \eval{\phi}{y_{(1)} \otimes h} \eval{\phi}{y_{(2)} x \otimes S(h_{(1)})} h_{(2)}\\
        &=\eval{\phi}{1\otimes h}^{-1} \eval{\phi_1}{y_{(1)}} \eval{\phi}{y_{(2)} x \otimes S(h_{(1)})} h_{(2)}\\
        &=\eval{\phi}{1\otimes h}^{-1} \eval{\phi}{yx \otimes S(h_{(1)})} h_{(2)}\\
        &=\lambda(yx),
    \end{align*}
    where we used the definition of $\lambda$ in the first equality,  that $S(h_{(1')})\otimes h_{(2')}= h_{(1')}\otimes S(h_{(2')})$ in the second, \cref{prop:IdempProp} with $v\equiv y\otimes h_{(1')}$ and $w\equiv x\otimes S(h_{(1)})$ in the third, the first pulling-through identity of \cref{eq:PTmodified} in the fourth, \cref{eq:axiomSeps} in the fifth, \cref{eq:axiom_eps_id} in the sixth, the definition of $\phi_1$ in the seventh, and \cref{eq:invQunderProj} as $y\in Q_1$ in the last.
    \smallbreak\noindent%
    We now prove that $\lambda$ is a $*$-map. Indeed, for all $x\in Q_1$, 
    \[
        \lambda(x^*)
        =
        \frac{\eval{\phi}{x^*\otimes S(h_{(1)})}h_{(2)}}{\eval{\phi}{1 \otimes h}}
        =
        \frac{\overline{\eval{\phi}{x\otimes S(h_{(1)})^*}}h_{(2)}}{\eval{\phi}{1 \otimes h}}
        =
        \frac{\overline{\eval{\phi}{x\otimes S(h_{(1)})}}h_{(2)}^*}{\eval{\phi}{1 \otimes h}}
        =
        \lambda(x)^*.
    \]
    \smallbreak\noindent%
    Now, recall that $1_{Q_i}=q_i$ for $i=1,2$, and hence $\lambda$ is unital:
    \[
        \lambda(q_1)
        = \frac{\eval{\phi}{k_{(1)}\otimes 1}\eval{\phi}{k_{(2)}\otimes S(h_{(1)})}h_{(2)}}{\eval{\phi}{k\otimes 1} \eval{\phi}{1 \otimes h}}
        = \frac{\eval{\phi}{k\otimes 1}\eval{\phi}{1\otimes S(h_{(1)})}h_{(2)}}{\eval{\phi}{k\otimes 1} \eval{\phi}{1 \otimes h}}
        = q_2,
    \]
    as a consequence of \cref{prop:IdempProp} with $v\equiv k\otimes 1$ and $w\equiv 1\otimes S(h_{(1)})$.
    \smallbreak\noindent%
    For the comultiplicativity of $\lambda$, we show that 
    \begin{equation}\label{eq:ComultVarphi}
        (\lambda\otimes\lambda)\circ\Delta_{Q_1}
        = (\lambda\otimes\lambda)\circ\Delta
        = \Delta_{Q_2} \circ \lambda.
    \end{equation}
    First, note that for all $x\in Q_1$ it holds that
    \begin{align}
        (\lambda \otimes \lambda) \circ \Delta_{Q_1}(x) 
        &=
        \eval{\phi_1}{x_{(2)}} \lambda(x_{(1)}q_1) \otimes \lambda(x_{(3)}q_1)
        \\
        &=
        \eval{\phi_1}{x_{(2)}} \lambda(x_{(1)}) \otimes \lambda(x_{(3)})
        \\
        &=
        (\lambda \circ R_{\phi_1})(x_{(1)}) \otimes \lambda(x_{(2)})
        \\
        &=
        \lambda(x_{(1)})\otimes \lambda(x_{(2)}),
    \end{align}
    where we used the definition of $\Delta_{Q_1}$ in the first equality, the invariance $\lambda \circ R_{q_1}= \lambda$ in the second, and the invariance $\lambda \circ R_{\phi_1}= \lambda$ in the last. This proves the first equality in \cref{eq:ComultVarphi}.
    \smallbreak\noindent%
    Note that, for all $y\in Q_2$, one can rewrite the comultiplication in $Q_2$ as
    \begin{equation}\label{eq:equiv_comult_Q}
        \Delta_{Q_2}(y)
        =  y_{(1)} q_2 \otimes \eval{\phi_2}{y_{(2)}} y_{(3)},
    \end{equation}
    as proven in \cref{eq:DeltaQsimpl}.
    Therefore, for all $x\in Q_1$,
    \begin{align*}
        \Delta_{Q_2} &\circ \lambda(x) 
        =
            \eval{\phi}{1\otimes h}^{-1}
                \eval{\phi}{x \otimes S(h_{(1)})}
                        h_{(2)} q_2 \otimes  \eval{\phi_2}{h_{(3)}} h_{(4)}
        \\
        &=
            \eval{\phi}{1\otimes h}^{-2}
                \eval{\phi}{x \otimes S(h_{(1)})}
                        \eval{\phi}{1\otimes h_{(1')}}
                            h_{(2)} h_{(2')} \otimes \eval{\phi_2}{h_{(3)}} h_{(4)}
        \\
        &=
            \eval{\phi}{1\otimes h}^{-2}
                \eval{\phi}{S(x) \otimes h_{(1)}}
                        \eval{\phi}{1\otimes S(h_{(2)})h_{(1')}}
                            h_{(2')} \otimes \eval{\phi_2}{h_{(3)}} h_{(4)}
        \\
        &=
            \eval{\phi}{1\otimes h}^{-2}
                \eval{\phi}{S(x)_{(1)} \otimes h_{(1)}}
                    \eval{\phi}{S(x)_{(2)}\otimes h_{(2)}S(h_{(3)})h_{(1')}}
                            h_{(2')} \otimes \eval{\phi_2}{h_{(4)}}h_{(5)}
        \\
        &=
            \eval{\phi}{1\otimes h}^{-2}
                \eval{\phi}{S(x)_{(1)} \otimes h_{(1)}}
                    \eval{\phi}{S(x)_{(2)}\otimes h_{(1')}}
                            h_{(2')} \otimes \eval{\phi_2}{h_{(2)}} h_{(3)}
        \\
        &=
            \eval{\phi}{1\otimes h}^{-2}
                \eval{\phi}{x_{(2)} \otimes S(h_{(1)})}
                    \eval{\phi}{x_{(1)}\otimes S(h_{(1')})}
                            h_{(2')} \otimes \eval{\phi_2}{h_{(2)}} h_{(3)}
        \\
        &=
        \lambda(x_{(1)}) \otimes (L_{\phi_2} \circ \lambda)(x_{(2)})
        \\
        &=
        \lambda(x_{(1)}) \otimes \lambda(x_{(2)}),
    \end{align*}
    where we used the definitions of $\lambda$ and \cref{eq:equiv_comult_Q} in the first equality, the definition of $q_2$ in the second, the first pulling-through of \cref{eq:PT} for $h_{(1')} \otimes h_{(2')}$ in the third, \cref{prop:IdempProp} with $v\equiv S(x)\otimes h_{(1)}$ and $w\equiv 1 \otimes S(h_{(2)})h_{(1')}$ in the fourth, \cref{eq:axiomSeps} together with \cref{eq:axiom_eps_id} in the fifth, the invariance property $\phi \circ S=\phi$ together with the anticomultiplicativity of $S$ in the sixth, the definitions of $\lambda$ and $L_{\phi_2}$ in the seventh and the invariance $L_{\phi_2} \circ \lambda= \lambda$ in the last.
    \smallbreak\noindent%
    Moreover, for all $x\in Q_1$,
    \begin{align*}
        \eval{\varepsilon_{Q_2}}{\lambda(x)}
        &=
        \eval{\phi}{1\otimes h}^{-1}
            \eval{\phi}{x \otimes S(h_{(1)})}
                \eval{\phi_2}{h_{(2)}}
        \\
        &=
        \eval{\phi}{1\otimes h}^{-1}
            \eval{\phi}{k\otimes 1}^{-1}
                \eval{\phi}{x \otimes S(h_{(1)})}
                    \eval{\phi}{k\otimes h_{(2)}}
        \\
        &=
        \eval{\phi}{1\otimes h}^{-1}
            \eval{\phi}{k\otimes 1}^{-1}
                \eval{\phi}{x_{(1)} \otimes S(h_{(1)})_{(1)}}
                    \eval{\phi}{x_{(2)}k\otimes S(h_{(1)})_{(2)}h_{(2)}}
        \\
        &=
        \eval{\phi}{1\otimes h}^{-1}
            \eval{\phi}{k\otimes 1}^{-1}
                \eval{\phi}{x \otimes S(h_{(2)})}
                    \eval{\phi}{k\otimes S(h_{(1)})h_{(3)}}
        \\
        &=
        \eval{\phi}{1\otimes h}^{-1}
            \eval{\phi}{k\otimes 1}^{-1}
                \eval{\phi}{x \otimes S(h_{(1)})}
                    \eval{\phi}{k\otimes S(h_{(3)})h_{(2)}}
        \\
        &=
        \eval{\phi}{1\otimes h}^{-1}
            \eval{\phi}{k\otimes 1}^{-1}
                \eval{\phi}{x \otimes S(\varepsilon(h_{(2)})h_{(1)})}
                    \eval{\phi}{k\otimes 1}
        \\
        &=
        \eval{\phi}{1\otimes h}^{-1}
            \eval{\phi}{x \otimes h}
        \\
        &=
        \eval{\varepsilon_{Q_1}}{x},
    \end{align*}
    where we used the definitions of $\lambda$ and $\varepsilon_{Q_2}$ in the first equality, the definition of $\phi_2$ in the second, \cref{prop:IdempProp} with $v\equiv x \otimes S(h_{(1)})$ and $w\equiv k\otimes h_{(2)}$ in the third, the anticomultiplicativity of $S$ in the fourth, the cyclicity of $h_{(1)} \otimes h_{(2)} \otimes h_{(3)}=h_{(3)} \otimes h_{(1)} \otimes h_{(2)}$ in the fifth, \cref{eq:axiomSeps} in the sixth, \cref{eq:axiom_eps_id} in the seventh and the definition of $\varepsilon_{Q_1}$ in the last.
    \smallbreak\noindent%
    Finally, we have the compatibility with the antipode. Let $x\in Q_1$,
    \begin{align*}
        (S_{Q_2} \circ \lambda)(x)
        &=
            \eval{\phi}{1\otimes h}^{-1}
                \eval{\phi}{x \otimes S(h_{(1)})}
                    S(h_{(2)})
        \\
        &=
            \eval{\phi}{1\otimes h}^{-1}
                \eval{\phi}{x \otimes h_{(1)}}
                    h_{(2)}
        \\
        &=
            \eval{\phi}{1\otimes h}^{-1}
                \eval{\phi}{S(x) \otimes S(h_{(1)})}
                    h_{(2)}
        \\
        &=
        (\lambda \circ S_{Q_1})(x),
    \end{align*}
    where we used the definitions of $\lambda$ and $S_{Q_2}$ in the first equality, the facts that $S(h)=h$ and $h_{(1)} \otimes h_{(2)}=h_{(2)} \otimes h_{(1)}$ from \cref{eq:Haar} in the third, the invariance of $\phi$ under the antipode in the fourth and the definitions of $\lambda$ and $S_{Q_1}$ in the last.
    \end{step-goursat}

\begin{step-goursat}
Conversely, for $i=1,2$, let $\phi_i\in A_i^*$ be idempotent states defining finite $*$-quantum hypergroups $H_i \coloneqq c_{\phi_i}(A_i)$, let $q_i \in H_i$ be group-like projections satisfying \cref{eq:QGL_qiprop}, and let $\lambda : H_1 q_1 \to H_2 q_2$ be an anti-isomorphism of $C^*$-Hopf algebras.
\smallbreak\noindent%
One can extend the domain of definition of $\lambda$ to $A_1$ by defining, for all $x\in A_1$,
\[
    \tilde{\lambda}(x)
    \coloneqq
    \lambda(c_{\phi_1}(x)q_1).
\]
For all $x\in A_1$ and $y\in A_2$, we define
\[
    \eval{\phi}{x \otimes y}
    \coloneqq
    \frac{1}{\eval{\hat{h}}{\tilde{\lambda}(1)}}
    \eval{\hat{h}}{\tilde{\lambda}(x) y}
    =
    \frac{1}{\eval{\hat{h}}{q_2}}
    \eval{\hat{h}}{\tilde{\lambda}(x) y},
\]
where $\tilde{\lambda}(1)=\lambda(q_1)=q_2$ because $1_{H_1 q_1}=q_1$, $1_{H_2 q_2}=q_2$, and $\lambda$ is unital.
\smallbreak\noindent%
First, let us prove that $\phi$ is a positive linear functional.
For that purpose, one can adapt the argument used in the proof of \cref{prop:Fphi}, where, for a completely positive map $F : A\to A$, the linear functional defined by $x\otimes y \in A\otimes A^{\mathrm{op}} \mapsto \eval{\hat{h}}{F(x)y}$ is positive.
Because $\tilde{\lambda} : A_1 \to A_2$ is the composition of a projection from $A_1$ to $H_1 q_1$ with a $C^*$-al\-ge\-bra anti-homomorphism from $H_1 q_1$ to $H_2 q_2$, we conclude that $\tilde\lambda$ is completely positive, and hence $\phi$ is a positive linear functional.
Note that $\phi$ is trivially unital.
\smallbreak\noindent%
Finally, $\phi$ is idempotent by the comultiplicativity of $\lambda$. Indeed, for all $x\in A_1$,
\begin{align} \label{eq:lambda_intertwine}
    (\tilde{\lambda} \otimes \tilde{\lambda}) \circ \Delta(x)
    &=
    (\lambda \otimes \lambda) \circ \Delta_{H_1q_1}(c_{\phi_1}(x)q_1)
    =
    \Delta_{H_2q_2} \circ \lambda(c_{\phi_1}(x)q_1)
    \\
    &=
    \eval{\phi_2}{\tilde{\lambda}(x)_{(1)}}
        \tilde{\lambda}(x)_{(2)}
        \otimes q_2 \tilde{\lambda}(x)_{(3)},
\end{align}
where we used \cref{eq:intertwine_Q} in the first equality, the comultiplicativity of $\lambda$ in the second, and the definition of $\tilde{\lambda}$ together with \cref{eq:DeltaQsimpl2} in the last.
Thus, for all $x\in A_1$ and $y \in A_2$,
\begin{align*}
    \eval{\phi}{x_{(1)}\otimes y_{(1)}}
        \eval{\phi}{x_{(2)}\otimes y_{(2)}}
    &=
    \eval{\hat{h}}{q_2}^{-2}
        \eval{\hat{h}}{\tilde{\lambda}(x_{(1)})y_{(1)}}
            \eval{\hat{h}}{\tilde{\lambda}(x_{(2)})y_{(2)}}
    \\
    &=
    \eval{\hat{h}}{q_2}^{-2}
        \eval{\phi_2}{\tilde{\lambda}(x)_{(1)}}
            \eval{\hat{h}}{\tilde{\lambda}(x)_{(2)}y_{(1)}}
                \eval{\hat{h}}{\tilde{\lambda}(x)_{(3)}y_{(2)} q_2}
    \\
    &=
        \eval{\hat{h}}{q_2}^{-2}
            \eval{\phi_2}{\tilde{\lambda}(x)_{(1)}}
                \eval{\hat{h}}{\tilde{\lambda}(x)_{(2)} y}
                    \eval{\hat{h}}{q_2}
    \\
    &=
        \eval{\hat{h}}{q_2}^{-2}
            \eval{\hat{h}}{\tilde\lambda(x) y}
                \eval{\hat{h}}{q_2}
    \\
    &=
    \eval{\phi}{x\otimes y},
\end{align*}
where we used the definition of $\phi$ in the first equality, \cref{eq:lambda_intertwine} together with the trace-like property of $\hat{h}$ to move $q_2$ in the second equality, \cref{prop:IdempProp} with $v\equiv \tilde{\lambda}(x)_{(2)}y$ and $w\equiv q_2$ in the third, the fact that $L_{\phi_2}(\tilde{\lambda}(x)) = \tilde{\lambda}(x)\in H_2 q_2$ in the fourth, and the definition of $\phi$ in the last.
\end{step-goursat}
\noindent%
This concludes the proof of \cref{thm:QuantumGoursat}.
\end{proof}

\subsection{Applications to renormalization trajectories}
Applying \cref{thm:QuantumGoursat} to the limiting idempotent states arising from the renormalization trajectories in \cref{sec:fixedpoints} provides a more explicit form of the corresponding matrix product density operator fixed points, completing the structural description of the renormalized states considered here.

\begin{remark}
    Note that for $\phi\in (A\otimes A^{\mathrm{op}})^*$, the trace-preserving condition \cref{eq:TPphiF},
    $
        \eval{\phi}{(\placeholder)\otimes 1} = \hat{h}
    $,
    gives the constraint $q_1=1_A$ and $Q_1=H_1$. Indeed,
    \[
    q_1=\frac{\eval{\phi}{h_{A(1)}\otimes 1} h_{A(2)}}{\eval{\phi}{h_A\otimes 1}}
    =\frac{\eval{\hat{h}_A}{h_{A(1)}}h_{A(2)}}{\eval{\hat{h}_A}{h_A}}=\frac{D^{-1}1_A}{D^{-1}}
    =1_A.
    \]
\end{remark}

\begin{theorem} \label{th:newGoursatFP}
Let $A$ be a $C^*$-Hopf algebra and let $\Phi$ and $\Psi$ be faithful $*$-representations of $A$ and $A^*$, respectively.
Assume that the renormalization procedure in \cref{sec:flows} converges, and $\phi\in (A\otimes A^{\mathrm{op}})^*$ characterizes its limit.
With the notation of \cref{thm:QuantumGoursat},
\begin{equation*}
    M_{\infty} =
             \sum_{i=1}^{\dim Q_1}   \Psi(e^i) \otimes \hat{b}(\hat{h}_{Q_1}) \tilde{\Phi}(e_i),
\end{equation*}
where $e_1,\ldots,e_{\dim Q_1}$ is a basis for $Q_1$, $e^{1},\ldots,e^{\dim Q_1}\in Q_1^*$ stands for its dual basis, and $\tilde{\Phi} \coloneqq \Phi \circ \lambda$ is a $*$-representation of $Q_1$.
\end{theorem}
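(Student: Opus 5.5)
Starting from \cref{th:NewFixedPoints}, we have
\[
M_\infty=\sum_{i=1}^{\dim A} D\,\eval{\phi}{e_i\otimes S(h_{(1)})}\,\Psi(e^i)\otimes \hat b(\hat h)\Phi(h_{(2)}) = \sum_{i=1}^{\dim A}\Psi(e^i)\otimes \hat b(\hat h)\Phi(F_\phi(e_i)),
\]
with $F_\phi$ as in \cref{prop:Fphi}. The plan is to identify $F_\phi$ with a rescaled copy of the map $\lambda$ of \cref{thm:QuantumGoursat}, with $A_1=A$ and $A_2=A^{\mathrm{op}}$. Comparing definitions gives $F_\phi = D\,\eval{\phi}{1\otimes h}\,\lambda$. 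By \cref{eq:normalizationVarPhi} this equals $\lambda/\eval{\hat h}{\lambda(1)}$. By the remark preceding the theorem, $q_1=1_A$ and $Q_1=H_1=c_{\phi_1}(A)$. Moreover $\lambda(1)=\lambda(q_1)=q_2$, so the scalar is $\eval{\hat h}{q_2}^{-1}$.

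Next, I use \cref{eq:lambdaPii}, namely $\lambda=\lambda\circ\Pi_1$ with $\Pi_1=c_{\phi_1}$ the projection onto $Q_1$. This gives $\Phi(F_\phi(e_i))=\eval{\hat h}{q_2}^{-1}\Phi(\lambda(\Pi_1 e_i))$. I then regroup the sum over a basis of $A$ adapted to the decomposition $A=Q_1\oplus\ker\Pi_1$. Let $e_1,\dots,e_{\dim Q_1}$ be a basis of $Q_1$, extended by a basis of $\ker\Pi_1$. The terms with $e_i\in\ker\Pi_1$ vanish. For the others, the dual functionals of $A$ restrict, via $\Pi_1^*$, to the dual basis of $Q_1$. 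The sum is basis independent, being the canonical element of $A^*\otimes A$ pushed forward, so we may regard $\Psi$ as applied to $e^i\circ\Pi_1$. The resulting expression is $\sum_{i\le\dim Q_1}\Psi(e^i)\otimes c\,\tilde\Phi(e_i)$, where $\tilde\Phi=\Phi\circ\lambda$ and $c=\hat b(\hat h)/\eval{\hat h}{q_2}$.

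It remains to check that $\tilde\Phi$ is a $*$-representation and to identify the prefactor with $\hat b(\hat h_{Q_1})$. Since $\lambda:Q_1\to Q_2\subseteq A^{\mathrm{op}}$ is a unital $*$-anti-isomorphism into $A^{\mathrm{op}}$, it is multiplicative as a map into $A$. Hence $\Phi\circ\lambda$ is a $*$-homomorphism on $Q_1$, with unit mapped to the projection $\Phi(q_2)$; it is a representation on $\Phi(q_2)\mathscr H$. This anti/op bookkeeping I check carefully first. For the prefactor, I define $\hat b(\hat h_{Q_1})$ relative to $\tilde\Phi$ by $\operatorname{Tr}(\hat b(\hat h_{Q_1})\tilde\Phi(x))=\eval{\hat h_{Q_1}}{x}$. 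By \cref{eq:hathQiexpression} and centrality of $\hat b(\hat h)$,
\[
\operatorname{Tr}\bigl(\hat b(\hat h)\eval{\hat h}{q_2}^{-1}\Phi(\lambda x)\bigr)=\eval{\hat h}{\lambda(x)}/\eval{\hat h}{q_2}=\eval{\hat h_{Q_2}}{\lambda x}.
\]
Since $\lambda$ is an isomorphism of $C^*$-Hopf algebras up to opposite structure, it transports Haar states by uniqueness of the Haar measure, so this equals $\eval{\hat h_{Q_1}}{x}$. This identifies the prefactor.

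The main obstacle is that $\hat b(\hat h)/\eval{\hat h}{q_2}$ must be an element of $\operatorname{Im}\tilde\Phi$, central there, for it to literally be $\hat b(\hat h_{Q_1})$. I would compress it by $\Phi(q_2)$ and argue that the compression is central because $q_2$ commutes with the central element. If needed, I would interpret $\hat b(\hat h_{Q_1})$ as any central weight reproducing $\hat h_{Q_1}$ under the trace, which is what the identity requires.
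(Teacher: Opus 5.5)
Your proposal is correct and follows the paper's route. The paper substitutes $\langle\phi,x\otimes y\rangle=\langle\hat h,\lambda(x)y\rangle/\langle\hat h,\lambda(1)\rangle$ and pulls through, which is exactly your identification $F_\phi=\lambda/\langle\hat h,\lambda(1)\rangle$; it then absorbs the scalar into $\hat b(\hat h_{Q_1})$ via \cref{eq:hathQiexpression} and truncates the sum using $\lambda=\lambda\circ\Pi_1$, and your choice of complement inside $\ker\Pi_1$ with dual functionals $e^i\circ\Pi_1$ makes that truncation rigorous. One remark: reading $\hat b(\hat h_{Q_1})$ as the central weight $\hat b(\hat h)/\langle\hat h,q_2\rangle$ is not an optional fallback, because the compression need not lie in $\operatorname{Im}\tilde\Phi$ (for $\phi=\hat h\otimes\hat h$ one gets $Q_1=\mathbb{C}1$, $q_2=1$ and $M_\infty=\Psi(\hat h)\otimes\hat b(\hat h)$, while $\hat b(\hat h)$ is in general not scalar), and this is precisely how the paper's own step must be understood.
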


\begin{proof}
Recall \cref{thm:QuantumGoursat} with $A_1 = A$ and $A_2 = A^{\mathrm{op}}$. Since 
$\lambda|_{\smash{Q_1}}$ is anti-multiplicative from
$Q_1$ to $Q_2$, and  $\Phi$ is a $*$-algebra anti-homomorphism from $A^{\mathrm{op}}$ to $\operatorname{End}\mathscr{H}$, it holds that $\tilde\Phi$ is a $*$-representation of $Q_1$.
Now, complete a basis $e_{1},\ldots,e_{\dim Q_1}$ of $Q_1$ to a basis $e_1,\ldots,e_{\dim A}$ of $A_1$.
The expression for $M_\infty$ becomes
\begin{align*}
    M_\infty
    &=
    \sum_{i=1}^{\dim A}
        D \eval{\phi}{e_i \otimes S(h_{(1)})} 
        \Psi(e^i)
        \otimes
        \hat{b}(\hat{h}_A) \Phi(h_{(2)})
    \\
    &=
    \sum_{i=1}^{\dim A}
        D \eval{\hat{h}_A}{\lambda(1)}^{-1}
        \eval{\hat{h}_A}{\lambda(e_i)S(h_{(1)})} 
        \Psi(e^i)
        \otimes
        \hat{b}(\hat{h}_A) \Phi(h_{(2)})
    \\
    &=
    \sum_{i=1}^{\dim A}
        D \eval{\hat{h}_A}{\lambda(1)}^{-1}
        \eval{\hat{h}_A}{S(h_{(1)})} 
        \Psi(e^i)
        \otimes
        \hat{b}(\hat{h}_A) \Phi(h_{(2)}\lambda(e_i))
    \\
    &=
    \sum_{i=1}^{\dim A}
        \eval{\hat{h}_A}{\lambda(1)}^{-1}
        \Psi(e^i)
        \otimes
        \hat{b}(\hat{h}_A) \Phi(\lambda(e_i))
    \\
    &=
    \sum_{i=1}^{\dim A}
        \Psi(e^i)
        \otimes
        \hat{b}(\hat{h}_{Q_1}) \Phi(\lambda(e_i))
    \\
    &=
    \sum_{i=1}^{\dim Q_1}
        \Psi(e^i)
        \otimes
        \hat{b}(\hat{h}_{Q_1}) \Phi(\lambda(e_i))
\end{align*}
where we recalled \cref{th:NewFixedPoints} in the first equality, we rewrote $\phi$ in terms of $\lambda$ by virtue of \cref{thm:QuantumGoursat} in the second, we applied the pulling-through identity \cref{eq:PTmodified} in the third, we employed \cref{eq:DualHaar} in the fourth, we rewrote the Haar integral of $Q_1$ as in \cref{eq:hathQiexpression} in the fifth, and used that $\lambda = \lambda \circ \Pi_1$ by \cref{eq:lambdaPii} in the last, where $\Pi_1$ is the projection of $A_1$ onto $Q_1$ introduced in \cref{prop:qHG_3}.
\end{proof}

\begin{acknowledgements}
D.P.-G.~and A.R.-d.-A.~acknowledge support from the Spanish Ministry of Science and Innovation MCIN/AEI/10.13039/501100011033 (CEX2023-001347-S, PID2020-113523GB-I00, PID2023-146758NB-I00), Universidad Complutense de Madrid (FEI-EU-22-06), the Ministry for Digital Transformation and the Civil Service of the Spanish Government through the QUANTUM ENIA project call -- Quantum Spain project, and the European Union through the Recovery, Transformation and Resilience Plan -- NextGenerationEU within the framework of the Digital Spain 2026 Agenda.
D.P.-G.\ also acknowledges support from Comunidad de Madrid (TEC-2024/COM-84 QUITEMAD-CM).
\end{acknowledgements}

\begin{conflictofinterest}
	On behalf of all authors, the corresponding author states that there is no conflict of interest.
\end{conflictofinterest}

\begin{dataavailability}
	Data sharing is not applicable to this article as no new data were created or analyzed in this study.
\end{dataavailability}

\bibliographystyle{abbrv}
\bibliography{renormalization}

@article{fannes_finitely_1992,
	title = {Finitely correlated states on quantum spin chains},
	volume = {144},
	issn = {1432-0916},
	url = {https://doi.org/10.1007/BF02099178},
	doi = {10.1007/BF02099178},
	language = {en},
	number = {3},
	urldate = {2025-12-10},
	journal = {Communications in Mathematical Physics},
	author = {Fannes, M. and Nachtergaele, B. and Werner, R. F.},
	year = {1992},
	pages = {443--490},
    note = {\href{https://doi.org/10.1007/BF02099178}{\nolinkurl{doi:10.1007/BF02099178}}}
}

@article{gell-mann_quantum_1954,
	title = {Quantum {Electrodynamics} at {Small} {Distances}},
	volume = {95},
	copyright = {http://link.aps.org/licenses/aps-default-license},
	issn = {0031-899X},
	url = {https://link.aps.org/doi/10.1103/PhysRev.95.1300},
	doi = {10.1103/PhysRev.95.1300},
	language = {en},
	number = {5},
	urldate = {2026-01-20},
	journal = {Physical Review},
	author = {Gell-Mann, M. and Low, F. E.},
	month = sep,
	year = {1954},
	pages = {1300--1312},
    note = {\href{https://doi.org/10.1103/PhysRev.95.1300}{\nolinkurl{doi:10.1103/PhysRev.95.1300}}}
}

@article{kadanoff_scaling_1966,
	title = {Scaling laws for {Ising} models near ${T}_c$},
	volume = {2},
	copyright = {https://link.aps.org/licenses/aps-default-license},
	issn = {0554-128X},
	url = {https://link.aps.org/doi/10.1103/PhysicsPhysiqueFizika.2.263},
	doi = {10.1103/PhysicsPhysiqueFizika.2.263},
	language = {en},
	number = {6},
	urldate = {2026-01-20},
	journal = {Physics Physique Fizika},
	author = {Kadanoff, Leo P.},
	month = jun,
	year = {1966},
	pages = {263--272},
    note = {\href{https://doi.org/10.1103/PhysicsPhysiqueFizika.2.263}{\nolinkurl{doi:10.1103/PhysicsPhysiqueFizika.2.263}}}
}

@article{wilson_renormalization_1971,
	title = {Renormalization {Group} and {Critical} {Phenomena}. {I}. {Renormalization} {Group} and the {Kadanoff} {Scaling} {Picture}},
	volume = {4},
	copyright = {http://link.aps.org/licenses/aps-default-license},
	issn = {0556-2805},
	url = {https://link.aps.org/doi/10.1103/PhysRevB.4.3174},
	doi = {10.1103/PhysRevB.4.3174},
	language = {en},
	number = {9},
	urldate = {2026-01-20},
	journal = {Physical Review B},
	author = {Wilson, Kenneth G.},
	month = nov,
	year = {1971},
	pages = {3174--3183},
    note = {\href{https://doi.org/10.1103/PhysRevB.4.3174}{\nolinkurl{doi:10.1103/PhysRevB.4.3174}}}
}

@article{wilson_renormalization_1971-1,
	title = {Renormalization {Group} and {Critical} {Phenomena}. {II}. {Phase}-{Space} {Cell} {Analysis} of {Critical} {Behavior}},
	volume = {4},
	copyright = {http://link.aps.org/licenses/aps-default-license},
	issn = {0556-2805},
	url = {https://link.aps.org/doi/10.1103/PhysRevB.4.3184},
	doi = {10.1103/PhysRevB.4.3184},
	language = {en},
	number = {9},
	urldate = {2026-01-20},
	journal = {Physical Review B},
	author = {Wilson, Kenneth G.},
	month = nov,
	year = {1971},
	pages = {3184--3205},
    note = {\href{https://doi.org/10.1103/PhysRevB.4.3184}{\nolinkurl{doi:10.1103/PhysRevB.4.3184}}}
}

@article{wilson_renormalization_1975,
	title = {The renormalization group: {Critical} phenomena and the {Kondo} problem},
	volume = {47},
	copyright = {http://link.aps.org/licenses/aps-default-license},
	issn = {0034-6861},
	shorttitle = {The renormalization group},
	url = {https://link.aps.org/doi/10.1103/RevModPhys.47.773},
	doi = {10.1103/RevModPhys.47.773},
	language = {en},
	number = {4},
	urldate = {2026-01-20},
	journal = {Reviews of Modern Physics},
	author = {Wilson, Kenneth G.},
	month = oct,
	year = {1975},
	pages = {773--840},
    note = {\href{https://doi.org/10.1103/RevModPhys.47.773}{\nolinkurl{doi:10.1103/RevModPhys.47.773}}}
}

@article{vidal_entanglement_2007,
	title = {Entanglement {Renormalization}},
	volume = {99},
	copyright = {http://link.aps.org/licenses/aps-default-license},
	issn = {0031-9007, 1079-7114},
	url = {https://link.aps.org/doi/10.1103/PhysRevLett.99.220405},
	doi = {10.1103/PhysRevLett.99.220405},
	language = {en},
	number = {22},
	urldate = {2026-01-23},
	journal = {Physical Review Letters},
	author = {Vidal, G.},
	month = nov,
	year = {2007},
	pages = {220405},
    note = {\href{https://doi.org/10.1103/PhysRevLett.99.220405}{\nolinkurl{doi:10.1103/PhysRevLett.99.220405}}. \href{https://arxiv.org/abs/cond-mat/0512165}{\nolinkurl{arXiv:cond-mat/0512165}}}
}

@article{vidal_class_2008,
	title = {Class of {Quantum} {Many}-{Body} {States} {That} {Can} {Be} {Efficiently} {Simulated}},
	volume = {101},
	copyright = {http://link.aps.org/licenses/aps-default-license},
	issn = {0031-9007, 1079-7114},
	url = {https://link.aps.org/doi/10.1103/PhysRevLett.101.110501},
	doi = {10.1103/PhysRevLett.101.110501},
	language = {en},
	number = {11},
	urldate = {2026-01-23},
	journal = {Physical Review Letters},
	author = {Vidal, G.},
	month = sep,
	year = {2008},
	pages = {110501},
    note = {\href{https://doi.org/10.1103/PhysRevLett.101.110501}{\nolinkurl{doi:10.1103/PhysRevLett.101.110501}}. \href{https://arxiv.org/abs/quant-ph/0610099}{\nolinkurl{arXiv:quant-ph/0610099}}}
}

@article{levin_tensor_2007,
	title = {Tensor {Renormalization} {Group} {Approach} to {Two}-{Dimensional} {Classical} {Lattice} {Models}},
	volume = {99},
	copyright = {http://link.aps.org/licenses/aps-default-license},
	issn = {0031-9007, 1079-7114},
	url = {https://link.aps.org/doi/10.1103/PhysRevLett.99.120601},
	doi = {10.1103/PhysRevLett.99.120601},
	language = {en},
	number = {12},
	urldate = {2026-01-23},
	journal = {Physical Review Letters},
	author = {Levin, Michael and Nave, Cody P.},
	month = sep,
	year = {2007},
	pages = {120601},
    note = {\href{https://doi.org/10.1103/PhysRevLett.99.120601}{\nolinkurl{doi:10.1103/PhysRevLett.99.120601}}. \href{https://arxiv.org/abs/cond-mat/0611687}{\nolinkurl{arXiv:cond-mat/0611687}}}
}

@article{evenbly_tensor_2015,
	title = {Tensor {Network} {Renormalization}},
	volume = {115},
	copyright = {http://link.aps.org/licenses/aps-default-license},
	issn = {0031-9007, 1079-7114},
	url = {https://link.aps.org/doi/10.1103/PhysRevLett.115.180405},
	doi = {10.1103/PhysRevLett.115.180405},
	language = {en},
	number = {18},
	urldate = {2026-01-23},
	journal = {Physical Review Letters},
	author = {Evenbly, G. and Vidal, G.},
	month = oct,
	year = {2015},
	pages = {180405},
    note = {\href{https://doi.org/10.1103/PhysRevLett.115.180405}{\nolinkurl{doi:10.1103/PhysRevLett.115.180405}}. \href{https://arxiv.org/abs/1412.0732}{\nolinkurl{arXiv:1412.0732}}}
}

@article{evenbly_tensor_2015-1,
	title = {Tensor {Network} {Renormalization} {Yields} the {Multiscale} {Entanglement} {Renormalization} {Ansatz}},
	volume = {115},
	copyright = {http://link.aps.org/licenses/aps-default-license},
	issn = {0031-9007, 1079-7114},
	url = {https://link.aps.org/doi/10.1103/PhysRevLett.115.200401},
	doi = {10.1103/PhysRevLett.115.200401},
	language = {en},
	number = {20},
	urldate = {2026-01-23},
	journal = {Physical Review Letters},
	author = {Evenbly, G. and Vidal, G.},
	month = nov,
	year = {2015},
	pages = {200401},
    note = {\href{https://doi.org/10.1103/PhysRevLett.115.200401}{\nolinkurl{doi:10.1103/PhysRevLett.115.200401}}. \href{https://arxiv.org/abs/1502.05385}{\nolinkurl{arXiv:1502.05385}}}
}

@article{kennedy_tensor_2022,
	title = {Tensor {RG} {Approach} to {High}-{Temperature} {Fixed} {Point}},
	volume = {187},
	issn = {0022-4715, 1572-9613},
	url = {https://link.springer.com/10.1007/s10955-022-02924-4},
	doi = {10.1007/s10955-022-02924-4},
	language = {en},
	number = {3},
	urldate = {2026-01-23},
	journal = {Journal of Statistical Physics},
	author = {Kennedy, Tom and Rychkov, Slava},
	month = jun,
	year = {2022},
	pages = {33},
    note = {\href{https://doi.org/10.1007/s10955-022-02924-4}{\nolinkurl{doi:10.1007/s10955-022-02924-4}}. \href{https://arxiv.org/abs/2107.11464}{\nolinkurl{arXiv:2107.11464}}}
}

@misc{ebel_tensor_2025,
	title = {Tensor {Renormalization} {Group} {Meets} {Computer} {Assistance}},
	url = {http://arxiv.org/abs/2506.03247},
	doi = {10.48550/arXiv.2506.03247},
	urldate = {2026-01-23},
	publisher = {arXiv},
	author = {Ebel, Nikolay and Kennedy, Tom and Rychkov, Slava},
	month = jun,
	year = {2025},
    note = {\href{https://arxiv.org/abs/2506.03247}{\nolinkurl{arXiv:2506.03247}}}
}

@article{kennedy_tensor_2024,
	title = {Tensor {Renormalization} {Group} at {Low} {Temperatures}: {Discontinuity} {Fixed} {Point}},
	volume = {25},
	issn = {1424-0637, 1424-0661},
	shorttitle = {Tensor {Renormalization} {Group} at {Low} {Temperatures}},
	url = {https://link.springer.com/10.1007/s00023-023-01289-y},
	doi = {10.1007/s00023-023-01289-y},
	language = {en},
	number = {1},
	urldate = {2026-01-23},
	journal = {Annales Henri Poincaré},
	author = {Kennedy, Tom and Rychkov, Slava},
	month = jan,
	year = {2024},
	pages = {773--841},
    note = {\href{https://doi.org/10.1007/s00023-023-01289-y}{\nolinkurl{doi:10.1007/s00023-023-01289-y}}. \href{https://arxiv.org/abs/2210.06669}{\nolinkurl{arXiv:2210.06669}}}
}

@article{ebel_rotations_2025,
	title = {Rotations, {Negative} {Eigenvalues}, and {Newton} {Method} in {Tensor} {Network} {Renormalization} {Group}},
	volume = {15},
	issn = {2160-3308},
	url = {https://link.aps.org/doi/10.1103/y3xz-t2w8},
	doi = {10.1103/y3xz-t2w8},
	language = {en},
	number = {3},
	urldate = {2026-01-23},
	journal = {Physical Review X},
	author = {Ebel, Nikolay and Kennedy, Tom and Rychkov, Slava},
	month = aug,
	year = {2025},
	pages = {031047},
    note = {\href{https://doi.org/10.1103/y3xz-t2w8}{\nolinkurl{doi:10.1103/y3xz-t2w8}}. \href{https://arxiv.org/abs/2408.10312}{\nolinkurl{arXiv:2408.10312}}}
}

@article{bohm_weak_1999,
	title = {Weak {Hopf} {Algebras} {I}: {Integral} {Theory} and {C}*-structure},
	author = {B\"{o}hm, Gabriella and Nill, Florian and Szlach\'{a}nyi, Korn\'{e}l},
	volume = {221},
	issn = {00218693},
	doi = {10.1006/jabr.1999.7984},
	number = {2},
	journal = {Journal of Algebra},
	month = nov,
	year = {1999},
	pages = {385--438},
    note = {\href{https://doi.org/10.1006/jabr.1999.7984}{\nolinkurl{doi:10.1006/jabr.1999.7984}}. \href{https://arxiv.org/abs/math/9805116}{\nolinkurl{arXiv:math/9805116}}}
}

@article{white_density_1992,
	title = {Density matrix formulation for quantum renormalization groups},
	volume = {69},
	copyright = {http://link.aps.org/licenses/aps-default-license},
	issn = {0031-9007},
	url = {https://link.aps.org/doi/10.1103/PhysRevLett.69.2863},
	doi = {10.1103/PhysRevLett.69.2863},
	language = {en},
	number = {19},
	urldate = {2025-12-08},
	journal = {Physical Review Letters},
	author = {White, Steven R.},
	month = nov,
	year = {1992},
	pages = {2863--2866},
    note = {\href{https://doi.org/10.1103/PhysRevLett.69.2863}{\nolinkurl{doi:10.1103/PhysRevLett.69.2863}}}
}

@article{ostlund_thermodynamic_1995,
	title = {Thermodynamic {Limit} of {Density} {Matrix} {Renormalization}},
	volume = {75},
	copyright = {http://link.aps.org/licenses/aps-default-license},
	issn = {0031-9007, 1079-7114},
	url = {https://link.aps.org/doi/10.1103/PhysRevLett.75.3537},
	doi = {10.1103/PhysRevLett.75.3537},
	language = {en},
	number = {19},
	urldate = {2025-12-08},
	journal = {Physical Review Letters},
	author = {Östlund, Stellan and Rommer, Stefan},
	month = nov,
	year = {1995},
	pages = {3537--3540},
    note = {\href{https://doi.org/10.1103/PhysRevLett.75.3537}{\nolinkurl{doi:10.1103/PhysRevLett.75.3537}}}
}

@article{schollwock_density-matrix_2005,
	title = {The density-matrix renormalization group},
	volume = {77},
	copyright = {http://link.aps.org/licenses/aps-default-license},
	issn = {0034-6861, 1539-0756},
	url = {https://link.aps.org/doi/10.1103/RevModPhys.77.259},
	doi = {10.1103/RevModPhys.77.259},
	language = {en},
	number = {1},
	urldate = {2025-12-08},
	journal = {Reviews of Modern Physics},
	author = {Schollwöck, U.},
	month = apr,
	year = {2005},
	pages = {259--315},
    note = {\href{https://doi.org/10.1103/RevModPhys.77.259}{\nolinkurl{doi:10.1103/RevModPhys.77.259}}. \href{https://arxiv.org/abs/cond-mat/0409292}{\nolinkurl{arXiv:cond-mat/0409292}}}
}

@article{schollwock_density-matrix_2011,
	title = {The density-matrix renormalization group in the age of matrix product states},
	volume = {326},
	copyright = {https://www.elsevier.com/tdm/userlicense/1.0/},
	issn = {00034916},
	url = {https://linkinghub.elsevier.com/retrieve/pii/S0003491610001752},
	doi = {10.1016/j.aop.2010.09.012},
	language = {en},
	number = {1},
	urldate = {2025-12-08},
	journal = {Annals of Physics},
	author = {Schollwöck, Ulrich},
	month = jan,
	year = {2011},
	pages = {96--192},
    note = {\href{https://doi.org/10.1016/j.aop.2010.09.012}{\nolinkurl{doi:10.1016/j.aop.2010.09.012}}. \href{https://arxiv.org/abs/1008.3477}{\nolinkurl{arXiv:1008.3477}}}
}

@article{hastings_area_2007,
	title = {An area law for one-dimensional quantum systems},
	volume = {2007},
	issn = {1742-5468},
	url = {https://iopscience.iop.org/article/10.1088/1742-5468/2007/08/P08024},
	doi = {10.1088/1742-5468/2007/08/P08024},
	number = {08},
	urldate = {2025-12-08},
	journal = {Journal of Statistical Mechanics: Theory and Experiment},
	author = {Hastings, Matthew B.},
	month = aug,
	year = {2007},
	pages = {P08024--P08024},
    note = {\href{https://doi.org/10.1088/1742-5468/2007/08/P08024}{\nolinkurl{doi:10.1088/1742-5468/2007/08/P08024}}. \href{https://arxiv.org/abs/0705.2024}{\nolinkurl{arXiv:0705.2024}}}
}

@article{verstraete_matrix_2008,
	title = {Matrix product states, projected entangled pair states, and variational renormalization group methods for quantum spin systems},
	volume = {57},
	issn = {0001-8732, 1460-6976},
	url = {http://www.tandfonline.com/doi/abs/10.1080/14789940801912366},
	doi = {10.1080/14789940801912366},
	language = {en},
	number = {2},
	urldate = {2025-12-08},
	journal = {Advances in Physics},
	author = {Verstraete, F. and Murg, V. and Cirac, J.I.},
	month = mar,
	year = {2008},
	pages = {143--224},
    note = {\href{https://doi.org/10.1080/14789940801912366}{\nolinkurl{doi:10.1080/14789940801912366}}. \href{https://arxiv.org/abs/0907.2796}{\nolinkurl{arXiv:0907.2796}}}
}

@article{cirac_unitaries_2017,
	title = {Matrix product unitaries: structure, symmetries, and topological invariants},
	volume = {2017},
	issn = {1742-5468},
	shorttitle = {Matrix product unitaries},
	url = {https://iopscience.iop.org/article/10.1088/1742-5468/aa7e55},
	doi = {10.1088/1742-5468/aa7e55},
	number = {8},
	urldate = {2025-12-08},
	journal = {Journal of Statistical Mechanics: Theory and Experiment},
	author = {Ignacio Cirac, J and Perez-Garcia, David and Schuch, Norbert and Verstraete, Frank},
	month = aug,
	year = {2017},
	pages = {083105},
    note = {\href{https://doi.org/10.1088/1742-5468/aa7e55}{\nolinkurl{doi:10.1088/1742-5468/aa7e55}}. \href{https://arxiv.org/abs/1703.09188}{\nolinkurl{arXiv:1703.09188}}}
}

@article{pirvu_matrix_2010,
	title = {Matrix product operator representations},
	volume = {12},
	issn = {1367-2630},
	url = {https://iopscience.iop.org/article/10.1088/1367-2630/12/2/025012},
	doi = {10.1088/1367-2630/12/2/025012},
	number = {2},
	urldate = {2025-12-08},
	journal = {New Journal of Physics},
	author = {Pirvu, B and Murg, V and Cirac, J I and Verstraete, F},
	year = {2010},
	pages = {025012},
    note = {\href{https://doi.org/10.1088/1367-2630/12/2/025012}{\nolinkurl{doi:10.1088/1367-2630/12/2/025012}}. \href{https://arxiv.org/abs/0804.3976}{\nolinkurl{arXiv:0804.3976}}}
}

@article{schuch_classifying_2011,
	title = {Classifying quantum phases using matrix product states and projected entangled pair states},
	volume = {84},
	copyright = {http://link.aps.org/licenses/aps-default-license},
	issn = {1098-0121, 1550-235X},
	url = {https://link.aps.org/doi/10.1103/PhysRevB.84.165139},
	doi = {10.1103/PhysRevB.84.165139},
	language = {en},
	number = {16},
	urldate = {2026-01-23},
	journal = {Physical Review B},
	author = {Schuch, Norbert and Pérez-García, David and Cirac, Ignacio},
	month = oct,
	year = {2011},
	pages = {165139},
    note = {\href{https://doi.org/10.1103/PhysRevB.84.165139}{\nolinkurl{doi:10.1103/PhysRevB.84.165139}}. \href{https://arxiv.org/abs/1010.3732}{\nolinkurl{arXiv:1010.3732}}}
}

@article{pollmann_symmetry_2012,
	title = {Symmetry protection of topological phases in one-dimensional quantum spin systems},
	volume = {85},
	copyright = {http://link.aps.org/licenses/aps-default-license},
	issn = {1098-0121, 1550-235X},
	url = {https://link.aps.org/doi/10.1103/PhysRevB.85.075125},
	doi = {10.1103/PhysRevB.85.075125},
	language = {en},
	number = {7},
	urldate = {2026-01-23},
	journal = {Physical Review B},
	author = {Pollmann, Frank and Berg, Erez and Turner, Ari M. and Oshikawa, Masaki},
	month = feb,
	year = {2012},
	pages = {075125},
    note = {\href{https://doi.org/10.1103/PhysRevB.85.075125}{\nolinkurl{doi:10.1103/PhysRevB.85.075125}}. \href{https://arxiv.org/abs/0909.4059}{\nolinkurl{arXiv:0909.4059}}}
}

@article{zwolak_mixed-state_2004,
	title = {Mixed-{State} {Dynamics} in {One}-{Dimensional} {Quantum} {Lattice} {Systems}: {A} {Time}-{Dependent} {Superoperator} {Renormalization} {Algorithm}},
	volume = {93},
	copyright = {http://link.aps.org/licenses/aps-default-license},
	issn = {0031-9007, 1079-7114},
	shorttitle = {Mixed-{State} {Dynamics} in {One}-{Dimensional} {Quantum} {Lattice} {Systems}},
	url = {https://link.aps.org/doi/10.1103/PhysRevLett.93.207205},
	doi = {10.1103/PhysRevLett.93.207205},
	language = {en},
	number = {20},
	urldate = {2026-01-23},
	journal = {Physical Review Letters},
	author = {Zwolak, Michael and Vidal, Guifré},
	month = nov,
	year = {2004},
	pages = {207205},
    note = {\href{https://doi.org/10.1103/PhysRevLett.93.207205}{\nolinkurl{doi:10.1103/PhysRevLett.93.207205}}. \href{https://arxiv.org/abs/cond-mat/0406440}{\nolinkurl{arXiv:cond-mat/0406440}}}
}

@article{bohm_weak_2000,
	title = {Weak {Hopf} {Algebras} {II}: {Representation} theory, dimensions and the {Markov} trace},
	volume = {233},
	issn = {00218693},
	url = {https://linkinghub.elsevier.com/retrieve/pii/S0021869300983790},
	doi = {10.1006/jabr.2000.8379},
	number = {1},
	urldate = {2024-01-23},
	journal = {Journal of Algebra},
	author = {B\"{o}hm, Gabriella and Szlach\'{a}nyi, Korn\'{e}l},
	month = nov,
	year = {2000},
	pages = {156--212},
    note = {\href{https://doi.org/10.1006/jabr.2000.8379}{\nolinkurl{doi:10.1006/jabr.2000.8379}}. \href{https://arxiv.org/abs/math/9906045}{\nolinkurl{arXiv:math/9906045}}}
}

@article{etingof_fusion_2005,
	title = {On fusion categories},
	volume = {162},
	issn = {0003-486X},
	doi = {10.4007/annals.2005.162.581},
	number = {2},
	journal = {Annals of Mathematics},
	author = {Etingof, Pavel and Nikshych, Dmitri and Ostrik, Viktor},
	month = sep,
	year = {2005},
	pages = {581--642},
    note = {\href{https://doi.org/10.4007/annals.2005.162.581}{\nolinkurl{doi:10.4007/annals.2005.162.581}}. \href{https://arxiv.org/abs/math/0203060}{\nolinkurl{arXiv:math/0203060}}}
}

@misc{molnar_matrix_2022,
	title = {Matrix product operator algebras {I}: representations of weak {Hopf} algebras and projected entangled pair states},
	shorttitle = {Matrix product operator algebras {I}},
	author = {Moln{\'a}r, Andr{\'a}s and {Ruiz-de-Alarc{\'o}n}, Alberto and Garre-Rubio, Jos{\'e} and Schuch, Norbert and Cirac, J. Ignacio and P{\'e}rez-García, David},
	month = jun,
	year = {2022},
	note = {\href{https://arxiv.org/abs/2204.05940}{\nolinkurl{arXiv:2204.05940}}},
}

@article{sahinoglu_characterizing_2021,
	title = {Characterizing {Topological} {Order} with {Matrix} {Product} {Operators}},
	volume = {22},
	issn = {1424-0637, 1424-0661},
	doi = {10.1007/s00023-020-00992-4},
	number = {2},
	journal = {Annales Henri Poincar\'{e}},
	author = {\c{S}ahino\u{g}lu, {Mehmet Burak} and Williamson, Dominic and Bultinck, Nick and Mari\"{e}n, Micha\"{e}l and Haegeman, Jutho and Schuch, Norbert and Verstraete, Frank},
	month = feb,
	year = {2021},
	pages = {563--592},
    note = {\href{https://doi.org/10.1007/s00023-020-00992-4}{\nolinkurl{doi:10.1007/s00023-020-00992-4}}. \href{https://arxiv.org/abs/1409.2150}{\nolinkurl{arXiv:1409.2150}}}
}

@article{schuch_peps_2010,
	title = {{PEPS} as ground states: {Degeneracy} and topology},
	volume = {325},
	issn = {00034916},
	doi = {10.1016/j.aop.2010.05.008},
	number = {10},
	journal = {Annals of Physics},
	author = {Schuch, Norbert and Cirac, Ignacio and P\'{e}rez-García, David},
	month = oct,
	year = {2010},
	pages = {2153--2192},
    note = {\href{https://doi.org/10.1016/j.aop.2010.05.008}{\nolinkurl{doi:10.1016/j.aop.2010.05.008}}. \href{https://arxiv.org/abs/1001.3807}{\nolinkurl{arXiv:1001.3807}}}
}

@article{bultinck_anyons_2017,
	title = {Anyons and matrix product operator algebras},
	volume = {378},
	issn = {00034916},
	url = {https://linkinghub.elsevier.com/retrieve/pii/S0003491617300040},
	doi = {10.1016/j.aop.2017.01.004},
	urldate = {2024-01-23},
	journal = {Annals of Physics},
	author = {Bultinck, Nick and Mari\"{e}n, Michael and Williamson, Dominic J. and \c{S}ahino\u{g}lu, Mehmet Burak and Haegeman, Jutho and Verstraete, Frank},
	month = mar,
	year = {2017},
	pages = {183--233},
    note = {\href{https://doi.org/10.1016/j.aop.2017.01.004}{\nolinkurl{doi:10.1016/j.aop.2017.01.004}}. \href{https://arxiv.org/abs/1511.08090}{\nolinkurl{arXiv:1511.08090}}}
}

@article {kac_finite_1966,
    author = {Kac, Georgiy Isaakovich and Paljutkin, Vladimir Georgievich},
    title = {Finite ring groups},
    journal = {Trudy Moskov. Mat. Ob\v{s}\v{c}.},
    fjournal = {Trudy Moskovskogo Matemati\v{c}eskogo Ob\v{s}\v{c}estva},
    volume = {15},
    year = {1966},
    pages = {224--261},
    issn = {0134-8663},
    mrclass = {46.65 (22.65)},
    mrnumber = {208401},
}

@incollection {montgomery_representation_2001,
    author = {Montgomery, Susan},
    title = {Representation theory of semisimple {H}opf algebras},
    booktitle = {Algebra - representation theory ({C}onstanta, 2000)},
    series = {NATO Sci. Ser. II Math. Phys. Chem.},
    volume = {28},
    pages = {189--218},
    publisher = {Kluwer Acad. Publ., Dordrecht},
    year = {2001},
    isbn = {0-7923-7113-5},
    mrclass = {16W30},
    mrnumber = {1858037},
    doi = {10.1007/978-94-010-0814-3\_9},
}

@book{montgomery_hopf_1993,
	address = {Providence, Rhode Island},
	series = {Regional {Conference} {Series} in {Mathematics}},
	title = {Hopf algebras and their actions on rings},
	isbn = {9781470424428},
	language = {eng},
	number = {Number 82},
	publisher = {Published for the Conference Board of the Mathematical Sciences by the American Mathematical Society},
	author = {Montgomery, Susan},
	year = {1993},
}

@book{etingof_tensor_2015,
	address = {Providence, Rhode Island},
	editor = {Etingof, P. I. and Gelaki, Shlomo and Nikshych, Dmitri and Ostrik, Victor},
	isbn = {9781470420246},
	number = {volume 205},
	publisher = {American Mathematical Society},
	series = {Mathematical surveys and monographs},
	title = {Tensor categories},
	year = {2015},
}

@article{ruiz_matrix_2024,
    title = {Matrix product operator algebras {II}: phases of matter for {1D} mixed states},
    author = {{Ruiz-de-Alarc{\'o}n}, Alberto and Garre-Rubio, Jos{\'e} and Moln{\'a}r, Andr{\'a}s and P{\'e}rez-Garc{\'i}a, David},
    volume = {114},
    issn = {1573-0530},
    shorttitle = {Matrix product operator algebras {II}},
    doi = {10.1007/s11005-024-01778-z},
    language = {en},
    number = {2},
    journal = {Letters in Mathematical Physics},
    month = mar,
    year = {2024},
    pages = {43},
    archivePrefix = {arXiv},
    eprint = {2204.06295},
    primaryClass  = {quant-ph},
    note = {\href{https://doi.org/10.1007/s11005-024-01778-z}{\nolinkurl{doi:10.1007/s11005-024-01778-z}}. \href{https://arxiv.org/abs/2204.06295}{\nolinkurl{arXiv:2204.06295}}}
}

@article{cirac_matrix_2021,
	title = {Matrix product states and projected entangled pair states: {Concepts}, symmetries, theorems},
	volume = {93},
	shorttitle = {Matrix product states and projected entangled pair states},
	doi = {10.1103/RevModPhys.93.045003},
	number = {4},
	journal = {Reviews of Modern Physics},
	author = {Cirac, J. Ignacio and P{\'e}rez-Garc{\'i}a, David and Schuch, Norbert and Verstraete, Frank},
	month = dec,
	year = {2021},
	pages = {045003},
    note = {\href{https://doi.org/10.1103/RevModPhys.93.045003}{\nolinkurl{doi:10.1103/RevModPhys.93.045003}}. \href{https://arxiv.org/abs/2011.12127}{\nolinkurl{arXiv:2011.12127}}}
}

@misc{kato_exact_2024,
	title = {Exact renormalization group flow for matrix product density operators},
	url = {http://arxiv.org/abs/2410.22696},
	doi = {10.48550/arXiv.2410.22696},
	publisher = {arXiv},
	author = {Kato, Kohtaro},
	month = oct,
	year = {2024},
    note = {\href{https://arxiv.org/abs/2410.22696}{\nolinkurl{arXiv:2410.22696}}}
}

@article{chirvasitu_integrals_2020,
	title = {Integrals in {Left} {Coideal} {Subalgebras} and {Group}-{Like} {Projections}},
	volume = {23},
	issn = {1572-9079},
	url = {https://doi.org/10.1007/s10468-019-09889-1},
	doi = {10.1007/s10468-019-09889-1},
	language = {en},
	number = {4},
	urldate = {2025-01-31},
	journal = {Algebras and Representation Theory},
	author = {Chirvasitu, Alexandru and Kasprzak, Paweł and Szulim, Piotr},
	month = aug,
	year = {2020},
	pages = {1499--1522},
    note = {\href{https://doi.org/10.1007/s10468-019-09889-1}{\nolinkurl{doi:10.1007/s10468-019-09889-1}}. \href{https://arxiv.org/abs/1808.01922}{\nolinkurl{arXiv:1808.01922}}}
}

@misc{landstad_compact_2007,
	title = {Compact and discrete subgroups of algebraic quantum groups {I}},
	url = {http://arxiv.org/abs/math/0702458},
	doi = {10.48550/arXiv.math/0702458},
	urldate = {2025-01-31},
	publisher = {arXiv},
	author = {Landstad, M. B. and Van Daele, A.},
	month = apr,
	year = {2007},
    note = {\href{https://arxiv.org/abs/math/0702458}{\nolinkurl{arXiv:math/0702458}}}
}

@misc{vandaele_algebraic_2025,
	title = {Algebraic quantum groups and duality {I}},
	url = {http://arxiv.org/abs/2304.13448},
	doi = {10.48550/arXiv.2304.13448},
	urldate = {2026-06-20},
	publisher = {arXiv},
	author = {Van Daele, Alfons},
	month = nov,
	year = {2025},
    note = {\href{https://arxiv.org/abs/2304.13448}{\nolinkurl{arXiv:2304.13448}}}
}

@article{goursat_sur_1889,
	title = {Sur les substitutions orthogonales et les divisions régulières de l'espace},
	volume = {6},
	issn = {0012-9593, 1873-2151},
	url = {http://www.numdam.org/item?id=ASENS_1889_3_6__9_0},
	doi = {10.24033/asens.317},
	urldate = {2025-02-01},
	journal = {Annales scientifiques de l'École normale supérieure},
	author = {Goursat, Edouard},
	year = {1889},
	pages = {9--102},
    note = {\href{https://doi.org/10.24033/asens.317}{\nolinkurl{doi:10.24033/asens.317}}}
}

@incollection{dung_subgroups_2009,
	address = {Providence, Rhode Island},
	title = {Subgroups of direct products of groups, ideals and subrings of direct products of rings, and {Goursat}’s lemma},
	volume = {480},
	isbn = {9780821843703 9780821881590},
	url = {http://www.ams.org/conm/480/},
	language = {en},
	urldate = {2025-02-01},
	booktitle = {Contemporary {Mathematics}},
	publisher = {American Mathematical Society},
	author = {Anderson, D. D. and Camillo, V.},
	editor = {Dung, Nguyen Viet and Guerriero, Franco and Hammoudi, Lakhdar and Kanwar, Pramod},
	year = {2009},
	doi = {10.1090/conm/480/09364},
	pages = {1--12},
    note = {\href{https://doi.org/10.1090/conm/480/09364}{\nolinkurl{doi:10.1090/conm/480/09364}}}
}

@article{Franz_2009,
    title={On idempotent states on quantum groups},
    volume={322},
    ISSN={0021-8693},
    url={http://dx.doi.org/10.1016/j.jalgebra.2009.05.037},
    DOI={10.1016/j.jalgebra.2009.05.037},
    number={5},
    journal={Journal of Algebra},
    publisher={Elsevier BV},
    author={Franz, Uwe and Skalski, Adam},
    year={2009},
    month=sep,
    pages={1774–1802},
    note = {\href{https://doi.org/10.1016/j.jalgebra.2009.05.037}{\nolinkurl{doi:10.1016/j.jalgebra.2009.05.037}}. \href{https://arxiv.org/abs/0808.1683}{\nolinkurl{arXiv:0808.1683}}}
}

@misc{sun_anomalous_2025,
	title = {Anomalous matrix product operator symmetries and {1D} mixed-state phases},
	url = {http://arxiv.org/abs/2504.16985},
	doi = {10.48550/arXiv.2504.16985},
	urldate = {2025-10-19},
	publisher = {arXiv},
	author = {Sun, Xiao-Qi},
	month = sep,
	year = {2025},
    note = {\href{https://arxiv.org/abs/2504.16985}{\nolinkurl{arXiv:2504.16985}}}
}

@misc{liu_trading_2025,
	title = {Trading {Mathematical} for {Physical} {Simplicity}: {Bialgebraic} {Structures} in {Matrix} {Product} {Operator} {Symmetries}},
	shorttitle = {Trading {Mathematical} for {Physical} {Simplicity}},
	url = {http://arxiv.org/abs/2509.03600},
	doi = {10.48550/arXiv.2509.03600},
	urldate = {2025-10-19},
	publisher = {arXiv},
	author = {Liu, Yuhan and Molnar, Andras and Sun, Xiao-Qi and Verstraete, Frank and Kato, Kohtaro and Lootens, Laurens},
	month = sep,
	year = {2025},
    note = {\href{https://arxiv.org/abs/2509.03600}{\nolinkurl{arXiv:2509.03600}}}
}

@article{cirac_matrix_2017,
	title = {Matrix product density operators: {Renormalization} fixed points and boundary theories},
	volume = {378},
	issn = {00034916},
	shorttitle = {Matrix product density operators},
	url = {https://linkinghub.elsevier.com/retrieve/pii/S0003491616303013},
	doi = {10.1016/j.aop.2016.12.030},
	language = {en},
	urldate = {2025-11-27},
	journal = {Annals of Physics},
	author = {Cirac, J.I. and Pérez-García, D. and Schuch, N. and Verstraete, F.},
	month = mar,
	year = {2017},
	pages = {100--149},
    note = {\href{https://doi.org/10.1016/j.aop.2016.12.030}{\nolinkurl{doi:10.1016/j.aop.2016.12.030}}. \href{https://arxiv.org/abs/1606.00608}{\nolinkurl{arXiv:1606.00608}}}
}

@article{coser_classification_2019,
	title = {Classification of phases for mixed states via fast dissipative evolution},
	volume = {3},
	copyright = {https://creativecommons.org/licenses/by/4.0/},
	issn = {2521-327X},
	url = {https://quantum-journal.org/papers/q-2019-08-12-174/},
	doi = {10.22331/q-2019-08-12-174},
	language = {en},
	urldate = {2025-11-27},
	journal = {Quantum},
	author = {Coser, Andrea and Pérez-García, David},
	month = aug,
	year = {2019},
	pages = {174},
    note = {\href{https://doi.org/10.22331/q-2019-08-12-174}{\nolinkurl{doi:10.22331/q-2019-08-12-174}}. \href{https://arxiv.org/abs/1810.05092}{\nolinkurl{arXiv:1810.05092}}}
}

@article{chen_classification_2011,
	author = {Chen, Xie and Gu, Zheng-Cheng and Wen, Xiao-Gang},
	doi = {10.1103/PhysRevB.83.035107},
	issn = {1098-0121, 1550-235X},
	journal = {Physical Review B},
	language = {en},
	number = {3},
	pages = {035107},
	title = {Classification of gapped symmetric phases in one-dimensional spin systems},
	url = {https://link.aps.org/doi/10.1103/PhysRevB.83.035107},
	urldate = {2023-08-02},
	volume = {83},
	year = {2011},
    note = {\href{https://doi.org/10.1103/PhysRevB.83.035107}{\nolinkurl{doi:10.1103/PhysRevB.83.035107}}. \href{https://arxiv.org/abs/1008.3745}{\nolinkurl{arXiv:1008.3745}}}
}

@article{levin_string-net_2005,
	title = {String-net condensation: {A} physical mechanism for topological phases},
	volume = {71},
	copyright = {http://link.aps.org/licenses/aps-default-license},
	issn = {1098-0121, 1550-235X},
	shorttitle = {String-net condensation},
	url = {https://link.aps.org/doi/10.1103/PhysRevB.71.045110},
	doi = {10.1103/PhysRevB.71.045110},
	language = {en},
	number = {4},
	urldate = {2025-11-28},
	journal = {Physical Review B},
	author = {Levin, Michael A. and Wen, Xiao-Gang},
	year = {2005},
	pages = {045110},
    note = {\href{https://doi.org/10.1103/PhysRevB.71.045110}{\nolinkurl{doi:10.1103/PhysRevB.71.045110}}. \href{https://arxiv.org/abs/cond-mat/0404617}{\nolinkurl{arXiv:cond-mat/0404617}}}
}

@article{franz_new_2009,
	title = {A new characterisation of idempotent states on finite and compact quantum groups},
	volume = {347},
	issn = {1631-073X, 1778-3569},
	url = {https://comptes-rendus.academie-sciences.fr/mathematique/articles/10.1016/j.crma.2009.06.015/},
	doi = {10.1016/j.crma.2009.06.015},
	language = {en},
	number = {17-18},
	urldate = {2025-11-30},
	journal = {Comptes Rendus. Mathématique},
	author = {Franz, Uwe and Skalski, Adam},
	year = {2009},
	pages = {991--996},
    note = {\href{https://doi.org/10.1016/j.crma.2009.06.015}{\nolinkurl{doi:10.1016/j.crma.2009.06.015}}. \href{https://arxiv.org/abs/0906.2362}{\nolinkurl{arXiv:0906.2362}}}
}

@article{pal_counterexample_1996,
	title = {A counterexample on idempotent states on a compact quantum group},
	volume = {37},
	issn = {1573-0530},
	url = {https://doi.org/10.1007/BF00400140},
	doi = {10.1007/BF00400140},
	language = {en},
	number = {1},
	urldate = {2025-11-30},
	journal = {Letters in Mathematical Physics},
	author = {Pal, Arupkumar},
	month = may,
	year = {1996},
	pages = {75--77},
    note = {\href{https://doi.org/10.1007/BF00400140}{\nolinkurl{doi:10.1007/BF00400140}}}
}

@article{salmi_idempotent_2012,
	title = {{I}dempotent states on locally compact quantum groups},
	volume = {63},
	issn = {0033-5606, 1464-3847},
	url = {https://academic.oup.com/qjmath/article-lookup/doi/10.1093/qmath/har023},
	doi = {10.1093/qmath/har023},
	language = {en},
	number = {4},
	urldate = {2026-01-26},
	journal = {The Quarterly Journal of Mathematics},
	author = {Salmi, P. and Skalski, A.},
	year = {2012},
	pages = {1009--1032},
    note = {\href{https://doi.org/10.1093/qmath/har023}{\nolinkurl{doi:10.1093/qmath/har023}}. \href{https://arxiv.org/abs/1102.2051}{\nolinkurl{arXiv:1102.2051}}}
}

@incollection{vandaele_fourier_2007,
    author = {Van Daele, Alfons},
    title = {The {F}ourier transform in quantum group theory},
    booktitle = {New Techniques in {H}opf Algebras and {G}raded {R}ing {T}heory},
    publisher = {Koninklijke Vlaamse Academie van Belgi{\"e} voor Wetenschappen en Kunsten},
    address = {Brussels},
    year = {2007},
    pages = {187--196},
    note = {\href{https://arxiv.org/abs/math/0609502}{\nolinkurl{arXiv:math/0609502}}}
}

@article{landstad_finite_2025,
	title = {Finite quantum hypergroups},
	volume = {55},
	issn = {0035-7596},
	url = {https://projecteuclid.org/journals/rocky-mountain-journal-of-mathematics/volume-55/issue-6/FINITE-QUANTUM-HYPERGROUPS/10.1216/rmj.2025.55.1651.full},
	doi = {10.1216/rmj.2025.55.1651},
	number = {6},
	urldate = {2026-02-11},
	journal = {Rocky Mountain Journal of Mathematics},
	author = {Landstad, Magnus B. and Van Daele, Alfons},
	year = {2025},
    note = {\href{https://doi.org/10.1216/rmj.2025.55.1651}{\nolinkurl{doi:10.1216/rmj.2025.55.1651}}. \href{https://arxiv.org/abs/2209.13282}{\nolinkurl{arXiv:2209.13282}}}
}

@misc{kawada_probability_1940,
	title = {On the {Probability} {Distribution} on a {Compact} {Group}. {I}},
	author = {Kawada, Yukiyosi and It{\^o}, Kiyosi},
	url = {https://doi.org/10.11429/ppmsj1919.22.12_977},
	doi = {10.11429/ppmsj1919.22.12_977},
	language = {en},
    journal={Proceedings of the Physico-Mathematical Society of Japan. 3rd Series},
    volume={22},
    number={12},
    pages={977-998},
    year={1940},
    note = {\href{https://doi.org/10.11429/ppmsj1919.22.12_977}{\nolinkurl{doi:10.11429/ppmsj1919.22.12_977}}}
}

@article{franz_idempotent_2013,
	title = {Idempotent states on compact quantum groups and their classification on $\mathrm{U}_q(2)$, $\mathrm{SU}_q(2)$, and $\mathrm{SO}_q(3)$},
	volume = {7},
	issn = {1661-6952, 1661-6960},
	url = {https://ems.press/doi/10.4171/jncg/115},
	doi = {10.4171/jncg/115},
	number = {1},
	urldate = {2026-02-12},
	journal = {Journal of Noncommutative Geometry},
	author = {Franz, Uwe and Skalski, Adam and Tomatsu, Reiji},
	year = {2013},
	pages = {221--254},
    note = {\href{https://doi.org/10.4171/jncg/115}{\nolinkurl{doi:10.4171/jncg/115}}. \href{https://arxiv.org/abs/0903.2363}{\nolinkurl{arXiv:0903.2363}}}
}

@article{kitaev_fault-tolerant_2003,
	title = {Fault-tolerant quantum computation by anyons},
	volume = {303},
	number = {1},
	url = {https://doi.org/10.1016/S0003-4916(02)00018-0},
	doi = {10.1016/S0003-4916(02)00018-0},
	journal = {Annals of Physics},
	author = {Kitaev, A. Yu.},
	year = {2003},
	pages = {2--30},
	note = {\href{https://doi.org/10.1016/S0003-4916(02)00018-0}{\nolinkurl{doi:10.1016/S0003-4916(02)00018-0}}. \href{https://arxiv.org/abs/quant-ph/9707021}{\nolinkurl{arXiv:quant-ph/9707021}}}
}

@misc{liu_establishing_2026,
	title = {Establishing {Mixed}-{State} {Phase} {Equivalence} beyond {Renormalization} {Fixed} {Points}},
	url = {http://arxiv.org/abs/2605.04159},
	doi = {10.48550/arXiv.2605.04159},
	urldate = {2026-06-10},
	publisher = {arXiv},
	author = {Liu, Yuhan},
	year = {2026},
	note = {\href{https://arxiv.org/abs/2605.04159}{\nolinkurl{arXiv:2605.04159}}},
}

@article{cirac2011entanglement,
  title={Entanglement spectrum and boundary theories with projected entangled-pair states},
  author={Cirac, J. Ignacio and Poilblanc, Didier and Schuch, Norbert and Verstraete, Frank},
  journal={Physical Review B},
  volume={83},
  number={24},
  pages={245134},
  year={2011},
  publisher={APS}
}

@article{kastoryano2019locality,
  title={Locality at the boundary implies gap in the bulk for 2D PEPS},
  author={Kastoryano, Michael J. and Lucia, Angelo and Perez-Garcia, David},
  journal={Communications in Mathematical Physics},
  volume={366},
  number={3},
  pages={895--926},
  year={2019},
  publisher={Springer}
}

@article{ogata2025haag,
  title={{H}aag duality for 2D quantum spin systems},
  author={Ogata, Yoshiko and P{\'e}rez-Garc{\'\i}a, David and Ruiz-de-Alarc{\'o}n, Alberto},
  journal={arXiv preprint arXiv:2509.23734},
  year={2025}
}

\appendix

\section{Proofs of Section~\ref{sec:preliminaries}}
\label{appendix:proofs2}

\begin{proof}[Proof of \cref{rem:DaggerLR}]
    That $(L_x)^* = L_{x^*}$ is trivial since $\eval{\hat{h}}{(x^* a)^* b} = \eval{\hat{h}}{a^* x b} $.
    Now,
    \begin{align*}
        \innerprod{a}{L_{\varphi}(b)} = \innerprod{L_{\varphi^*}(a)}{b}
        &\iff
        \eval{\hat{h}}{a^* L_{\varphi}(b)} = \eval{\hat{h}}{(L_{\varphi^*}(a))^* b}
        \\
        &\iff
        \eval{\hat{h}}{a^* b_{(2)}}\eval{\varphi}{b_{(1)}} = \overline{\eval{\varphi^*}{a_{(1)}}}\eval{\hat{h}}{(a_{(2)})^* b}
        \\
        &\iff
        \eval{\hat{h}}{a^* b_{(2)}}\eval{\varphi}{b_{(1)}} = \eval{\varphi}{S(a_{(1)})^*}\eval{\hat{h}}{(a_{(2)})^* b}
        \\
        &\iff
        \eval{\hat{h}}{a^* b_{(2)}} \eval{\varphi}{b_{(1)}} = \eval{\varphi}{S((a^*)_{(1)})}\eval{\hat{h}}{(a^*)_{(2)} b}
    \end{align*}
    and the last equality holds by \cref{eq:propHaarExchXY} with $v\equiv a^*$ and $w\equiv b$. The identities for $R_x$ and $R_\varphi$ follow similarly.
\end{proof}

\begin{proof}[Proof of \cref{lemma:fourier}]
    Let us first note that, for all $x\in A$:
	\begin{align*}
		\jmath^{-1}(\jmath(x))
		&=
		D \eval{\jmath(x)}{h_{(1)}} S(h_{(2)})
		\\
        &=
		D \eval{\hat{h}}{h_{(1)}x} S(h_{(2)})
		=
		D \eval{\hat{h}}{h_{(1)}} S(h_{(2)}) x
		=
		x,
	\end{align*}
	where the first two equalities unfold the definitions of $\jmath^{-1}$ and $\jmath$, the third follows by the pulling-through identity in \cref{eq:PTmodified} and the last equality is due to the first identity in \cref{eq:DualHaar} and \cref{eq:DualInt}.
    Moreover, for all $\varphi\in A^*$:
	\begin{align*}
		\jmath(\jmath^{-1}(\varphi))
		  & =
		\eval{\hat{h}}{(\placeholder)\jmath^{-1}(\varphi)}
		\\
		  & =
		D  \eval{\varphi}{h_{(1)}} \eval{\hat{h}}{(\placeholder)S(h_{(2)})}
		=
		D \eval{\varphi}{h_{(1)}(\placeholder)}
		\eval{\hat{h}}{S(h_{(2)})}
		=
		\varphi,
	\end{align*}
	where in the third equality we have employed \cref{eq:PT}, while the last equality follows by \cref{eq:DualInt} and the last identity in \cref{eq:Haar}.
\end{proof}

\begin{proof}[Proof of \cref{lemma:sigma}]
We can rewrite the definition of $\sigma$ to obtain, for all $x,y\in A$:
\begin{align*}
    \sigma(x \otimes y)
      & =
    \jmath^{-1}( (\jmath(x)\otimes\jmath(y)) \circ \Delta )
    \\ &
    =
    D \eval{ (\jmath(x)\otimes \jmath(y)) \circ \Delta }{ h_{(1)} } S(h_{(2)})
    \\ &
    =
    D \eval{\jmath(x)}{ h_{(1)}}
        \eval{ \jmath(y) }{ h_{(2)}} S(h_{(3)})
    \\ &
    =
    D \eval{\hat{h}}{h_{(1)} x}
        \eval{\hat{h}}{h_{(2)} y} S(h_{(3)}),
\end{align*}
where the first two equalities unfold the definitions of $\sigma$ and $\jmath^{-1}$; after applying the comultiplication, the last equality follows from the definition of $\jmath$.
\smallbreak\noindent%
The previous representation facilitates the proof of the complete positivity of $\sigma$. Indeed, for all $x,y\in A$:
\begin{align*}
    \sigma(x \otimes y)
      & =
    D \eval{\hat{h}}{x h_{(1)} h_{(1')}^*}
        \eval{\hat{h}}{y h_{(2)} h_{(2')}^*}
            S(h_{(3)} h_{(3')}^*)
    \\ &
    =
    D \eval{\hat{h}}{h_{(1)}^* x h_{(1')}}
        \eval{\hat{h}}{h_{(2)}^* y h_{(2')}}
            S(h_{(3)}^*) S(h_{(3')})
    \\ &
    =
    D (\hat{h}\otimes \hat{h} \otimes \mathrm{Id}_A)
            (w^* (x\otimes y\otimes 1) w),
\end{align*}
where $w \coloneqq h_{(1)}\otimes h_{(2)}\otimes S(h_{(3)})$. In the first equality we used $h=hh^*$, see \cref{eq:Haar}; in the second equality we used the antimultiplicativity of $S$ and the trace-like property of $\hat{h}$, see \cref{eq:PropsAntipode,eq:DualHaar}.
\smallbreak\noindent%
Finally, \cref{eq:prophsigma} holds by virtue of the following calculation, valid for all $x,y,z\in A$:
\begin{align*}
    \eval{\hat{h}}{\sigma(x\otimes y)z}
      & =
    D \eval{\hat{h}}{h_{(1)} x}
    \eval{\hat{h}}{h_{(2)} y}
    \eval{\hat{h}}{S(h_{(3)})z}
    \\ &
    =
    D \eval{\hat{h}}{z_{(1)} h_{(1)} x}
    \eval{\hat{h}}{z_{(2)} h_{(2)} y}
    \eval{\hat{h}}{S(h_{(3)})}
    \\ &
    =
    D \eval{\hat{h}}{h_{(1)} x z_{(1)}}
    \eval{\hat{h}}{h_{(2)} y z_{(2)}}
    \eval{\hat{h}}{S(h_{(3)})}
    \\ &
    =
    \eval{\hat{h}}{x z_{(1)}}
    \eval{\hat{h}}{y z_{(2)}},
\end{align*}
where the first equality is due to the previous simplification of $\sigma(x\otimes y)$, in the second we have used the pulling-through identity \cref{eq:PT} (after comultiplying and applying the antipode), the third follows from the trace-like property in \cref{eq:DualHaar}, and the fourth follows from \cref{eq:DualInt}.
\smallbreak\noindent%
Finally, let us prove that $\sigma$ is an associative map. For that purpose, note that applying $\jmath$ on \cref{eq:def_sigma} yields
$
    \jmath(\sigma(x\otimes y)) = \jmath(x)\jmath(y)
$,
and hence, for all $x,y,z\in A$,
\begin{align*}
    \jmath (\sigma(\sigma(x\otimes y)\otimes z))
    &= \jmath(\sigma(x\otimes y))\jmath(z)
     = (\jmath(x)\jmath(y))\jmath(z) \\
    &= \jmath(x)(\jmath(y)\jmath(z))
     = \jmath(x)\jmath(\sigma(y\otimes z))
     = \jmath (\sigma(x\otimes \sigma(y\otimes z))).
\end{align*}
As $\jmath$ is injective, it follows that
$\sigma(\sigma(x\otimes y)\otimes z)=\sigma(x\otimes \sigma(y\otimes z))$.
\end{proof}

\begin{proof}[Proof of \cref{prop:sigma_is_S}]
The first identity is known; see Ref.~\cite[Theorem~3.2]{ruiz_matrix_2024}. For the second identity, note that, for all $x,y\in A$:
\begin{align*}
    \Coarse(\Phi_b(x)\otimes \Phi_b(y)) 
      &
    =
    D
    \operatorname{Tr}((\Phi\circ S)(h_{(3)}) \Phi_b(x) )
    \operatorname{Tr}((\Phi\circ S)(h_{(2)}) \Phi_b(y) ) 
    \Phi_b(h_{(1)})
    \\ &
    =
    D
    \operatorname{Tr}( \Phi_b(x S(h_{(3)})) )
    \operatorname{Tr}( \Phi_b(y S(h_{(2)})) ) 
    \Phi_b(h_{(1)})
    \\ &
    =
    D
    \eval{\hat{h}}{x S(h_{(3)})}
    \eval{\hat{h}}{y S(h_{(2)})}
    \Phi_b(h_{(1)})
    \\ &
    =
    D
    \eval{\hat{h}}{ x h_{(1)} }
    \eval{\hat{h}}{ y h_{(2)} }
    \Phi_b(S(h_{(3)}))
    \\ &
    =
    \Phi_b(\sigma( x \otimes y) ),
\end{align*}
where we unfolded the definition of $\Coarse$ in the first equality, we used cyclicity of the trace in the second, we employed \cref{eq:DefBdry-Dual} in the third, that
$h_{(1)}\otimes h_{(2)}\otimes h_{(3)} = S(h_{(3)})\otimes S(h_{(2)})\otimes S(h_{(1)})$,
using $h=S(h)$ and \cref{eq:Santicomult} in the fourth, and the trace-like property in \cref{eq:DualHaar} and the definition of $\sigma$ in the last.
\end{proof}

\end{document}